\theoremstyle{plain}
\newtheorem{lem}{Lemma}[section]
\newtheorem{theorem}[lem]{Theorem}
\newtheorem{cor}[lem]{Corollary}
\newtheorem{deff}[lem]{Definition}
\newtheorem{prop}[lem]{Proposition}
\newtheorem{rem}[lem]{Remark}
\newcommand{\proj}{\ensuremath{\hat{\mathfrak{R}}}}
\newcommand{\ef}{\mathfrak{F}}
\numberwithin{equation}{section}
\numberwithin{equation}{section}
\title{Periods of Feynman diagrams and GKZ D-modules}
\author{Emad Nasrollahpoursamami} 
\begin{document}
\maketitle
\begin{abstract}
We study differential equations for Feynman amplitudes and we show that the corresponding D-module is isomorphic to a GKZ D-modules. We show that the sheaf of solutions to the D-module is isomorphic to a
certain relative homology and the amplitudes are periods of a relative motive. Using these ideas, we develop a method of regularization which specializes to dimensional regularization and analytic regularization.
\end{abstract}
\tableofcontents
\section{\textbf{Introduction}}

In perturbative quantum field theory the scattering amplitudes, which are the probabilities of physical processes, can be approximated by sums over Feynman diagrams. Feynman diagrams are graphs corresponding to certain integrals. The integral corresponding to each Feynman diagram is a function of parameters called external momenta. The resulting functions are called {\em amplitudes} of Feynman diagrams. To compute the actual scattering amplitudes one needs to add these functions, hence understanding the properties of these functions is necessary for both experimental and theoretical physics. In this paper, we restrict our attention to the case of a scalar field theory. This means that the external momenta are just vectors in $\mathbb{R}^D$, with $D$ the 
dimension of the theory. Suitable generalizations exist for arbitrary quantum field theories. \\

Amplitudes of Feynman diagrams are functions of external momenta. Amplitudes are functions on $\mathbb{R}^{D(|V|-1)}$, with $D$ the dimension of the theory and $V$ the set of vertices of the diagram. We map this vector space to another vector space related to the graph, denoted by $\mathfrak{V}_{\Gamma}$ so that the amplitude is the pull back of a (multi-valued)function on $\mathfrak{V}_{\Gamma}$. On this new vector space, we construct a holonomic regular $D$-module, of which the function we are considering is a solution. As a result, we show that the Feynman amplitude satisfies a holonomic regular system of differential equations.\\

The differential equations on $\mathfrak{V}_{\Gamma}$ are a special case of GKZ or A-Hypergeometric system of differential equations introduced in \cite{GKZH} and \cite{GKZ}. It follows from the result in these references that the corresponding D-module is holonomic and regular. It is well known that these D-modules come from twisted Gauss-Manin connections on toric varieties. Recently it is shown in \cite{zhu} that, in the Calabi-Yau case, the relative homology computes the sheaf of solutions. Using results of \cite{andre} we show that their construction can be generalized to a non-Calabi-Yau case which includes 
Feynman diagrams.\\

In \cite{BEK} the authors show that the Feynman amplitude for primitive log divergent graphs is a period of the complement of a hypersurface in a toric variety relative to the boundary. In their work the toric variety is constructed using iterated blow ups so that it separates graph hypersurface from integration cycle. Using our result we directly construct a toric variety with such a property without any conditions on the diagram. We show that the variation of the hypersurface corresponds to the differential equations. Since the construction is explicit we can compute the cohomology using generators and relations. The dimension of the relative cohomology is the normalized  volume of a polytope closely related to a matroid polytope. The matroid here is the graphical matroid corresponding to the diagram. \\

The explanation above was for the convergent case. In the physics literature, dimensional regularization is a formal way of getting finite numbers from divergent integrals. The idea is to formally find the Laurent expansion of the integral with respect to $D$. In \cite{brosnan} and \cite{BW} the authors show that the coefficients of the expansion are periods, but their results are not explicit. Using the description of cohomology with generators and relations, we show how one can define the integral for the divergent case. In particular, we prove the following theorem. 
\begin{theorem}\label{mm}
Given a graph $\Gamma$ with n edges and first Symanzik polynomial $\Psi$ and second Symanzik polynomial $Q$ (including mass terms), the amplitude in dimensional regularization, up to a constant, can be computed by the following integral:
$$c_0\mathcal{A}(D/2+\epsilon)=\int_{\mathbb{R}_+^n} e^{-Q/\Psi} \frac{1}{\Psi^{D/2+\epsilon}}=\sum_{i\geq-n} \epsilon^i A_i(D/2)$$
The left hand side is meromorphic and poles can be described in the following way. For a 2-connected subgraph $\gamma \subset \Gamma$, let $\ell_{\gamma}$ be the dimension of the first homology of $\gamma$. $|E(\gamma)|$ is the number of edges of $\gamma$. $\mathcal{A}(D/2)$ has a pole at $D/2 \in \mathbb{C}$ iff 
$$D/2\ \ell_{\gamma} - |E(\gamma)| \in \mathbb{Z}_{\geq 0}$$
for a 2-connected subgraph $\gamma$. Furthermore, the lowest coefficient, $A_{-n}(D/2)$, at integers comes from a pairing between an algebraic relative cohomology class and a Betti homology class explicitly constructed in section \ref{GKZsection}. 
\end{theorem}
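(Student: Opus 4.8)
The plan is to resolve the singularities of the integrand along the boundary of $\mathbb{R}_+^n$, thereby writing $\mathcal{A}(D/2+\epsilon)$ as a finite sum of elementary integrals from which meromorphy, the pole locus and the leading Laurent coefficient can all be read off. Concretely, I would take the toric variety constructed earlier in the paper --- the one that separates $\{\Psi=0\}$ from the chain $\mathbb{R}_+^n$ --- and pass to a smooth toric resolution $\pi\colon Y\to X$ whose fan refines the normal fan of the Newton polytope of $\Psi$; equivalently, one performs a Hepp--Speer sector decomposition subordinate to the lattice of flats of the graphical matroid $M(\Gamma)$. On each affine chart of $Y$ the strict transform of $\mathbb{R}_+^n$ is a cube $[0,1]^n$ with coordinates $y_1,\dots,y_n$, one has $\pi^\ast\Psi=\bigl(\prod_j y_j^{\,b_j}\bigr)\,u$ with $u$ a strictly positive analytic function, and $\pi^\ast(dx_1\wedge\cdots\wedge dx_n)=\bigl(\prod_j y_j^{\,a_j}\bigr)\,dy_1\wedge\cdots\wedge dy_n$, while $Q/\Psi$ pulls back to a non-negative function, so that $e^{-Q/\Psi}\circ\pi$ is a bounded smooth function (here the mass terms in $Q$ are exactly what guarantee $Q>0$ on the closed orthant and at infinity, hence boundedness of the exponential factor and convergence of the original integral). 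Thus $\mathcal{A}(D/2+\epsilon)$ becomes a finite sum of integrals $\int_{[0,1]^n}\prod_j y_j^{\,a_j-(D/2+\epsilon)b_j}\,\varphi(y)\,dy$ with $\varphi$ smooth and positive on the open cube, each of which is manifestly meromorphic in $\epsilon$ with at worst simple poles along $\{a_j+1-(D/2+\epsilon)b_j\in\mathbb{Z}_{\le 0}\}$; summing proves meromorphy.

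\textbf{The pole locus.}
The combinatorial content is to match the exponents $(a_j,b_j)$ of the divisor $E_\gamma$ attached to a subgraph $\gamma$ with $\ell_\gamma$ and $|E(\gamma)|$. Using the standard factorization $\Psi_\Gamma=\Psi_\gamma\,\Psi_{\Gamma/\gamma}+(\text{higher order in the }x_e,\ e\in E(\gamma))$, the order of vanishing of $\Psi$ along $E_\gamma$ is $\deg_{x_{E(\gamma)}}\Psi_\gamma=\ell_\gamma$, while blowing up the codimension-$|E(\gamma)|$ coordinate subspace $\{x_e=0:e\in E(\gamma)\}$ contributes a Jacobian vanishing of order $|E(\gamma)|-1$ along $E_\gamma$. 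Hence in the radial coordinate $t$ transverse to $E_\gamma$ the local integrand is $t^{\,|E(\gamma)|-1-(D/2+\epsilon)\ell_\gamma}$ times a positive smooth factor, which produces a pole exactly when $|E(\gamma)|-(D/2)\ell_\gamma\in\mathbb{Z}_{\le 0}$, i.e. when $(D/2)\,\ell_\gamma-|E(\gamma)|\in\mathbb{Z}_{\ge 0}$. Only $2$-connected (one-particle-irreducible) $\gamma$ occur because the strata of $Y$ are indexed by the connected flats of $M(\Gamma)$: a subgraph with a cut vertex makes $M(\Gamma)$ split as a direct sum along $E(\gamma)$, so that coordinate subspace is not a centre of the resolution and its putative contribution is already accounted for by the deeper flags through its biconnected pieces. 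For the converse --- that each such arithmetic coincidence produces an honest pole and that there are no others --- I would use positivity in the Euclidean region: every $\varphi$ above is $>0$, so the residue of each offending model integral is a strictly positive number and no cancellation between charts can occur, which pins the pole locus down to exactly the stated set. Finally, since every affine chart of $Y$ has precisely $n$ coordinates, at most $n$ of the factors $y_j^{\,a_j-(D/2+\epsilon)b_j}$ are simultaneously singular, so each model integral --- hence $\mathcal{A}$ --- has pole order at most $n$, giving $i\ge -n$.

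\textbf{The leading coefficient, and the main obstacle.}
For the last assertion, extracting the order-$n$ part of the Laurent expansion at an integral value of $D/2$ amounts to iterating the residue operation down a maximal flag of flats, which localizes $\mathcal{A}$ onto a minimal stratum of $Y$; the resulting number is a convergent integral of an algebraic form over the strict transform of $\mathbb{R}_+^n$, hence a period. Using the generators-and-relations presentation of the relative cohomology developed in Section~\ref{GKZsection} and the identification there of the solution sheaf of the GKZ $D$-module with the relevant relative homology, one then exhibits $A_{-n}(D/2)$ explicitly as the pairing of a class in the algebraic relative de Rham cohomology --- the suitably regularized class coming from $\Psi^{-D/2}\,dx$ --- with the Betti homology class represented by that strict transform of $\mathbb{R}_+^n$. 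The step I expect to be the main obstacle is the resolution itself: proving that the explicit toric model simultaneously monomializes $\Psi$ and the volume form with simple normal crossings, and --- crucially --- that its divisors are \emph{exactly} those indexed by $2$-connected subgraphs, with precisely the exponents $(\ell_\gamma,\,|E(\gamma)|-1)$, so that the list of poles is complete and contains nothing spurious; intertwined with this is the need to control $e^{-Q/\Psi}$ on every boundary stratum of $Y$ (including those at infinity), which is where the mass terms are indispensable.
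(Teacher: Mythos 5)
Your route --- Hepp/Speer sector decomposition, i.e.\ a smooth toric resolution adapted to the $2$-connected flats of the graphical matroid --- is genuinely different from the paper's, which deliberately avoids resolution of singularities. The paper works with $K(\alpha,P_a)=\int_{\mathbb{R}_+^n} t^{\tilde\alpha}f^{-\alpha_0}\,\frac{dt_1}{t_1}\cdots\frac{dt_n}{t_n}$ for $f=\Psi_\Gamma+Q_\Gamma$ with the coefficients $P_a$ treated as free parameters; the integration-by-parts identity of Lemma \ref{relations}, $\langle w,\alpha\rangle K(\alpha)=\alpha_0\sum_a\langle w,a\rangle P_a K(\alpha+a)$, yields the meromorphic continuation one facet direction at a time (Theorem \ref{poles}) and confines the poles to $-\Sigma$-translates of the facet hyperplanes of the cone over the Newton polytope $P_\Gamma=SP_\Gamma+E_n$. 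The combinatorics you would have to extract from a resolution (exactly which divisors occur, with which multiplicities) is replaced by the purely polyhedral Theorem \ref{facet}: the facets of $P_\Gamma$ are $\langle 1_\gamma,\tilde x\rangle=\ell_\gamma$ for $2$-connected $\gamma$ (and self-loops), plus the single upper facet $\langle 1_\Gamma,\tilde x\rangle=\ell_\Gamma+1$, proved via the submodularity inequality of Lemma \ref{mayer} and an uncrossing/chain argument. What the paper's approach buys is that this facet description is a short self-contained computation, whereas your flagged ``main obstacle'' (simultaneous SNC monomialization of $\Psi$ and the volume form with divisors exactly the $2$-connected subgraphs and exponents $(\ell_\gamma,|E(\gamma)|-1)$) is a genuine and substantial gap, essentially the Bloch--Esnault--Kreimer/Brown linear-reduction machinery you would have to import or reprove. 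Note also that you never address the upper facet: it produces the overall-scaling poles $(\ell+1)D/2-n\in\mathbb{Z}_{\le 0}$ living at infinity in $\mathbb{R}_+^n$, which in the paper are cancelled against the explicit Gamma factors relating $K$ to $\mathcal{A}$ (equation (\ref{ka}) and Remark \ref{zero}); in your setup this forces a compactification adapted to the Newton polytope of $\Psi+Q$, not just of $\Psi$, and a separate treatment of $e^{-Q/\Psi}$ there.

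Two further points. Your positivity argument pins down only the leading pole along each divisor: for the translated poles $D\ell_\gamma/2-|E(\gamma)|=k>0$ the residues of the model integrals are higher Taylor coefficients of $\varphi$ (and of $e^{-Q/\Psi}$, which is bounded but not smooth up to strata where $Q/\Psi\to\infty$), and these have no fixed sign, so ``no cancellation between charts'' is unjustified there; to be fair, the paper's Theorem \ref{poles} likewise only proves containment of the polar locus in the stated set, so the ``iff'' must be read with the same caveat in either approach. Finally, for the last assertion the paper does not iterate residues on a resolution: it uses Lemma \ref{relations} to rewrite the divergent $K$ as a $\mathbb{Q}[P_a]$-linear combination of convergent $K(\alpha_i,P_a)$, each a period of $H^n(U_v,U_v\cap D)$ by Theorem \ref{motive} (this is Theorem \ref{mmm}); your residue-on-a-minimal-stratum description would still need to be identified with that explicit relative de Rham class, which you assert but do not carry out.
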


In Section \ref{feynmann} we define the Feynman amplitude for a Feynman diagram and show how one can present it in the parametric form. The new result is that a product of a power of the first Symanzik polynomial and the second Symanzik polynomial is the determinant of a matrix. We also show that the coefficients of the first and second Symanzik polynomials are norms of Pl\"ucker coordinates for a Grassmannian naturally defined by the graph.\\

In section \ref{sec1} we study how the integral changes as we vary the coefficients of the first and second Symanzik polynomials. In the convergent case, we find a set of linear PDEs satisfied by the integral. Using analytic continuation we define the integral for the divergent case but the proof is not constructive. We first show that the analytic continuation exists and, using that, we find the PDEs satisfied by coefficients in the Laurent expansion. This method is based on \cite{BW}. It turns out that the set of differential equations is a special case of the GKZ differential equations for the convergent case. For the divergent case, the coefficients of the Laurent expansion are solutions to iterated extension of GKZ differential equations. \\

In section \ref{GKZsection} we study the GKZ differential equations as a D-module on a vector space $V$. Given a polynomial $f$ in $n$ variables with Newton polytope $A$ such that the dimension of $V$ is the number of points in $A$, and a vector $\beta \in k^{n+1}$, we consider the corresponding GKZ D-module $H_{1 \times A}(\beta)$. We construct a projective toric variety $\mathbb{P}_{\Sigma}$ together with a line bundle on it. The vector space of global sections of the line bundle is isomorphic to $V$. Let $D$ be the complement of the torus $\mathbb{T}$ in $\mathbb{P}_{\Sigma}$. Let $U$ be the complement of the zeros of $f$ in $V\times \mathbb{P}_{\Sigma}$, where $V$ parametrizes the coefficients of the polynomial $f$. Given $v\in V$ we show that the algebraic relative cohomology of the pair $(U_v,D\cap U_v)$ with the Gauss-Manin connection is isomorphic to $H_A(\beta)$ as D-module, where $U_v$ is the fiber of $U$ over $v$. Using the Riemann-Hilbert correspondence, we deduce that the cycle to period map gives us a complete set of solutions.\\

In section \ref{integralsection} we study integrals over positive real points of the toric variety, considered as a  chain in relative homology. We show that a necessary and sufficient condition for an integral of the type
$$
\int_{\mathbb{R}_{+}^{n}} \frac{f^{\beta_0}}{t_1^{\beta_1} \dots t_n^{\beta_n}}dt_1\dots dt_n
$$
to converge, is $\beta$ being semi non-resonant, defined in Definition \ref{seminon}. Using relations in the cohomology ring, we develop a method to define this integral for any value of $\beta$ by meromorphic continuation. We show that the poles of this function appear in translates of the faces of a cone in $\mathbb{R}^{n+1}$. This cone is the cone over the Newton polytope of $f$. \\

In section \ref{feynGKZ} we apply the methods developed in the previous sections to amplitudes and we explicitly construct a motive such that its periods give us the amplitude. We completely describe the Newton polytope in this case and we show that its facets correspond to the product of subgraphs and quotient graphs. Using regularization methods, we define the $\epsilon$ expansion of the amplitude for divergent graphs. We do not use resolution of singularities and the construction is explicit. 

\section{\textbf{Feynman Diagrams}}\label{feynmann}

Feynman diagrams (or Feynman graphs) are one dimensional simplicial complexes with half edges attached to some of the vertices. These half edges are called {\em external edges}, while all other one dimensional simplices are called {\em internal edges}. In the physics literature, for each external edge, it is common 
to fix a vector in $\mathbb{R}^D$.  These vectors are called {\em external momenta}. 
They are subject to a momentum conservation law, given by the requirement that the sum of all 
external momenta of the graph is zero.  Since the amplitude only depends on the sum of the 
external momenta at each vertex, we can equivalently assign a momentum vector to each vertex 
of the graph and forget about external edges. Namely, we assign to a vertex the sum of 
the external momenta of all the external edges attached to 
that vertex, or zero if there are no external edges at that vertex. 
So in the following external momenta will always be assigned to vertices.\\

Let $E$ be the set of edges of the graph and let $V$ be the set of vertices. 
We have an exact sequence of free $\mathbb{Z}$ modules 
\begin{equation}\label{exseq1}	
 0\rightarrow H_1(\Gamma,\mathbb{Z})\xrightarrow{\hspace{5mm}\eta'\hspace{5mm}} \mathbb{Z}^{|E|} \xrightarrow {\hspace{12mm}}\mathbb{Z}^{|V|-1}\rightarrow 0 , 
\end{equation}  
where $H_1(\Gamma,\mathbb{Z})$ is the first homology of the graph $\Gamma$ with coefficients 
in $\mathbb{Z}$, that is, the free $\mathbb{Z}$ module generated by loops. 
The morphism on the right is the boundary map. Note that, to define this map, 
we need to fix an orientation on the edges of the graph, but the final result is independent 
of this choice. Taking the tensor product of the  sequence above with $\mathbb{R}^D$ gives the  
exact sequence
\begin{equation}\label{exseq2}
 0\rightarrow H_1(\Gamma,\mathbb{R}^D)\xrightarrow{\hspace{5mm} \eta \hspace{5mm} } \mathbb{R}^{D|E|} \xrightarrow{\hspace{5mm} \beta \hspace{5mm}} \mathbb{R}^{D(|V|-1)}\rightarrow 0 .
\end{equation}
Note that the choice of external momenta 
$\{p_v\in \mathbb{R}^D |v\in V,\sum_{v\in V}p_v=0\}$ is just a choice of a 
vector $a$ in $\mathbb{R}^{D(|V|-1)}$.
\\

Define $Q_e=P_e^2+m_e^2:\mathbb{R}^{|E|D}\rightarrow \mathbb{R}$, where $P_e^2$ is given by
first projecting onto the $D$ coordinates corresponding to $e$ and then taking the sum 
of the squares of these $D$ coordinates. 
\\

Let $\tilde{a} \in \mathbb{R}^{D|E|}$ be a lift of $a$, under the map $\beta$ of \hyperref[exseq2]{\eqref{exseq2}}.

\begin{deff}\label{Amplitude}
The amplitude of a Feynman graph $\Gamma$ with external momenta $a \in \mathbb{R}^{D(|V|-1)}$
is given by the integral
\begin{align}\label{Ampl}
\begin{split}
\mathcal{A}(\Gamma,a,m_e):&=\int_{\beta^{-1}(a)} \prod_e \frac{1}{Q_e} (\eta+\tilde{a})_*(d\mu)\\
&=\int_{H_1(\Gamma,\mathbb{R}^D)} (\eta+\tilde{a})^*(\prod_e \frac{1}{Q_e}) d\mu 
\end{split}
\end{align}
where $\eta$ is as in \eqref{exseq2} and
$d\mu$ is the tensor product of the standard measure on $\mathbb{R}^D$ with the measure on $H_1(\Gamma,\mathbb{R})$ induced by the morphism
$H_1(\Gamma,\mathbb{Z})\rightarrow H_1(\Gamma,\mathbb{R})$. This is the unique positive translation invariant measure on $H_1(\Gamma,\mathbb{R})$, with the 
property that 
a basis of $H_1(\Gamma,\mathbb{Z})$ generates a parallelogram of measure $1$. 
\end{deff}

Note that, with this definition, the amplitude is a function on $\mathbb{R}^{D(|V|-1)}$.

The Schwinger trick simply consists of using the identity $\int_0^{\infty} e^{-ax}dx = \frac{1}{a}$ in 
order to rewrite the amplitude in ``parametric form". For each edge $e$ we introduce a new variable $t_e$.

\begin{deff}
A subset of edges $S\subset E$  is called a spanning tree if the subgraph with edges in $S$ is a tree and is maximal in the sense that, if we add any of the remaining edges to it, it will contain a loop. 
Denote the set of spanning trees by $Span$.
\end{deff}

\begin{deff}
A subset of edges $C\subset E$  is called a cut if it has the following properties.
\begin{enumerate}
\item When we remove these edges, the graph becomes a disconnected union of trees. 
\item The set $C$ is minimal in the sense that, if we add back any edges to the remaining graph, it will 
either have a loop or become connected. 
\end{enumerate}
Since cuts are minimal, they divide vertices into disjoint sets 
$V_C$ and $V_C^c$. For a cut $C$ we denote by $P_C$ the norm 
of the sum of momenta in either component,
$$P_C= (\sum_{v\in V_C}p_v)^2=(\sum_{v\in V^c_C}p_v)^2.$$
Denote the set of cuts by $Cut$.
\end{deff}

\begin{deff}\label{First} The first Symanzik (or Kirchhoff) polynomial of a Feynman
graph is given by
$$ \Psi_{\Gamma}(t_1,...,t_{|E|}):=\sum_{S\in Span} \prod_{e\notin S} t_e .$$
\end{deff}

\begin{deff}\label{second} The second Symanzik polynomial of a Feynman graph is
given by
$$\mathcal{P}_{\Gamma}(t_1,...,t_{|E|},P_C):=\sum_{C \in Cut} P_C\prod_{e \in C} t_e.$$
\end{deff}

Given a Feynman graph $\Gamma$, we enumerate edges by $1,...,|E|$. 
We define
\begin{equation}\label{TTred}
T :={\rm diag}(\sqrt{t_1},\sqrt{t_2}, ..., \sqrt{t_{|E|}})\otimes Id_{D\times D}, \ \ \text{ and }
T_{{\rm red}} :={\rm diag}(\sqrt{t_1},...,\sqrt{t_{|E|}}). 
\end{equation}
Let $\vec{P}$ be a vector in $\mathbb{R}^{D\times|E|}$, 
where the coordinates are ordered in the same way as the variables $t_i$. 
Note that, for each edge, we have $D$ coefficients. Let $H$ denote the image 
of $H_1(\Gamma,\mathbb{R})$ in $\mathbb{R}^{|E|}$.

\begin{lem}\label{measurelemma}
The measure $d\nu$ on $T\beta^{-1}(a)$ induced by the standard measure 
on $\mathbb{R}^{D|E|}$ satisfies
\begin{equation}\label{dnu}
d\nu= \Psi_{\Gamma}(t_1,...,t_{|E|})^{D/2}(T\eta+T\tilde{a})_*(d\mu),
\end{equation}
with $\Psi_\Gamma$ the Kirchhoff polynomial, 
$T$ defined as in \eqref{TTred}, and $\eta$ as in \eqref{exseq2}.
\end{lem}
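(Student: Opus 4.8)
The plan is to reduce the statement to a weighted form of the matrix--tree theorem via a Gram--determinant computation. First I would note that, since $\eta$ in \eqref{exseq2} is injective with image $\ker\beta$, the fiber $\beta^{-1}(a)$ is the affine subspace $\tilde{a}+\eta\big(H_1(\Gamma,\mathbb{R}^D)\big)$, hence $T\beta^{-1}(a)=T\tilde{a}+T\eta\big(H_1(\Gamma,\mathbb{R}^D)\big)$ and the map $T\eta+T\tilde{a}$ is an affine isomorphism of $H_1(\Gamma,\mathbb{R}^D)$ onto $T\beta^{-1}(a)$. Set $\ell=\dim_{\mathbb{R}}H_1(\Gamma,\mathbb{R})$. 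Both $d\nu$ (the $D\ell$--dimensional volume induced from the Euclidean structure of $\mathbb{R}^{D|E|}$) and $(T\eta+T\tilde{a})_*(d\mu)$ are translation--invariant measures on this affine space, so they differ by a constant factor equal to the volume distortion of the linear part $T\eta$; the translation by $T\tilde{a}$ is irrelevant.

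Next I would compute that distortion. Fix a $\mathbb{Z}$--basis $c_1,\dots,c_\ell$ of $H_1(\Gamma,\mathbb{Z})$ and let $N$ be the $|E|\times\ell$ matrix whose $i$--th column is $\eta'(c_i)\in\mathbb{Z}^{|E|}$. By the definition of $d\mu$ the vectors $\{c_i\otimes e_j:1\le i\le\ell,\ 1\le j\le D\}$, with $e_j$ the standard basis of $\mathbb{R}^D$, span a parallelepiped of $d\mu$--measure $1$. Since $\eta=\eta'\otimes\mathrm{Id}_{\mathbb{R}^D}$ and $T=T_{\rm red}\otimes\mathrm{Id}_{D\times D}$, we have $T\eta=(T_{\rm red}\eta')\otimes\mathrm{Id}_{\mathbb{R}^D}$, so the images $\{(T_{\rm red}\eta'(c_i))\otimes e_j\}$ have Gram matrix $G\otimes I_D$, where $G=(T_{\rm red}N)^{T}(T_{\rm red}N)=N^{T}\,\mathrm{diag}(t_1,\dots,t_{|E|})\,N$. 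Therefore the induced volume of the image parallelepiped is $\sqrt{\det(G\otimes I_D)}=\sqrt{\det(G)^{D}}=\det(G)^{D/2}$, which gives $d\nu=\det\!\big(N^{T}\mathrm{diag}(t_e)N\big)^{D/2}\,(T\eta+T\tilde{a})_*(d\mu)$.

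It remains to identify $\det\!\big(N^{T}\mathrm{diag}(t_e)N\big)$ with $\Psi_\Gamma$. Writing $N^{T}\mathrm{diag}(t_e)N=(\mathrm{diag}(\sqrt{t_e})N)^{T}(\mathrm{diag}(\sqrt{t_e})N)$ and applying the Cauchy--Binet formula yields
$$\det\!\big(N^{T}\mathrm{diag}(t_e)N\big)=\sum_{\substack{I\subset E\\ |I|=\ell}}\Big(\prod_{e\in I}t_e\Big)\det(N_I)^2,$$
where $N_I$ is the square submatrix of $N$ on the rows indexed by $I$. Because $H_1(\Gamma,\mathbb{Z})$ is a direct summand of $\mathbb{Z}^{|E|}$, the matrix $N$ extends to a unimodular matrix (concretely, one may take $N$ built from the fundamental cycles of a spanning tree, which is totally unimodular), so $\det(N_I)\in\{0,\pm1\}$, and $\det(N_I)\neq0$ precisely when $I$ is independent in the cographic matroid, i.e.\ when $E\setminus I$ is a spanning tree. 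Substituting $S=E\setminus I$ turns the sum into $\sum_{S\in Span}\prod_{e\notin S}t_e=\Psi_\Gamma(t_1,\dots,t_{|E|})$, which is exactly \eqref{dnu}.

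The bookkeeping steps (identifying the fiber, the Kronecker--product Gram computation, Cauchy--Binet) are routine; the one point requiring genuine care is the claim $\det(N_I)\in\{0,\pm1\}$ with the nonzero case characterized by complements of spanning trees, i.e.\ the unimodularity of the cycle lattice. This is precisely what ensures that the integral structure built into the definition of $d\mu$ produces $\Psi_\Gamma$ with coefficient $1$ rather than an index--squared multiple, and I would dispose of it by choosing $N$ adapted to a spanning tree, where total unimodularity of the fundamental--cycle matrix is manifest.
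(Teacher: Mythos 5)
Your proposal is correct and follows essentially the same route as the paper: compare the two translation-invariant measures on $T\beta^{-1}(a)$ via the volume of the image of the unit cell, use the tensor structure $T\eta=(T_{\rm red}\eta')\otimes \mathrm{Id}$ to reduce the Gram determinant to the $\ell\times\ell$ case raised to the power $D/2$, and identify $\det\big(N^{T}\mathrm{diag}(t_e)N\big)$ with $\Psi_\Gamma$ by expanding over $\ell$-subsets of edges, with nonzero terms exactly at complements of spanning trees. Your Cauchy--Binet step is just the matrix-language version of the paper's computation of the coefficients of $\wedge^{\ell}T_{\rm red}\eta'(v_1\wedge\dots\wedge v_\ell)$ in the orthonormal basis $\{\wedge_{i\in I}w_i\}$, and you are in fact slightly more careful than the paper on the one delicate point, since the claim that the nonzero minors have absolute value $1$ requires total unimodularity of the cycle matrix (e.g.\ via a fundamental-cycle basis), not merely that its entries lie in $\{0,\pm1\}$.
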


\begin{proof}
Since $T\eta+T\tilde{a}$ is linear, $(T\eta+T\tilde{a})_*(d\mu)$ is a constant multiple of the measure
on $T\beta^{-1}(a)$ induced from $\mathbb{R}^{D|E|}$.
In order to compute this constant, we can compare the volume of the image of the standard cube 
in these two measures. We have $(T\eta+T\tilde{a})_*(d\mu)\left(T\eta({\rm Cube})+T\tilde{a}\right)=d\mu({\rm Cube})=1$, by definition. On the other hand, we have 
$$d\nu\left(T\eta({\rm Cube})+T\tilde{a}\right)=d\nu\left(T\eta({\rm Cube})\right),$$ 
since the standard measure on $\mathbb{R}^{D|E|}$ is translation invariant. 

We choose bases $\{v_1,...v_{\ell}\}$ for $H_1(\Gamma,\mathbb{Z})$ and $\{A_1,...,A_D\}$ 
for $\mathbb{R}^D$. With $\eta$ as in \eqref{exseq2}, $\eta'$ as in \eqref{exseq1}, and
$T_{{\rm red}}$ as in \eqref{TTred}, the volume of the image of the standard cube is then given by
$$\sqrt{\det\big(T\eta (v_i\otimes a_j) \cdot T\eta (v_k\otimes a_l)\big)}=\big( \det(T_{red}\eta' v_i \cdot T_{red}\eta' v_k) \big)^{D/2},$$
since the volume form corresponding to a metric $g$ is $det(g)^{1/2}$. 
Consider $T_{{\rm red}} \eta': H_1(\Gamma,\mathbb{R})\rightarrow \mathbb{R}^{|E|}$.
We have $$\wedge^{\ell}T_{{\rm red}} \eta' :  \wedge^{\ell} H_1(\Gamma,\mathbb{R})\rightarrow \wedge^{\ell}\mathbb{R}^{|E|},$$ where $\ell$ is the dimension of $H_1(\Gamma,\mathbb{R})$. 
The determinant above is the norm square of $\wedge^{\ell}T_{{\rm red}} \eta'(v_1\wedge v_2 \wedge ... \wedge v_{\ell})$ in the induced metric, hence it can be computed as the sum of the squares of the
coefficients in an orthonormal basis. Let $\{w_1,...,w_{|E|}\}$ be the standard basis for $\mathbb{R}^{|E|}$. Then $\{\wedge_{i\in I} w_i\}_{|I|=\ell,I \subset \{1,..,|E|\}}$ is an orthonormal basis for 
$\wedge^{\ell}\mathbb{R}^{|E|}$. We have
\begin{align*}
\wedge^{\ell}T_{{\rm red}} \eta'(v_1\wedge v_2 \wedge ... \wedge v_{\ell}) &=(T_{{\rm red}} \eta'(v_1)\wedge T_{{\rm red}} \eta'(v_2) \wedge ... \wedge T_{{\rm red}} \eta'(v_{\ell})) \\&= (\sum_{j=1}^{|E|} \eta'_{1j}\sqrt{t_j}w_j  \wedge  \sum_{j=1}^{|E|} \eta'_{2j} \sqrt{t_j} w_j \wedge ... \wedge \sum_{j=1}^{|E|} \eta'_{\ell j}\sqrt{t_j}w_j)) .
\end{align*}

Note that, since $\eta'_{i,j}=\pm 1$, the coefficient of the term $w_{i_1}\wedge w_{i_2}\wedge... \wedge w_{i_{\ell}}$ is either zero or equal to $\pm \prod_k \sqrt{t_{i_k}}$. On the other hand, the coefficient of $w_{i_1}\wedge w_{i_2}\wedge... \wedge w_{i_{\ell}}$ is nonzero iff the orthogonal projection onto the subspace $W= span(w_{i_1}, w_{i_2}, ...,  w_{i_{\ell}})$ is an isomorphism when we restrict it to the image $T_{red}\eta'$. Since $T'$ fixes the coordinate subspaces, this map is an isomorphim iff 
$\text{Im}(\eta')\cap W= 0$, and that happens iff the subgraph with edges $i_1,...,i_\ell$ 
does not have a loop, which means it is the complement of a spanning tree.
To summarize, we have
$$\big( \det(T_{red}\eta' v_i \cdot T_{red}\eta' v_k) \big)^{D/2} =\big( \sum_{S \in Span} \prod_{e \notin S} t_e \big)^{D/2}. $$
\end{proof}

\begin{prop}\label{AmplProp}
When the integral \eqref{Ampl} converges, it is equal to
$$  \int_{\mathbb{R}^{|E|}_+} e^{-\sum_e t_e m_e^2}\prod_{e\in E} dt_e \frac{1}{\Psi_{\Gamma}(t_1,...,t_{|E|})^{D/2}}\int_{T\beta^{-1}(a)} e^{-\vec{P}.\vec{P} } d\nu, $$
where $d\nu$ is the measure on $T\beta^{-1}(a)$  induced by the standard measure 
on $\mathbb{R}^{D|E|}$.
\end{prop}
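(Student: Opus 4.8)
The plan is to start from the second expression for $\mathcal{A}(\Gamma,a,m_e)$ in Definition \ref{Amplitude}, insert the Schwinger identity for each propagator, interchange the order of integration, and then push the resulting Gaussian integral over the affine fibre $\beta^{-1}(a)$ forward under the scaling map $T$ so that Lemma \ref{measurelemma} produces the factor $\Psi_\Gamma^{-D/2}$. First I would write, for each edge $e$, $\tfrac{1}{Q_e}=\int_0^\infty e^{-Q_e t_e}\,dt_e=\int_0^\infty e^{-t_e m_e^2}\,e^{-t_e P_e^2}\,dt_e$, which is valid because $Q_e=P_e^2+m_e^2>0$ on the locus of integration. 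Taking the product over all edges (another application of Tonelli, the integrand being nonnegative) and substituting into \eqref{Ampl} gives
$$\mathcal{A}(\Gamma,a,m_e)=\int_{\beta^{-1}(a)}\left(\int_{\mathbb{R}_+^{|E|}} e^{-\sum_e t_e m_e^2}\,e^{-\sum_e t_e P_e^2}\,\prod_e dt_e\right)(\eta+\tilde a)_*(d\mu).$$
Since every factor of the integrand is nonnegative, Tonelli's theorem allows the $t$-integration to be pulled outside the integration over $\beta^{-1}(a)$, and the finiteness of the result is exactly the assumed convergence of \eqref{Ampl}; hence
$$\mathcal{A}(\Gamma,a,m_e)=\int_{\mathbb{R}_+^{|E|}} e^{-\sum_e t_e m_e^2}\left(\int_{\beta^{-1}(a)} e^{-\sum_e t_e P_e^2}\,(\eta+\tilde a)_*(d\mu)\right)\prod_e dt_e.$$

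Next I would identify the inner integral. With $T$ as in \eqref{TTred}, the quadratic form $\sum_e t_e P_e^2$ equals $(T\vec P)\cdot(T\vec P)$, so the integrand is the pull-back under the linear isomorphism $T$ of the function $e^{-\vec P\cdot\vec P}$ on $T\beta^{-1}(a)$. Pushing the measure forward gives $T_*\bigl((\eta+\tilde a)_*(d\mu)\bigr)=(T\eta+T\tilde a)_*(d\mu)$, and Lemma \ref{measurelemma} identifies this as $\Psi_\Gamma(t_1,\dots,t_{|E|})^{-D/2}\,d\nu$, where $d\nu$ is the measure on $T\beta^{-1}(a)$ induced from the standard measure on $\mathbb{R}^{D|E|}$. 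Since $\Psi_\Gamma$ depends only on the $t_e$ and not on the fibre coordinates, the change of variables yields
$$\int_{\beta^{-1}(a)} e^{-\sum_e t_e P_e^2}\,(\eta+\tilde a)_*(d\mu)=\Psi_\Gamma(t_1,\dots,t_{|E|})^{-D/2}\int_{T\beta^{-1}(a)} e^{-\vec P\cdot\vec P}\,d\nu,$$
and substituting this into the previous display gives precisely the asserted formula.

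The only genuinely delicate point is the interchange of integrations, for which one needs the double integral to be absolutely convergent; here this is automatic because the integrand is everywhere nonnegative, so Tonelli applies and the iterated integral equals the stated expression exactly when the original amplitude integral converges. A small remark is needed to note that the Schwinger identity is applied pointwise on $\beta^{-1}(a)$ where all $Q_e>0$, the exceptional locus $\{Q_e=0\}$ having measure zero and therefore not affecting the integral. Everything else — the computation $\sum_e t_e P_e^2=(T\vec P)\cdot(T\vec P)$ and the push-forward of measures under $T$ — is a routine unwinding of the definitions in \eqref{exseq2} and \eqref{TTred} together with Lemma \ref{measurelemma}.
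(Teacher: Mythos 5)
Your proposal is correct and follows essentially the same route as the paper: Schwinger trick for each $1/Q_e$, an interchange of integrals justified by nonnegativity of the integrand (Tonelli), rewriting $\sum_e t_e P_e^2=(T\vec P)\cdot(T\vec P)$, and then Lemma \ref{measurelemma} to convert $(T\eta+T\tilde a)_*(d\mu)$ into $\Psi_\Gamma^{-D/2}\,d\nu$. The added remarks on positivity of $Q_e$ and absolute convergence only make explicit what the paper's proof states more briefly.
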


\medskip

\begin{proof}
Using the Schwinger trick $\int_0^{\infty} e^{-ax}dx = \frac{1}{a}$ we write
$$ \int_{\beta^{-1}(a)} \prod_e \frac{1}{Q_e} (\eta+\tilde{a})_*(d\mu) = \int_{\beta^{-1}(a)} \left( \int_{\mathbb{R}^{|E|}_+} e^{-\sum_e t_eQ_e}\prod_{e\in E} dt_e \right) (\eta+\tilde{a})_*(d\mu). $$
By the definition of $Q_e$, this is equal to
$$ \int_{\beta^{-1}(a)} \left((\int_{\mathbb{R}^{|E|}_+} e^{-\sum_e t_e m^2_e-\sum_e t_e P_e^2 }\prod_{e\in E} dt_e \right) (\eta+\tilde{a})_* (d\mu). $$
In vector form, with $T$ as in \eqref{TTred}, this can be written equivalently as
$$ \int_{\beta^{-1}(a)} \left(\int_{\mathbb{R}^{|E|}_+} e^{-\sum_e t_e m^2_e-T\vec{P} \cdot T\vec{P} }\prod_{e\in E} dt_e \right) (\eta+\tilde{a})_*(d\mu). $$
Since all functions are positive, convergence is the same as absolute convergence 
and we can switch integrals.
This gives 
$$ \int_{\mathbb{R}^{|E|}_+} e^{-\sum_e t_e m_e^2}\prod_{e\in E} dt_e\left(\int_{\beta^{-1}(a)} e^{-T\vec{P} \cdot T\vec{P} } (\eta+\tilde{a})_*(d\mu) \right). $$
Using the fact that $(\eta+\tilde{a})^* (e^{-T\vec{P} \cdot T\vec{P} })
=(T\eta+T\tilde{a})^* (e^{-\vec{P} \cdot \vec{P} })$, we rewrite the above as
$$ \int_{\mathbb{R}^{|E|}_+} e^{-\sum_e t_e m_e^2}\prod_{e\in E} dt_e\left(\int_{T\beta^{-1}(a)} e^{-\vec{P} \cdot \vec{P} } (T\eta+T\tilde{a})_*(d\mu)\right). $$
Then applying the result of Lemma \ref{measurelemma} we obtain 
$$ \int_{\mathbb{R}^{|E|}_+} e^{-\sum_e t_e m_e^2}\prod_{e\in E} dt_e\left(\int_{T\beta^{-1}(a)} e^{-\vec{P} \cdot \vec{P} } \frac{d\nu}{\Psi_{\Gamma}(t_1,...,t_{|E|})^{D/2}}\right). $$
\end{proof}

A standard computation shows the following simple facts.

\begin{lem}\label{Hlemma}
Let $H$ be a $d$-dimensional affine linear subspace in $\mathbb{R}^n$ and let $L$ 
be the distance of the affine subspace $H$ from the origin. 
Then the integral of a Gaussian function on $H$ 
with the induced measure is equal to $\pi^{d/2} e^{-L^2}$. 
\end{lem}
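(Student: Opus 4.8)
The plan is to reduce the integral over the affine subspace $H$ to a standard Gaussian integral over the linear subspace parallel to it, using an orthogonal decomposition. Let $W \subset \mathbb{R}^n$ be the $d$-dimensional linear subspace with $H = p_0 + W$ for any $p_0 \in H$. The key observation is that the point $q \in H$ realizing the distance $L$ to the origin is precisely the orthogonal projection of the origin onto $H$; consequently $q \perp W$ and $\|q\| = L$. Every point of $H$ is then uniquely of the form $x = q + w$ with $w \in W$, and since $q \perp w$, the Pythagorean identity gives $\|x\|^2 = \|q\|^2 + \|w\|^2 = L^2 + \|w\|^2$. (If $H$ passes through the origin then $L = 0$ and this is vacuous.)

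Next I would observe that the measure on $H$ induced from the ambient Euclidean structure of $\mathbb{R}^n$ corresponds, under the affine isometry $w \mapsto q + w$ from $(W, d\mathrm{vol}_W)$ to $H$, exactly to Lebesgue measure on $W$: translations are isometries of $\mathbb{R}^n$ and $W \hookrightarrow \mathbb{R}^n$ is an isometric embedding, so the induced volume forms agree. Therefore
\[
\int_H e^{-\|x\|^2}\, d\mathrm{vol}_H(x) \;=\; \int_W e^{-(L^2 + \|w\|^2)}\, d\mathrm{vol}_W(w) \;=\; e^{-L^2}\int_W e^{-\|w\|^2}\, d\mathrm{vol}_W(w).
\]

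Finally, choosing an orthonormal basis of $W$ identifies $(W, d\mathrm{vol}_W)$ with $\mathbb{R}^d$ equipped with Lebesgue measure, and the remaining integral factors into $d$ copies of the one-dimensional Gaussian integral $\int_{\mathbb{R}} e^{-t^2}\, dt = \sqrt{\pi}$, giving $\pi^{d/2}$. Combining this with the previous display yields $\pi^{d/2} e^{-L^2}$, as asserted. There is no genuine obstacle; the only points deserving a line of justification are the identification of the closest point of $H$ with the orthogonal projection of the origin (hence the Pythagorean splitting of $\|x\|^2$) and the compatibility of the induced measure on $H$ with Lebesgue measure on $W$ under the translation.
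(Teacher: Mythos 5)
Your proof is correct and is exactly the ``standard computation'' the paper invokes without writing out: the paper states Lemma \ref{Hlemma} with no proof, and your argument (orthogonal splitting $x=q+w$ with $q$ the foot of the perpendicular from the origin, the Pythagorean factorization of the Gaussian, and the product of $d$ one-dimensional Gaussian integrals) is the intended justification. Nothing further is needed.
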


\begin{lem}\label{volumelemma}
Let $v_1,v_2,...,v_k$ be vectors in $\mathbb{R}^n$. One can compute the volume 
squared of the parallelogram generated by these vectors in the induced metric 
on the subspace they generate, in the form
$$|v_1 \wedge v_2\wedge...\wedge v_k|^2=\det(v_i \cdot v_j). $$
\end{lem}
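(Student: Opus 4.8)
The plan is to reduce the identity to the Cauchy--Binet formula after expanding the exterior product in the standard basis. Let $V$ be the $k\times n$ matrix whose $i$-th row is $v_i$, and let $\{e_1,\dots,e_n\}$ be the standard orthonormal basis of $\mathbb{R}^n$. For a subset $I=\{i_1<\dots<i_k\}\subset\{1,\dots,n\}$ let $V_I$ denote the $k\times k$ submatrix of $V$ formed by the columns indexed by $I$; then $\det(V_I)$ is precisely the coefficient of $e_{i_1}\wedge\dots\wedge e_{i_k}$ in $v_1\wedge\dots\wedge v_k$. The induced metric on $\wedge^k\mathbb{R}^n$ is by definition the one for which $\{e_{i_1}\wedge\dots\wedge e_{i_k}\}_{|I|=k}$ is orthonormal, so $|v_1\wedge\dots\wedge v_k|^2=\sum_{|I|=k}\det(V_I)^2$. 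On the other hand $\det(v_i\cdot v_j)=\det(VV^T)$, and the Cauchy--Binet formula gives $\det(VV^T)=\sum_{|I|=k}\det(V_I)^2$. Comparing the two expressions yields the claim.

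Alternatively, and without invoking Cauchy--Binet, one can argue by Gram--Schmidt. If $v_1,\dots,v_k$ are linearly dependent, then both sides vanish: the wedge product is $0$, and the Gram matrix $(v_i\cdot v_j)$ is singular. Otherwise let $W=\mathrm{span}(v_1,\dots,v_k)$ and choose an orthonormal basis $f_1,\dots,f_k$ of $W$ with $v_i=\sum_{j\le i}a_{ij}f_j$, i.e.\ $V=AF$ with $A$ lower triangular and $F$ the (isometric) change of basis to an orthonormal system. Then $v_i\cdot v_j=(AA^T)_{ij}$, so $\det(v_i\cdot v_j)=(\det A)^2$, while $v_1\wedge\dots\wedge v_k=(\det A)\,f_1\wedge\dots\wedge f_k$ and $f_1\wedge\dots\wedge f_k$ has unit norm in the metric on $\wedge^k W$ induced from $W$ (equivalently from $\mathbb{R}^n$), so $|v_1\wedge\dots\wedge v_k|^2=(\det A)^2$ as well.

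There is no serious obstacle here; this is the classical Gram determinant identity. The only points requiring a little care are the degenerate case of linearly dependent vectors, where one notes that the form $g=(v_i\cdot v_j)$ is only positive semidefinite and both sides are $0$, and pinning down the precise meaning of the induced metric on the exterior power, namely that the wedges of an orthonormal basis form an orthonormal basis. This last point is exactly what legitimizes the coefficient computation in the first approach, and it is the same fact already used implicitly in the proof of Lemma \ref{measurelemma}.
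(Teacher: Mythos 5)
Your proof is correct. The paper itself offers no argument for this lemma—it is stated as one of the ``simple facts'' following a standard computation—so there is nothing to diverge from; both of your routes (Cauchy--Binet on the minors, or the Gram--Schmidt factorization $V=AF$) are valid fillings-in, and your first argument, expanding $v_1\wedge\dots\wedge v_k$ in the orthonormal basis $\{e_{i_1}\wedge\dots\wedge e_{i_k}\}$ and summing squared coefficients, is exactly the technique the paper uses implicitly in the proofs of Lemma \ref{measurelemma} and Proposition \ref{GaussintLemma}.
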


The next statement then follows easily.

\begin{lem}\label{dislemma}
Suppose given a vector subspace $V$ of $\mathbb{R}^n$ with a basis $v_1,...,v_k$, and a vector
$a\in \mathbb{R}^n$. Let $P(w_1,...,w_m)$ denote the parallelogram generated by 
$w_1,...,w_m$. The distance of an affine subspace $a+V$ from the origin is equal to
$$ \frac{\text{Vol}(P(a,v_1,...,v_k))}{\text{Vol}(P(v_1,...,v_k))}
=\frac{|a\wedge v_1 \wedge ... \wedge v_k|}{|v_1 \wedge ... \wedge v_k|}.$$
\end{lem}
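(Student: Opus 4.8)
The plan is to reduce everything to the orthogonal decomposition of $a$ relative to $V$, combined with Lemma \ref{volumelemma}. First I would write $a = a_{\parallel} + a_{\perp}$ with $a_{\parallel}\in V$ and $a_{\perp}\perp V$. By definition the distance of the affine subspace $a+V$ from the origin is $\min_{v\in V}|a+v|$, and since $a+v = (a_{\parallel}+v) + a_{\perp}$ with $a_{\parallel}+v$ ranging over all of $V$, we get $|a+v|^2 = |a_{\parallel}+v|^2 + |a_{\perp}|^2 \ge |a_{\perp}|^2$, with equality at $v=-a_{\parallel}$. Hence the distance equals $|a_{\perp}|$; this is the geometric heart of the claim and the rest is bookkeeping with wedge products.

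Next I would compute the numerator $|a\wedge v_1\wedge\cdots\wedge v_k|$. Since $a_{\parallel}$ is a linear combination of $v_1,\dots,v_k$, multilinearity and antisymmetry of the wedge product give $a_{\parallel}\wedge v_1\wedge\cdots\wedge v_k = 0$, hence $a\wedge v_1\wedge\cdots\wedge v_k = a_{\perp}\wedge v_1\wedge\cdots\wedge v_k$. Because $a_{\perp}$ is orthogonal to each $v_i$, the Gram matrix of the system $(a_{\perp},v_1,\dots,v_k)$ is block diagonal, with the $1\times 1$ block $|a_{\perp}|^2$ and the $k\times k$ block $(v_i\cdot v_j)$; taking determinants and applying Lemma \ref{volumelemma} twice yields
$$|a\wedge v_1\wedge\cdots\wedge v_k|^2 = |a_{\perp}|^2\,|v_1\wedge\cdots\wedge v_k|^2.$$
Dividing by $|v_1\wedge\cdots\wedge v_k|$, which is nonzero since the $v_i$ form a basis of $V$, and taking square roots gives $|a_{\perp}| = |a\wedge v_1\wedge\cdots\wedge v_k|/|v_1\wedge\cdots\wedge v_k|$, which combined with the first paragraph is the second equality in the statement.

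Finally, for the first equality I would invoke Lemma \ref{volumelemma} once more to identify $|a\wedge v_1\wedge\cdots\wedge v_k|$ with $\mathrm{Vol}(P(a,v_1,\dots,v_k))$ and $|v_1\wedge\cdots\wedge v_k|$ with $\mathrm{Vol}(P(v_1,\dots,v_k))$, so that the chain of identities reads off directly. There is essentially no obstacle here; the only point needing a line of care is the block structure of the Gram matrix, i.e. that orthogonality of $a_{\perp}$ to $V$ makes the cross terms $a_{\perp}\cdot v_i$ vanish — this is precisely what turns "volume of the $(k+1)$-dimensional parallelepiped" into "height times base volume," and it is what makes the two displayed ratios equal.
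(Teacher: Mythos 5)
Your proof is correct and is essentially the paper's argument — the paper's one-line proof is exactly the ``volume of $P(a,v_1,\dots,v_k)$ equals base volume times height'' statement, which you have simply made precise via the orthogonal decomposition $a=a_{\parallel}+a_{\perp}$ and the block structure of the Gram matrix, together with Lemma \ref{volumelemma}. No gaps; your version just spells out the details the paper leaves implicit.
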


\begin{proof}
Volume is defined by the metric, hence, the distance times $Vol(P(v_1,...,v_k))$ is the volume of $P(a,v_1,...,v_k)$. 
\end{proof}


\begin{prop}\label{GaussintLemma} The Gaussian integral of $e^{-\vec{P}\cdot \vec{P}}$,  
with respect to the measure $d\nu$ on $T\beta^{-1}(a)$ defined as above, is given by
$$ \int_{T\beta^{-1}(a)} e^{-\vec{P}\cdot \vec{P}}d\nu = \pi^{D\ell/2}e^{-\mathcal{P}_{\Gamma}(t)/\Psi_{\Gamma}(t,a)}, $$
where $\Psi_\Gamma$ and $\mathcal{P}_\Gamma$ are the two Symanzik polynomials.
\end{prop}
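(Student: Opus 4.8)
The plan is to reduce the Gaussian integral over the affine subspace $T\beta^{-1}(a)$ to the distance-from-origin formula of Lemma \ref{Hlemma}, and then identify that distance squared with the ratio $\mathcal{P}_\Gamma(t)/\Psi_\Gamma(t,a)$ via Lemma \ref{dislemma} and Lemma \ref{volumelemma}. First I would observe that $T\beta^{-1}(a)$ is an affine linear subspace of $\mathbb{R}^{D|E|}$ of dimension $D\ell$, where $\ell = \dim H_1(\Gamma,\mathbb{R})$: indeed $\beta^{-1}(a) = \eta(H_1(\Gamma,\mathbb{R}^D)) + \tilde a$ is a translate of the image of $\eta$, and applying the invertible (for $t_e > 0$) linear map $T$ preserves affineness and dimension. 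Since $d\nu$ is by construction the measure induced from the standard measure on $\mathbb{R}^{D|E|}$, Lemma \ref{Hlemma} applies verbatim and gives
$$\int_{T\beta^{-1}(a)} e^{-\vec P\cdot\vec P}\, d\nu = \pi^{D\ell/2} e^{-L^2},$$
where $L$ is the distance of $T\beta^{-1}(a)$ from the origin. It remains to compute $L^2$.

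Next I would compute $L^2$ using Lemma \ref{dislemma}. Write the affine subspace as $T\tilde a + TV$ with $V = \eta(H_1(\Gamma,\mathbb{R}^D))$, spanned by $\{\,\eta(v_i\otimes A_j)\,\}$ for a basis $\{v_1,\dots,v_\ell\}$ of $H_1(\Gamma,\mathbb{Z})$ and an orthonormal basis $\{A_1,\dots,A_D\}$ of $\mathbb{R}^D$. Then Lemma \ref{dislemma} gives
$$L^2 = \frac{|T\tilde a \wedge \bigwedge_{i,j} T\eta(v_i\otimes A_j)|^2}{|\bigwedge_{i,j} T\eta(v_i\otimes A_j)|^2}.$$
Here the key structural point is that everything factors through the $D$-fold tensor with $\mathbb{R}^D$: the denominator is exactly the quantity computed in the proof of Lemma \ref{measurelemma}, namely $\big(\det(T_{\rm red}\eta' v_i\cdot T_{\rm red}\eta' v_k)\big)^D = \Psi_\Gamma(t)^D$ (the $D$-th power, not $D/2$, since we are taking the norm squared rather than the norm). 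For the numerator, I would decompose $\tilde a$ according to the direct sum $\mathbb{R}^{D|E|} = \mathbb{R}^{|E|}\otimes\mathbb{R}^D = \bigoplus_j \mathbb{R}^{|E|}\otimes A_j$, writing $\tilde a = \sum_j \tilde a_j\otimes A_j$. The wedge $T\tilde a \wedge \bigwedge_{i,j} T\eta(v_i\otimes A_j)$ then block-decomposes over the $D$ copies, and an expansion (multilinearity of the wedge, Laplace-type expansion) expresses the numerator as a sum over the choice, in each of the $D$ slots, of which coordinate of $\tilde a_j$ "replaces" a loop generator; the terms organize into $\Psi_\Gamma(t)^{D-1}$ times a sum of squared $(\ell+1)\times(\ell+1)$ minors of the matrix $[T_{\rm red}\tilde a_j \mid T_{\rm red}\eta' v_1 \mid \cdots \mid T_{\rm red}\eta' v_\ell]$, summed over $j$.

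Then I would identify that sum of squared minors with $\Psi_\Gamma(t)\,\mathcal{P}_\Gamma(t,a)$. By Lemma \ref{volumelemma} applied in $\mathbb{R}^{|E|}$, $\sum_j |T_{\rm red}\tilde a_j\wedge T_{\rm red}\eta' v_1\wedge\cdots\wedge T_{\rm red}\eta' v_\ell|^2$ expands, via the orthonormal wedge basis $\{\bigwedge_{i\in I} w_i\}_{|I|=\ell+1}$ of $\wedge^{\ell+1}\mathbb{R}^{|E|}$, into a sum over $(\ell+1)$-subsets $I$ of $\{1,\dots,|E|\}$; as in the proof of Lemma \ref{measurelemma}, the coefficient indexed by $I$ carries the factor $\prod_{i\in I} t_i$ and is nonzero precisely when the $\ell$ loop-vectors span a coordinate-$\ell$ subspace complementary to one edge of $I$ — equivalently when $I^c$ (a set of $|E|-\ell-1$ edges) together with appropriate data forms the complement of a spanning tree missing one edge, i.e.\ when $I$ contains a cut. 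Tracking the combinatorics, the terms not involving $\tilde a$-momenta reproduce $\Psi_\Gamma(t)^2$-type contributions and cancel against the denominator structure, and the genuinely $a$-dependent part is $\sum_{C\in Cut} P_C\prod_{e\in C}t_e = \mathcal{P}_\Gamma(t,a)$ times $\Psi_\Gamma(t)$; here the identity $P_C = (\sum_{v\in V_C}p_v)^2$ enters because the orthogonal projection of $\tilde a$ onto the span of a cut is exactly the sum of the momenta on one side of the cut. Assembling, $L^2 = \big(\Psi_\Gamma(t)^{D-1}\cdot\Psi_\Gamma(t)\mathcal{P}_\Gamma(t,a)\big)/\Psi_\Gamma(t)^D = \mathcal{P}_\Gamma(t,a)/\Psi_\Gamma(t)$, which yields the claim. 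The main obstacle I anticipate is the bookkeeping in the last step: carefully matching the squared-minor expansion to the cut expansion of the second Symanzik polynomial, in particular verifying that the projection of the external-momentum lift onto a cut subspace is independent of the lift $\tilde a$ and equals the physical cut momentum — this is where one must use exactness of \eqref{exseq2} and the fact that $\beta(\tilde a) = a$ is fixed. I would handle this by choosing, for each cut $C$, an adapted description of $\mathrm{Im}(\eta')$ relative to the coordinate subspace spanned by $C$, reducing the minor computation to the rank-one (per $\mathbb{R}^D$-slot) case where the distance is manifestly $|{\textstyle\sum_{v\in V_C}}p_v|$.
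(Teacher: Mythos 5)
Your overall route is exactly the paper's: reduce to Lemma \ref{Hlemma}, compute the distance via Lemma \ref{dislemma} as a ratio of wedge norms, use the orthogonality of the $D$ blocks of $\mathbb{R}^{|E|}\otimes\mathbb{R}^D$ to split the numerator, and identify the $(\ell+1)$-fold wedge coefficients with cuts, where the lift condition $\beta(\tilde a)=a$ makes the signed sum of edge momenta across a cut equal to $\sum_{v\in V_C}p_v$. The setup (denominator $=\Psi_\Gamma^D$, numerator $=\Psi_\Gamma^{D-1}\times\sum_j\bigl|T_{\rm red}\tilde a_j\wedge T_{\rm red}\eta' v_1\wedge\cdots\wedge T_{\rm red}\eta' v_\ell\bigr|^2$) is correct and matches the paper's proof of Proposition \ref{GaussintLemma}.

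However, the central bookkeeping step is stated wrongly, and you reach the claimed answer only through a compensating arithmetic slip. The sum over $j$ of squared $(\ell+1)\times(\ell+1)$ minors of $[T_{\rm red}\tilde a_j\mid T_{\rm red}\eta'v_1\mid\cdots\mid T_{\rm red}\eta'v_\ell]$ is $\mathcal{P}_\Gamma(t,a)$ itself, not $\Psi_\Gamma(t)\,\mathcal{P}_\Gamma(t,a)$: for each cut $C$ (a set of exactly $\ell+1$ edges, so ``contains a cut'' here means ``is a cut'') the coefficient of $\bigwedge_{e\in C}w_e$ is $\prod_{e\in C}\sqrt{t_e}$ times the signed sum of the $P_{e,j}$ over $e\in C$, and squaring and summing over $j$ gives $\prod_{e\in C}t_e\,P_C$; summing over cuts gives $\mathcal{P}_\Gamma$ with no extra $\Psi_\Gamma$. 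Relatedly, there are no ``terms not involving $\tilde a$-momenta'' that could ``cancel against the denominator'': every $(\ell+1)\times(\ell+1)$ minor of that $(\ell+1)$-column matrix uses the $\tilde a_j$ column, so no $\Psi_\Gamma^2$-type contributions arise. Finally, with your stated values the assembly is miscomputed: $\bigl(\Psi_\Gamma^{D-1}\cdot\Psi_\Gamma\mathcal{P}_\Gamma\bigr)/\Psi_\Gamma^{D}=\mathcal{P}_\Gamma$, not $\mathcal{P}_\Gamma/\Psi_\Gamma$. With the corrected identification, the numerator is $\Psi_\Gamma^{D-1}\mathcal{P}_\Gamma$, the denominator is $\Psi_\Gamma^{D}$, and $L^2=\mathcal{P}_\Gamma/\Psi_\Gamma$, which is precisely how the paper concludes; so the plan is sound once this factor of $\Psi_\Gamma$ is removed and the final division is done correctly.
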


\begin{proof}
By lemma \ref{Hlemma}, 
it is enough to show that the distance squared of $T\beta^{-1}(a)$ from the origin is 
$$\mathcal{P}_{\Gamma}(t)/\Psi_{\Gamma}(t,a).$$
Let $v_1,v_2,...,v_{\ell}$ be a basis of $H_1(\Gamma,\mathbb{Z})$. We denote the image 
of these vectors in $H_1(\Gamma,\mathbb{Z})\otimes \mathbb{R}$ by the same notation. 
Note that the affine subspace over which we are integrating the Gaussian is parallel to the 
space generated by $T \eta \big(v_1 \otimes e_1, v_1 \otimes e_2, ... , v_1 \otimes e_D, v_2 \otimes e_1,..., v_2\otimes e_D,...,v_{\ell} \otimes e_D\big)$, where $\{ e_1,...,e_D \}$ is the standard basis 
for $\mathbb{R}^D$. As we have shown before we then have
$$T \eta (v_1 \otimes e_1)\wedge ... \wedge T \eta (v_{\ell} \otimes e_D)=$$
$$\left(\sum_{S\in Span} (\pm\prod_{e\notin S} \sqrt{t_e}) \wedge_{e\notin S} w_e \otimes e_1\right) \wedge \left(\sum_{S \in Span} (\pm\prod_{e\notin S} \sqrt{t_e}) \wedge_{e\notin S} w_e \otimes e_2\right)\wedge $$ $$  ...   \wedge \left(\sum_{S\in Span} (\pm\prod_{e\notin S} \sqrt{t_e}) \wedge_{e\notin S} w_e \otimes e_D\right), $$
where $\{w_e\}_{e \in E}$ is a basis for $\mathbb{R}^{|E|}$.
We can write $\tilde{a} = \sum_{e,i} P_{e,i} w_e \otimes e_i$. Then, for  $i\neq j$, we see that the vector
$$T (w_e \otimes e_i) \wedge T \eta (v_1 \otimes e_1)\wedge ... \wedge T \eta (v_{\ell} \otimes e_D)$$ 
is orthogonal to 
$$T (w_{e'} \otimes e_j) \wedge T \eta (v_1 \otimes e_1)\wedge ... \wedge T \eta (v_{\ell} \otimes e_D).$$
One can see this from the expansion in the standard basis: all terms in the first expression 
have $\ell+1$ terms with $e_i$, while the second one has $\ell$ terms with $e_i$. Thus,
one can compute the norm squared of $$T (\sum _e P_{e,i} w_e \otimes e_i) \wedge T \eta (v_1 \otimes e_1)\wedge ... \wedge T \eta (v_{\ell} \otimes e_D)$$
for different $i$'s and add them up to get the squared norm of
$$T(\tilde{a})\wedge T \eta (v_1 \otimes e_1)\wedge ... \wedge T \eta (v_{\ell} \otimes e_D).$$

The vector above is in $\wedge^{D\ell +1} \mathbb{R}^{|E|}\otimes \mathbb{R}^D$. We can 
identify $\mathbb{R}^{|E|}\otimes \mathbb{R}^D$ with $D$ copies of $\mathbb{R}^{|E|}$. 
The norm squared of this vector is equal to the volume squared of the parallelogram 
generated by the vectors. Since we have $\ell+1$ vectors in one of the copies and $\ell$ vectors in the
other copies, we can compute the volume of each of them and multiply them together. For $j\neq i$ 
we have $v_1\otimes e_j,...,v_{\ell}\otimes e_j$, all of which have the same volume squared, 
equal to $\Psi$. To compute the volume of the copy with $\ell +1$ vectors, it is enough to compute
$$ T_{{\rm red}} (\sum_e P_{e,i} w_e) \wedge T_{{\rm red}} \eta' (v_1)\wedge ... \wedge T_{{\rm red}} \eta' (v_{\ell} )$$
$$= \left(\sum_e P_{e,i} \sqrt{t_e}\, w_e \right) \wedge  \left(\sum_{S\in Span} (\pm\prod_{e\in s} \sqrt{t_e}) \wedge_{e\in s} w_e \right).
$$
The terms that appear in the coefficients in the standard basis $$\{w_{e_{i_1}}\wedge...\wedge w_{e_{i_{\ell+1}}} \}_{i_1<...<i_{\ell+1}}$$ are sums of $P_{e,i}$. Nonzero terms correspond to $\ell+1$ edges that are a complement of a spanning tree plus one extra edge. Note that, if we remove these edges from the graph, it becomes disconnected and, if we add any of these edges to the graph, it becomes a spanning tree. So the term $w_{e_{i_1}}\wedge...\wedge w_{e_{i_{\ell+1}}}$ appears $\ell+1$ times, once of each of its edges. 
Thus, the coefficient is $\prod_j \sqrt{t_{e_{j}}} \sum_j P_{e_j,i}$. Since $P_{e_j,i}$ is a lift of $a$, this 
sum is equal to the sum of momenta in one of the connected components of the graph. We get the norm as
$$\mathcal{P}_{\Gamma}(t,P_C)=\sum_{C \in Cut} P_C\prod_{e \in C} t_e.$$
The original norm squared we wanted to compute is then
 $$\mathcal{P}_{\Gamma} \Psi_{\Gamma}^{D-1}$$
and, by Lemma \ref{dislemma}, 
the distance squared is 
$$\frac{\mathcal{P}_{\Gamma} \Psi_{\Gamma}^{D-1}}{\Psi_{\Gamma}^{D}}=\frac{\mathcal{P}_{\Gamma}}{\Psi_{\Gamma}}.$$
\end{proof}

The following result then follows from Proposition \ref{GaussintLemma} and Proposition \ref{AmplProp}.

\begin{prop}
When the integral \eqref{Ampl} converges, it is equal to
$$\mathcal{A}(\Gamma,a)=\pi^{D\ell/2}\int_{\mathbb{R}^{|E|}_+}e^{{-\sum t_e m_e^2}-\frac{\mathcal{P}_{\Gamma}}{\Psi_{\Gamma}}}\frac{1}{\Psi_{\Gamma}^{D/2}} \prod_e dt_e$$

\end{prop}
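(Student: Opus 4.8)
The plan is to simply combine the three preceding results by substitution. Proposition \ref{AmplProp} already rewrites the convergent integral \eqref{Ampl} as the product
$$
\int_{\mathbb{R}^{|E|}_+} e^{-\sum_e t_e m_e^2}\prod_{e\in E} dt_e \,\frac{1}{\Psi_{\Gamma}(t)^{D/2}}\int_{T\beta^{-1}(a)} e^{-\vec{P}\cdot \vec{P}}\, d\nu,
$$
so the only thing left is to evaluate the inner Gaussian integral over $T\beta^{-1}(a)$. That is exactly the content of Proposition \ref{GaussintLemma}, which gives $\int_{T\beta^{-1}(a)} e^{-\vec{P}\cdot \vec{P}}d\nu = \pi^{D\ell/2} e^{-\mathcal{P}_{\Gamma}(t)/\Psi_{\Gamma}(t,a)}$.

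The steps, in order, are: first, invoke Proposition \ref{AmplProp} to pass to the parametric form with the inner Gaussian integral separated out; second, substitute the closed-form evaluation of the inner integral from Proposition \ref{GaussintLemma}; third, pull the constant $\pi^{D\ell/2}$ out of the $t$-integral since it does not depend on the Schwinger parameters, obtaining
$$
\mathcal{A}(\Gamma,a)=\pi^{D\ell/2}\int_{\mathbb{R}^{|E|}_+} e^{-\sum_e t_e m_e^2 - \frac{\mathcal{P}_{\Gamma}}{\Psi_{\Gamma}}}\,\frac{1}{\Psi_{\Gamma}^{D/2}}\prod_e dt_e.
$$
One should note that the hypotheses match up: Proposition \ref{AmplProp} is stated under the assumption that \eqref{Ampl} converges, which is precisely the hypothesis of the present statement, and Proposition \ref{GaussintLemma} holds unconditionally for the measure $d\nu$ in question, so no further convergence bookkeeping is needed beyond remarking that the positivity of all integrands (already used in the proof of Proposition \ref{AmplProp}) lets us interchange the order of integration freely.

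There is essentially no obstacle here; this proposition is a corollary, and the real work was done in establishing Lemma \ref{measurelemma} (the $\Psi_\Gamma^{D/2}$ Jacobian factor) and Proposition \ref{GaussintLemma} (identifying the distance-squared of the affine fibre $T\beta^{-1}(a)$ with $\mathcal{P}_\Gamma/\Psi_\Gamma$ via Lemmas \ref{Hlemma}, \ref{volumelemma}, \ref{dislemma}). The only point worth a sentence of care is making sure the factor $\ell = \dim H_1(\Gamma,\mathbb{R})$ appearing in $\pi^{D\ell/2}$ is consistent: it comes from the dimension $D\ell$ of the affine subspace $T\beta^{-1}(a)$ over which the Gaussian is integrated, which is exactly what Lemma \ref{Hlemma} contributes with $d = D\ell$. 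I would therefore write the proof as a two-line deduction citing Proposition \ref{AmplProp} and Proposition \ref{GaussintLemma}.

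\begin{proof}
By Proposition \ref{AmplProp}, when \eqref{Ampl} converges it equals
$$
\int_{\mathbb{R}^{|E|}_+} e^{-\sum_e t_e m_e^2}\prod_{e\in E} dt_e \,\frac{1}{\Psi_{\Gamma}(t_1,\dots,t_{|E|})^{D/2}}\int_{T\beta^{-1}(a)} e^{-\vec{P}\cdot \vec{P}}\, d\nu .
$$
By Proposition \ref{GaussintLemma} the inner integral is $\pi^{D\ell/2} e^{-\mathcal{P}_{\Gamma}(t)/\Psi_{\Gamma}(t,a)}$, where $\ell=\dim H_1(\Gamma,\mathbb{R})$. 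Since the constant $\pi^{D\ell/2}$ does not depend on the Schwinger parameters $t_e$, it factors out of the remaining integral, giving
$$
\mathcal{A}(\Gamma,a)=\pi^{D\ell/2}\int_{\mathbb{R}^{|E|}_+}e^{-\sum_e t_e m_e^2-\frac{\mathcal{P}_{\Gamma}}{\Psi_{\Gamma}}}\,\frac{1}{\Psi_{\Gamma}^{D/2}}\prod_e dt_e .
$$
\end{proof}
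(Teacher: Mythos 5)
Your proposal is correct and matches the paper exactly: the paper states this proposition as an immediate consequence of Proposition \ref{AmplProp} and Proposition \ref{GaussintLemma}, which is precisely the substitution you carry out. Your additional remarks on the consistency of the factor $\pi^{D\ell/2}$ and on convergence are accurate but not needed beyond what the two cited propositions already provide.
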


\begin{rem}\label{Pluckerremark}
The coefficients of the second Symanzik polynomials are always positive. According to the computation in Proposition \ref{GaussintLemma}, they correspond to squared norms of certain differential forms. We will use this property in Section \ref{feynGKZ}.
\end{rem}

\section{\textbf{Amplitude as a function of momenta}}\label{sec1}

The amplitude is defined as an integral which depends on the external momenta. This integral does not always converge. For some graphs, that are called ultraviolet divergent, the integral diverges for any value of external momenta, while for some graphs the divergences happen only for special values of the external momenta. The most widely used method in physics for treating these divergences is called dimensional regularization. Within this method, a regularization of divergent integrals is achieved by formally computing the integral for $D$ a complex variable in a neighborhood of the integer spacetime dimension in the complex plane. For a detailed explanation of this method see \cite{MC}. In this section, we define the integral for any $D$ and find differential equations satisfied by it.
\\

The integral depends on a parameter in $\mathbb{C}^{D(|V|-1)}$ and masses of edges. We map this vector space into the vector space $\mathfrak{V}_{\Gamma}$ which parametrizes the coefficients of  the first and second Symanzik polynomials so that it agrees with the amplitude on the image. We generalize the integral to an integral which has $\mathfrak{V}_{\Gamma}$ as its parameter space. Note that all coefficients in the second Symanzik polynomial are equal to $1$. We consider general coefficients for these terms and look at the integral as we vary them. Over $\mathfrak{V}_{\Gamma}$ the differential equation satisfied by the new integral is geometric in nature and can be solved using series. One can identify $\mathfrak{V}_{\Gamma}$ with the parameter space of a family of hypersurfaces in toric varieties. The value of the integral for integer $D$ is a period of a relative motive defined by the complement of this hypersurface. From now on we consider the following polynomials:
\\

$$\Psi_{\Gamma}(t,P_S):=\sum_{S \in Span} P_S\prod_{e\notin S} t_e
$$
$$\mathcal{P}_{\Gamma}(t,P_C):=\sum_{C \in Cut} P_C\prod_{e \in C} t_e
$$
\begin{align}\label{pp}
Q_{\Gamma}(t,P_C,m_e)&=\mathcal{P}_{\Gamma}(t,P_C)+ (\sum_e t_e m_e^2) \sum_{S \in Span} \prod_{e\notin S} t_e
= \sum_{T} Q_T t^T
\end{align}
Here $T$ ranges over all monomials appearing in $Q_{\Gamma}$ and by $t^T$ we mean the monomial corresponding to $T$. The amplitude is
$$\mathcal{A}(\Gamma,a,m_e)=\pi^{D\ell/2}\int_{\mathbb{R}^{|E|}_+}e^{-Q_{\Gamma}/\Psi_{\Gamma}}\frac{1}{\Psi_{\Gamma}^{D/2}}\prod_e dt_e,$$
where coefficients of $\Psi_{\Gamma}$ are $1$ and coefficients of $Q_{\Gamma}$ come from equation (\ref{pp}) and are sums of masses and momentum variables $P_C$.

\begin{deff}{(Parameter Space)}
Let $\mathfrak{V}_{\Gamma}$ denote the parameter space for $Q_T$ and $P_S$. It is a complex vector space of dimension equal to the number of monomials in $Q_{\Gamma}$ plus $|Span|$. 
\end{deff}

\begin{deff}{(Generalized Amplitude I)}\label{defGAI}
Given $(c_1,c_2,\vec{v})\in \mathbb{C}^{n+2}$, let
\begin{equation}
 I(c_1,c_2,Q_T,P_S,\vec{v}):=\int_{\mathbb{R}^{|E|}_+}e^{-{Q_{\Gamma}}/{\Psi_{\Gamma}}}\frac{Q_{\Gamma}^{c_1}}{\Psi_{\Gamma}^{{c_2}}}t^{\vec{v}}\prod_e dt_e,
\end{equation}
where $t^{\vec{v}}$ means $t_1^{v_1}t_2^{v_2}...t_n^{v_n}$ and $n=|E|$.
\end{deff}

\begin{rem}
Note that this integral is not well defined for all values of the parameters and is a (multi-valued) function on a dense domain in $\mathbb{C}^{n+2}\times \mathfrak{V}_{\Gamma}$. To define this (multi-valued) function first we define it for some open subset and then we take the analytic continuation.
\end{rem}

\begin{lem}\label{Radial}
When the generalized amplitude I converges we have:
$$ I(c_1,c_2,Q_T,P_S,\vec{v})= \Gamma(n+|\vec{v}|+c_1(\ell+1)-c_2\ell) \int_{\Delta_{n-1}} \frac{Q_{\Gamma}^{-n-|\vec{v}|+(c_2-c_1)\ell}}{\Psi_{\Gamma}^{-n-|\vec{v}|+(c_2-c_1)(\ell+1)}}t^{\vec{v}}  \Omega,
$$
where $|\vec{v}|=\sum_i v_i$, 
$$\Omega=\sum _{i=1}^n (-1)^{i+1} t_i \ dt_1 \wedge ... \wedge \Hat{dt_i} \wedge ... \wedge  dt_n, $$ 
and $\Delta_{n-1}$ is the standard $n-1$ simplex embedded in $\mathbb{R}^n$.
\end{lem}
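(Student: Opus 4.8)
The plan is to introduce polar-type coordinates on $\mathbb{R}_+^n$ that separate the radial direction from the simplex $\Delta_{n-1}$, and then to integrate out the radial variable using the Gamma-function integral $\int_0^\infty e^{-s} s^{a-1}\,ds = \Gamma(a)$. Concretely, I would substitute $t_i = r u_i$ where $r = t_1 + \cdots + t_n > 0$ and $(u_1,\dots,u_n) \in \Delta_{n-1}$, so that $\sum_i u_i = 1$. The first step is to record how each ingredient of the integrand scales under $t_i \mapsto r u_i$: since $\Psi_\Gamma$ is homogeneous of degree $\ell$ in $t$ (each spanning-tree complement has $\ell$ edges) and $Q_\Gamma$ is homogeneous of degree $\ell+1$ in $t$ (each cut, resp. each $t_e$ times a spanning-tree complement, has $\ell+1$ edges), we get $\Psi_\Gamma(ru) = r^{\ell}\Psi_\Gamma(u)$ and $Q_\Gamma(ru) = r^{\ell+1}Q_\Gamma(u)$. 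Likewise $t^{\vec v} = r^{|\vec v|} u^{\vec v}$ and the exponent becomes $Q_\Gamma(ru)/\Psi_\Gamma(ru) = r\, Q_\Gamma(u)/\Psi_\Gamma(u)$.

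Next I would handle the change of measure. Under $t = ru$ with the constraint $\sum u_i = 1$, the Lebesgue measure factors as $dt_1\cdots dt_n = r^{n-1}\,dr\,\Omega$, where $\Omega = \sum_{i} (-1)^{i+1} u_i\,du_1\wedge\cdots\wedge\widehat{du_i}\wedge\cdots\wedge du_n$ is precisely the $(n-1)$-form on $\Delta_{n-1}$ appearing in the statement (this is the standard identity expressing Lebesgue measure on a cone over a simplex in terms of the radial coordinate and the "volume form" of the simplex; one checks it on the coordinate simplex and uses homogeneity). Substituting everything, the integral becomes
\begin{equation*}
\int_0^\infty\!\!\int_{\Delta_{n-1}} e^{-r\,Q_\Gamma(u)/\Psi_\Gamma(u)}\, \frac{r^{(\ell+1)c_1}\,Q_\Gamma(u)^{c_1}}{r^{\ell c_2}\,\Psi_\Gamma(u)^{c_2}}\, r^{|\vec v|}\, u^{\vec v}\, r^{n-1}\,dr\,\Omega .
\end{equation*}
Collecting the powers of $r$ gives total exponent $\alpha := n + |\vec v| + (\ell+1)c_1 - \ell c_2 - 1$, so the inner $r$-integral is $\int_0^\infty e^{-r\,Q_\Gamma/\Psi_\Gamma}\, r^{\alpha}\,dr$. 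Setting $s = r\,Q_\Gamma(u)/\Psi_\Gamma(u)$ (legitimate since the coefficients of $Q_\Gamma$ and $\Psi_\Gamma$ are positive on $\Delta_{n-1}$, so the ratio is a positive function there), this evaluates to $\Gamma(\alpha+1)\,\big(Q_\Gamma(u)/\Psi_\Gamma(u)\big)^{-(\alpha+1)}$, and $\alpha + 1 = n + |\vec v| + c_1(\ell+1) - c_2\ell$, matching the Gamma factor in the statement. Combining the surviving powers of $Q_\Gamma(u)$ and $\Psi_\Gamma(u)$: from the integrand we have $Q_\Gamma^{c_1}\Psi_\Gamma^{-c_2}$, and from the $r$-integral we pick up $Q_\Gamma^{-(\alpha+1)}\Psi_\Gamma^{\alpha+1}$; the net exponent of $Q_\Gamma$ is $c_1 - \alpha - 1 = -n - |\vec v| + (c_2-c_1)\ell$ and the net exponent of $\Psi_\Gamma$ is $-c_2 + \alpha + 1 = -n - |\vec v| + (c_2 - c_1)(\ell+1)$, exactly as claimed, leaving $u^{\vec v}\,\Omega$ on $\Delta_{n-1}$ (and I would then rename $u$ back to $t$).

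The main technical point — and the only place requiring care — is justifying that this is a genuine change of variables with the asserted Jacobian and that Fubini applies: one must verify that the map $(r,u)\mapsto ru$ from $\mathbb{R}_+\times\Delta_{n-1}$ to $\mathbb{R}_+^n$ is a diffeomorphism onto its image, that the pullback of $dt_1\cdots dt_n$ is $r^{n-1}\,dr\wedge\Omega$, and that under the convergence hypothesis all integrals involved are absolutely convergent so the order of integration may be exchanged. The homogeneity degrees of the two Symanzik polynomials are the crux of the power-counting, so I would state those explicitly at the outset. Everything else is the substitution $s = r\,Q_\Gamma/\Psi_\Gamma$ and bookkeeping of exponents.
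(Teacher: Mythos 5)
Your proof is correct and is essentially the paper's own argument: the paper parametrizes $\mathbb{R}_+^{n}$ by $\Delta_{n-1}\times\mathbb{R}_+$ via $(x,s)\mapsto sx$, uses the homogeneity degrees $\ell$ and $\ell+1$ of $\Psi_\Gamma$ and $Q_\Gamma$, the pullback identity $\phi^*(dt_1\cdots dt_n)=\Omega|_{\Delta_{n-1}}\,s^{n-1}ds$, and $\int_0^\infty e^{-s\lambda}s^x\,ds=\lambda^{-x-1}\Gamma(x+1)$, exactly as you do, with the same appeal to absolute convergence to justify Fubini. One cosmetic slip: your line asserting that the net exponent of $\Psi_\Gamma$ is $-c_2+\alpha+1=-n-|\vec{v}|+(c_2-c_1)(\ell+1)$ has the sign flipped, since $-c_2+\alpha+1=n+|\vec{v}|+(c_1-c_2)(\ell+1)$ is the exponent of $\Psi_\Gamma$ in the numerator, i.e.\ the negative of the denominator exponent in the statement — but the final expression you arrive at agrees with the lemma.
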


\begin{proof}
To show this we parametrize $\mathbb{R}^{|E|}_+$ with $\Delta_{n-1}\times \mathbb{R}_+$. Consider the map $\phi: \Delta_{n-1}\times \mathbb{R}_+ \rightarrow \mathbb{R}^{|E|}_+$ given by $$\phi(x,s)= sx.$$
Note that $$\phi^*(dt_1dt_2...dt_n)= \Omega|_{\Delta_{n-1}} s^{n-1} ds,$$  $$Q_{\Gamma}(sx)= s^{\ell+1} Q_{\Gamma}(x),$$  $$\Psi_{\Gamma}(sx)= s^{\ell} \Psi_{\Gamma}(x)$$ and $(sx)^{\vec{v}}=s^{|\vec{v}|} x^{\vec{v}}$. Pulling back the integrand to $\Delta_{n-1}\times \mathbb{R}_+$ we have:
\begin{align} 
I(c_1,c_2,Q_T,P_S,\vec{v})&= \int_{\Delta_{n-1}\times \mathbb{R}_+} e^{-s{Q_{\Gamma}(x)}/{\Psi_{\Gamma}(x)}}\frac{s^{c_1 (\ell+1)}Q_{\Gamma}^{c_1}(x)}{s^{c_2 \ell} \Psi_{\Gamma}^{{c_2}}(x)} s^{|\vec{v}|} x^{\vec{v}} \Omega s^{n-1} ds \\ &=  \int_{\Delta_{n-1}}  \Omega \frac{Q_{\Gamma}^{c_1}(x)}{\Psi_{\Gamma}^{{c_2}}(x)}x^{\vec{v}} \int_{\mathbb{R}_+} s^{c_1 (\ell+1)- c_2 \ell + |\vec{v}| + n-1} e^{-s{Q_{\Gamma}(x)}/{\Psi_{\Gamma}(x)}} ds.
\end{align}
The fact that $\int_{\mathbb{R}_+} e^{-s\lambda} s^x ds = \lambda^{-x-1} \Gamma(x+1)$ then 
implies the lemma.
\end{proof}

\begin{deff}{(Generalized Amplitude II)}\label{defGAII}
Given $(c,d,\vec{v})\in \mathbb{C}^{n+2}$, let
\begin{equation}
 J(c,d,Q_T,P_S,\vec{v}):=\int_{\Delta_{n-1}} \frac{Q_{\Gamma}^c}{\Psi_{\Gamma}^d} t^{\vec{v}}  \Omega.
\end{equation}
\end{deff}

\begin{rem}
By Lemma \ref{Radial}, the generalized amplitudes I and II of Definitions \ref{defGAI} and \ref{defGAII} are related by \\
\begin{align}\label{ijrelation}
\begin{split}
I(c_1,c_2,Q_T,P_S,\vec{v})&=\Gamma(n+|\vec{v}|+c_1(\ell+1)-c_2\ell)\times \\ &J(-n-|\vec{v}|+(c_2-c_1)\ell,-n-|\vec{v}|+(c_2-c_1)(\ell+1),Q_T,P_S,\vec{v})
\end{split}
\end{align}
\end{rem}

Note that this is a function of $P_S$ and $Q_T$. For $S\in Span$, let $\vec{S}$ be the vector in $\mathbb{Z}^n$ with $1$ for edges that are not in $S$ and zero in the other coefficients. For a monomial $T=t_1^{\alpha_1}\cdots t_n^{\alpha_n}$, let $\vec{T}$ be the vector $(\alpha_1,\cdots,\alpha_n)$. We have
\begin{align}
\begin{split}
\frac{\partial}{\partial Q_T} I(c_1,c_2,Q_T,P_S,\vec{v})= c_1I(c_1-1,c_2,Q_T,P_S,\vec{v}+\vec{T}) \\
-I(c_1,c_2+1,Q_T,P_S,\vec{v}+\vec{T})  \label{diff1}
\end{split}
\end{align}
\begin{align}
\begin{split}
\frac{\partial}{\partial P_S} I(c_1,c_2,Q_T,P_S,\vec{v})=I(c_1+1,c_2+2,Q_T,P_S,\vec{v}+\vec{S})\\
 -c_2 I(c_1,c_2+1,Q_T,P_S,\vec{v}+\vec{S})\label{diff2}
\end{split}
\end{align}
Let $A\subset \mathbb{Z}^{n+1}$ be the set containing the following points. For each monomial $T$ in $Q_{\Gamma}$, consider the lattice point $(0,\vec{T})$ and, for any Spanning tree $S$, consider the lattice point $(1,\vec{S})$.\\

Denote the subset of $A$ of lattice points corresponding to spanning trees by $A_S$ and the subset of lattice points that correspond to monomials by $A_C=A\setminus A_S$. For $a=(1,\vec{S}) \in A_S$, $P_a$ refers to $P_S$ and for $a=(0,\vec{T}) \in A_C$,  $P_a$ refers to $Q_T$.\\

Since $Q_{\Gamma}$ is of degree $\ell+1$ and $\Psi_{\Gamma}$ is of degree $\ell$, all the lattice points lie on the affine hyperplane where the sum of the coordinates is $\ell+1$. Let $\phi:\mathbb{Z}^{n+1}\rightarrow\mathbb{Z}$ be the function that computes the sum of the coordinates and let $p_0:\mathbb{Z}^{n+1}\rightarrow\mathbb{Z}$ be the projection onto the first coordinate and $p_1:\mathbb{Z}^{n+1}\rightarrow\mathbb{Z}^n$ the projection onto the last $n$ coordinates. For any integer relation 
$\sum_{a\in A} n_a \vec{a}=0$ among lattice points in the set $A$, we have:
$$0=\phi(0)=\phi\left(\sum_{a\in A} n_a \vec{a}\right)=\sum_{a\in A} n_a \phi(\vec{a})=(\ell+1)\sum_{a\in A} n_a $$
$$0=p_0(0)=p_0 \left( \sum_{a\in A} n_a \vec{a} \right)=\sum_{a\in A} n_a p_0 (\vec{a})
=\sum_{a\in A_S} n_a. $$

Combining these two we have:
\begin{large}
\begin{align}\label{num}
\begin{cases}
\sum_{a\in A_S, n_a>0} n_a=\sum_{a\in A_S, n_a<0} -n_a\\
\sum_{a\in A_C, n_a>0} n_a=\sum_{a\in A_C, n_a<0} -n_a
\end{cases}
\end{align}
\end{large}

For the relation $(n_a)_{a\in A}$, we consider the following differential operator: 
$$\prod_{a:n_a>0} \big(\frac{\partial}{\partial P_a}\big)^{n_a}-\prod_{a:n_a<0}\big(\frac{\partial}{\partial P_a}\big)^{-n_a}.$$

\begin{prop}\label{GKZ1}
Let $A$ and $P_a$ be as above. For any $\mathbb{Z}$-linear relation $\sum_{a\in A} n_a \vec{a}=0$ we have 
\begin{align*}
&\left(\prod_{a:n_a>0} \big(\frac{\partial}{\partial P_a}\big)^{n_a}-\prod_{a:n_a<0}\big(\frac{\partial}{\partial P_a}\big)^{-n_a}\right) J(c,d,P_a,\vec{v})=0\ \\ &\left(\prod_{a:n_a>0} \big(\frac{\partial}{\partial P_a}\big)^{n_a}-\prod_{a:n_a<0}\big(\frac{\partial}{\partial P_a}\big)^{-n_a}\right)I(c_1,c_2,P_a,\vec{v})=0
\end{align*}
\end{prop}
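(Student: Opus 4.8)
The plan is to prove both identities by differentiating under the integral sign, using the elementary differentiation rules \eqref{diff1} and \eqref{diff2} together with the numerical constraints \eqref{num} that any $\mathbb{Z}$-linear relation among the lattice points imposes. I will prove the statement for $J$ first, since the statement for $I$ then follows from the statement for $J$ via the relation \eqref{ijrelation}: applying the differential operator to $I$ and substituting \eqref{ijrelation}, the Gamma-function prefactor $\Gamma(n+|\vec{v}|+c_1(\ell+1)-c_2\ell)$ is unaffected by the $P_a$-derivatives, and each derivative $\partial/\partial P_a$ shifts $\vec v$ by $\vec a$ and shifts $(c,d)$ in a way compatible with the $J$-identity; so vanishing for $J$ forces vanishing for $I$. (Alternatively one can prove the $I$-statement directly by the same method below, using \eqref{diff1} and \eqref{diff2}; I will do whichever is cleaner.)

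For the $J$-statement, the key computation is to show that applying a single derivative $\partial/\partial P_a$ to $J(c,d,P_a,\vec v)$ produces, schematically, $J$ with $\vec v$ replaced by $\vec v + p_1(\vec a)$ and with $(c,d)$ shifted according to whether $a\in A_S$ or $a\in A_C$ (from \eqref{diff1}--\eqref{diff2}, restated for $J$: an $A_C$-derivative lowers $c$ or raises $d$ and shifts $\vec v$ by $\vec T$; an $A_S$-derivative raises $c$ and $d$ and shifts $\vec v$ by $\vec S$). Iterating, the composite operator $\prod_{a:n_a>0}(\partial/\partial P_a)^{n_a}$ applied to $J$ is a sum of terms, each of the form $(\text{constant})\cdot J(c', d', P_a, \vec v + \sum_{a:n_a>0} n_a\, p_1(\vec a))$. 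Because the two projections $\phi$ and $p_0$ kill the relation $\sum n_a \vec a = 0$, the constraints \eqref{num} give $\sum_{a:n_a>0} n_a p_1(\vec a) = \sum_{a:n_a<0} (-n_a) p_1(\vec a)$ (this is just $p_1$ applied to the relation), and moreover the \emph{total} shift in $c$ and the total shift in $d$ coming from $\prod_{a:n_a>0}(\partial/\partial P_a)^{n_a}$ match those coming from $\prod_{a:n_a<0}(\partial/\partial P_a)^{-n_a}$ — this is exactly where the two equalities in \eqref{num} (one for $A_S$, one for $A_C$) are used: the net number of $A_S$-derivatives on each side agree, as do the net numbers of $A_C$-derivatives, so the accumulated shifts of $(c,d,\vec v)$ are identical. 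Hence the two composite operators produce the same collection of $J$-terms.

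The remaining point, and the one requiring a little care, is that the two operators produce not just terms with the same \emph{arguments} but the same terms with the same \emph{coefficients}, i.e. that the difference genuinely telescopes to zero rather than merely to an expression with matching ranges of indices. The cleanest way to see this is to observe that the operators $\partial/\partial P_a$ all commute with one another, and that — before integrating over $\Delta_{n-1}$ — they already act this way on the integrand $Q_\Gamma^c \Psi_\Gamma^{-d} t^{\vec v}$: concretely, $\partial_{P_a}(Q_\Gamma^c\Psi_\Gamma^{-d} t^{\vec v}) = t^{p_1(\vec a)}\big(c\, Q_\Gamma^{c-1}\Psi_\Gamma^{-d} - d\, Q_\Gamma^{c-\delta}\Psi_\Gamma^{-d-\delta'}\cdots\big)$ where one records the monomial $t^{p_1(\vec a)}$ pulled out and the shift in exponents; iterating and using $\partial_{P_a}Q_\Gamma = t^{\vec T}$ for $a\in A_C$ and $\partial_{P_a}\Psi_\Gamma = t^{\vec S}$ for $a\in A_S$, the monomial pulled out after applying $\prod_{a:n_a>0}(\partial/\partial P_a)^{n_a}$ is $t^{\sum_{a:n_a>0} n_a p_1(\vec a)}$ and the differential-operator part acting on the remaining powers of $Q_\Gamma,\Psi_\Gamma$ depends only on the multiset of signs/types, which \eqref{num} shows is the same on both sides. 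Subtracting, the integrands cancel pointwise on $\Delta_{n-1}$, so the integrals cancel, giving the claimed annihilation. The main obstacle is purely bookkeeping: tracking the exponent shifts through the Leibniz expansion and confirming that the constraints \eqref{num} are precisely what is needed to align them; once the shift-calculus for a single $\partial/\partial P_a$ is set up as a lemma (essentially a reformulation of \eqref{diff1}--\eqref{diff2} for $J$), the rest is a short induction on $\sum_{a:n_a>0} n_a$.
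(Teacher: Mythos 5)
Your proof is correct, and it organizes the argument somewhat differently from the paper. The paper attacks $I$ directly: it applies the two-term recursions \eqref{diff1}--\eqref{diff2}, observes that by \eqref{num} the two monomials of the box operator apply the same number of derivations of each type and produce the same shift of $\vec{v}$ (namely $p_1$ of the relation), and then finishes the coefficient-matching by checking that the two index-shift operators $\delta_{(c_1,c_2)}\mapsto c_1\delta_{(c_1-1,c_2)}-\delta_{(c_1,c_2+1)}$ and $\delta_{(c_1,c_2)}\mapsto\delta_{(c_1+1,c_2+2)}-c_2\delta_{(c_1,c_2+1)}$ commute, the $J$-case being declared similar. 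You instead treat $J$ first at the level of the integrand, which is genuinely simpler: each coefficient $Q_T$ (resp.\ $P_S$) occurs linearly and in only one of $Q_\Gamma$, $\Psi_\Gamma$, so a single derivative gives a single term, $\frac{\partial}{\partial Q_T}J(c,d,P_a,\vec{v})=c\,J(c-1,d,P_a,\vec{v}+\vec{T})$ and $\frac{\partial}{\partial P_S}J(c,d,P_a,\vec{v})=-d\,J(c,d+1,P_a,\vec{v}+\vec{S})$; the accumulated numerical factor is then a product of falling/rising factorials depending only on how many derivatives of each type were applied, so no commutation check is needed, and \eqref{num} together with $p_1$ applied to the relation gives pointwise cancellation of the integrands. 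Your route buys cleaner bookkeeping; the paper's buys the $I$-statement without invoking \eqref{ijrelation}. Two small corrections: your parenthetical restatement of \eqref{diff1}--\eqref{diff2} ``for $J$'' (an $A_C$-derivative lowers $c$ \emph{or} raises $d$; an $A_S$-derivative raises $c$ \emph{and} $d$) describes $I$, where the exponential factor contributes the second term, not $J$ --- but your final computation uses the correct one-term rules, so nothing breaks. Also, the passage from $J$ to $I$ is even more immediate than you suggest: the box operator differentiates only in the $P_a$'s, and in \eqref{ijrelation} the Gamma prefactor and the arguments $(c,d,\vec{v})$ of $J$ are independent of the $P_a$'s, so annihilation of $J$ yields annihilation of $I$ with no shift-tracking at all (assuming, as throughout, parameters in the convergence domain so that differentiation under the integral and \eqref{ijrelation} are valid).
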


\begin{proof}
Consider the set $Z=\{(c_1,c_2)+\mathbb{Z}^2\}\subset \mathbb{C}^2$. Each time we apply a derivation \eqref{diff1} or \eqref{diff2} to $I(c_1,c_2,Q_T,P_S,\vec{v})$, we get a weighted sum of two $I(c'_1,c'_2,Q_T,P_S,\vec{v}+p_1(a))$ where $(c'_1,c'_2)\in Z$. If we apply the positive part of the differential operator above, we get a weighted sum of $I(c'_1,c'_2,Q_T,P_S,\vec{v}+p_1(\sum_{a\in A, a>0} n_a a))$. If we apply the negative part of differential operator above, we get a weighted sum of $I(c'_1,c'_2,Q_T,P_S,\vec{v}+p_1(\sum_{a\in A, a<0} -n_a a))$. To show that the generalized amplitude goes to zero under this differential operator, it is enough to show that the weights are the same for positive and negative parts. Note that \eqref{num} implies that we apply each type (derivation with respect to $Q_T$ or $P_S$) of derivation the same number of times on both sides. Hence it is enough to show that $\delta_{(c_1,c_2)} \mapsto c_1\delta_{(c_1-1,c_2)}-\delta_{(c_1,c_2+1)}$ commutes with $\delta_{(c_1,c_2)} \mapsto \delta_{(c_1+1,c_2+2)}-c_2\delta_{(c_1,c_2+1)}$, which can be verified by direct inspection. A 
similar argument works for $J(c,d,P_a,\vec{v})$. \\
\end{proof}

\begin{prop}\label{GKZ2}
For each $i=0,...,n$ and for $a\in A$, let $a_i$ be the $i$-th coefficient of $a$. Assume that the generalized amplitude I converges. We have:
\begin{equation*}
\left(\sum_{a\in A} a_i P_a \frac{\partial}{\partial P_a}\right)I(c_1,c_2,P_a,\vec{v}) = \begin{cases}
i\neq0 &(-1-\vec{v}_i) I(c_1,c_2,P_a,\vec{v})\\
i= 0 &((c_1-c_2)(\ell+1)+\sum_i \vec{v}_i+n) I(c_1,c_2,P_a,\vec{v})
\end{cases}
\end{equation*}
\end{prop}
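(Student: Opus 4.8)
The plan is to recognize the operator $\vartheta_i:=\sum_{a\in A}a_i\,P_a\,\frac{\partial}{\partial P_a}$ as a disguised copy of the Euler field $t_i\,\frac{\partial}{\partial t_i}$ in the integration variable, and then to integrate by parts. The elementary point is that rescaling $t_i$ by $\lambda$ has the same effect on $\Psi_\Gamma=\sum_{S}P_S\,t^{\vec{S}}$ and $Q_\Gamma=\sum_T Q_T\,t^{\vec{T}}$ as rescaling each coefficient $P_a$ by $\lambda^{a_i}$. Infinitesimally: since $a_i=(\vec{S})_i$ for $a=(1,\vec{S})\in A_S$ and $a_i=(\vec{T})_i$ for $a=(0,\vec{T})\in A_C$, one computes directly that $\vartheta_i\Psi_\Gamma=\sum_S (\vec{S})_i\,P_S\,t^{\vec{S}}=t_i\frac{\partial}{\partial t_i}\Psi_\Gamma$ and likewise $\vartheta_i Q_\Gamma=t_i\frac{\partial}{\partial t_i}Q_\Gamma$. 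Since both $\vartheta_i$ and $t_i\frac{\partial}{\partial t_i}$ are derivations, the chain rule gives $\vartheta_i\big(F(\Psi_\Gamma,Q_\Gamma)\big)=t_i\frac{\partial}{\partial t_i}\big(F(\Psi_\Gamma,Q_\Gamma)\big)$ for every differentiable $F$; we apply this to $h:=e^{-Q_\Gamma/\Psi_\Gamma}\,Q_\Gamma^{c_1}\,\Psi_\Gamma^{-c_2}$, so that the integrand of $I$ is $t^{\vec{v}}h$.

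For $i\neq0$ the monomial $t^{\vec{v}}$ carries no $P_a$-dependence, so (differentiating under the integral sign, which is legitimate in the convergent regime, exactly as for \eqref{diff1} and \eqref{diff2}) $\vartheta_i I=\int_{\mathbb{R}_+^{n}}t^{\vec{v}}\,(\vartheta_i h)\,\prod_e dt_e=\int_{\mathbb{R}_+^{n}}t^{\vec{v}}\,t_i\frac{\partial}{\partial t_i}h\,\prod_e dt_e$. Writing $t^{\vec{v}}\,t_i\frac{\partial}{\partial t_i}h=t_i\frac{\partial}{\partial t_i}\big(t^{\vec{v}}h\big)-\vec{v}_i\,t^{\vec{v}}h$ and integrating by parts in the variable $t_i$, the boundary terms vanish --- at $t_i=\infty$ because $e^{-Q_\Gamma/\Psi_\Gamma}$ forces decay (recall $\deg Q_\Gamma=\ell+1>\ell=\deg\Psi_\Gamma$), and at $t_i=0$ because convergence of $I$ forces $t_i\cdot t^{\vec{v}}h\to 0$ --- so $\int_{\mathbb{R}_+^{n}}t_i\frac{\partial}{\partial t_i}(t^{\vec{v}}h)\,\prod_e dt_e=-I$, whence $\vartheta_i I=-I-\vec{v}_i I=(-1-\vec{v}_i)I$. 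Equivalently, one may expand $\vartheta_i I$ using the formulas \eqref{diff1} and \eqref{diff2} for $\frac{\partial}{\partial Q_T}I$ and $\frac{\partial}{\partial P_S}I$ and recollect the resulting terms into $\int_{\mathbb{R}_+^{n}} t^{\vec{v}}\,t_i\frac{\partial}{\partial t_i}h\,\prod_e dt_e$.

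For $i=0$ I would not compute directly but reduce to the total Euler operator. Rescaling all the $P_a$ simultaneously by $\lambda$ sends $\Psi_\Gamma\mapsto\lambda\Psi_\Gamma$ and $Q_\Gamma\mapsto\lambda Q_\Gamma$, hence leaves $Q_\Gamma/\Psi_\Gamma$ invariant and multiplies $Q_\Gamma^{c_1}\Psi_\Gamma^{-c_2}$ by $\lambda^{c_1-c_2}$, while $t^{\vec{v}}\prod_e dt_e$ is untouched; therefore $\big(\sum_{a\in A}P_a\frac{\partial}{\partial P_a}\big)I=(c_1-c_2)I$. Since every $a\in A$ lies on the hyperplane $\sum_{j=0}^{n}a_j=\ell+1$ (as recorded just before the statement), we have $\sum_{j=0}^{n}\vartheta_j=(\ell+1)\sum_{a\in A}P_a\frac{\partial}{\partial P_a}$, and hence
\[
\vartheta_0 I=(\ell+1)(c_1-c_2)I-\sum_{j=1}^{n}\vartheta_j I=(\ell+1)(c_1-c_2)I+\sum_{j=1}^{n}(1+\vec{v}_j)I=\Big((c_1-c_2)(\ell+1)+\textstyle\sum_i\vec{v}_i+n\Big)I,
\]
which is the asserted identity.

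The combinatorics is painless; the only thing needing care is the analysis --- justifying differentiation under the integral sign and the vanishing of the boundary terms in the integration by parts --- and both are routine under the standing hypothesis that $I$ converges: the factor $e^{-Q_\Gamma/\Psi_\Gamma}$, whose exponent grows along rays because $Q_\Gamma$ has degree one more than $\Psi_\Gamma$, dominates as any $t_i\to\infty$, while near $t_i=0$ the integrand is comparable to $t_i^{\vec{v}_i}$ times a factor with a controlled singularity, so finiteness of $I$ forces $t_i\cdot(\text{integrand})\to 0$. If one prefers to keep the case $i=0$ self-contained, one can instead use $\vartheta_0\Psi_\Gamma=\Psi_\Gamma$ and $\vartheta_0 Q_\Gamma=0$ to get $\vartheta_0 I=I(c_1+1,c_2+1,Q_T,P_S,\vec{v})-c_2 I$ directly, and then invoke the radial rescaling $t\mapsto t/\lambda$ from the proof of Lemma \ref{Radial}, which gives $I(c_1+1,c_2+1,Q_T,P_S,\vec{v})=\big(c_1(\ell+1)-c_2\ell+|\vec{v}|+n\big)I$; the two identities combine to the same answer.
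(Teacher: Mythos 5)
Your proof is correct, and its core idea --- that $\vartheta_i=\sum_{a}a_iP_a\frac{\partial}{\partial P_a}$ acts on the integrand exactly as the Euler field $t_i\frac{\partial}{\partial t_i}$ --- is the same $\mathbb{G}_m$-equivariance that drives the paper's proof. The execution differs in two ways. For $i\neq0$ the paper works with the finite rescaling $P_a\mapsto\alpha^{a_i}P_a$, converts it into $t_i\mapsto\alpha t_i$, and changes variables; since this is a diffeomorphism of $\mathbb{R}_+^{|E|}$ there are no boundary terms at all, and one reads off $I(c_1,c_2,\alpha^{a_i}P_a,\vec v)=\alpha^{-1-\vec v_i}I(c_1,c_2,P_a,\vec v)$ before differentiating at $\alpha=1$. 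Your infinitesimal version is equivalent but transfers the burden to the vanishing of boundary terms in the integration by parts, and your justification there is the one loose point: the comparison $\deg Q_\Gamma=\ell+1>\deg\Psi_\Gamma=\ell$ controls decay in the radial direction, not the limit $t_i\to\infty$ with the other variables fixed (in the massless case $Q_\Gamma/\Psi_\Gamma$ can stay bounded along that limit), and convergence of $I$ does not by itself force $t_i\cdot(\text{integrand})\to0$ pointwise. This can be repaired, but the finite-rescaling form of the argument sidesteps the issue entirely. For $i=0$ your reduction via the hyperplane relation $\sum_{j=0}^n\vartheta_j=(\ell+1)\sum_aP_a\frac{\partial}{\partial P_a}$ together with $\left(\sum_aP_a\frac{\partial}{\partial P_a}\right)I=(c_1-c_2)I$ is genuinely different from, and arguably cleaner than, the paper's direct computation, which scales only the spanning-tree coefficients and then substitutes $s_i=t_i/\alpha$; your self-contained alternative via $\vartheta_0I=I(c_1+1,c_2+1,Q_T,P_S,\vec v)-c_2I$ combined with Lemma \ref{Radial} also checks out.
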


\begin{proof}
 For $i\neq0$ consider the action $(\alpha,P_a)\mapsto \alpha^{a_i}P_a$ of $\mathbb{G}_m$ on the 
 $P_a$'s. We want to see how the integral changes under this action. We have
\begin{align*}
&\int_{\mathbb{R}^{|E|}_+}e^{-\frac{Q_{\Gamma}(\alpha^{T_i} Q_T,t_1,...,t_n)}{\Psi_{\Gamma}(\alpha^{S_i} P_S,t_1,...,t_n)}}\frac{Q_{\Gamma}^{c_1}(\alpha^{T_i} Q_T,t_1,...,t_n)}{\Psi_{\Gamma}^{c_2}(\alpha^{S_i} P_S,t_1,...,t_n)}t^{\vec{v}} dt_1...dt_n = \\
&\int_{\mathbb{R}^{|E|}_+}e^{-\frac{Q_{\Gamma}(Q_T,t_1,...,\alpha t_i,...,t_n)}{\Psi_{\Gamma}( P_S,t_1,...,\alpha t_i,...,t_n)}}\frac{Q_{\Gamma}^{c_1}(Q_T,t_1,...,\alpha t_i,...,t_n)}{\Psi_{\Gamma}^{c_2}(P_S,t_1,...,\alpha t_i,...,t_n)}t^{\vec{v}} dt_1...dt_n = \\
&\int_{\mathbb{R}^{|E|}_+}e^{-\frac{Q_{\Gamma}}{\Psi_{\Gamma}}}\frac{Q_{\Gamma}^{c_1}}{\Psi_{\Gamma}^{c_2}}\frac{1}{\alpha^{\vec{v}_i+1}}t_1^{\vec{v}_1}...(\alpha t_i)^{\vec{v}_i}...t_n^{\vec{v}_n} dt_1...d(\alpha t_i)...dt_n .
\end{align*}
The last line is valid for multiplication by $\alpha$ real and positive which does not change the integration cycle hence it is also valid for all $\alpha$. 
As a result we have:
$$I(c_1,c_2,\alpha^{a_i} P_a,\vec{v})=\alpha^{-1-\vec{v}_i} I(c_1,c_2,P_a,\vec{v})
$$
Taking the derivative with respect to $\alpha$, evaluated at $\alpha=1$, we have:
\begin{align*}
\left(\sum_{a\in A} a_i P_a \frac{\partial}{\partial P_a}\right)	 I(c_1,c_2,P_a,\vec{v}) &=\frac{\partial}{\partial \alpha}I(c_1,c_2,\alpha^{a_i} P_a,\vec{v})\huge|_{\alpha=1}\\ &=(-1-\vec{v}_i) I(c_1,c_2,P_a,\vec{v})
\end{align*}

The other case we consider is when $i=0$. Note that $a_0$ is nonzero iff $a$ corresponds to a spanning tree. We scale all terms by $(\alpha,P_S)\mapsto \alpha^{a_0} P_S$ and we obtain
\begin{align*}
&\int_{\mathbb{R}^{|E|}_+}e^{-\frac{Q_{\Gamma}(Q_T,t_1,...,t_n)}{\Psi_{\Gamma}(\alpha P_S,t_1,...,t_n)}}\frac{Q_{\Gamma}^{c_1}(Q_T,t_1,...,t_n)}{\Psi_{\Gamma}^{c_2}(\alpha P_S,t_1,...,t_n)}t^{\vec{v}} dt_1...dt_n = \\
&\int_{\mathbb{R}^{|E|}_+}e^{-\frac{Q_{\Gamma}(Q_T,t_1,...,t_n)}{\alpha \Psi_{\Gamma}( P_S,t_1,...,t_n)}}\frac{Q_{\Gamma}^{c_1}(Q_T,t_1,...,t_n)}{\alpha^{c_2}\Psi_{\Gamma}^{c_2}( P_S,t_1,...,t_n)}t^{\vec{v}}. dt_1...dt_n =\\
\end{align*}
After setting $s_i=t_i/\alpha$ we obtain 
\begin{align*}
&\int_{\mathbb{R}^{|E|}_+}e^{-\frac{Q_{\Gamma}(Q_T,s_1,...,s_n)}{\Psi_{\Gamma}( P_S,s_1,...,s_n)}}\frac{\alpha^{c_1(\ell+1)-c_2\ell}Q_{\Gamma}^{c_1}(Q_T,s_1,...,s_n)}{\alpha^{c_2}\Psi_{\Gamma}^{c_2}( P_S,s_1,...,s_n)} \alpha^{\sum_i \vec{v}_i} s^{\vec{v}} \alpha^n ds_1...ds_n =
\\ 
&\alpha^{(c_1-c_2)(\ell+1)+\sum_i \vec{v}_i+n}\int_{\mathbb{R}^{|E|}_+}e^{-\frac{Q_{\Gamma}(Q_T,s_1,...,s_n)}{\Psi_{\Gamma}( P_S,s_1,...,s_n)}}\frac{Q_{\Gamma}^{c_1}(Q_T,s_1,...,s_n)}{\Psi_{\Gamma}^{c_2}( P_S,s_1,...,s_n)} s^{\vec{v}} ds_1...ds_n 
\end{align*}
As a result we have:
$$I(c_1,c_2,\alpha^{a_0} P_a,\vec{v})=\alpha^{(c_1-c_2)(\ell+1)+\sum_i \vec{v}_i+n} I(c_1,c_2,P_a,\vec{v})
$$
Taking the derivative with respect to $\alpha$, evaluated at $\alpha=1$, we then have:
\begin{align*}
\left(\sum_{a\in A} a_0 P_a \frac{\partial}{\partial P_a}\right) I(c_1,c_2,P_a,\vec{v}) &=\frac{\partial}{\partial \alpha}I(c_1,c_2,\alpha^{a_0} P_a,\vec{v})\huge|_{\alpha=1}\\ &=((c_1-c_2)(\ell+1)+\sum_i \vec{v}_i+n) I(c_1,c_2,P_a,\vec{v}).
\end{align*}
\end{proof}

\begin{prop}\label{GKZ3}
For each $i=0,...,n$ and for $a\in A$, let $a_i$ be the $i$-th coefficient of $a$. Assume that $c$ is positive, $d$ is negative and that all coefficients of $\vec{v}$ are positive. We have:
\begin{align*}
&\left(\sum_{a\in A} a_i P_a \frac{\partial}{\partial P_a}\right)J(c,d,P_a,\vec{v})\\ &= \begin{cases}
i\neq0 &  (-v_i-1)J(c,d,P_a,\vec{v})+(c(\ell+1)-d\ell+|v| +n)J(c,d,P_a,\vec{v}+e_i)\\
i= 0 &-d J(c,d,P_a,\vec{v}).
\end{cases}
\end{align*}
\end{prop}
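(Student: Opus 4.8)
The statement is the analogue for $J$ of Proposition \ref{GKZ2}, but the scaling argument used there cannot be copied verbatim, because the simplex $\Delta_{n-1}$, unlike $\mathbb{R}_+^{|E|}$, is not invariant under rescaling a single coordinate. Instead I would combine differentiation under the integral sign with Stokes' theorem on the (solid) simplex.

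The starting point is a pointwise identity for the integrands. Write $g=Q_\Gamma^c/\Psi_\Gamma^d$ and $F=g\,t^{\vec v}=Q_\Gamma^c\Psi_\Gamma^{-d}t^{\vec v}$. Using $\Psi_\Gamma=\sum_{a\in A_S}P_a\prod_{e\notin S}t_e$ and $Q_\Gamma=\sum_{a\in A_C}Q_T\,t^T$ and differentiating $g$ with respect to each $P_a$ — noting that for $a=(0,\vec T)\in A_C$ the coordinate $a_i$ is the exponent of $t_i$ in $t^T$, and for $a=(1,\vec S)\in A_S$ it is the exponent of $t_i$ in $\prod_{e\notin S}t_e$ — a short calculation gives
$$\sum_{a\in A}a_i P_a\frac{\partial}{\partial P_a}g=\begin{cases}t_i\,\partial_{t_i}g,& i\neq 0,\\[1mm] -d\,g,& i=0,\end{cases}$$
the case $i=0$ using $a_0=1$ on $A_S$, $a_0=0$ on $A_C$, and $\sum_{a\in A_S}P_a\prod_{e\notin S}t_e=\Psi_\Gamma$. (This is the same identity underlying Proposition \ref{GKZ2}, where $\mathbb{G}_m$ acts by $t_i\mapsto\alpha t_i$, respectively by $\Psi_\Gamma\mapsto\alpha\Psi_\Gamma$.) Differentiating $J(c,d,P_a,\vec v)=\int_{\Delta_{n-1}}g\,t^{\vec v}\,\Omega$ under the integral sign — legitimate because for $c>0$ and $d<0$ the integrand and its first derivatives are locally uniformly integrable on $\Delta_{n-1}$ — the case $i=0$ is then immediate: $\sum_a a_0P_a\partial_{P_a}J=-d\,J(c,d,P_a,\vec v)$.

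For $i\neq 0$, using $t_i\,\partial_{t_i}(g)\,t^{\vec v}=t_i\,\partial_{t_i}F-v_iF$ we obtain $\sum_a a_iP_a\partial_{P_a}J=\int_{\Delta_{n-1}}(t_i\partial_{t_i}F)\,\Omega-v_iJ(c,d,P_a,\vec v)$, so the task reduces to evaluating $\int_{\Delta_{n-1}}(t_i\partial_{t_i}F)\,\Omega$. Since $Q_\Gamma$ and $\Psi_\Gamma$ are homogeneous in $t$ of degrees $\ell+1$ and $\ell$ (proof of Lemma \ref{Radial}), both $F$ and $t_i\partial_{t_i}F$ are homogeneous of degree $m:=c(\ell+1)-d\ell+|v|$, and $F\Omega$ is homogeneous of degree $N:=m+n=c(\ell+1)-d\ell+|v|+n$, which is $>0$ under the hypotheses. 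Now from the cone map $\phi\colon\Delta_{n-1}\times\mathbb{R}_+\to\mathbb{R}_+^{|E|}$, $(x,s)\mapsto sx$, with $\phi^*(dt_1\cdots dt_n)=\Omega|_{\Delta_{n-1}}\,s^{n-1}ds$ (proof of Lemma \ref{Radial}), one gets for every $H$ homogeneous of degree $m$ the identity $\int_\sigma H\,dt=\tfrac1N\int_{\Delta_{n-1}}H\,\Omega$, where $\sigma=\{t\in\mathbb{R}_+^{|E|}:\sum_j t_j\le 1\}$ is the solid simplex. Applying this to $H=t_i\partial_{t_i}F$ and integrating by parts in the bulk, $\int_\sigma t_i\partial_{t_i}F\,dt=\int_\sigma\partial_{t_i}(t_iF)\,dt-\int_\sigma F\,dt$; by Fubini and the fundamental theorem of calculus the first term collapses to the integral of $t_iF$ over the slanted face $\{\sum_j t_j=1\}\cap\mathbb{R}_+^{|E|}=\Delta_{n-1}$ of $\sigma$, which is $J(c,d,P_a,\vec v+e_i)$ (the contribution of the face $\{t_i=0\}$ vanishes because $F$ carries the factor $t_i^{v_i}$ with $v_i>0$), while $\int_\sigma F\,dt=\tfrac1N J(c,d,P_a,\vec v)$ by the same homogeneity identity. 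Hence $\int_{\Delta_{n-1}}(t_i\partial_{t_i}F)\,\Omega=N\int_\sigma t_i\partial_{t_i}F\,dt=N\,J(c,d,P_a,\vec v+e_i)-J(c,d,P_a,\vec v)$, and adding $-v_iJ(c,d,P_a,\vec v)$ gives $(-v_i-1)J(c,d,P_a,\vec v)+N\,J(c,d,P_a,\vec v+e_i)$, as claimed.

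The main obstacle is the bookkeeping in the Stokes step: identifying $\Omega|_{\Delta_{n-1}}$ with $\pm\,dt_1\wedge\cdots\wedge\widehat{dt_i}\wedge\cdots\wedge dt_n$ and orienting $\Delta_{n-1}$ as a boundary face of $\sigma$ so that the surface term comes out as $+J(c,d,P_a,\vec v+e_i)$ with the right sign, and checking that every boundary contribution along the coordinate faces $\{t_k=0\}$ genuinely vanishes — for which the hypotheses that all $v_k>0$ (together with $c>0$ and $d<0$, making $F$ continuous on $\overline\sigma$ away from the zero locus of $Q_\Gamma$) are exactly what is needed. If $Q_\Gamma$ vanishes somewhere on $\overline\sigma$, one first establishes the identity in the parameter region where $Q_\Gamma>0$ on the open simplex and then extends it by analytic continuation, as elsewhere in this section.
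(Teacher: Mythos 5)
Your proof is correct, but it runs along a different track from the one in the paper, so it is worth recording the comparison. The paper also treats $i=0$ by scaling and, for $i\neq 0$, stays entirely on $\Delta_{n-1}$: it writes $\Omega=\iota_v\theta$ with $v=\sum_j t_j\partial_{t_j}$, interprets $\sum_a a_iP_a\partial_{P_a}$ as the infinitesimal substitution $t_i\mapsto\alpha t_i$ (pulling out the factor $\alpha^{-v_i-1}$, which produces the $(-v_i-1)J$ term exactly as in your first step), and then evaluates $\frac{\partial}{\partial\alpha}\big|_{\alpha=1}\alpha^*(F\Omega)$ as the Lie derivative $\mathcal{L}_{t_i\partial_{t_i}}(F\Omega)$ via Cartan's formula: the exact piece integrates to zero because $\partial\Delta_{n-1}$ lies in the coordinate hyperplanes where $t^{\vec v}$ (with $v_k>0$) kills the form, while the other piece becomes $\deg(F\theta)\,\iota_{t_i\partial_{t_i}}(F\theta)$ by homogeneity, and the identity $\iota_v\theta|_{\Delta_{n-1}}=\iota_{\partial_{t_i}}\theta|_{\Delta_{n-1}}$ converts it into $(c(\ell+1)-d\ell+|v|+n)\,J(c,d,P_a,\vec v+e_i)$. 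You instead obtain the pointwise identity $\sum_a a_iP_a\partial_{P_a}g=t_i\partial_{t_i}g$ directly (equivalent to the paper's $\mathbb{G}_m$ substitution), lift the problem to the solid simplex $\sigma$ through the cone map of Lemma \ref{Radial}, and use Fubini plus the fundamental theorem of calculus, with the homogeneity degree $N=c(\ell+1)-d\ell+|v|+n>0$ entering through the radial integral rather than through $\mathcal{L}_v$. The ingredients are the same (homogeneity, positivity of the $v_k$ to kill coordinate-face contributions, and the appearance of $J(c,d,P_a,\vec v+e_i)$ from the slanted face), but your argument is more elementary — no Cartan formula, only classical integration by parts — at the cost of the orientation bookkeeping you flag; note that this bookkeeping is settled by exactly the paper's observation that $v-\partial_{t_i}$ is tangent to the hyperplane $\sum_j t_j=1$, so $\Omega|_{\Delta_{n-1}}=\iota_{\partial_{t_i}}\theta|_{\Delta_{n-1}}$, which identifies your Fubini surface term with $J(c,d,P_a,\vec v+e_i)$ with the correct sign. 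Your explicit attention to integrability (domination of $t_i\partial_{t_i}Q/Q$ and $t_i\partial_{t_i}\Psi/\Psi$ for positive coefficients, analytic continuation otherwise) is a point the paper passes over in silence, and is welcome.
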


\begin{proof}
The case $i=0$ can be dealt with easily by applying the scaling argument of the previous lemma. 
For other values of $i$, the scaling argument does not work, since scaling changes the integration cycle. 
Let $\theta=dt_1\wedge dt_2 ... \wedge dt_n$ and let $v$ be the vector field 
$\sum t_i \frac{\partial}{\partial t_i}$. Then it is not hard to see that  $\Omega = \iota_v \theta$.
Since $v-\frac{\partial}{\partial t_i}$ is tangent to $\Delta_{n-1}$, we have $\iota_v \theta|_{\Delta_{n-1}}=\iota_{\frac{\partial}{\partial t_i}} \theta|_{\Delta_{n-1}}$. We then obtain
\begin{align*}
\frac{\partial}{\partial \alpha}|_{\alpha =1} J(c,d,\alpha P_a,\vec{v}) &= \frac{\partial}{\partial \alpha}|_{\alpha =1} \int_{\Delta_{n-1}} \frac{Q_{\Gamma}^c(\alpha P_a)}{\Psi_{\Gamma}^d(\alpha P_a)} t^{\vec{v}}  \Omega \\ &=
\frac{\partial}{\partial \alpha}|_{\alpha =1} \int_{\Delta_{n-1}} \alpha^* \left(\frac{Q_{\Gamma}^c(P_a)}{\Psi_{\Gamma}^d(P_a)} t^{\vec{v}}\right) \alpha^{-v_i}  \Omega \\ &=\frac{\partial}{\partial \alpha}|_{\alpha =1} \int_{\Delta_{n-1}} \alpha^* \left(\frac{Q_{\Gamma}^c(P_a)}{\Psi_{\Gamma}^d(P_a)} t^{\vec{v}}\Omega\right) \alpha^{-v_i-1} \\ &=  (-v_i-1)J(c,d,P_a,\vec{v}) + \int_{\Delta_{n-1}} \frac{\partial}{\partial \alpha}|_{\alpha =1} 
\,\, \alpha^* \left(\frac{Q_{\Gamma}^c(P_a)}{\Psi_{\Gamma}^d(P_a)} t^{\vec{v}}\Omega\right) \\
\frac{\partial}{\partial \alpha}|_{\alpha =1}\,\,  \alpha^* \left(\frac{Q_{\Gamma}^c(P_a)}{\Psi_{\Gamma}^d(P_a)} t^{\vec{v}}\Omega\right)&= \mathcal{L}_{t_i \frac{\partial}{\partial t_i}} \left(\frac{Q_{\Gamma}^c(P_a)}{\Psi_{\Gamma}^d(P_a)} t^{\vec{v}}\Omega\right)\\ &=(\iota_{t_i \frac{\partial}{\partial t_i}} \circ d+ d\circ \iota_{t_i \frac{\partial}{\partial t_i}}) \left(\frac{Q_{\Gamma}^c(P_a)}{\Psi_{\Gamma}^d(P_a)} t^{\vec{v}}\Omega\right) \\ &=
(\iota_{t_i \frac{\partial}{\partial t_i}} \circ d)\circ \iota_v \left(\frac{Q_{\Gamma}^c(P_a)}{\Psi_{\Gamma}^d(P_a)} t^{\vec{v}}\theta\right) + d  \left(\iota_{t_i \frac{\partial}{\partial t_i}} \frac{Q_{\Gamma}^c(P_a)}{\Psi_{\Gamma}^d(P_a)} t^{\vec{v}}\Omega\right)  \\ &= \iota_{t_i \frac{\partial}{\partial t_i}} \mathcal{L}_{v} \left(\frac{Q_{\Gamma}^c(P_a)}{\Psi_{\Gamma}^d(P_a)} t^{\vec{v}}\theta\right) + d  \left(\iota_{t_i \frac{\partial}{\partial t_i}} \frac{Q_{\Gamma}^c(P_a)}{\Psi_{\Gamma}^d(P_a)} t^{\vec{v}}\Omega\right) \\ &=deg\left(\frac{Q_{\Gamma}^c(P_a)}{\Psi_{\Gamma}^d(P_a)} t^{\vec{v}}\theta\right) \iota_{t_i \frac{\partial}{\partial t_i}}  \left(\frac{Q_{\Gamma}^c(P_a)}{\Psi_{\Gamma}^d(P_a)} t^{\vec{v}}\theta\right) + d  \left(\iota_{t_i \frac{\partial}{\partial t_i}} \frac{Q_{\Gamma}^c(P_a)}{\Psi_{\Gamma}^d(P_a)} t^{\vec{v}}\Omega\right)
\end{align*}
Let $\Sigma$ be the union of the coordinate hyperplanes. We have $$\frac{Q_{\Gamma}^c(P_a)}{\Psi_{\Gamma}^d(P_a)} t^{\vec{v}}\Omega  |_{\Sigma}=0$$ and hence  $$\iota_{t_i \frac{\partial}{\partial t_i}}(\frac{Q_{\Gamma}^c(P_a)}{\Psi_{\Gamma}^d(P_a)} t^{\vec{v}}\Omega) |_{\Sigma}=0.$$ Here we are using the fact that the vector field $t_i \frac{\partial}{\partial t_i}$ is tangent to $\Sigma$, so that we can first restrict to $\Sigma$ and then perform the contraction. Since the boundary of $\Delta_{n-1}$ lies on $\Sigma$, the integral of the second term vanishes.\\

Using $\iota_v \theta|_{\Delta_{n-1}}=\iota_{\frac{\partial}{\partial t_i}} \theta|_{\Delta_{n-1}}$, we can compute the integral of the first term as
\begin{align*}
\int_{\Delta_{n-1}} \frac{\partial}{\partial \alpha}|_{\alpha =1}  \,\, \alpha^* \left(\frac{Q_{\Gamma}^c(P_a)}{\Psi_{\Gamma}^d(P_a)} t^{\vec{v}}\Omega\right) &= \int_{\Delta_{n-1}}  deg\left(\frac{Q_{\Gamma}^c(P_a)}{\Psi_{\Gamma}^d(P_a)} t^{\vec{v}}\theta\right) t_i \iota_{\frac{\partial}{\partial t_i}} \left(\frac{Q_{\Gamma}^c(P_a)}{\Psi_{\Gamma}^d(P_a)} t^{\vec{v}}\theta\right)\\
&=  deg\left(\frac{Q_{\Gamma}^c(P_a)}{\Psi_{\Gamma}^d(P_a)} t^{\vec{v}}\theta\right) \int_{\Delta_{n-1}} \left(\frac{Q_{\Gamma}^c(P_a)}{\Psi_{\Gamma}^d(P_a)} t^{\vec{v}+e_i}\Omega\right)\\ &= (c(\ell+1)-d\ell+|v| +n)J(c,d,P_a,\vec{v}+e_i).
\end{align*}
Summing up we obtain
$$
\frac{\partial}{\partial \alpha}|_{\alpha =1} J(c,d,\alpha P_a,\vec{v})= (-v_i-1)J(c,d,P_a,\vec{v})+(c(\ell+1)-d\ell+|v| +n)J(c,d,P_a,\vec{v}+e_i).
$$
\end{proof}

\begin{lem}\label{analytic} The generalized amplitude 
$J(c,d,P_a,\vec{v})$, which is holomorphic for $\Re(c)>0$, $\Re(d)<0$ and $\Re(v)>0$, has an analytic continuation which is meromorphic on $\mathbb{C}^{n+2}$.  
\end{lem}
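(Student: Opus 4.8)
The plan is to view $J(c,d,P_a,\vec v)$ as an integral over the \emph{compact} simplex $\Delta_{n-1}$ of a product of complex powers of polynomials and coordinate functions, and to derive the continuation from the classical principle — going back to Bernstein, Atiyah and Gelfand, and realized for parametric Feynman integrals by the sector decompositions of \cite{BW,brosnan} — that such an integral depends meromorphically on the exponents. In the affine chart $t_n = 1-t_1-\dots-t_{n-1}$ of $\Delta_{n-1}$ one writes
\[
J = \int_{\sigma} Q_\Gamma(t)^{\,c}\,\Psi_\Gamma(t)^{-d}\, t_1^{v_1}\cdots t_n^{v_n}\, dt_1\cdots dt_{n-1},\qquad \sigma=\Big\{\,t_i\ge 0,\ \textstyle\sum_{i<n}t_i\le 1\,\Big\}.
\]
Since $\Psi_\Gamma$ and $Q_\Gamma$ have nonnegative coefficients they are strictly positive on the interior of $\sigma$, so the integrand is a single-valued holomorphic function of $(c,d,\vec v)\in\mathbb{C}^{n+2}$ there, and for $\Re(c)>0$, $\Re(d)<0$, $\Re(v)>0$ the integral converges absolutely and locally uniformly; this gives the stated holomorphy. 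The only obstruction to continuing past these half-planes is the behaviour near $\partial\sigma$, where the coordinate functions $t_1,\dots,t_n$ vanish and where $\Psi_\Gamma$ or $Q_\Gamma$ may also vanish.

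\emph{Main step: reduction to monomial integrands.} Let $f$ be the product of $Q_\Gamma$, $\Psi_\Gamma$ and the $n$ coordinate functions $t_1,\dots,t_n$, viewed (via the chart above) as a polynomial on $\mathbb{A}^{n-1}$; its zero set $Z$ contains every locus along which the integrand degenerates on $\sigma$. By Hironaka's theorem there is a proper morphism $\pi\colon Y\to\mathbb{A}^{n-1}$ with $Y$ smooth, an isomorphism over the complement of $Z$, such that in suitable local analytic coordinates near any point of $\pi^{-1}(Z)$ every factor of $\pi^*f$ — and also the top form $\pi^*(dt_1\cdots dt_{n-1})$ — is a monomial times a nowhere-vanishing factor. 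Pulling back, $J=\int_{\pi^{-1}(\sigma)}\pi^*(\text{integrand})$, and the compact set $\pi^{-1}(\sigma)$ is covered by finitely many such coordinate charts. Choosing a partition of unity subordinate to this cover (equivalently, decomposing $\sigma$ into finitely many sectors) breaks $J$ into a finite sum of pieces, each of the form
\[
\int_{[0,1]^{n-1}} \Big(\prod_{j=1}^{n-1} u_j^{\,\ell_j(c,d,\vec v)}\Big)\,\rho(u)\,du_1\cdots du_{n-1},
\]
where every $\ell_j$ is an affine-linear function of $(c,d,\vec v)$ with integer coefficients, determined by the monomial exponents produced by $\pi$, and $\rho$ is smooth and bounded on the closed cube and holomorphic in $(c,d,\vec v)$.

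\emph{Conclusion.} Each sector integral continues meromorphically: expanding $\rho$ in $u_1$ to finite Taylor order with a smooth remainder that is $O(u_1^{N})$, using $\int_0^1 u_1^{\,\ell_1+k}\,du_1=(\ell_1+k+1)^{-1}$, and iterating over $u_2,\dots,u_{n-1}$, one writes the sector integral as a finite sum of explicit rational functions of the $\ell_j$ plus a remainder that is holomorphic whenever the $\Re(\ell_j)$ are sufficiently negative. This exhibits a meromorphic function on $\mathbb{C}^{n+2}$ with polar locus contained in the (locally finite, since the $\ell_j$ are affine) union of hyperplanes $\{\ell_j(c,d,\vec v)\in\mathbb{Z}_{\le -1}\}$. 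Summing the finitely many sectors gives the meromorphic extension of $J$ to all of $\mathbb{C}^{n+2}$.

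The main obstacle, and the reason the argument is non-constructive, is the resolution step: one invokes Hironaka's theorem and must arrange the resolution so that $Q_\Gamma$, $\Psi_\Gamma$ and all the coordinate functions $t_i$ become normal crossings \emph{simultaneously} — accomplished by resolving the single divisor $\{f=0\}$ — and one must verify that on each chart the unit factor $\rho$ extends smoothly up to the closed sector, so that the one-variable Taylor reduction is legitimate. Alternatively, one can bypass an explicit resolution by invoking Bernstein's theorem on the existence of a $b$-function for the finite family $Q_\Gamma,\Psi_\Gamma,t_1,\dots,t_n$, which yields the meromorphic continuation directly but is equally non-constructive.
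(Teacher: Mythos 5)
Your proposal is correct and follows essentially the same route as the paper: reduce, via resolution of singularities/sector decomposition as in \cite{BW}, to finitely many integrals over cubes of monomials $\prod_j u_j^{\ell_j(c,d,\vec v)}$ times a factor that is non-vanishing (hence bounded away from zero) on the closed cube, and then continue meromorphically one variable at a time by an elementary argument; the only difference is that the paper performs this last step by integration by parts, lowering $\Re(a_1)$ by one per step, whereas you Taylor-expand the unit factor with remainder, which is an equivalent device. One small wording slip: the remainder term after expanding to order $N$ is holomorphic for $\Re(\ell_1)>-N-1$, i.e.\ on a region that grows as $N$ increases, not ``whenever the $\Re(\ell_j)$ are sufficiently negative.''
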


\begin{proof}
This basically follows from resolution of singularity and the following fact. Let $P_i(x)$ be polynomials in $n$ variables which are bounded away from zero on the hypercube $[0,1]^n$. One needs to show that the integral
$$
\int_{[0,1]^n} P_1(t)^{c} P_2(t)^{d} t_1^{a_1}t_2^{a_2}\cdots t_n^{a_n}  dt_1\cdots dt_n,
$$
which is defined and holomorphic in  $\{ \Re(a_i)>0\}$, has an analytic continuation to $\mathbb{C}^{n+2}$. We prove this by induction on $n$. The base case is the observation that $C^c D^d$ has analytic continuation for $C$ and $D$ nonzero, which is clearly true. Note that we have
\begin{align*}
&\int_{[0,1]^n\cap t_1=\{0,1\}} (-1)^{t_1+1}\left(P_1(t)^{c} P_2(t)^{d} t_1^{a_1+1}t_2^{a_2}\cdots t_n^{a_n}\right)  dt_2\cdots dt_n
\\&=\int_{[0,1]^n} \frac{\partial}{\partial t_1} \left(P_1(t)^{c} P_2(t)^{d} t_1^{a_1+1}t_2^{a_2}\cdots t_n^{a_n}\right)  dt_1\cdots dt_n\\&=
\int_{[0,1]^n} \frac{\partial}{\partial t_1} \left(P_1(t)^{c} P_2(t)^{d}\right) t_1^{a_1+1}t_2^{a_2}\cdots t_n^{a_n}  dt_1\cdots dt_n 
\\&+ \int_{[0,1]^n} P_1(t)^{c} P_2(t)^{d} (a_1+1)t_1^{a_1}t_2^{a_2}\cdots t_n^{a_n}  dt_1\cdots dt_n \\
&= \int_{[0,1]^n}  \left( c P_1(t)^{c-1} \frac{\partial P_1}{\partial t_1} P_2(t)^{d}+ d P_1(t)^{c} \frac{\partial P_2}{\partial t_1} P_2(t)^{d-1}\right) t_1^{a_1+1}t_2^{a_2}\cdots t_n^{a_n}  dt_1\cdots dt_n 
\\&+ \int_{[0,1]^n} P_1(t)^{c} P_2(t)^{d} (a_1+1)t_1^{a_1}t_2^{a_2}\cdots t_n^{a_n}  dt_1\cdots dt_n 
\end{align*}
Assume we have the analytic continuation for the region $\Re(a_i)>m_i$. Note that we also have it for $m_i=0$, since the $P_i$'s are nonzero on the hypercube. From the computation above we have:
\begin{align*}
&\int_{[0,1]^n} P_1(t)^{c} P_2(t)^{d} t_1^{a_1}t_2^{a_2}\cdots t_n^{a_n}  dt_1\cdots dt_n \\&= \frac{1}{a_1+1} \int_{[0,1]^n\cap t_1=\{0,1\}} (-1)^{t_1+1}\left(P_1(t)^{c} P_2(t)^{d} t_1^{a_1+1}t_2^{a_2}\cdots t_n^{a_n}\right)  dt_2\cdots dt_n \\ &+ \frac{1}{a_1+1}  \int_{[0,1]^n}  c P_1(t)^{c-1} P_2(t)^{d}\frac{\partial P_1}{\partial t_1} t_1^{a_1+1}t_2^{a_2}\cdots t_n^{a_n}  dt_1\cdots dt_n \\ &+
\frac{1}{a_1+1}  \int_{[0,1]^n} d P_1(t)^{c}  P_2(t)^{d-1}\frac{\partial P_2}{\partial t_1} t_1^{a_1+1}t_2^{a_2}\cdots t_n^{a_n}  dt_1\cdots dt_n .
\end{align*}

By induction, the first term is meromorphic. The second and third terms have analytic continuation to the region $\Re(a_1)+1>m_1$ and $\Re(a_i) \geq m_i$. Thus, we have analytic continuation to the region 
$\Re(a_1)> m_1-1$.  We can continue this for all $m_i$ and prove it for any value of $a_i$.
\\

Now, using a special case of resolution of singularities, we can rewrite any integral over $\Delta_{n-1}$ as a sum of integrals of the form above. This is the same argument that is used in \cite{BW} 
Theorem 2, so we do not repeat it explicitly here. 
\end{proof}

By Remark \ref{ijrelation}, we can define $I(P_a)$ using the corresponding value of $J(P_a)$, i.e.
$$
I(c_1,c_2,Q_T,P_S,\vec{v})=CJ(-n-|\vec{v}|+(c_2-c_1)\ell,-n-|\vec{v}|+(c_2-c_1)(\ell+1),Q_T,P_S,\vec{v})
$$
where C is constant. Hence we see that the second term in the case $i\neq 0$ of proposition \ref{GKZ3} vanishes. 
\begin{align*}
&c= -n-|\vec{v}|+(c_2-c_1)\ell \\
&d= -n-|\vec{v}|+(c_2-c_1)(\ell+1)\\
&c(\ell+1)-d\ell+ |\vec{v}| + n = 0
\end{align*}

\begin{theorem}\label{main}
Let $w_0=(c_1,c_2,\vec{v})$ and $w=(x,y,\vec{u})$ be any vectors in $\mathbb{C}^{n+2}$. One can pull back $I(P_a,w_0+\epsilon w)$ to a neighborhood of $\epsilon=0$ and take the Laurent expansion. Assume that the Laurent expansion has the following form:
$$I(P_a,w_0+\epsilon w)= \sum_{i\geq -n} \epsilon^i I_i(P_a) .$$ 
Then we have:
$$\left(\prod_{a:n_a>0} \big(\frac{\partial}{\partial P_a}\big)^{n_a}-\prod_{a:n_a<0}\big(\frac{\partial}{\partial P_a}\big)^{-n_a}\right)I_i(P_a)=0$$
and 
\begin{align*}
\left(\sum_{a\in A} a_0 P_a \frac{\partial}{\partial P_a}\right) I_i(P_a)&=((\ell+1,-\ell-1,1,\cdots,1) \cdot w_0+n) I_i(P_a) \\&+ (\ell+1,-\ell-1,1,\cdots,1) \cdot w   \ I_{i-1}(P_a)
\end{align*}
and for $k\neq 0$
\begin{align*}
\left(\sum_{a\in A} a_k P_a \frac{\partial}{\partial P_a}\right) I_i(P_a)&=(-1-e_k \cdot \vec{v}) I_i(P_a) - e_k \cdot \vec{u}   \ I_{i-1}(P_a)
\end{align*}
\end{theorem}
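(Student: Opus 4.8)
The plan is to obtain each of the three families of equations as the $\epsilon^{i}$-coefficient of an identity that already holds for $I$ itself, once that identity is carried off the convergence locus by analytic continuation in the auxiliary parameters $(c_1,c_2,\vec{v})$. So the first thing I would pin down is the analytic input. By \eqref{ijrelation}, $I(c_1,c_2,P_a,\vec{v})$ is $\Gamma\!\big(n+|\vec{v}|+c_1(\ell+1)-c_2\ell\big)$ times a value of $J$ whose arguments are affine-linear in $(c_1,c_2,\vec{v})$, and by Lemma \ref{analytic} this value of $J$ is meromorphic on $\mathbb{C}^{n+2}$; hence for $P_a$ in the relevant open set the pullback $\epsilon\mapsto I(P_a,w_0+\epsilon w)$ is meromorphic near $\epsilon=0$ with a pole coming from the $\Gamma$-factor, of order at most $n$ as assumed in the statement. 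Moreover $I$ is holomorphic in the $P_a$ there, so it may be regarded as a meromorphic family (in the parameters) of holomorphic functions of the $P_a$; for such an object, applying a differential operator in the $P_a$ with parameter-independent coefficients, restricting the parameters to the line $\{w_0+\epsilon w\}$, and extracting Laurent coefficients in $\epsilon$ (term-by-term differentiation being justified by Cauchy estimates) all commute. This compatibility is the only genuinely analytic point; the rest is formal.

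With it in hand, the binomial equations are immediate. Write $\mathcal{D}_{(n_a)}=\prod_{a:n_a>0}\big(\tfrac{\partial}{\partial P_a}\big)^{n_a}-\prod_{a:n_a<0}\big(\tfrac{\partial}{\partial P_a}\big)^{-n_a}$ for the operator attached to a relation $(n_a)$. Proposition \ref{GKZ1} gives $\mathcal{D}_{(n_a)}I(c_1,c_2,P_a,\vec{v})=0$ wherever the integral converges; both sides extend meromorphically in $(c_1,c_2,\vec{v})$, hence $\mathcal{D}_{(n_a)}I(P_a,w_0+\epsilon w)\equiv 0$; since $\mathcal{D}_{(n_a)}$ has no $\epsilon$ and constant coefficients in the $P_a$, reading off the coefficient of $\epsilon^{i}$ yields $\mathcal{D}_{(n_a)}I_i(P_a)=0$.

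For the Euler equations I would use Proposition \ref{GKZ2}, likewise propagated past the convergence locus by meromorphic continuation. Substituting $c_1\mapsto c_1+\epsilon x$, $c_2\mapsto c_2+\epsilon y$, $\vec{v}\mapsto\vec{v}+\epsilon\vec{u}$ into its two cases and using $(c_1+\epsilon x-c_2-\epsilon y)(\ell+1)+|\vec{v}+\epsilon\vec{u}|=(\ell+1,-\ell-1,1,\cdots,1)\cdot(w_0+\epsilon w)$ gives
\begin{align*}
\Big(\sum_{a\in A}a_0 P_a\tfrac{\partial}{\partial P_a}\Big)I(P_a,w_0+\epsilon w)&=\big[(\ell+1,-\ell-1,1,\cdots,1)\cdot w_0+n\\
&\qquad+\epsilon\,(\ell+1,-\ell-1,1,\cdots,1)\cdot w\big]\,I(P_a,w_0+\epsilon w),
\end{align*}
and, for $k\neq 0$,
\begin{align*}
\Big(\sum_{a\in A}a_k P_a\tfrac{\partial}{\partial P_a}\Big)I(P_a,w_0+\epsilon w)&=\big[-1-e_k\cdot\vec{v}-\epsilon\,e_k\cdot\vec{u}\big]\,I(P_a,w_0+\epsilon w).
\end{align*}
In each identity the prefactor is affine-linear in $\epsilon$, so its constant term multiplies $I_i$ and its $\epsilon$-term multiplies $I_{i-1}$; extracting the coefficient of $\epsilon^{i}$ on both sides produces exactly the two remaining stated equations. (Equivalently, one can run the same computation on $J$ with Propositions \ref{GKZ1} and \ref{GKZ3}, using the identity $c(\ell+1)-d\ell+|\vec{v}|+n=0$ recorded before the theorem — which is precisely what annihilates the inhomogeneous $J(c,d,P_a,\vec{v}+e_k)$ term in Proposition \ref{GKZ3} — and then transport the result back through \eqref{ijrelation}.)

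The step I expect to be the real obstacle is not the algebra but the justification just invoked, namely that ``an identity valid on the convergence locus, both of whose sides continue meromorphically, holds identically''. Concretely one must check that the meromorphic continuation in the parameters commutes with $\partial/\partial P_a$ — so that $\mathcal{D}_{(n_a)}I(P_a,w_0+\epsilon w)$ really is the continuation of $\mathcal{D}_{(n_a)}$ applied to the convergent integral rather than a merely formal symbol — and that the continuation supplied by Lemma \ref{analytic} for $J$ in the joint variables restricts compatibly to the one-parameter family cut out by the line $\{w_0+\epsilon w\}$ (in particular that this line is not contained in a polar hypersurface, which is part of the meaning of the Laurent-expansion hypothesis). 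Both are handled by the ``meromorphic family of holomorphic functions of $P_a$'' picture of the first paragraph, after which the extraction of $\epsilon^{i}$-coefficients is routine.
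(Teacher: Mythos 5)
Your proposal is correct and follows essentially the same route as the paper: the identities of Propositions \ref{GKZ1}--\ref{GKZ3} (equivalently \ref{GKZ2}, via the vanishing of the $J(c,d,P_a,\vec{v}+e_k)$ term from the identity $c(\ell+1)-d\ell+|\vec{v}|+n=0$) are propagated off the convergence locus by the meromorphic continuation of Lemma \ref{analytic} and \eqref{ijrelation}, the prefactors become affine-linear in $\epsilon$ along the line $w_0+\epsilon w$, and the stated equations follow by comparing Laurent coefficients. Your extra care about commuting $\partial/\partial P_a$ with the continuation and the $\epsilon$-expansion is a point the paper leaves implicit, but it is the same argument.
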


\begin{proof}
Our definition of the integral is by analytic continuation of $J(c,d,P_a,\vec{v})$. Note that for any differential operator $L$, $L(J)$ is meromorphic. Since the differential equation is satisfied for an open subset of $\mathbb{C}^{n+2}$ ($\Re(c)>0$,$\Re(d)<0$ and $\Re(\vec{v})>0$), it is valid for all values of $c,d$ and $\vec{v}$. Note that $I(P_a,w_0+\epsilon w)$ for all values of $\epsilon$ satisfies the first equation since $J$ has the same property. By the calculation above we see that the second term in the case $k\neq 0$ 
vanishes and we have 
\begin{align*}
\left(\sum_{a\in A} a_k P_a \frac{\partial}{\partial P_a}\right) I(P_a,w_0+\epsilon w)&=(-1-e_k.(\vec{v}+\epsilon \vec{u}))\  I(P_a,w_0+\epsilon w) \\ &=(-1-e_k.\vec{v}) I(P_a,w_0+\epsilon w) - \epsilon\  e_k.\vec{u} \ I(P_a,w_0+\epsilon w) .
\end{align*}
In the case $k=0$ we have
\begin{align*}
\left(\sum_{a\in A} a_0 P_a \frac{\partial}{\partial P_a}\right) I(P_a,w_0+\epsilon w)&=(n+|\vec{v}+\epsilon \vec{u}|-(c_2+\epsilon y -c_1 -\epsilon x)(\ell+1))\  I(P_a,w_0+\epsilon w) \\ &=((\ell+1,-\ell-1,1,\cdots,1).w_0+n)  I(P_a,w_0+\epsilon w) \\ &+
\epsilon\  (\ell+1,-\ell-1,1,\cdots,1).w \   I(P_a,w_0+\epsilon w) .
\end{align*}
The theorem follows from expanding $I(P_a,w_0+\epsilon w)$ and comparing terms with different 
powers of $\epsilon$.

\end{proof}
\begin{rem}
In standard dimensional regularization for the amplitude, assuming that we take the expansion with respect to $\epsilon$ in $D/2+\epsilon$, i.e.
$$
\sum_{i\geq -n}\epsilon^i I_i(P_a)= \int_{\mathbb{R}_{+}^{|E|}} e^{-Q_{\Gamma}/\Psi_{\Gamma}} \frac{1}{\Psi_{\Gamma}^{D/2+\epsilon}},
$$
we have  $w_0=(0,D/2,\vec{0})$ and $w=(0,1,\vec{0})$, hence the amplitude satisfies the differential equations
$$\left(\prod_{a:n_a>0} \big(\frac{\partial}{\partial P_a}\big)^{n_a}-\prod_{a:n_a<0}\big(\frac{\partial}{\partial P_a}\big)^{-n_a}\right)I_i(P_a)=0$$
\\
$$
\left(\sum_{a\in A} a_0 P_a \frac{\partial}{\partial P_a}\right) I_i(P_a)=(n-(\ell+1)D/2) I_i(P_a) - (\ell+1) I_{i-1}(P_a)
$$
and for $k\neq 0$ 
$$
\left(\sum_{a\in A} a_k P_a \frac{\partial}{\partial P_a}\right) I_i(P_a)=- I_i(P_a) \, .$$
\\
\end{rem}
In particular the lowest coefficient satisfies the so called GKZ hypergeometric differential 
equation, which we consider in the next section.

\section{\textbf{GKZ A-Hypergeometric Differemtial Equations}}\label{GKZsection}
\begin{deff}\label{phi}
Given $\mathbb{Z}^n$, the $n$-dimensional lattice, we fix a basis and denote an element as $n$-tuple of integers. We define $\phi_i$ as the map from $\mathbb{Z}^n$ to $\mathbb{Z}$ which gives us the $i$-th coordinate in the fixed basis.
\end{deff}

Let $A=\{a_1,\cdots,a_N\} \subset \mathbb{Z}^n$ be a set of lattice points such that they all lie in the hyperplane $\phi_1=1$ and generate the lattice as a $\mathbb{Z}$ module. For a tuple of integers $r=(n_a: a \in A)$ consider the relation among the points of $A$ of the form 
$$\sum_{a\in A} n_a a=0$$
Denote the set of relations by $R$. For each $r\in R$, we consider a corresponding differential operator
\begin{equation}\label{box}
\Box_r :=\prod_{\substack{a\in A\\ n_a>0}} \left(\frac{\partial}{\partial p_a}\right)^{n_a} -\prod_{\substack{a\in A\\ n_a<0}} \left(\frac{\partial}{\partial p_a}\right)^{-n_a}
\end{equation}
and for $i=1,\cdots,n$ we define
\begin{equation}
\label{zi}
Z_i:= \sum_{a\in A} \phi_i(a) p_a \frac{\partial}{\partial p_a}    \, .
\end{equation}

 On $V=\mathbb{C}^N$, with coordinates $p_1,...,p_N$,
 consider the differential equations 
$$
\Box_r \phi=0  \, .
$$

For $\beta=(\beta_1,...,\beta_n) \in k^n \subset \mathbb{C}^n$, consider the differential equations
$$
(Z_i- \beta_i) \phi=0  \,  .
$$

We want to find solutions to these differential equations. We denote by $W$ the Weyl algebra 
$$W=k[p_a,\frac{\partial}{\partial p_a}: a\in A]/ ([\frac{\partial}{\partial p_a},p_b]=\delta^a_b,[p_a,p_b]=0,[\frac{\partial}{\partial p_b},\frac{\partial}{\partial p_a}]=0) , $$
where $k$ is a sub-field of $\mathbb{C}$. Then the GKZ left $W$-module is defined by 
\\
$$ H_A(\beta) =W / \sum_i W(Z_i-\beta_i) + \sum_r W\ \Box_r  \, .
$$
\begin{rem}
One can consider $H_A(\beta)$ as a $D_V$-module, i.e. as a sheaf of modules over the sheaf of differential operators on $V$. $W$ is the ring of global sections of $D_V$ and $H_A(\beta)$ is the space of global section of the corresponding sheaf.
\end{rem}
This set of differential equations and the corresponding $W$-module was considered by Gelfand, Kapranov and Zelevinsky in (\cite{GKZ}, \cite{GKZH}). For a complete discussion of results in this direction see \cite{Nook}, \cite{Saito} and the references there. We prove the relevant results and state the theorems that we need. Their main results can be summarized in the following theorem from \cite{cattani}. 
\
\begin{theorem}{(GKZ)}
Let $H_A(\beta)$ be a GKZ hypergeometric system.\\

$(1)$ $H_A(\beta)$ is always holonomic.\\

$(2)$ The singular locus of $H_A(\beta)$ is independent of $\beta\in \mathbb{C}^n$ and agrees with the zero locus of the principal $A$-determinant $E_A(x)$ defined in chapter $10$ of \cite{Nook}. 
\\

$(3)$ For arbitrary $A$ and generic $\beta$, the holonomic rank of $H_A(\beta)$ equals the normalized
volume of the convex hull of $A$, $vol(conv(A))$. \\

$(4)$ For arbitrary $A$ and $\beta$, $rank(H_A(\beta)) \geq vol(conv(A))$.\\

$(5)$ Given $A$, $rank(H_A(\beta)) = vol(conv(A))$ for all $\beta \in \mathbb{C}^n$
if and only if the toric ideal $I_A$ is
Cohen-Macaulay.

\end{theorem}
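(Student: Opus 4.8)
Since this statement collects several independently deep theorems, the plan is to split the proof into one argument per item, following \cite{GKZ}, \cite{GKZH}, \cite{Nook}, \cite{Saito} and \cite{cattani}; I only indicate the structure, as each piece is by now standard. Parts $(1)$, $(2)$ and $(4)$ I would obtain from an analysis of the characteristic variety of $H_A(\beta)$ together with a Gröbner degeneration in the Weyl algebra $W$; part $(3)$ from the Euler--Koszul / $\Gamma$-series description of solutions and a multiplicity computation on the toric ring; and part $(5)$ from the behaviour of the Euler--Koszul complex under the Cohen--Macaulay hypothesis.

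For $(1)$ and $(4)$ I would filter $W$ by the order of operators and compute the associated graded of the left ideal $\sum_i W(Z_i-\beta_i)+\sum_r W\,\Box_r$. One checks that it contains the symbols $\sum_{a\in A}\phi_i(a)\,p_a\xi_a$ of the Euler operators together with the toric (lattice) ideal generated by the $\mathrm{in}(\Box_r)$ in the $\xi_a$, which is exactly the toric ideal $I_A\subset k[\xi_a]$. Since $I_A$ cuts out an $n$-dimensional affine toric variety and the $n$ linear forms in $p\xi$ slice it, a direct dimension count shows that $\mathrm{char}(H_A(\beta))$ has dimension at most $N$ in $T^*V\cong\mathbb{C}^{2N}$; as it always contains the zero section it is $N$-dimensional, so $H_A(\beta)$ is holonomic, which is $(1)$, and the support of $\mathrm{char}(H_A(\beta))$ is manifestly independent of $\beta$. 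For $(4)$ I would pick a generic weight $w$, pass to the fake initial ideal (the distraction of $\mathrm{in}_w$ of the GKZ ideal), and note that the holonomic rank is the dimension over the function field of the localization of $W/I$ at the origin of $\xi$-space; this is bounded below by the number of standard monomials of a Gröbner deformation of $I_A$, which counts lattice multiplicity and equals $\mathrm{vol}(\mathrm{conv}(A))$. The inequality can only degrade when the $w$-degeneration acquires embedded components, which is precisely the phenomenon governed by $(5)$.

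For $(3)$ I would use the Euler--Koszul formalism: write $S_A=k[\mathbb{N}A]$ for the semigroup ring of $A$ viewed as a module over $k[\partial_a:a\in A]$, and form the Koszul-type complex on the commuting operators $Z_i-\beta_i$ acting on $S_A$, whose zeroth homology is $H_A(\beta)$. For $\beta$ outside a proper closed ``strongly resonant'' subset this complex is a resolution, so $H_A(\beta)$ is the only homology, holonomic, and its rank equals the multiplicity of $S_A$ transverse to the zero section, namely $\deg S_A=\mathrm{vol}(\mathrm{conv}(A))$; concretely this recovers the count of $\Gamma$-series solutions attached to a regular triangulation of $A$. Part $(2)$ then follows from the same circle of ideas: on the complement of its singular locus $H_A(\beta)$ is a bundle with integrable connection, and one identifies the locus where the solution sheaf degenerates with the zero set of the principal $A$-determinant $E_A$, using the $\beta$-independent characteristic-variety support from the previous step and the definition of $E_A$ in chapter $10$ of \cite{Nook}.

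Finally, for $(5)$: by the Euler--Koszul machinery of \cite{Saito}, the rank of $H_A(\beta)$ is constant in $\beta$ — hence equal to the generic value $\mathrm{vol}(\mathrm{conv}(A))$ of $(3)$ — exactly when the Euler--Koszul complex is a resolution for \emph{every} $\beta$, i.e.\ when $S_A$ is a Cohen--Macaulay $k[\partial_a]$-module, i.e.\ when $I_A$ is Cohen--Macaulay; the ``only if'' direction is obtained by exhibiting a resonant $\beta$ at which the failure of Cohen--Macaulayness forces a rank jump. I expect the genuine obstacle to be precisely this last point: for generic $\beta$ everything reduces to a clean multiplicity computation on the toric ring, whereas controlling the rank at special (resonant) parameters requires the full strength of the Euler--Koszul homology theory, and matching the rank-jump locus with the homological defect of $I_A$ is the technical heart of the statement.
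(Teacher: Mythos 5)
The paper does not actually prove this theorem: it is quoted verbatim (from \cite{cattani}) as a summary of results of Gelfand--Kapranov--Zelevinsky, Adolphson, Saito--Sturmfels--Takayama and Matusevich--Miller--Walther, and the text around it explicitly defers to \cite{GKZ}, \cite{GKZH}, \cite{Nook} and \cite{Saito}. So there is no in-paper argument to compare yours against; the only fair comparison is with the literature you cite, and there your outline is the standard one: holonomicity and $\beta$-independence of the characteristic variety via a Gr\"obner/symbol degeneration onto the toric ideal $I_A$, the generic-rank formula via Euler--Koszul homology (or $\Gamma$-series attached to a regular triangulation), and the rank-constancy criterion $(5)$ via the Euler--Koszul machinery and Cohen--Macaulayness of $S_A$. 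In that sense your route is the same one the paper implicitly relies on.

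That said, what you have written is a roadmap rather than a proof, and the places you gloss over are exactly the hard points. For $(1)$ the ``direct dimension count'' is not automatic: $V(I_A)\times\mathbb{C}^N_p$ has dimension $N+n$, and one must show the $n$ Euler symbols cut it down properly on \emph{every} component (equivalently, analyze the conormal varieties to the torus-orbit strata); this is the substance of the GKZ/Adolphson argument, not a formality. For $(2)$, identifying the projection of the characteristic variety with the zero locus of the principal $A$-determinant $E_A$ requires the full description of the characteristic cycle as a union of conormals to orbit closures together with the product formula for $E_A$ from chapter 10 of \cite{Nook}; your one sentence does not supply this. And for $(5)$, the ``only if'' direction (a rank jump at some resonant $\beta$ whenever $I_A$ fails to be Cohen--Macaulay) is the main theorem of Matusevich--Miller--Walther, where the jump is controlled by local cohomology of $S_A$; naming it as ``the technical heart'' is accurate but does not prove it. Since the paper itself treats the theorem as imported background, citing these sources precisely (Adolphson for $(3)$, Saito--Sturmfels--Takayama for $(4)$, Matusevich--Miller--Walther for $(5)$) would be both sufficient and more honest than presenting the sketch as a proof.
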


\begin{prop}
Let $t^a=t_1^{\phi_1(a)} t_2^{\phi_2(a)} \cdots t_n^{\phi_n(a)}$. Consider the ring 
$$\mathfrak{R}= k[p_a,t^a: a \in A]$$ with an action of $W$ given by 
$$(p_a,P)\mapsto p_a P$$
$$(\frac{\partial}{\partial p_a},P)\mapsto \frac{\partial}{\partial p_a} P + t^a P. $$
Moreover, consider the map
$$\Psi: W \rightarrow \mathfrak{R}$$
$$\Psi(p^I\prod_{a\in A}\left(\frac{\partial}{\partial p_a}\right)^{m_a})= p^I t^{\sum m_a a}\, . $$
Let 
$$f=\sum_{a\in A} P_a t^a$$ 
$$Y_i=  t_i \frac{\partial f}{\partial t_i} +  t_i \frac{\partial}{\partial t_i}  - \beta_i$$
Then we have:\\

(1) $\Psi$ is surjective\\

(2) $ker(\Psi)= W \sum_a \Box_a$, with $\Box_a$ as in (\ref{box})
\\

(3) $image(W (Z_i-\beta_i)) =  Y_i \mathfrak{R}$ , with $Z_i$ as in (\ref{zi})
\\

(4) $\Psi$ gives an isomorphism between $H_A(\beta)$ and $\mathfrak{R}/\sum_i Y_i\mathfrak{R} $.
\end{prop}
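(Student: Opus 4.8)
The plan is to verify the four statements essentially in the order given, treating $\mathfrak{R}$ as a left $W$-module via the stated action and checking that $\Psi$ intertwines the $W$-action with the natural action of $W$ on itself by left multiplication (so that $\Psi$ descends to quotients). First I would observe that $\Psi$ is a $k$-linear map which is surjective because every monomial $p^I t^b$ with $b\in\mathbb{Z}_{\geq 0}$-span of $A$ arises: writing $b=\sum_a m_a a$ with $m_a\geq 0$ we have $\Psi(p^I\prod_a(\partial/\partial p_a)^{m_a})=p^I t^b$, and these monomials span $\mathfrak{R}$ as a $k$-vector space. (One should note that $A$ generates the lattice, so if one wants all of $k[p_a,t^a]$ the exponents $b$ range exactly over the submonoid generated by $A$; this is the image.) This proves (1).

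For (2), the inclusion $W\sum_a\Box_a\subseteq\ker\Psi$ is a direct computation: $\Psi$ applied to $\prod_{n_a>0}(\partial/\partial p_a)^{n_a}$ gives $t^{\sum_{n_a>0}n_a a}$ and applied to $\prod_{n_a<0}(\partial/\partial p_a)^{-n_a}$ gives $t^{\sum_{n_a<0}(-n_a)a}$, and these two exponents are equal precisely because $\sum_a n_a a=0$; hence $\Psi(\Box_r)=0$, and since $\ker\Psi$ is a left ideal (using that $\Psi$ is a $W$-module map) we get the inclusion. For the reverse inclusion I would argue that $W/W\sum_a\Box_a$ already surjects onto $\mathfrak{R}$ and that both sides, filtered by order in the $\partial/\partial p_a$ (equivalently by the $t$-degree on $\mathfrak{R}$), have matching associated graded pieces: modulo $\sum W\Box_r$ every element of $W$ can be written with the $\partial$'s collected into a single monomial of a fixed "normal form" for each lattice point $b$ in the monoid (two monomials $\prod(\partial/\partial p_a)^{m_a}$ and $\prod(\partial/\partial p_a)^{m'_a}$ with $\sum m_a a=\sum m'_a a$ differ by an element of the ideal generated by the $\Box_r$, by the standard fact that the toric ideal $I_A$ is generated by the binomials coming from $R$), so the induced map is also injective.

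Statement (3) is the identity $\Psi(W(Z_i-\beta_i))=Y_i\mathfrak{R}$. Here I would compute $\Psi$ on a general element $w\cdot(Z_i-\beta_i)$ and compare with $\Psi(w)\cdot\!$-style manipulations; the key point is the Leibniz-type computation that under the $W$-action on $\mathfrak{R}$, the operator $Z_i=\sum_a\phi_i(a)p_a\,\partial/\partial p_a$ acts on $P\in\mathfrak{R}$ as $\sum_a\phi_i(a)p_a(\partial/\partial p_a + t^a)P$, and $\sum_a\phi_i(a)p_a t^a=\sum_a\phi_i(a)P_a t^a=t_i\partial f/\partial t_i$ while $\sum_a\phi_i(a)p_a\,\partial/\partial p_a$ corresponds, after applying $\Psi$, to the Euler-type operator whose effect on $t^b$ is multiplication by $\phi_i(b)$, i.e. to $t_i\partial/\partial t_i$ acting on $\mathfrak{R}$. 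Subtracting $\beta_i$ gives exactly $Y_i$, and running this over all $w$ produces $Y_i\mathfrak{R}$ since $\Psi$ is onto.

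Finally (4) is formal: by (1)–(3), $\Psi$ induces a surjection
\begin{equation*}
H_A(\beta)=W\Big/\Big(\textstyle\sum_i W(Z_i-\beta_i)+\sum_r W\,\Box_r\Big)\longrightarrow \mathfrak{R}\Big/\textstyle\sum_i Y_i\mathfrak{R},
\end{equation*}
and it is injective because $\ker\Psi=\sum_r W\,\Box_r$ by (2) while the preimage of $\sum_i Y_i\mathfrak{R}$ is $\sum_i W(Z_i-\beta_i)+\ker\Psi$ by (3), so the kernel of the induced map is exactly the denominator. I expect the main obstacle to be the reverse inclusion in (2): making precise that modulo the binomial relations $\Box_r$ the "collected-derivative normal forms" indexed by lattice points of the monoid are linearly independent in $W/\ker\Psi$, which is really the statement that the toric ideal $I_A$ is generated by the binomials associated to $R$ together with a careful bookkeeping of how $p_a$'s commute past $\partial/\partial p_a$'s (the commutators $[\partial/\partial p_a,p_a]=1$ only lower derivative order and so do not disturb the associated-graded argument). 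Everything else is a bounded Leibniz-rule computation.
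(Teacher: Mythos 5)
Your proposal is correct and follows essentially the same route as the paper: $\Psi(w)=w\cdot 1$ is a left $W$-module map, the image of right multiplication by $Z_i-\beta_i$ is computed on the $k[p_a]$-normal forms $\prod_a(\partial/\partial p_a)^{m_a}$ to give $Y_i\mathfrak{R}$, and (4) follows formally. The only difference is that you spell out part (2) — reducing $\ker\Psi$ to the toric ideal $I_A$ and its binomial generators — which the paper dismisses as "following from the definition"; your filtration argument there is sound (and in fact easier than you suggest, since the $\Box_r$ contain no $p_a$'s so no commutators arise).
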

\begin{proof}
The first two statements follow from the definition. Note that $\Psi$ is $k[p_a:a\in A]$ linear and $Y_i$ acts $k[p_a:a\in A]$ linearly. Thus, to check (3) it is enough to compute the image of
$$ \prod_{a\in A}\left(\frac{\partial}{\partial p_a}\right)^{m_a} (Z_i-\beta_i)$$
With $\phi_i$ as Definition \ref{phi}, we have 
\begin{align*}
\Psi(\prod_{a\in A}\left(\frac{\partial}{\partial p_a}\right)^{m_a} (Z_i-\beta_i))&= \Psi(\prod_{a\in A}\left(\frac{\partial}{\partial p_a}\right)^{m_a} \sum_{b\in A} \phi_i(b) p_b \frac{\partial}{\partial p_b}-\beta_i\prod_{a\in A}\left(\frac{\partial}{\partial p_a}\right)^{m_a}) \\ &=  \sum_{b\in A} \phi_i(b) p_b \Psi(\frac{\partial}{\partial p_b} \prod_{a\in A}\left(\frac{\partial}{\partial p_a}\right)^{m_a} ) \\ &+ \sum_{b\in A} \phi_i(b) m_b  \Psi(\prod_{a\in A}\left(\frac{\partial}{\partial p_a}\right)^{m_a})-\beta_i \Psi (\prod_{a\in A}\left(\frac{\partial}{\partial p_a}\right)^{m_a}) \\ &=
\sum_{b\in A} \phi_i(b) p_b t^b t^{\sum m_a a} + \sum_{b\in A} \phi_i(b) m_b  t^{\sum m_a a}-\beta_i t^{\sum m_a a} \\ &= 
( \sum_{a\in A} \phi_i(a) p_a t^a + \phi_i(\sum_{a\in A} a m_a) -\beta_i) t^{\sum m_a a}\\
&= Y_i t^{\sum m_a a}
\end{align*}
To check (4) we need to show that, for $P \in GKZ$, we have 
$$\Psi(\frac{\partial}{\partial p_b} P)=  \frac{\partial}{\partial p_b} \Psi(P) + t^b \Psi(P)\, . $$
If $P$ has the form $p^I\prod_{a\in A}\left(\frac{\partial}{\partial p_a}\right)^{m_a}$, we have
\begin{align*}
\Psi(\frac{\partial}{\partial p_b} P)&= \Psi(\left(\frac{\partial}{\partial p_b}p^I\right)\prod_{a\in A}\left(\frac{\partial}{\partial p_a}\right)^{m_a}+p^I\prod_{a\in A}\left(\frac{\partial}{\partial p_a}\right)^{m_a}\frac{\partial}{\partial p_b} )\\
&= \frac{\partial}{\partial p_b} \Psi(P) + t^b \Psi(P)\, .
\end{align*}
\end{proof}

This construction makes GKZ a quotient of $\mathfrak{R}$. Note that it is not a $\mathfrak{R}$-module since the action of $Y_i$ does not commute with multiplication by $t^a$. However, it is a $k[p_a: a \in A]$ module. We want to understand the structure as a $k[p_a: a \in A]$-module. 

\begin{prop}
Assume $-\beta_1 + n$ is nonzero in $k$ for all $n \in \mathbb{Z}_{\geq 0}$. We have an isomorphism of $k[p_a: a \in A]$ modules 
\begin{equation}\label{betazero}
\mathfrak{R}/Y_1 \mathfrak{R} \cong k[p_a: a \in A][t^a/f] \cong k[p_a: a \in A][t^a]/ \left(\sum_{a\in A} p_a t^a =1\right)=:\hat{\mathfrak{R}} \, .
\end{equation}
Let $\phi_1$ be the first coordinate as Definition \ref{phi}. The isomorphism is given by 
$$\mathfrak{F}:\mathfrak{R} \rightarrow \mathfrak{R}$$
$$\mathfrak{F}(t^J)= (-1)^{\phi_1(J)}\gamma(\phi_1(j))\ t^J,$$
where $\gamma(n)= -\beta_1 (-\beta_1+1) \cdots (-\beta_1+n-1)$ and $\gamma(0)=\gamma(1)=-\beta_1$.
\end{prop}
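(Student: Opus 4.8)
The plan is to use the grading on $\mathfrak{R}$ in which $\deg p_a=0$ and $\deg t^a=1$: on it the operator $Y_1$ is multiplication by $f=\sum_a p_a t^a$ plus an invertible affine function of the Euler operator, and $\mathfrak{F}$ is exactly the degreewise rescaling that turns $Y_1$ into multiplication by $f-1$, the cokernel of which is by construction $\hat{\mathfrak{R}}$.

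First I would record the grading. Since every $a\in A$ lies on $\phi_1=1$, the integer $\phi_1(\sum_a m_a a)=\sum_a m_a$ depends only on the monomial $t^{\sum m_a a}$, so the assignment $\deg p_a=0$, $\deg t^a=1$ gives a well-defined $\mathbb{Z}_{\geq 0}$-grading $\mathfrak{R}=\bigoplus_{m\geq 0}\mathfrak{R}_m$ with $\mathfrak{R}_0=k[p_a:a\in A]$ and $\mathfrak{R}_m$ the free $k[p_a]$-module on the monomials $t^J$, $J\in\mathbb{N}A$, with $\phi_1(J)=m$. Moreover $\mathfrak{R}$ is an integral domain, being a subring of $k[p_a][t_1^{\pm 1},\dots,t_n^{\pm 1}]$, and $f\in\mathfrak{R}_1$. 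Because $\phi_1(a)=1$ for all $a$, one has $t_1\,\partial f/\partial t_1=\sum_a\phi_1(a)p_a t^a=f$, so $Y_1$, viewed as a $k[p_a]$-linear operator on $\mathfrak{R}$, equals multiplication by $f$ plus $E-\beta_1$, where $E$ is the Euler operator (multiplication by $m$ on $\mathfrak{R}_m$). In particular $Y_1(g)=fg+(m-\beta_1)g$ for $g\in\mathfrak{R}_m$, so $Y_1(\mathfrak{R}_m)\subseteq(m-\beta_1)\mathfrak{R}_m+f\mathfrak{R}_m$.

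Next, $\mathfrak{F}$ acts on $\mathfrak{R}_m$ as multiplication by the scalar $c_m:=(-1)^m\gamma(m)$; since $\gamma(m)=\prod_{j=0}^{m-1}(-\beta_1+j)$ and no factor $-\beta_1+j$ vanishes, each $c_m$ is a unit of $k$, so $\mathfrak{F}$ is a graded $k[p_a]$-linear automorphism of $\mathfrak{R}$. The heart of the argument is the identity $\mathfrak{F}(Y_1 g)=(f-1)\,(\beta_1-E)\,\mathfrak{F}(g)$ for all $g\in\mathfrak{R}$; tested on $g\in\mathfrak{R}_m$ it collapses to the scalar relation $c_{m+1}=(\beta_1-m)c_m$, which is precisely the recursion $\gamma(m+1)=(-\beta_1+m)\gamma(m)$ obeyed by $\gamma$. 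Since $\beta_1-E$ is an invertible operator commuting with the grading, this identity yields $\mathfrak{F}(Y_1\mathfrak{R})=(f-1)(\beta_1-E)\mathfrak{R}=(f-1)\mathfrak{R}$. Both $Y_1\mathfrak{R}$ (because $Y_1$ is $k[p_a]$-linear) and $(f-1)\mathfrak{R}$ (being an ideal) are $k[p_a]$-submodules, so $\mathfrak{F}$ descends to an isomorphism of $k[p_a]$-modules $\mathfrak{R}/Y_1\mathfrak{R}\xrightarrow{\ \sim\ }\mathfrak{R}/(f-1)\mathfrak{R}$, and the target is by definition $\hat{\mathfrak{R}}$.

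Finally I would identify $\hat{\mathfrak{R}}=\mathfrak{R}/(f-1)\mathfrak{R}$ with $k[p_a][t^a/f]\subset\mathfrak{R}[1/f]$: the $k[p_a]$-algebra map $\mathfrak{R}\to k[p_a][t^a/f]$, $p_a\mapsto p_a$, $t^a\mapsto t^a/f$, is surjective and kills $f-1$, and conversely if $h=\sum_{m\leq M}h_m$ with $h_m\in\mathfrak{R}_m$, then $h\equiv\sum_m h_m f^{M-m}\pmod{f-1}$ while the image of $h$ in $\mathfrak{R}[1/f]$ vanishes iff $\sum_m h_m f^{M-m}=0$ in $\mathfrak{R}$ (using that $\mathfrak{R}$ is a domain to clear powers of $f$), so the kernel is exactly $(f-1)\mathfrak{R}$. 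This gives all the isomorphisms of \eqref{betazero}. I expect the only subtle step to be the promotion of the inclusion $\mathfrak{F}(Y_1\mathfrak{R})\subseteq(f-1)\mathfrak{R}$ to an equality, together with the invertibility of $\mathfrak{F}$ itself; both are exactly where the hypothesis enters, since $-\beta_1+n\neq 0$ for all $n\in\mathbb{Z}_{\geq 0}$ is what makes every $c_m$ and every $\beta_1-m$ a unit in $k$.
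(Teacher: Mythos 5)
Your proof is correct and follows essentially the same route as the paper: your key identity $\mathfrak{F}(Y_1 g)=(f-1)(\beta_1-E)\mathfrak{F}(g)$ is exactly the paper's monomial computation $\mathfrak{F}(Y_1 t^J)=(1-f)(-\beta_1+\phi_1(J))\mathfrak{F}(t^J)$ repackaged via the Euler operator, with the hypothesis $-\beta_1+n\neq 0$ making $\mathfrak{F}$ and $\beta_1-E$ invertible so that the image of $Y_1\mathfrak{R}$ is precisely $(1-f)\mathfrak{R}$. Your extra verification that $\mathfrak{R}/(f-1)\mathfrak{R}\cong k[p_a:a\in A][t^a/f]$ (using that $\mathfrak{R}$ is a domain) fills in a step the paper treats as definitional, but it does not change the approach.
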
 
\begin{proof}
Note that we have $\gamma(n+1)/\gamma(n)= -\beta_1 +n$ and since all points of $A$ have $\phi_1=1$, we have $t_1 \frac{\partial f}{\partial t_1} =f$
\begin{align*}
\mathfrak{F}(Y_1 t^J)&=\mathfrak{F}(t_1 \frac{\partial f}{\partial t_1}t^J +  t_1 \frac{\partial}{\partial t_1} t^J - \beta_1 t^J)\\
&=\mathfrak{F}(\sum_{a\in A} p_a t^{J+a} + \phi_1(J) t^J - \beta_1 t^J)\\
&= \sum_{a \in A} p_a t^a (-1)^{\phi_1(J)+1}\gamma(\phi_1(J)+1)\ t^J \\&+ \phi_1(J)(-1)^{\phi_1(J)}\gamma(\phi_1(J))\ t^J- \beta_1 (-1)^{\phi_1(J)}\gamma(\phi_1(J))\ t^J\\
&=\left(-f \frac{\gamma(\phi_1(J)+1)}{\gamma(\phi_1(J))} + \phi_1(J) - \beta_1 \right)(-1)^{\phi_1(J)}\gamma(\phi_1(J)) t^J\\
&=(-\beta_1+\phi_1(J))(1-f)(-1)^{\phi_1(J)}\gamma(\phi_1(J)) t^J\\
&=(1-f)(-\beta_1+\phi_1(J)) \mathfrak{F}(t^J)
\end{align*}
By definition $\mathfrak{F}$ is surjective. The equations above shows that the image of $Y_1 \mathfrak{R}$ is the ideal generated by $1-f$ and we have the isomorphism.
\end{proof}
\begin{prop}\label{dwork}
Assume $-\beta_1+n$ is nonzero in $k$, for all $n \in \mathbb{Z}_{\geq 0}$. Define $\tilde{Y}_i$ by 
$$
\tilde{Y}_i: \proj \rightarrow \proj
$$
$$
\tilde{Y}_i = \left( t_i \frac{\partial}{\partial t_i} -\beta_i -t_i\frac{\partial f}{\partial t_i} (-\beta_1+t_1\frac{\partial}{\partial t_1})\right)  \, .
$$
We have
$$
H_A(\beta)= \proj/ \sum_{i=2}^n \tilde{Y}_i \proj  \, .
$$
\end{prop}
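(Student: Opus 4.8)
The plan is to combine the two previous propositions. By the first proposition of this section, $H_A(\beta)\cong \mathfrak{R}/\sum_{i=1}^n Y_i\mathfrak{R}$. I will quotient out the relation $Y_1\mathfrak{R}$ first and then deal with the remaining $Y_i$. By the isomorphism $\mathfrak{F}$ of the previous proposition, $\mathfrak{F}$ carries $\mathfrak{R}/Y_1\mathfrak{R}$ isomorphically onto $\proj = k[p_a:a\in A][t^a]/(\sum_a p_a t^a = 1)$, with $\mathfrak{F}(Y_1 t^J) = (1-f)(-\beta_1+\phi_1(J))\mathfrak{F}(t^J)$, i.e.\ $\mathfrak{F}\circ Y_1 = (1-f)\cdot(-\beta_1 + t_1\partial_{t_1})\circ\mathfrak{F}$ as operators on $\mathfrak{R}$. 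So it remains to show that, under $\mathfrak{F}$, the images of the ideals $Y_i\mathfrak{R}$ for $i\geq 2$ become the ideals $\tilde Y_i\proj$.

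The key computation is to conjugate $Y_i$ by $\mathfrak{F}$ and then reduce modulo $1-f$. Explicitly, I would compute $\mathfrak{F}(Y_i t^J)$ for $i\geq 2$ the same way as in the proof of the previous proposition: since $Y_i = t_i\frac{\partial f}{\partial t_i} + t_i\frac{\partial}{\partial t_i} - \beta_i$ and $t_i\frac{\partial f}{\partial t_i} = \sum_{a\in A}\phi_i(a)p_a t^a$ raises $\phi_1$-degree by $1$ while $t_i\frac{\partial}{\partial t_i} - \beta_i$ preserves it, one gets
\begin{align*}
\mathfrak{F}(Y_i t^J) &= \sum_{a\in A}\phi_i(a)p_a t^a \,(-1)^{\phi_1(J)+1}\gamma(\phi_1(J)+1)\,t^J + (\phi_i(J)-\beta_i)(-1)^{\phi_1(J)}\gamma(\phi_1(J))\,t^J\\
&= \left( -\Big(t_i\tfrac{\partial f}{\partial t_i}\Big)\tfrac{\gamma(\phi_1(J)+1)}{\gamma(\phi_1(J))} + \phi_i(J) - \beta_i\right)\mathfrak{F}(t^J)\\
&= \left( (\phi_i(J) - \beta_i) - (-\beta_1+\phi_1(J))\,t_i\tfrac{\partial f}{\partial t_i}\right)\mathfrak{F}(t^J).
\end{align*}
Reading $\phi_1(J)$ as the operator $t_1\partial_{t_1}$ and $\phi_i(J)$ as $t_i\partial_{t_i}$ acting on $t^J$, this says $\mathfrak{F}\circ Y_i = \big(t_i\partial_{t_i} - \beta_i - (-\beta_1 + t_1\partial_{t_1})\,t_i\frac{\partial f}{\partial t_i}\big)\circ\mathfrak{F} = \tilde Y_i\circ\mathfrak{F}$ as operators on $\mathfrak{R}$ (being careful about operator ordering: $t_1\partial_{t_1}$ must stand to the left of $t_i\frac{\partial f}{\partial t_i}$ exactly as written in the definition of $\tilde Y_i$, which the computation respects since the scalar $\gamma(\phi_1(J)+1)/\gamma(\phi_1(J))$ depends on $J$ before multiplication by $t^a$). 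Hence $\mathfrak{F}$ maps $\sum_{i=2}^n Y_i\mathfrak{R}$ into $\sum_{i=2}^n\tilde Y_i\mathfrak{R}$, and since $\mathfrak{F}$ is bijective the same holds in reverse, so $\mathfrak{F}$ descends to an isomorphism $\mathfrak{R}/\sum_{i=1}^n Y_i\mathfrak{R} \xrightarrow{\sim} \proj/\sum_{i=2}^n\tilde Y_i\proj$. Combined with the first proposition this gives $H_A(\beta)\cong\proj/\sum_{i=2}^n\tilde Y_i\proj$.

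The main subtlety I anticipate is bookkeeping about operator ordering and making sure $\tilde Y_i$ is well-defined as an operator on $\proj$ rather than just on $\mathfrak{R}$: one must check that $\tilde Y_1 = -\beta_1 + t_1\partial_{t_1} - t_1\frac{\partial f}{\partial t_1}(-\beta_1 + t_1\partial_{t_1})$ would act as multiplication by $(1-f)(-\beta_1 + t_1\partial_{t_1})$ after the identification $t_1\frac{\partial f}{\partial t_1}=f$, so that passing to $\proj$ (where $f=1$) is exactly the quotient by $Y_1$, and that the $\tilde Y_i$ for $i\geq 2$ genuinely preserve $\proj$, i.e.\ send the relation $f-1$ into the ideal it generates; this last point follows because $\mathfrak{F}(Y_i(f-1)t^J)$ lies in $(f-1)\proj$ by the displayed identity together with $\mathfrak{F}\circ Y_1 = (1-f)(-\beta_1+t_1\partial_{t_1})\circ\mathfrak{F}$. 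The hypothesis $-\beta_1 + m \neq 0$ for all $m\in\mathbb{Z}_{\geq 0}$ is used exactly once, to invert the scalars $\gamma(m+1)/\gamma(m)$ and guarantee $\mathfrak{F}$ is bijective, so nothing further is needed there.
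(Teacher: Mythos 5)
Your argument is essentially the paper's proof: the paper likewise computes $\mathfrak{F}(Y_i t^J)$ for $i\ge 2$, obtains exactly your displayed identity, and concludes that the image of $Y_i\mathfrak{R}$ under $\mathfrak{F}$ is $\tilde Y_i\mathfrak{R}$, the proposition then following from the two preceding propositions; your extra care about $\tilde Y_i$ being well defined on $\proj$ addresses a point the paper leaves implicit. One slip to correct: in the operator form you place $(-\beta_1+t_1\frac{\partial}{\partial t_1})$ to the \emph{left} of $t_i\frac{\partial f}{\partial t_i}$ and assert this is the order in the definition of $\tilde Y_i$, but the definition (and what your own scalar computation produces, since $\phi_1(J)$ is the degree \emph{before} multiplication by the degree-one element $t_i\frac{\partial f}{\partial t_i}$) has it on the right; the two orders genuinely differ, because $(-\beta_1+t_1\frac{\partial}{\partial t_1})\,t_i\frac{\partial f}{\partial t_i}=t_i\frac{\partial f}{\partial t_i}\,(1-\beta_1+t_1\frac{\partial}{\partial t_1})$, so as literally written your identity $\mathfrak{F}\circ Y_i=\tilde Y_i\circ\mathfrak{F}$ is off by a shift, whereas with the definition's order it is exactly what your computation shows and the conclusion stands.
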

\begin{proof}
We need to find the image of $Y_i \mathfrak{R}$ under $\ef$. We have 
\begin{align*}
\ef(Y_i t^J)&= \ef(t_i \frac{\partial f}{\partial t_i}t^J +  t_i \frac{\partial}{\partial t_i} t^J - \beta_i t^J)\\
&=(-1)^{\phi_1(J)+1}\gamma(\phi_1(J)+1) t_i \frac{\partial f}{\partial t_i} t^J + (-1)^{\phi_1(J)}\gamma(\phi_1(J))t_i \frac{\partial}{\partial t_i} t^J-\beta_i (-1)^{\phi_1(J)}\gamma(\phi_1(J)) t^J\\
&=\left(-(-\beta_1+\phi_1(J))t_i\frac{\partial f}{\partial t_i}+ t_i \frac{\partial}{\partial t_i} -\beta_i\right) (-1)^{\phi_1(J)}\gamma(\phi_1(J)) t^J  \\
&=\left(-t_i\frac{\partial f}{\partial t_i} (-\beta_1+t_1\frac{\partial}{\partial t_1})+ t_i \frac{\partial}{\partial t_i} -\beta_i\right) (-1)^{\phi_1(J)}\gamma(\phi_1(J)) t^J   \, .
\end{align*}
Thus, we have that the image is 
$$ \left( t_i \frac{\partial}{\partial t_i} -\beta_i -t_i\frac{\partial f}{\partial t_i} (-\beta_1+t_1\frac{\partial}{\partial t_1})\right) \mathfrak{R} $$
\end{proof}
To show that the equations come from geometry we observe that the ring $\mathfrak{R}$ is the coordinate ring of an affine toric variety. Let $\mathbb{N}A$ be the semigroup generated by the set $A$ as a sub-semigroup of $\mathbb{Z}^n$. By definition, $\mathfrak{R}$ is the semigroup algebra $k[p_a:a\in A][\mathbb{N}A]$. It is well known that Abelian semigroup algebras are local models for toric varieties.

\begin{deff}
We denote the semigroup above by $\Sigma$. It is an Abelian semigroup, which is generated by the set $A$ and $0$.  We denote the semigroup algebra by $S_{\Sigma}=k[t^a:a \in A]$ and the corresponding toric variety by $X_{\Sigma}=\mathbf{spec}[S_{\Sigma}]$. $\phi_1$ induces a grading on $\Sigma$. We denote the projective toric variety $\mathbf{Proj} (S_{\Sigma},\phi_1)$ by $\mathbb{P}_{\Sigma}$ and the corresponding line bundle by $\mathcal{O}(1)$.
\end{deff}
\begin{rem}
Toric varieties defined by semigroup algebras are not necessarily normal, while all toric varieties defined by a rational polyhedral fan are normal. It turns out that a spectrum of an Abelian semigroup ring is normal iff the semigroup is saturated, i.e. iff all $\mathbb{Z}^n$ points in the real cone generated by the semigroup are in the semigroup. (see \cite{Hoch})

\end{rem}
Note that the $k$-vector space of global sections of $\mathcal{O}(1)$ is canonically isomorphic to $V$, since degree one elements in the semigroup are a basis for $V$. We consider $f$ as the universal section of $\mathcal{O}(1)$. Let $Y=\mathbb{V}(f)$ be the codimension one subvariety of the zeros of $f$ in $V\times \mathbb{P}_{\Sigma}$ and let $U$ be the complement.

\begin{lem}
We have isomorphisms
\begin{align*}
\textbf{spec} (\mathfrak{R})=& V \times X_{\Sigma}\\
\textbf{spec} (\proj)=& U = V \times \mathbb{P}_{\Sigma} \setminus Y
\end{align*}

\begin{proof}
The first part follows from the definition of $X_{\Sigma}$ as the spectrum of $k[t^a:a\in A]$. For the second part, note that, since  $V\times \mathbb{P}_{\Sigma} \setminus Y$ is an affine chart in $V\times\mathbb{P}_{\Sigma}$, its coordinate ring can be described by degree zero elements in $\mathfrak{R}[1/f]$. This agrees with the definition of the $\proj$ as in (\ref{betazero}).
\end{proof}
\end{lem}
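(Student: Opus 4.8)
The plan is to prove the two displayed isomorphisms separately, the first being essentially a definition-chase and the second the substantive point. For the first line, I would simply recall that $\mathfrak{R} = k[p_a : a\in A][\mathbb{N}A]$ by the preceding discussion, so that $\mathfrak{R} = k[p_a : a\in A]\otimes_k k[\mathbb{N}A] = \mathcal{O}(V)\otimes_k S_\Sigma$; taking $\mathbf{spec}$ turns the tensor product of $k$-algebras into a fiber product over $\mathrm{spec}(k)$, i.e. the product $V\times X_\Sigma$. This is the routine half and needs only a sentence.

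For the second line I would argue as follows. The variety $V\times\mathbb{P}_\Sigma$ carries the line bundle $\mathcal{O}(1)$ pulled back from $\mathbb{P}_\Sigma$, and $f = \sum_{a\in A} P_a t^a$ is its tautological (universal) section, with $Y = \mathbb{V}(f)$ its zero divisor. The open complement $U = (V\times\mathbb{P}_\Sigma)\setminus Y$ is the non-vanishing locus of a section of an ample line bundle, hence is affine, and its coordinate ring is the degree-zero part of the $\mathbb{Z}$-graded localization: $\mathcal{O}(U) = \left(\mathfrak{R}[1/f]\right)_0$, where the grading on $\mathfrak{R}$ is the one induced by $\phi_1$ on $S_\Sigma$ (with the $p_a$ in degree zero) and $f$ has degree one. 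Concretely, $\left(\mathfrak{R}[1/f]\right)_0$ is generated over $k[p_a : a\in A]$ by the elements $t^a/f$ for $a\in A$. Comparing with the presentation
$$
\proj = k[p_a : a\in A][t^a]\big/\Big(\textstyle\sum_{a\in A} p_a t^a = 1\Big)
$$
from \eqref{betazero}, one sees that sending the generator $t^a \in \proj$ to $t^a/f \in \left(\mathfrak{R}[1/f]\right)_0$ is well defined — the relation $\sum_a p_a t^a = 1$ maps to $\sum_a p_a (t^a/f) = f/f = 1$ — and is an isomorphism, with inverse determined by $t^a/f \mapsto t^a$. Hence $\mathbf{spec}(\proj) = U$.

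The step I expect to require the most care is the identification $\mathcal{O}(U) = \left(\mathfrak{R}[1/f]\right)_0$ together with the claim that this degree-zero subring is precisely the presentation in \eqref{betazero} rather than a proper sub- or quotient ring. This amounts to checking two things: first, that $\left(\mathfrak{R}[1/f]\right)_0$ is generated by the $t^a/f$ (clear, since a degree-zero element of $\mathfrak{R}[1/f]$ is a $k[p_a]$-combination of monomials $t^b/f^{m}$ with $\phi_1(b) = m$, and any such $b\in\mathbb{N}A$ is a sum of $m$ elements of $A$); and second, that the only relations among the $t^a/f$ are generated by $\sum_a p_a t^a/f = 1$, i.e. that the map from \eqref{betazero} is injective. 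Injectivity follows because $\proj$ as presented in \eqref{betazero} is, by the earlier proposition, isomorphic to $\mathfrak{R}/Y_1\mathfrak{R} \cong k[p_a][t^a/f]\subset \mathfrak{R}[1/f]$, so the relevant ring is by construction a subring of the localization and the comparison is tautological once the grading bookkeeping is set up. I would therefore phrase the proof so that it leans on \eqref{betazero} for this injectivity and only needs to verify the affineness of $U$ and the degree-zero description of its coordinate ring.
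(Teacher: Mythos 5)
Your proposal is correct and follows essentially the same route as the paper: the first isomorphism by unwinding the definition of $\mathfrak{R}$ as a tensor product, and the second by identifying $\mathcal{O}(U)$ with the degree-zero part of $\mathfrak{R}[1/f]$ and matching it against the presentation in \eqref{betazero}. You simply supply more detail than the paper does (in particular the generation and injectivity checks, and the remark that one should say \emph{relatively} ample over the affine base $V$ when justifying affineness of $U$), but the underlying argument is the same.
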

We state some well known facts from the theory of toric varieties. \\
\begin{deff}\label{polytope}
Let $P_A$ (respectively, $\bar{P}_A$) be the convex hull of the points in $A$ (respectively, $A\cup \{0\}$), 
as a subset of $\mathbb{R}^n$. $P_A$ has the structure of $n-1$ dimensional polytope and $\bar{P}_A$ has the structure of $n$ dimensional polytope. 
\end{deff}
By a $k$-dimensional face of a $m$-dimensional polytope $P$ we mean points in $P$ that lie on a hyperplane, with $k$ dimensional span (as an affine subspace) such that all points of $P$ are on the same side of the hyperplane. There is exactly one $m$-dimensional face. Codimension one faces are called {\em facets}. 
\begin{lem}
Faces of the $P_A$ (respectively, $\bar{P}_A$) are in one to one correspondence with torus orbits in 
$\mathbb{P}_{\Sigma}$ (respectively, $X_{\Sigma}$).
\end{lem}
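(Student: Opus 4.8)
Faces of $P_A$ (respectively $\bar P_A$) are in one-to-one correspondence with torus orbits in $\mathbb{P}_\Sigma$ (respectively $X_\Sigma$).

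The plan is to establish the case of the affine cone $X_\Sigma = \mathbf{spec}\,S_\Sigma$ first, and then deduce the projective statement by relating $\bar P_A$ to $P_A$. Recall $S_\Sigma = k[t^a : a\in A]$ is the semigroup algebra of the affine semigroup $\Sigma = \mathbb{N}A \subset \mathbb{Z}^n$, and $\bar P_A$ is the convex hull of $A\cup\{0\}$; since every $a\in A$ lies on $\phi_1 = 1$ and $0\in\bar P_A$, the real cone $\mathbb{R}_{\geq 0}\,\Sigma$ is exactly the cone over $\bar P_A$, so faces of this cone correspond bijectively to faces of $\bar P_A$ (the $\{0\}$ face of the cone matching the vertex $0$, and the cone over a face $F\subset \bar P_A$ being a face of the cone). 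The torus $\mathbb{T} = \mathrm{Spec}\,k[\mathbb{Z}^n]$ (or rather the torus $\mathrm{Spec}\,k[\mathbb{Z}A]$, which acts on $X_\Sigma$) decomposes $X_\Sigma$ into orbits, and the standard orbit–face correspondence for (possibly non-normal) affine toric varieties identifies $\mathbb{T}$-orbits with faces of the cone $\mathbb{R}_{\geq 0}\Sigma$: to a face $\tau$ one associates the orbit $O_\tau$ parametrized by the distinguished point $t^a \mapsto 1$ if $a\in\tau$, $t^a\mapsto 0$ otherwise. First I would spell out this dictionary: given a face $\tau$ of the cone, the subset $\Sigma\cap\tau$ is a face of the semigroup, the corresponding torus-fixed prime ideal is $(t^a : a\notin\tau)$, and the orbit is its vanishing locus minus the smaller orbits; conversely an orbit determines which coordinate functions $t^a$ vanish on it, hence a face. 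This is classical — it is in Fulton's book for the normal case and in Cox–Little–Schenck / Bruns–Gubeladze for the general affine semigroup case — so I would cite it rather than reprove it, checking only that our semigroup $\Sigma$ generated by $A\cup\{0\}$ satisfies the hypotheses (it is a finitely generated affine semigroup spanning a pointed cone, since all generators lie on $\phi_1 = 1$ together with $0$).

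For the projective statement, $\mathbb{P}_\Sigma = \mathbf{Proj}(S_\Sigma,\phi_1)$ is covered by the affine charts $\{t^a \neq 0\}$ for $a\in A$, or more invariantly it is $(X_\Sigma\setminus\{0\}) / \mathbb{G}_m$ where $\mathbb{G}_m$ acts through the grading $\phi_1$. The torus orbits of $\mathbb{P}_\Sigma$ are therefore the images of the torus orbits of $X_\Sigma$ other than the single fixed point $\{0\}$ (which is the orbit of the empty/zero face, i.e. the cone point, matching the vertex $\{0\}\subset\bar P_A$ that is not a face of $P_A$). Quotienting by $\mathbb{G}_m$ identifies the orbit $O_\tau$ with $O_{\tau'}$ when $\tau,\tau'$ differ by the ray through... — more precisely, since $P_A = \bar P_A \cap \{\phi_1 = 1\}$ and every face of $\bar P_A$ except $\{0\}$ meets $\{\phi_1=1\}$ in a face of $P_A$, the nonzero faces of the cone $\mathbb{R}_{\geq0}\Sigma$ are precisely the cones over faces of $P_A$, and these are in bijection with $\mathbb{T}$-orbits of $\mathbb{P}_\Sigma$ via the $\mathbb{G}_m$-quotient. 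So I would phrase the projective case as the restriction of the affine correspondence to nonzero faces, composed with the bijection "nonzero faces of $\mathbb{R}_{\geq0}\Sigma$ $\leftrightarrow$ faces of $P_A$".

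The main obstacle is not a deep one but a bookkeeping one: making sure the statement is correct for a \emph{non-normal} semigroup. In the normal case the orbit–cone correspondence is a clean theorem; here $\Sigma$ need not be saturated (as the remark after the definition of $\Sigma$ explicitly warns), so I must invoke the version of the correspondence valid for arbitrary affine semigroups — the point being that the \emph{orbit} structure (as opposed to, say, the local structure of the variety near an orbit, or normality) depends only on the cone $\mathbb{R}_{\geq0}\Sigma$ and not on the semigroup itself, because passing to the normalization is a finite bijective-on-underlying-topological-space... no — it is bijective on orbits but not on points. The safe route, which I would take, is: a $\mathbb{T}$-orbit of $X_\Sigma$ is determined by the face $\tau = \{a\in\mathbb{R}_{\geq0}\Sigma : t^a \not\equiv 0 \text{ on the orbit}\}$, and conversely each face $\tau$ gives the orbit through the distinguished point $x_\tau$ (the semigroup homomorphism $\Sigma\to k$, $a\mapsto [a\in\tau]$), with distinctness and exhaustiveness checked directly from the description of $\mathbb{T}$-orbits on $\mathrm{Spec}$ of a semigroup ring as $\mathrm{Hom}_{\mathrm{semigroup}}(\Sigma,(k,\cdot))$ modulo the torus. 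I would then simply record that this is standard and give references, since a full proof would be a digression from the paper's aims.
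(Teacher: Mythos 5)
The paper offers no proof of this lemma (it is invoked as a standard fact from toric geometry), so the only question is whether your sketch is sound. Its overall architecture is the standard one and is fine: orbits of $\mathrm{Spec}$ of an arbitrary, not necessarily saturated, affine semigroup ring are classified by their supports, which are the faces of the cone $\mathbb{R}_{\geq 0}\Sigma$ (so non-normality is indeed harmless, as you note, because the orbit structure only sees the cone), and the projective half follows by deleting the cone point and quotienting by the $\mathbb{G}_m$ of the grading $\phi_1$, identifying orbits of $\mathbb{P}_{\Sigma}$ with the nonzero faces of the cone, i.e.\ with faces of $P_A$.

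However, one step is genuinely false: the claim that faces of the cone $\mathbb{R}_{\geq 0}\Sigma$ ``correspond bijectively to faces of $\bar{P}_A$.'' Since every point of $A$ lies on $\phi_1=1$, the polytope $\bar{P}_A$ is the pyramid with apex $0$ over the base $P_A$; its faces are the faces of $P_A$ together with the pyramids $\mathrm{conv}(F\cup\{0\})$ over faces $F$ of $P_A$ (including $F=\emptyset$, which gives the vertex $\{0\}$). Coning sends both $F$ and $\mathrm{conv}(F\cup\{0\})$ to the same face $\mathbb{R}_{\geq 0}F$ of $\mathbb{R}_{\geq 0}\Sigma$, so your assignment is two-to-one away from the apex, not a bijection. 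Concretely, for $A=\{(1,0),(1,1)\}\subset\mathbb{Z}^2$ one has $X_{\Sigma}=\mathbb{A}^2$ with four torus orbits, while $\bar{P}_A$ is a triangle with seven faces. What your argument actually establishes is that torus orbits of $X_{\Sigma}$ correspond to faces of the cone over $A$, equivalently to faces of $P_A$ together with the apex, equivalently to those faces of $\bar{P}_A$ that \emph{contain} the vertex $0$ (via $F\mapsto\mathrm{conv}(F\cup\{0\})$). You should either phrase the affine half in terms of the cone, or restrict to the faces of $\bar{P}_A$ through $0$; as written, the asserted bijection with all faces of $\bar{P}_A$ fails, and the affine half of the lemma needs this more careful reading.
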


$P_A$ is a subset of $\mathbb{R}^n$, where the the first coordinate is $1$. Thus, we can find hyperplanes defining faces that pass through the origin. Such a hyperplane can be defined as the set of $x\in \mathbb{R}^n$ with $\langle w, x \rangle=0$, for a vector $w$ in $\check{\mathbb{R}^{n}}$. In the case where points are all integer points, we can choose $w$ to have integer coefficients as well. The following definition gives an algebraic description of the torus orbits.

\begin{deff}\label{boundary}
Let $w\in \mathbb{Z}^n$ be a lattice point with the property $\langle w,a\rangle  \geq 0$ for all $a \in \Sigma$  and denote by $\Sigma^c_w$ the set of elements of $\Sigma$ that have nonzero inner product with $w$, i.e. $\langle w,a\rangle>0$, for $a \in \Sigma^c_w$. Moreover, denote by $\Sigma_w$ the set of elements with zero inner product , i.e. $\langle w,a \rangle =0$ for $a \in \Sigma_w$. Note that the sub $k$-vector space of $S_\Sigma$ generated by $\Sigma^c_w$ is a graded ideal in $S_\Sigma$. We denote this ideal by $I_w$. The quotient ring, which we denote by $S_w$, is isomorphic to $k[\Sigma_w]$. We denote by 
$\mathbb{P}_w=\mathbf{Proj} (S_w,\Phi_1)$ the corresponding projective toric variety.
\end{deff}

The projective toric varieties defined above are not necessarily smooth. Assuming that the toric variety $X$ is smooth, we have the following exact sequence of coherent sheaves.
$$
0 \rightarrow \Omega^1_X \rightarrow \Omega^1(log D) \rightarrow \bigoplus_{\{Facet\}} \mathcal{O}_{\mathbb{P}_w}\rightarrow 0
$$
where $D$ is the union of codimension one orbits and the second map is the residue map.\\

For a non-smooth toric variety $\Omega^1(log D)$ is well defined. See \cite{ishida}. Note that the complement of $D$ is a $n-1$ torus. By definition its coordinate ring is the ring of degree zero elements in the graded ring $k[\Sigma-\Sigma]\simeq k[\mathbb{Z}^n]$. Here by $\Sigma-\Sigma$ ,we mean differences of the elements of $\Sigma$. It is equal to $\mathbb{Z}^n$ since $\Sigma$ generates $\mathbb{Z}^n$ as a $\mathbb{Z}$-module. Thus, it is isomorphic to $k[\mathbb{Z}^n/\mathbb{Z}]$. We denote $\mathbb{Z}^n/\mathbb{Z}$ by $\tilde{M}$. An element $\tilde{m} \in \tilde{M}$ gives us a rational function on $\mathbb{P}_{\Sigma}$, then $dlog(\tilde{m})$ is an element of $\Omega_X^1(log D)$, which we denote by same notation $\tilde{m}$. It turns out that $\Omega^1(log D)$ is always free 
and we have an isomorphism of coherent sheaves 
$$\Omega^1(log D) \simeq \tilde{M} \otimes_{\mathbb{Z}} \mathcal{O}_{\mathbb{P}_{\Sigma}}. $$ 

We define the free sheaf of algebraic differential forms with logarithmic poles along $D$ to be 
$$
\bigoplus_{i=0}^{n-1} \Omega^i(log D) \simeq \bigoplus_{i=0}^{n-1} \bigwedge\nolimits^i \tilde{M} \otimes_{\mathbb{Z}} \mathcal{O}_{\mathbb{P}_{\Sigma}}\, .
$$

We have a diagram of varieties
\begin{equation}
\label{diagram}
\begin{tikzcd}
V\times \mathbb{T}\setminus Y_0 \arrow[hook]{r}{j}\arrow{d}{p} &U \arrow[hook]{r} \arrow{d}{p}  &V\times \mathbb{P}_{\Sigma} \arrow{d}{p}\\
V \arrow{r}{id} &V \arrow{r}{id}  &V
\end{tikzcd}
\end{equation}
where $Y_0=Y\cap V\times \mathbb{T}$ and $\mathbb{T}= \mathbb{P}_{\Sigma}\setminus D$. We consider the sheaf of relative differential forms with logarithmic poles along $D$ on $V\times \mathbb{P}_{\Sigma}$. Since $p$ is the projection on the first factor, we have
$$
\Omega^{\bullet}_{V\times \mathbb{P}_{\Sigma}/ V}(log D) = \bigwedge\nolimits^{\bullet} \tilde{M} \otimes_{\mathbb{Z}} \mathcal{O}_{V\times \mathbb{P}_{\Sigma}} \, .
$$
Sections of this sheaf on $U$ are
$$
 \Omega^{\bullet}_{V\times \mathbb{P}_{\Sigma}/ V}(log D)(U)=\bigwedge\nolimits^{\bullet} \tilde{M} \otimes_{\mathbb{Z}} \proj \, .
$$
This graded sheaf comes with a differential 
$$
d:\Omega^{\bullet}_{V\times \mathbb{P}_{\Sigma}/ V}(log D)\rightarrow \Omega^{\bullet+1}_{V\times \mathbb{P}_{\Sigma}/ V}(log D), 
$$
which makes it a complex of sheaves. 
For each element $a\in \Sigma\subset M$, there is a corresponding $\tilde{a} \in \tilde{M}$. Note that the restriction of $d$ to $U$ acts as
$$d(m\otimes (t^I/f^{\phi_1(I)})) = (\tilde{I}\wedge m) \otimes (t^I/f^{\phi_1(I)}) - \phi_1(I)\sum_{a \in A} \left( (\tilde{a} \wedge m) \otimes (p_a t^{I+a}/f^{\phi_1(I)+1})\right) \, .  $$
Note that this is well defined since we have
\begin{align*}
d(m\otimes (f t^{I}/f^{\phi_1(I)+1}))&=
d(m\otimes (\sum_{a\in A} p_a t^{a+I}/f^{\phi_1(I)+1}))\\ &=\sum_{a\in A}((\tilde{a}+\tilde{I})\wedge m) \otimes (p_a t^{a+I}/f^{\phi_1(I)+1})\\ &- \sum_{a\in A}p_a (1+\phi_1(I))\sum_{b \in A} \left( (\tilde{b} \wedge m) \otimes (p_b t^{I+a+b}/f^{\phi_1(I)+2})\right)\\
&=(\tilde{I}\wedge m) \otimes (\sum_{a\in A} p_a t^a t^{I}/f^{\phi_1(I)+1})+\sum_{a\in A}(\tilde{a}\wedge m) \otimes (p_a t^{a+I}/f^{\phi_1(I)+1}) \\
&- (1+\phi_1(I))\sum_{b \in A} \left( (\tilde{b} \wedge m) \otimes (\sum_{a\in A}p_a t^a p_b t^{I+b}/f^{\phi_1(I)+2})\right) \\
&=(\tilde{I}\wedge m) \otimes (t^{I}/f^{\phi_1(I)})- \phi_1(I)\sum_{b \in A} \left( (\tilde{b} \wedge m) \otimes (p_b t^{I+b}/f^{\phi_1(I)+1})\right) \\
&=d(m\otimes (t^{I}/f^{\phi_1(I)})) \, .
\end{align*}
Moreover, it is not hard to check that $d^2$ is zero. Note that this differential is the standard definition of $d$ coming from derivation below on rational functions 
$$\left(t_i \frac{\partial}{\partial t_i} \right) \frac{t^J}{f^{\phi_1(J)}} =\frac{t_i \frac{\partial}{\partial t_i} t^J }{f^{\phi_1(J)}} - \frac{\phi_1(J)t_i \frac{\partial f}{\partial t_i} t^J }{f^{\phi_1(J)+1}}  \, . $$
We can change the differential to
$$\nabla_{\beta} =d+\beta_1 df/f \wedge- \sum_{i=2}^n \beta_i \frac{dt_i}{t_i}\wedge = d + \beta_1 \left(\sum_{a \in A} \tilde{a} \otimes p_a  t^a / f\right) \wedge -\sum_{i=2}^n \beta_i \frac{dt_i}{t_i}\wedge , $$
and we still have a complex, since we have 
\begin{equation}\label{diff}
\nabla_{\beta} = \frac{f^{\beta_1}}{t_2^{\beta_2}\cdots t_n^{\beta_n}}\circ d \circ \frac{t_2^{\beta_2}\cdots t_n^{\beta_n}}{f^{\beta_1}} \, .
\end{equation}

\begin{prop}\label{cohomology}
Consider the complex $(\Omega^{\bullet}_{V\times \mathbb{P}_{\Sigma}/ V}(log D)(U),\nabla_{\beta})$, whose terms are free \proj-modules 
and with $\mathcal{O}_V$ linear differential $\nabla_{\beta}$. The $n-1$ hyper-cohomology 
of this complex is isomorphic to $H_A(\beta)$, i.e.
$$
R^{n-1} p _* (\Omega^{\bullet}_{V\times \mathbb{P}_{\Sigma}/V}(log D)|_U,\nabla_{\beta})= H_A(\beta)\, .
$$
\end{prop}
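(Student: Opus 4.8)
The plan is to collapse the hyper-cohomology to a single cokernel and then match that cokernel with the presentation of $H_A(\beta)$ from Proposition~\ref{dwork}. First, the sheaf-theoretic part is formal: since $U=\textbf{spec}(\proj)$ and $V$ are affine, the projection $p\colon U\to V$ of \eqref{diagram} is an affine morphism, so $R^qp_*$ vanishes on quasi-coherent sheaves for $q>0$; every term $\bigwedge^{i}\tilde M\otimes_{\mathbb Z}\proj$ of the complex is a free, hence $p_*$-acyclic, $\mathcal O_U$-module, so the hyper-derived functor is computed degreewise and $R^{\bullet}p_*\big(\Omega^{\bullet}_{V\times\mathbb P_\Sigma/V}(\log D)|_U,\nabla_\beta\big)$ is the cohomology of the complex of free $\proj$-modules
$$
\bigwedge\nolimits^{0}\tilde M\otimes\proj \xrightarrow{\ \nabla_\beta\ } \bigwedge\nolimits^{1}\tilde M\otimes\proj \xrightarrow{\ \nabla_\beta\ } \cdots \xrightarrow{\ \nabla_\beta\ } \bigwedge\nolimits^{n-1}\tilde M\otimes\proj .
$$
This sits in degrees $0,\dots,n-1$ since $\tilde M=\mathbb Z^{n}/\mathbb Z$ has rank $n-1$, so $R^{n-1}p_*$ is exactly the cokernel of the last map $\nabla_\beta\colon\bigwedge^{n-2}\tilde M\otimes\proj\to\bigwedge^{n-1}\tilde M\otimes\proj$, which is what must be computed as an $\mathcal O_V$-module.

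To make this cokernel explicit, identify $\tilde M$ with $\mathbb Z^{n}/\mathbb Z e_1$, so that the classes of $e_2,\dots,e_n$ form a basis, $\tilde a=\sum_{j\ge 2}\phi_j(a)e_j$ for $a\in A$, and $\tfrac{dt_j}{t_j}$ corresponds to $e_j$. Fix the generator $\omega=e_2\wedge\cdots\wedge e_n$ of $\bigwedge^{n-1}\tilde M\cong\mathbb Z$, identifying $\bigwedge^{n-1}\tilde M\otimes\proj\cong\proj$, and let $\omega_i:=e_2\wedge\cdots\wedge\widehat{e_i}\wedge\cdots\wedge e_n$ for $i=2,\dots,n$ be the corresponding basis of $\bigwedge^{n-2}\tilde M$. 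Using $\nabla_\beta=d+\beta_1\,\tfrac{df}{f}\wedge-\sum_{j=2}^{n}\beta_j\tfrac{dt_j}{t_j}\wedge$ together with the explicit formula for $d$ recalled just before \eqref{diff}, a short wedge computation gives, in $\proj$,
$$
\nabla_\beta(\omega_i\otimes g)=(-1)^{i}\Big(t_i\tfrac{\partial}{\partial t_i}g\;+\;\beta_1\,\tfrac{t_i\,\partial f/\partial t_i}{f}\,g\;-\;\beta_i\,g\Big),
$$
because among the generators of $\tilde M$ only $e_i$ wedges nontrivially with $\omega_i$: wedging with $\tfrac{dt_j}{t_j}$ survives only for $j=i$ and returns $g$; wedging with $\tilde a=\sum_{j\ge2}\phi_j(a)e_j$ returns $\phi_i(a)g$, so $\tfrac{df}{f}\wedge\omega_i$ contributes $\sum_a\phi_i(a)p_at^a/f\cdot g=\tfrac{t_i\partial f/\partial t_i}{f}\,g$; and $d(\omega_i\otimes g)$ contributes $t_i\tfrac{\partial}{\partial t_i}g$ for the derivation on $\proj$ described there. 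Now all points of $A$ lie on $\phi_1=1$, so $t_1\tfrac{\partial f}{\partial t_1}=f$ and $t_1\tfrac{\partial}{\partial t_1}$ is the Euler operator for the $\phi_1$-grading, hence kills $\proj$; thus the operator $\tilde Y_i$ of Proposition~\ref{dwork} restricts on $\proj$ to $t_i\tfrac{\partial}{\partial t_i}-\beta_i+\beta_1\tfrac{t_i\partial f/\partial t_i}{f}$, i.e. to $(-1)^{i}\nabla_\beta(\omega_i\otimes\,\cdot\,)$. Letting the $g$'s range over $\proj$ (and noting that the monomials $t^J/f^{\phi_1(J)}$ generate $\proj$ over $k[p_a]$, so $\sum_i\tilde Y_i\proj$ in the sense of Proposition~\ref{dwork} is exactly $\{\sum_i\tilde Y_i(g_i):g_i\in\proj\}$), the image of the last $\nabla_\beta$ is $\sum_{i=2}^{n}\tilde Y_i\proj$, whence
$$
R^{n-1}p_*\big(\Omega^{\bullet}_{V\times\mathbb P_\Sigma/V}(\log D)|_U,\nabla_\beta\big)\;\cong\;\proj\Big/\sum_{i=2}^{n}\tilde Y_i\proj\;=\;H_A(\beta)
$$
by Proposition~\ref{dwork}. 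The two $D_V$-module structures agree, since the Gauss--Manin connection on the left differentiates the coefficients of $f$, which is the action $\tfrac{\partial}{\partial p_a}\mapsto\tfrac{\partial}{\partial p_a}+t^a$ built into \eqref{betazero} and Proposition~\ref{dwork}.

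The formal steps — vanishing of $R^{>0}p_*$ and the degreewise computation — are routine; the real work is the wedge-and-sign bookkeeping identifying the top $\nabla_\beta$ with the operators $\tilde Y_i$ and checking that these $n-1$ relations carve out exactly the image (neither more nor fewer classes). I expect that to be the main obstacle, though it is essentially the computation already performed in the proof of Proposition~\ref{dwork}, read here through the isomorphism $\ef$ and \eqref{betazero}.
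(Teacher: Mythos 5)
Your proposal is correct and follows essentially the same route as the paper: affineness of $p$ and $U$ reduces the hyper-cohomology to the top cohomology of the global-section complex, and the wedge computation in the basis $\omega_i$ identifies the last differential (up to the harmless sign $(-1)^i$) with the operators $\tilde Y_i$, so the cokernel is $\proj/\sum_{i\ge 2}\tilde Y_i\proj=H_A(\beta)$ by Proposition~\ref{dwork}. Your closing remarks on the $D_V$-module structure and on $t_1\partial/\partial t_1$ acting as the (vanishing) Euler operator on $\proj$ are consistent with how the paper reads $\tilde Y_i$ on classes $t^I/f^{\phi_1(I)}$, and add nothing that conflicts with the paper's argument.
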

By $R^{n-1}p_*$ we mean the $(n-1)$-th derived functor of $p_*$. Everything is in Zariski topology.

\begin{proof}
Since $p$ is affine, hyper-cohomology agrees with cohomology of the complex. Choose the standard basis $e_2,e_3,\cdots,e_n$ for $\tilde{M}$, which gives us coordinates $t_2,\cdots,t_n$ on $\mathbb{T}$. We have 
$$\bigwedge\nolimits^{n-2}\tilde{M} \otimes_{\mathbb{Z}} \proj=\oplus_{i=2}^n  \ \frac{dt_2}{t_2}\cdots \hat{\frac{dt_i}{t_i}}\cdots \frac{dt_n}{t_n}\otimes \proj$$
and with $\phi_i$ as Definition \ref{phi}, we have
\begin{align*}
&\left(d+\beta_1 df/f\wedge-\sum_{i=2}^n \beta_i \frac{dt_i}{t_i}\wedge\right)\frac{dt_2}{t_2}\cdots \hat{\frac{dt_i}{t_i}}\cdots \frac{dt_n}{t_n}\otimes t^I/f^{\phi_1(I)} \\&=
(\tilde{I}\wedge \frac{dt_2}{t_2}\cdots \hat{\frac{dt_i}{t_i}}\cdots \frac{dt_n}{t_n}) \otimes (t^I/f^{\phi_1(I)}) \\&- \phi_1(I)\sum_{a \in A} \left( (\tilde{a} \wedge \frac{dt_2}{t_2}\cdots \hat{\frac{dt_i}{t_i}}\cdots \frac{dt_n}{t_n}) \otimes (p_a t^{I+a}/f^{\phi_1(I)+1})\right)\\
&+ \beta_1 \left(\sum_{a \in A} \tilde{a} \otimes p_a t^a / f\right) \wedge  \frac{dt_2}{t_2}\cdots \hat{\frac{dt_i}{t_i}}\cdots \frac{dt_n}{t_n}\otimes t^I/f^{\phi_1(I)} \\
&-\frac{dt_2}{t_2}\cdots\frac{dt_n}{t_n}\beta_i \otimes t^I/f^{\phi_1(I)}\\
&=  \frac{dt_2}{t_2}\cdots\frac{dt_n}{t_n}\otimes\left(-\beta_i+ \phi_i(I)-(\phi_1(I)-\beta_1)\sum_{a \in A} \phi_i(\tilde{a})p_a t^a/f \right)t^I/f^{\phi_1(I)}\\
&=  \frac{dt_2}{t_2}\cdots\frac{dt_n}{t_n}\otimes \left( t_i \frac{\partial}{\partial t_i} -\beta_i -t_i\frac{\partial f}{\partial t_i} (-\beta_1+t_1\frac{\partial}{\partial t_1})\right)t^I/f^{\phi_1(I)} \, .
\end{align*}
Identifying top forms with \proj , we see that the top cohomology is exactly $H_A(\beta)$ by Proposition \ref{dwork}.
\end{proof}
\begin{deff}
Given an element $w$ of the dual of $\tilde{M}$ we have the contraction map 
$$\iota_w: \bigwedge\nolimits^{\bullet} \tilde{M}\rightarrow \bigwedge\nolimits^{\bullet-1}\tilde{M}
$$
$$
\iota_w(a_1\wedge \cdots \wedge a_k)=\sum_{i=1}^k (-1)^i \langle w , a_i \rangle a_1\wedge\cdots \wedge \hat{a_i} \wedge \cdots \wedge a_k \,  .
$$
We extend it \proj-linearly to $\proj \otimes \bigwedge\nolimits^{\bullet} \tilde{M}$.
\end{deff}
\begin{deff}\label{seminon}
	Let $w_1,\dots,w_m$ be elements of $\mathbb{Z}^n$ defining the facets of $P_A$. An element $\beta \in \mathbb{C}^{n}$ is called non-resonant (respectively, semi non-resonant) if $\langle w_i,-\beta+\mathbb{Z}^n \rangle\neq 0$ (respectively, if $\langle w_i, -\beta + \Sigma\rangle \neq 0$), for all $i$. 

\end{deff}

\begin{prop}\label{w}
Let $w$ correspond to a face of $P_A$, the polytope defined in Definition \ref{polytope}, and let $I_w$, $\mathbb{P}_w$, and $S_w$ be as in Definition \ref{boundary}. Let $i_w: U_w \rightarrow U$ be the fiber product 
$$
\begin{tikzcd}
U_w \arrow{r}{i_w} \arrow{d}{p} &U \arrow{d}\\
{V\times \mathbb{P}_{w}} \arrow{r} & V \times \mathbb{P}_{\Sigma},
\end{tikzcd}
$$ 
which is the inclusion  into $U$ of the intersection of the boundary components corresponding 
to $w$ and $U$. This inclusion is given by the ideal sheaf $I_w$. Furthermore, assume $\beta$ is semi non-resonant. Then the inclusion 
$$
(I_w^n \Omega^{\bullet}_{V\times \mathbb{P}_{\Sigma}/V}(log D)|_U, \nabla_{\beta})\xrightarrow{ q.i.s}   (\Omega^{\bullet}_{V\times \mathbb{P}_{\Sigma}/V}(log D)|_U, \nabla_{\beta})
$$
 is a quasi-isomorphism.
\end{prop}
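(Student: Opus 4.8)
The strategy is to reduce the statement to the acyclicity of the quotient complex and then to a sequence of one-dimensional eigenvalue computations along the boundary. Write $\Omega^{\bullet}$ for $\Omega^{\bullet}_{V\times\mathbb{P}_{\Sigma}/V}(\log D)|_U=\bigwedge^{\bullet}\tilde M\otimes_{\mathbb Z}\proj$. Since $\nabla_{\beta}=d+\beta_1\,df/f\wedge-\sum_{i=2}^{n}\beta_i\,dt_i/t_i\wedge$ and $df/f=\sum_{a\in A}\tilde a\otimes p_at^a$ acts by multiplication by the elements $p_at^a\in\proj$, the submodule $I_w^{n}\,\Omega^{\bullet}$ is stable under $\nabla_{\beta}$, so there is a short exact sequence of complexes of $\proj$-modules
\[
0\longrightarrow I_w^{n}\,\Omega^{\bullet}\longrightarrow \Omega^{\bullet}\longrightarrow C^{\bullet}\longrightarrow 0 ,
\]
and the associated long exact cohomology sequence shows that the inclusion is a quasi-isomorphism if and only if $C^{\bullet}=(\Omega^{\bullet}/I_w^{n}\,\Omega^{\bullet},\nabla_{\beta})$ is acyclic. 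I will prove the latter.

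First I would reduce to the case of a facet: writing $\mathbb{P}_w$ as the intersection of the divisors $\mathbb{P}_{w_i}$ over the facet normals $w_i$ of Definition \ref{seminon} containing the face, one proceeds by induction on the codimension of the face, the inductive step being the facet case applied on the smaller toric variety $\mathbb{P}_{w_i}$; I omit the details of this reduction. So assume now $w=w_i$ is a facet normal. Equip $C^{\bullet}$ with the finite $I_w$-adic filtration $F^{j}C^{\bullet}=I_w^{j}\,\Omega^{\bullet}/I_w^{n}\,\Omega^{\bullet}$ for $0\le j\le n$; by the remark above each $F^{j}$ is a $\nabla_{\beta}$-subcomplex, so $C^{\bullet}$ is acyclic as soon as every graded piece
\[
\operatorname{gr}^{j}C^{\bullet}=\bigl(I_w^{j}\,\Omega^{\bullet}/I_w^{j+1}\,\Omega^{\bullet},\ \overline{\nabla_{\beta}}\bigr),\qquad 0\le j\le n-1 ,
\]
is acyclic.

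The core of the argument is the structure of $\operatorname{gr}^{j}C^{\bullet}$. On $I_w^{j}\proj/I_w^{j+1}\proj$, which is a module over $\proj/I_w\proj$ (the analogue of $\proj$ attached to the face ring $S_w$ of Definition \ref{boundary}), every $p_at^a$ with $\langle w,a\rangle>0$ acts by zero; hence $df/f$ becomes $df_w/f_w$ with $f_w=\sum_{a\in\Sigma_w}p_at^a$, and $d$ picks up only the residue of a logarithmic normal coordinate $\tfrac{ds}{s}$ to $\mathbb{P}_w$. Choosing a splitting $\tilde M = \tilde M_w\oplus\mathbb Z$ adapted to $w$, one writes every class in $\operatorname{gr}^{j}C^{\bullet}$ uniquely as $\omega_0+\tfrac{ds}{s}\wedge\omega_1$ with $\omega_0,\omega_1$ built from $\bigwedge^{\bullet}\tilde M_w$ and the face analogue of $\proj$, and a computation in the normal direction (formally the same as the one in the proof of Proposition \ref{cohomology}) gives
\[
\overline{\nabla_{\beta}}\bigl(\omega_0+\tfrac{ds}{s}\wedge\omega_1\bigr)=\nabla_{\beta_w,w}\,\omega_0\;+\;c_{w,j}\,\tfrac{ds}{s}\wedge\omega_0\;-\;\tfrac{ds}{s}\wedge\nabla_{\beta_w,w}\,\omega_1 ,
\]
where $\nabla_{\beta_w,w}$ is the twisted differential attached to the pair $(\mathbb{P}_w,f_w)$ and $c_{w,j}\in k$ is the scalar by which the normal Euler operator $s\partial_s$, corrected by the $-\sum_i\beta_i\,dt_i/t_i$ term (the $\beta_1\,df/f$ contribution being absorbed into $\nabla_{\beta_w,w}$), acts in weight $j$. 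Thus, up to shift, $\operatorname{gr}^{j}C^{\bullet}$ is the mapping cone of multiplication by $c_{w,j}$ on the complex $\bigl(\bigwedge^{\bullet}\tilde M_w\otimes(\text{face analogue of }\proj),\nabla_{\beta_w,w}\bigr)$, hence acyclic as soon as $c_{w,j}\ne 0$. Finally one identifies $c_{w,j}$ as a value $\langle w,\,m-\beta\rangle$ with $m$ ranging over the monomials $t^m\in\Sigma$ occurring in weight $j$, i.e. $c_{w,j}\in\langle w,-\beta+\Sigma\rangle$, which by the semi non-resonance hypothesis does not contain $0$; hence $c_{w,j}\ne0$ for all of the finitely many $0\le j\le n-1$, and $I_w^{n}$ is a single truncation that works uniformly for every face.

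The main obstacle is precisely the middle step: carrying out the normal-direction computation that yields the displayed formula, and in particular checking that the interaction of the three summands of $\nabla_{\beta}$ collapses on the associated graded to multiplication by the \emph{single} scalar $c_{w,j}$ rather than a scalar-plus-nilpotent operator, and then matching $c_{w,j}$ correctly with the quantity $\langle w_i,-\beta+\Sigma\rangle$ of Definition \ref{seminon} (the non-normality of $\mathbb{P}_{\Sigma}$ and the possible high codimension of $\mathbb{P}_w$ both enter here). Granting that identification, the acyclicity of $\operatorname{gr}^{j}C^{\bullet}$, hence of $C^{\bullet}$, and hence the proposition, is formal.
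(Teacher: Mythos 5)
Your reduction to the acyclicity of the quotient $C^{\bullet}$ and the passage to the $I_w$-adic graded pieces match the paper's strategy (the paper phrases it as showing, inductively, that each inclusion $I_w^{j+1}\Omega^{\bullet}\hookrightarrow I_w^{j}\Omega^{\bullet}$ is a quasi-isomorphism). But the step you yourself flag as the ``main obstacle'' is the entire content of the proof, and the structure you propose for it does not work as stated. The relevant eigenvalue on a monomial $t^{J}$ is $\langle w,J\rangle-\langle w,\beta\rangle$, and $\langle w,J\rangle$ is \emph{not} determined by the $I_w$-adic valuation $j$: a degree-one generator $t^{a}$ with $\langle w,a\rangle=5$ lies in $I_w\setminus I_w^{2}$ because $a$ cannot be split in $\Sigma$. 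So $\operatorname{gr}^{j}C^{\bullet}$ is not the cone of multiplication by a single scalar $c_{w,j}$ (as you half-concede when you let $c_{w,j}$ ``range over'' $\langle w,-\beta+\Sigma\rangle$); you would have to refine the filtration by the $\langle w,\cdot\rangle$-weight and then check that the induced differential preserves that refined grading and that the weight operator acts invertibly on the cohomology of the face complex. Your preliminary reduction to a facet ``by induction on codimension'' is likewise both unnecessary and problematic: iterating the facet case on $\mathbb{P}_{w_i}$ would require a semi non-resonance hypothesis on the face polytope which Definition \ref{seminon} does not provide, and the paper needs no such reduction.

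The paper settles the graded acyclicity in one stroke with a Cartan-type homotopy. For $t^{J}\otimes\alpha\in I_w^{n-1}\Omega^{\bullet}$ a direct computation gives
\[
\nabla_{\beta}\,\iota_{\tilde w}(t^{J}\otimes\alpha)+\iota_{\tilde w}\,\nabla_{\beta}(t^{J}\otimes\alpha)\equiv\bigl(\langle w,J\rangle-\langle w,\beta\rangle\bigr)\,t^{J}\otimes\alpha \pmod{I_w^{n}\,\Omega^{\bullet}},
\]
the cross term $(-\phi_1(J)+\beta_1)\sum_{a}\langle w,a\rangle\,p_a\,t^{a+J}\otimes\alpha$ dying because each summand has either $\langle w,a\rangle=0$ or $t^{a+J}\in I_w^{n}\Omega^{\bullet}$. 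Hence every $\nabla_{\beta}$-closed class in the quotient is exact, the scalar $\langle w,J-\beta\rangle\in\langle w,-\beta+\Sigma\rangle$ being nonzero by semi non-resonance. This is morally the residue/normal-direction computation you describe, but organized so that no splitting of $\tilde M$, no face complex, and no constant scalar per graded piece are needed, and it applies verbatim to a face of any codimension. You would need to carry out something equivalent to this identity to close the gap in your argument.
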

\begin{proof}
First note that $U_w$ is affine. By induction it is enough to show that 
$$
(I_w^n \Omega^{\bullet}_{V\times \mathbb{P}_{\Sigma}/V}(log D)|_U, \nabla_{\beta})\hookrightarrow   (I_w^{n-1}\Omega^{\bullet}_{V\times \mathbb{P}_{\Sigma}/V}(log D)|_U, \nabla_{\beta})
$$
is a quasi-isomorphism, or equivalently that the cokernel 
$$
(\frac{I_w^{n-1} \Omega^{\bullet}_{V\times \mathbb{P}_{\Sigma}/V}(log D)|_U}{ I_w^n \Omega^{\bullet}_{V\times \mathbb{P}_{\Sigma}/V}(log D)|_U}, \nabla_{\beta})
$$
is quasi-isomorphic to zero.\\

Let $t^J \otimes \alpha$ be a section on $I_w^{n-1} \Omega^{\bullet}_{V\times \mathbb{P}_{\Sigma}/V}(log D)|_U$. We show that a multiple of $\iota_{\tilde{w}} t^J \otimes \alpha$ is $\nabla_{\beta}$-primitive of $t^J\otimes \alpha$, where $\tilde{w}=(w_2,\cdots,w_n)$. We have
\begin{align*}
\nabla_{\beta} \iota_{\tilde{w}} t^J \otimes \alpha &= t^J\otimes J \wedge \iota_{\tilde{w}} \alpha + (-\phi_1(J)+\beta_1) \sum_{a\in A} p_a t^{a+J} \otimes a \wedge \iota_{\tilde{w}} \alpha - \sum \beta_i t^J \otimes e_i \wedge \iota_{\tilde{w}} \alpha\\
&= -t^J \otimes \iota_{\tilde{w}} J\wedge \alpha - (-\phi_1(J)+\beta_1)\sum_{a\in A} p_a t^{a+J} \otimes \iota_{\tilde{w}} a\wedge \alpha + \sum \beta_i t^J \otimes \iota_{\tilde{w}} e_i \wedge \alpha\\
&+  t^J \otimes (\iota_{\tilde{w}} J)\wedge \alpha + (-\phi_1(J)+\beta_1)\sum_{a\in A} p_a t^{a+J} \otimes (\iota_{\tilde{w}} a)\wedge \alpha - \sum \beta_i t^J \otimes (\iota_{\tilde{w}} e_i) \wedge \alpha\\
&= -\iota_{\tilde{w}} \nabla_{\beta} t^J\otimes \alpha + \left(t^J \langle (0,\tilde{w}), J \rangle+(-\phi_1(J)+\beta_1)\sum_{a \in A}P_a t^{a+J} \langle (0,\tilde{w}) , a \rangle - t^J\sum \beta_i \phi_i(w) \right) \otimes \alpha \, .
\end{align*}
Note that 
\begin{align*}
\langle (0,\tilde{w}), J \rangle =&\langle w , J \rangle -  w_1 \phi_1(J)\\
\langle (0,\tilde{w}), a \rangle =&\langle w , a \rangle -  w_1 \\
\sum_{a \in A} P_a t^{a+J} =& t^J  \, .
\end{align*}
Using these equalities we can rewrite the equation as 
$$
\nabla_{\beta} \iota_{\tilde{w}} t^J \otimes \alpha=-\iota_{\tilde{w}} \nabla_{\beta} t^J\otimes \alpha + \left(t^J \langle w, J \rangle+(-\phi_1(J)+\beta_1)\sum_{a \in A}P_a t^{a+J} \langle w , a \rangle - t^J \langle w , \beta \rangle \right) \otimes \alpha \, .
$$

Note that $\langle w , J \rangle \in \mathbb{Z}_{\geq 0}$,  and terms that appear in the first sum are all zero, since if $a \in S_w$ we have $\langle w , a \rangle = 0$ and if $a \in S_w^c$ we have $t^{a+J} \in I_w^n \Omega^{\bullet}_{V\times \mathbb{P}_{\Sigma}/V}(log D)|_U$. If $t^J \otimes \alpha$ is closed, we see that 
$$\langle w , J \rangle t^J \otimes \alpha-t^J\sum \beta_i \langle w , e_i \rangle \otimes \alpha= (\langle w , J \rangle -\langle w, \beta \rangle) t^J \otimes \alpha= c\ t^J \otimes \alpha$$ 
is exact, where $c$ is a nonzero constant, by the semi non-resonance assumption.
\end{proof}
In \cite{andre}, authors construct a category of complexes of sheaves. De Rham complexes live in this category. Given an open embedding $j:X\to Y$ with $X$ smooth and a $D$-module $M$ on $X$, they define a $Rj_!$ functor on this category. If $Y$ is smooth they show:
$$DR(j_! M)= Rj_!DR(M)$$
where $j_!$ is the left adjoint of $j^!$ operator on holonomic $D$-modules as in \cite{borel}. See appendix D of \cite{andre}.

\begin{cor}\label{shreik}
	Let $j:V\times \mathbb{T} \setminus Y_0 \rightarrow U$ be the inclusion of the complement of the torus boundary as in (\ref{diagram}). Assuming that $\beta$ is semi non-resonant, we have 
	$$ Rj_{!} j^*(\Omega^{\bullet}_{V\times \mathbb{P}_{\Sigma}/V}(log D)|_U, \nabla_{\beta})
\xrightarrow{ q.i.s}   (\Omega^{\bullet}_{V\times \mathbb{P}_{\Sigma}/V}(log D)|_U, \nabla_{\beta}) \, .
$$
Here we think of complexes as objects of the derived category of sheaves of Abelian groups. The functor
$j^*$ is the pull back functor and $Rj_!$ is as in Definition D.2.14 of \cite{andre}.
\end{cor}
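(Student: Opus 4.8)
The plan is to deduce Corollary \ref{shreik} from Proposition \ref{w} by identifying the subcomplex $I_w^n\Omega^\bullet$ (summed appropriately over all facet normals $w_1,\dots,w_m$ of $P_A$) with a model for $Rj_!j^*$ of the relative log-de Rham complex. First I would recall that $V\times\mathbb{T}\setminus Y_0$ is the complement inside $U$ of the torus boundary divisor $D\cap U$, whose irreducible components are exactly the $U_{w_i}$ appearing in Proposition \ref{w}; each $U_{w_i}$ is cut out by the ideal sheaf $I_{w_i}$, and the deepest strata correspond to intersections of these components. The object $Rj_!j^*\mathcal{F}$, computed as in Definition D.2.14 of \cite{andre}, is represented (up to quasi-isomorphism in the derived category of sheaves of abelian groups) by the subcomplex of sections of $\mathcal{F}$ that vanish to sufficiently high order along every boundary component; since each $\Omega^i$ is a free $\proj$-module and the boundary is a normal crossings divisor in $U$, a cofinal system of such subcomplexes is given by the ideals $(I_{w_1}\cdots I_{w_m})^N$ for $N\to\infty$, and $N=n$ already suffices because $\dim\mathbb{P}_\Sigma=n-1$ bounds the number of components meeting at any point.

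The key step is then purely a matter of assembling Proposition \ref{w}: that proposition shows $(I_{w}^n\Omega^\bullet,\nabla_\beta)\hookrightarrow(\Omega^\bullet,\nabla_\beta)$ is a quasi-isomorphism for each individual facet normal $w$, and the argument there (constructing a $\nabla_\beta$-homotopy via the contraction $\iota_{\tilde w}$, using that $\langle w,J\rangle-\langle w,\beta\rangle$ is a nonzero scalar by semi non-resonance) applies equally on any affine open where the relevant components are defined. I would run the inductive filtration of Proposition \ref{w} simultaneously over all $w_i$ — equivalently, filter $\Omega^\bullet/(I_{w_1}\cdots I_{w_m})^n\Omega^\bullet$ by the number of boundary components involved and show each graded piece is acyclic for $\nabla_\beta$ using the same contraction homotopy restricted to the corresponding stratum $U_{w_{i_1}}\cap\cdots\cap U_{w_{i_k}}$, where the relevant eigenvalue $\langle w_{i_j},J\rangle-\langle w_{i_j},\beta\rangle$ is again nonzero by the semi non-resonance hypothesis (here one needs the $\Sigma$-version, i.e.\ $\langle w_i,-\beta+\Sigma\rangle\neq 0$, precisely because the exponents $J$ that occur after multiplying by sections of $\proj$ lie in $\Sigma$, not all of $\mathbb{Z}^n$). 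Finally I would invoke the compatibility $DR(j_!M)=Rj_!DR(M)$ from \cite{andre} Appendix D to identify the resulting subcomplex with $Rj_!j^*(\Omega^\bullet_{V\times\mathbb{P}_\Sigma/V}(\log D)|_U,\nabla_\beta)$ as an object of the derived category, which gives the claimed quasi-isomorphism.

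The main obstacle I anticipate is bookkeeping around non-smoothness of $\mathbb{P}_\Sigma$ and the precise meaning of "$Rj_!$ applied to a relative log complex" in the André formalism: one must check that the naive algebraic subcomplex $(I_{w_1}\cdots I_{w_m})^n\Omega^\bullet$ really does represent $Rj_!j^*$ and not merely $j_!j^*$ at the level of complexes, which requires knowing that higher direct images under $j$ of the restricted complex vanish — this is where affineness of $U$, of each $U_{w_i}$, and of $V\times\mathbb{T}\setminus Y_0$ enters, together with the freeness of the $\Omega^i$. Once one grants the identification of $Rj_!$ with the vanishing-order subcomplex, the cohomological content is entirely supplied by iterating Proposition \ref{w}, so the corollary is essentially formal; the care is all in matching the two notions of $Rj_!$.
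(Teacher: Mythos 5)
Your proposal follows essentially the same route as the paper's own proof: by Definition D.2.14 of \cite{andre}, $Rj_!j^*$ is computed by taking the log complex itself as an extension and passing to the limit over powers of the ideal of the boundary, Proposition \ref{w} (iterated over the facets, exactly as in your filtration step, since the contraction homotopy preserves the monomial ideals of the other facets) shows these inclusions are quasi-isomorphisms, and affineness identifies the essentially constant limit with the full complex $(\Omega^{\bullet}_{V\times \mathbb{P}_{\Sigma}/V}(\log D)|_U,\nabla_{\beta})$. The only differences are inessential extras on your side: the claim that a fixed power $N=n$ suffices ``by a dimension count'' is neither needed nor the right justification (what makes any single term work is precisely the quasi-isomorphisms), and the closing appeal to $DR(j_!M)=Rj_!DR(M)$ is superfluous for this corollary (and would require smoothness of $U$, which may fail); the identification of $Rj_!j^*$ with the pro-system of ideal-power subcomplexes is just the definition.
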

\begin{proof}
	To compute $Rj_!$ we need to take an extension of $j^*(\Omega^{\bullet}_{V\times \mathbb{P}_{\Sigma}/V}(log D)|_U, \nabla_{\beta})$ to $U$ and take the limit by the ideal defining the complement. We already have an extension and we have shown that the powers of the ideal defining the boundary do not change the cohomology. Since both varieties are affine, the limit is $(\Omega^{\bullet}_{V\times \mathbb{P}_{\Sigma}/V}(log D)|_U, \nabla_{\beta})$.
\end{proof}
\begin{lem}\label{derahm}
Consider $\mathcal{O}_{V\times \mathbb{T}}$ as a D-module and let $DR_{V\times \mathbb{T}/V}(\mathcal{O}_{V\times \mathbb{T}})$ be its relative de Rham complex. Assume $\beta$ has integer coefficients. Then we have a quasi-isomorphism 
$$ j^*(\Omega^{\bullet}_{V\times \mathbb{P}_{\Sigma}/V}(log D)|_U, \nabla_{\beta})
\xrightarrow{ q.i.s}   (\Omega^{\bullet}_{V\times \mathbb{T}/V}|_{V\times \mathbb{T}\setminus Y_0}, d)= DR_{V\times \mathbb{T}\setminus Y_0/V}(\mathcal{O}_{V\times \mathbb{T}\setminus Y_0})\,  . $$
\end{lem}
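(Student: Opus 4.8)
The plan is to compare two complexes computing relative de Rham cohomology on the open torus piece $V \times \mathbb{T} \setminus Y_0$: on the one hand the restriction $j^*(\Omega^{\bullet}_{V\times \mathbb{P}_{\Sigma}/V}(\log D)|_U, \nabla_{\beta})$, whose terms are $\bigwedge^{\bullet}\tilde{M}\otimes_{\mathbb{Z}}\proj$ with the twisted differential, and on the other hand the naive relative de Rham complex $(\Omega^{\bullet}_{V\times \mathbb{T}/V}|_{V\times \mathbb{T}\setminus Y_0}, d)$ of the structure sheaf. The key observation is the conjugation identity \eqref{diff}, $\nabla_{\beta} = \frac{f^{\beta_1}}{t_2^{\beta_2}\cdots t_n^{\beta_n}}\circ d \circ \frac{t_2^{\beta_2}\cdots t_n^{\beta_n}}{f^{\beta_1}}$. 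When $\beta$ has integer coefficients, the factor $\frac{t_2^{\beta_2}\cdots t_n^{\beta_n}}{f^{\beta_1}}$ is an invertible regular function on $V\times\mathbb{T}\setminus Y_0$ (since $f$ is invertible away from $Y_0$ and the $t_i$ are invertible on $\mathbb{T}$, and integer powers keep us in the ring of regular functions and its units). Therefore multiplication by this unit is an isomorphism of sheaves which intertwines $\nabla_{\beta}$ with $d$, giving an isomorphism of complexes — in particular a quasi-isomorphism.

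First I would make explicit the identification of $j^*$ of the logarithmic relative forms with the ordinary relative forms on the torus complement: on $V\times\mathbb{T}$ the sheaf $\Omega^1_{V\times\mathbb{P}_{\Sigma}/V}(\log D)$ restricts to $\Omega^1_{V\times\mathbb{T}/V}$ because $D$ is disjoint from $\mathbb{T}$, and under the trivialization $\Omega^1(\log D)\simeq \tilde M\otimes_{\mathbb Z}\mathcal O_{\mathbb{P}_{\Sigma}}$ the generator $e_i\in\tilde M$ maps to $dt_i/t_i$, which is a genuine (invertible) relative $1$-form on $\mathbb{T}$. Restricting to the affine open $V\times\mathbb{T}\setminus Y_0$ (which is affine, so taking global sections is exact and computing hypercohomology reduces to the cohomology of the complex of global sections, exactly as in the proof of Proposition \ref{cohomology}), the terms become $\bigwedge^{\bullet}\tilde M\otimes_{\mathbb Z}\mathcal O(V\times\mathbb{T}\setminus Y_0)$ on both sides. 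Second, I would verify that multiplication by the unit $u:=t_2^{\beta_2}\cdots t_n^{\beta_n}/f^{\beta_1}$ is well-defined and invertible on this ring: this is where the integrality of $\beta$ is used, since only then are $u$ and $u^{-1}$ honest elements of $\mathcal O(V\times\mathbb{T}\setminus Y_0)$ rather than multivalued functions. Third, identity \eqref{diff} shows the diagram
\begin{equation*}
\begin{tikzcd}
\bigwedge^{\bullet}\tilde M\otimes \mathcal O \arrow{r}{\nabla_{\beta}} \arrow{d}{\cdot u} & \bigwedge^{\bullet+1}\tilde M\otimes \mathcal O \arrow{d}{\cdot u}\\
\bigwedge^{\bullet}\tilde M\otimes \mathcal O \arrow{r}{d} & \bigwedge^{\bullet+1}\tilde M\otimes \mathcal O
\end{tikzcd}
\end{equation*}
commutes, so $\cdot u$ is an isomorphism of complexes, hence a quasi-isomorphism; and the target is by definition $DR_{V\times\mathbb{T}\setminus Y_0/V}(\mathcal O_{V\times\mathbb{T}\setminus Y_0})$.

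The only subtlety, and the place one must be slightly careful, is bookkeeping: one should check that \eqref{diff} holds not merely formally but as an identity of $\mathcal O_V$-linear maps of sheaves on the affine open $V\times\mathbb{T}\setminus Y_0$, i.e. that the extra terms in $\nabla_{\beta}$ coming from $\beta_1\, df/f\wedge$ and $-\sum_{i\ge 2}\beta_i\, dt_i/t_i\wedge$ are exactly the logarithmic derivative of $u$, namely $d\log u = \beta_1\, df/f - \sum_{i\ge2}\beta_i\, dt_i/t_i$, and that the Leibniz rule $d(u\cdot\omega) = d\log u\wedge (u\omega) + u\, d\omega$ then yields $u^{-1}\circ d\circ u = d + d\log u\wedge(-) = \nabla_{\beta}$. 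This is essentially the content already recorded when $\nabla_{\beta}$ was introduced, so the main work is organizational rather than computational; I do not anticipate a genuine obstacle, only the need to state precisely in which sheaf the calculation takes place and to invoke affineness so that sheaf quasi-isomorphism can be checked on global sections.
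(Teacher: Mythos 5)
Your proposal is correct and is essentially the paper's own argument: the proof of Lemma \ref{derahm} likewise observes that the $t_i$ are invertible on $\mathbb{T}$ (and $f$ on the complement of $Y_0$), so that for integer $\beta$ conjugation by the unit $t_2^{\beta_2}\cdots t_n^{\beta_n}/f^{\beta_1}$ as in \eqref{diff} gives an isomorphism of complexes intertwining $\nabla_{\beta}$ with $d$. Your version just spells out the identification of the logarithmic forms with ordinary relative forms on the torus and the affineness bookkeeping, which the paper leaves implicit (and note the harmless sign ambiguity in whether one conjugates by $u$ or $u^{-1}$).
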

\begin{proof}
	This simply follows from the fact that the $t_i$'s are invertible on $\mathbb{T}$. The isomorphism is given by twisting the differential by $$\frac{f^{\beta_1}}{t_2^{\beta_2}\dots t_n^{\beta_n}}$$ and its inverse as in (\ref{diff}). Note that, since $A$ generates $\mathbb{Z}^n$ as a $\mathbb{Z}$-module, multiplication by the rational function above is an isomorphism. 
\end{proof}
\begin{theorem}
	Assume $\beta$ has integer coefficients, is semi non-resonant and $\beta_1$ is negative. Then we have 
	$$R^{n-1}p_*(Rj_! DR_{ V\times \mathbb{T}\setminus Y_0/ V}(\mathcal{O}_{V\times \mathbb{T}\setminus Y_0}))	=H_A(\beta)\, . $$
\end{theorem}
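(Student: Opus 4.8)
The plan is to obtain the statement by concatenating the quasi-isomorphisms already established in Lemma \ref{derahm} and Corollary \ref{shreik}, applying $R^{n-1}p_*$, and then invoking Proposition \ref{cohomology}. First I would check that the three standing hypotheses are each used exactly once. The assumption that $\beta$ is semi non-resonant is the hypothesis of Corollary \ref{shreik}. The assumption that $\beta$ has integer coefficients is the hypothesis of Lemma \ref{derahm}: it is what makes multiplication by $f^{\beta_1}/t_2^{\beta_2}\cdots t_n^{\beta_n}$ an automorphism of the relevant ring, using that $A$ generates $\mathbb{Z}^n$. Finally, $\beta_1$ negative guarantees that $-\beta_1+m\neq 0$ in $k$ for every $m\in\mathbb{Z}_{\geq 0}$, which is exactly the hypothesis under which Propositions \ref{dwork} and \ref{cohomology} identify the top cohomology of the log complex with $H_A(\beta)$.

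Next I would apply the functor $Rj_!$ of \cite{andre} to the quasi-isomorphism of Lemma \ref{derahm}. Since $Rj_!$ is a functor on the derived category of sheaves of abelian groups and hence sends quasi-isomorphisms to quasi-isomorphisms, this gives
$$
Rj_!\, j^*\bigl(\Omega^{\bullet}_{V\times \mathbb{P}_{\Sigma}/V}(\log D)|_U,\nabla_{\beta}\bigr)\xrightarrow{\ q.i.s\ } Rj_!\, DR_{V\times \mathbb{T}\setminus Y_0/V}(\mathcal{O}_{V\times \mathbb{T}\setminus Y_0}).
$$
Combining this with the quasi-isomorphism of Corollary \ref{shreik}, we get a chain
$$
Rj_!\, DR_{V\times \mathbb{T}\setminus Y_0/V}(\mathcal{O}_{V\times \mathbb{T}\setminus Y_0})\xleftarrow{\ q.i.s\ } Rj_!\, j^*\bigl(\Omega^{\bullet}_{V\times \mathbb{P}_{\Sigma}/V}(\log D)|_U,\nabla_{\beta}\bigr)\xrightarrow{\ q.i.s\ }\bigl(\Omega^{\bullet}_{V\times \mathbb{P}_{\Sigma}/V}(\log D)|_U,\nabla_{\beta}\bigr)
$$
in the derived category of sheaves of abelian groups on $U$. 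Applying $Rp_*$ and passing to $\mathcal{H}^{n-1}$, the outer two objects have the same $R^{n-1}p_*$. For the right-hand object, $p$ is affine and the terms of the log complex are quasi-coherent (free $\hat{\mathfrak{R}}$-modules), so hyper-cohomology is computed by the complex of global sections, whose degree $n-1$ piece is $H_A(\beta)$ by Proposition \ref{cohomology}. Chasing the chain then yields $R^{n-1}p_*\bigl(Rj_!\, DR_{V\times \mathbb{T}\setminus Y_0/V}(\mathcal{O}_{V\times \mathbb{T}\setminus Y_0})\bigr)=H_A(\beta)$.

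The one point that needs care, and which I expect to be the main obstacle, is the compatibility of $Rj_!$ as defined in \cite{andre} with the sheaf-theoretic push-forward $R^{n-1}p_*$ in the diagram \eqref{diagram}: one must verify that the $I_w^n$-adic extension-and-limit construction producing $Rj_!$ (as used in Corollary \ref{shreik}) is compatible both with taking cohomology and with $p_*$, given that all spaces in \eqref{diagram} are affine over $V$. This is precisely what \cite{andre}'s formalism is designed to supply — in particular the identity $DR(j_!M)=Rj_!DR(M)$ on smooth ambient spaces — so the argument should reduce to citing the relevant statements from Appendix D of \cite{andre} rather than to any new computation.
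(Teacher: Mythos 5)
Your proposal is correct and follows essentially the same route as the paper: the paper's proof is precisely the one-line concatenation of Proposition \ref{cohomology}, Lemma \ref{derahm}, and Corollary \ref{shreik}, which you have spelled out, together with the correct accounting of where each hypothesis (semi non-resonance, integrality, $\beta_1<0$) enters. The compatibility issue you flag at the end is indeed delegated in the paper to the formalism of Appendix D of \cite{andre}, so no further argument is needed.
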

\begin{proof}
	Proposition \ref{cohomology} and Lemma \ref{derahm} together with Corollary \ref{shreik} imply this.
\end{proof}
For the following theorem, we assume that the toric variety is normal, which is equivalent to 
$\Sigma$ being saturated.

\begin{theorem}\label{motive}
	Assume $\beta$ has integer coefficients, is semi non-resonant and $\beta_1$ is negative. Assume the semigroup $\Sigma$ is saturated. Let $U_v$ be the fiber of $p$ over $v\in V$ and let $D$ be the boundary divisor, i.e. the complement of $V\times \mathbb{T}$. We have 
$$
H^0(Sol(H_A(\beta)))_v :=Hom_{D_V}(H_A(\beta), \mathcal{O}^{an}_{V,v})= H_{n-1}(U_v,U_v\cap D)\, .
$$
\end{theorem}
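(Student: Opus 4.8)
The plan is to feed the theorem just proved into the analytic Riemann--Hilbert correspondence and read off the relative Betti homology from the topological realization of the resulting Gauss--Manin picture.

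First I would use the previous theorem, together with Proposition \ref{cohomology} and Lemma \ref{derahm}, to regard $H_A(\beta)$ as a cohomology $D_V$-module of a relative de Rham pushforward. Write $W=V\times\mathbb{T}\setminus Y_0$; as in diagram \eqref{diagram} and the discussion around Corollary \ref{shreik}, $W$ is precisely the complement $U\setminus D$ of the boundary divisor $D$ inside $U$, with $j:W\hookrightarrow U$ the corresponding open immersion, and $p:U\to V$ is affine. The previous theorem reads
$$H_A(\beta)\;=\;R^{n-1}p_*\bigl(Rj_!\,DR_{W/V}(\mathcal{O}_W)\bigr),$$
and by the quoted compatibility $DR(j_!M)=Rj_!DR(M)$ of \cite{andre} this exhibits $H_A(\beta)$ as the top cohomology sheaf of the relative de Rham pushforward along $p$ of the $!$-extension $j_!\mathcal{O}_W$ across $D$. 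Since that pushforward is computed by the complex $(\Omega^{\bullet}_{V\times\mathbb{P}_{\Sigma}/V}(\log D)|_U,\nabla_\beta)$ of free $\hat{\mathfrak{R}}$-modules and $p$ is affine, in the derived category of $D_V$-modules $H_A(\beta)$ is, up to the standard shift by the relative dimension $n-1$, the top cohomology $D_V$-module of $p_+(j_!\mathcal{O}_W)$. By the GKZ theorem recalled above (and the references there) it is regular holonomic, so analytic Riemann--Hilbert applies.

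Next I would apply the holomorphic solution functor $Sol_V(-)=R\mathcal{H}om_{D_V}(-,\mathcal{O}^{an}_V)$ and use its compatibility with the operations; this is exactly the point for which the formalism of \cite{andre} was imported, since $p$ is not proper and a boundary divisor is present (one may also factor $p$ through the proper map $V\times\mathbb{P}_{\Sigma}\to V$). Using $Sol\circ p_+\simeq Rp_!\circ Sol$, $Sol\circ j_!\simeq Rj_*\circ Sol$ for the open immersion $j$, and $Sol_W(\mathcal{O}_W)=\mathbb{C}_{W^{an}}$ in the appropriate degree, one obtains $Sol_V(p_+ j_!\mathcal{O}_W)\simeq Rp_!\,Rj_*\,\mathbb{C}_{W^{an}}$ up to shift. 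Since $H_A(\beta)$ is the top cohomology $D_V$-module of $p_+ j_!\mathcal{O}_W$, the group $\mathrm{Hom}_{D_V}(H_A(\beta),\mathcal{O}^{an}_{V,v})$ is the bottom cohomology, at the point $v$, of $Sol_V(p_+ j_!\mathcal{O}_W)$, so it suffices to work with the entire pushforward complex and read off one end of the answer. Taking the stalk at $v$: by proper base change for $Rp_!$ it is $R\Gamma_c\bigl(U_v;(Rj_*\mathbb{C}_{W^{an}})|_{U_v}\bigr)$, and since $D$ is flat over $V$, restriction to the fibre commutes with $Rj_*$, so this is $R\Gamma_c\bigl(U_v;R(j_v)_*\mathbb{C}_{W_v}\bigr)$ with $j_v:W_v\hookrightarrow U_v$ the complement of $U_v\cap D$. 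A Verdier duality computation on the smooth affine $(n-1)$-fold $U_v$ (dualize $Rj_{v*}$ to $j_{v!}$, use $R\Gamma(U_v;j_{v!}\mathbb{C}_{W_v})=H^{\bullet}(U_v,U_v\cap D)$, then dualize back) identifies the relevant end of this complex with $H_{n-1}(U_v,U_v\cap D)$, which is the claimed formula. The geometric content is the familiar one: over the complement of the singular locus of $H_A(\beta)$, which is the zero set of the principal $A$-determinant, $H_A(\beta)$ is the Gauss--Manin connection on $H^{n-1}_{dR}(U_v,U_v\cap D)$ and its local solutions are, via the period pairing, the dual space $H_{n-1}(U_v,U_v\cap D)$; regular holonomicity then propagates the identification of perverse sheaves to all of $V$.

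The main obstacle is the middle step: making precise that the algebraic solution functor intertwines the non-proper pushforward $p_+$ with $Rp_!$ and the $!$-extension across $D$ with $Rj_*$, and keeping the degree shifts and orientation (Tate) twists consistent so that the answer lands in homological degree $n-1$; this is what the $Rj_!$-formalism of \cite{andre} and the identity $DR(j_!M)=Rj_!DR(M)$ are there to supply. A secondary point, needed for the identification at \emph{every} $v$ rather than only at generic $v$, is the base-change statement $(Rj_*\mathbb{C}_{W^{an}})|_{U_v}\simeq R(j_v)_*\mathbb{C}_{W_v}$ (harmless since $j$ is the complement of a divisor flat over $V$, but worth spelling out) together with the extension of the identification across the singular locus of the system, where the constancy of that locus and regular holonomicity are used. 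Everything else is formal once the previous theorem, Proposition \ref{cohomology} and Lemma \ref{derahm} are in hand.
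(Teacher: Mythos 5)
Your overall strategy is the same as the paper's: express $H_A(\beta)$ as $H^0 p_+(j_!\mathcal{O})$, apply $Sol$ using $Sol\circ p_+ = Rp_!\circ Sol$ and $Sol\circ j_! = Rj_*\circ Sol$, base-change to the fiber, and identify the stalk with relative homology. But there is a genuine gap at the first step: you apply the $D$-module six-functor identities directly on $U\subset V\times\mathbb{P}_{\Sigma}$, and $\mathbb{P}_{\Sigma}$ (hence $U$, hence each $U_v$) is in general \emph{singular} -- it is only normal, which is exactly what the saturation hypothesis buys. The formalism you invoke (functors $p_+$, $j_!$, and the compatibilities with $Sol$ from \cite{borel}) is set up for smooth varieties, and the identification of $H_A(\beta)$ with the pushforward of $j_!\mathcal{O}_{V\times\mathbb{T}\setminus Y_0}$ via the logarithmic complex only makes contact with that formalism on a smooth model. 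The paper's proof spends most of its effort precisely here: it chooses an equivariant resolution $g:X'\to X=\mathbb{P}_{\Sigma}$ with $Rg_*\mathcal{O}_{X'}=\mathcal{O}_X$ (this is where normality, i.e.\ saturation, is used), redoes the Proposition \ref{w} computation on $X'$ to identify $DR_{U'/V}(j'_!\mathcal{O})$ with the twisted log complex upstairs, pushes forward along $g'$ to recover the complex on $U$, and only then applies $Sol$, using properness of $g'$ to turn $Rg'_!$ into $Rg'_*$ and to compose $Rg'_*Rj'_*=Rj_*$. Your proposal never uses the saturation hypothesis and never passes to a resolution, so the central step of the argument is missing rather than merely compressed.

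Two secondary inaccuracies, both repairable: (i) you justify $(Rj_*\mathbb{C})|_{U_v}\simeq R(j_v)_*\mathbb{C}$ by flatness of $D$ over $V$, but flatness alone does not make $Rj_*$ commute with base change; the correct (and simple) reason, used in the paper, is the product structure -- $Rj_*$ is local on the target, so $Rj_*\mathbb{C}_{V\times\mathbb{T}\setminus Y_0}=i^*R\bar{j}_*\mathbb{C}_{V\times\mathbb{T}}$ with $\bar{j}=\mathrm{id}_V\times(\mathbb{T}\hookrightarrow\mathbb{P}_{\Sigma})$ independent of the $V$-direction; (ii) in the final duality step you call $U_v$ a smooth affine $(n-1)$-fold, which it need not be; what saves the computation is that the duality $D(Rj_{v*}\mathbb{C})=j_{v!}\,\omega$ only needs smoothness of the \emph{open} piece $U_v\setminus D\subset\mathbb{T}$, which is an open subset of the torus. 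With those corrections your endgame matches the paper's, which instead quotes the sheaf-theoretic description of relative homology from \cite{zhu}.
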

\begin{proof}
If $\mathbb{P}_{\Sigma}$ was smooth, we could consider $D$-modules on $U$. Note that taking the relative de Rham complex commutes with $j_!$ by , i.e. we have 
$$Rj_!DR_{V\times \mathbb{T}\setminus Y_0/V}(\mathcal{O}_{V\times \mathbb{T}\setminus Y_0})=DR_{U/V}(j_!\mathcal{O}_{V\times \mathbb{T}\setminus Y_0})$$
We have $p_+$ functor of \cite{borel}. By definition of $p_+$ for projections,  
$R^{n-1}p_* \circ DR_{U/V}= H^0 p_+$. This implies
$$
H_A(\beta) = H^0 p_+ (j_! \mathcal{O}_{V\times \mathbb{T}\setminus Y_0})
$$
as quasi-coherent sheaves. The fact that connections agree follows from direct computation. The rest of the proof is the same as in \cite{zhu}. The idea is to use the Riemann-Hilbert correspondence and the fact that $Sol \circ j_! = j_* \circ Sol$ from section $15$ of \cite{borel} and the sheaf theoretic definition of relative homology from lemma $3.4$ of \cite{zhu}. Furthermore, the isomorphism is given by the cycle to period map defined in \cite{zhu2}.\\

$X=\mathbb{P}_{\Sigma}$ is normal since $\Sigma$ is saturated. By \cite{ishida}, there exists an equivariant resolution of singularities $g:X'\rightarrow X$, such that $X'$ is smooth and $Rg_* \mathcal{O}_{X'}= \mathcal{O}_X$. We have a diagram of fiber products of the form 
$$\begin{tikzcd}
	V\times \mathbb{T} \setminus Y_0 \arrow[hook]{r}{j'}\arrow{d}{id} &U' \arrow[hook]{r} \arrow{d}{g'}&V\times X' \arrow{d}{id \times g} \\
	V\times \mathbb{T}\setminus Y_0  \arrow[hook]{r}{j} &U \arrow[hook]{r} &V\times X
\end{tikzcd}
$$

By the same computations in Proposition \ref{w}, one can show that the de Rham complex for $j'_! \mathcal{O}$ can be computed using the de Rham complex with logarithmic poles with twisted differential. The twisted de Rham complex is again a complex of free $O_{X'}$ modules. Since $Rg'_*\mathcal{O}_{U'}= \mathcal{O}_U$ we have 
$$Rg'_* DR_{U'/V}(j'_!\mathcal{O}_{V\times \mathbb{T}\setminus Y_0})=Rg'_*(\Omega^{\bullet}_{V\times X'/V}log(D)|_{U'},\nabla_{\beta})= (\Omega^{\bullet}_{V\times X/V}log(D)|_{U},\nabla_{\beta})\, .$$
Thus, we have
\begin{align*}
H_A(\beta)&= R^{n-1}p_*  (\Omega^{\bullet}_{V\times X/V}log(D)|_{U},\nabla_{\beta})
\\&= R^{n-1}p_* Rg'_* DR_{U'/V}(j'_! \mathcal{O}_{V\times \mathbb{T}\setminus Y_0})\\
&= R^{n-1}(p\circ g')_* DR_{U'/V}(j'_! \mathcal{O}_{V\times \mathbb{T}\setminus Y_0})= H^0 (p\circ g')_+(j'_! \mathcal{O}_{V\times \mathbb{T}\setminus Y_0})\, .
\end{align*}
Applying the $Sol$ functor we get
$$Sol(j'_!\mathcal{O}_{V\times \mathbb{T}\setminus Y_0})=Rj'_*\mathbb{C}_{V\times \mathbb{T}\setminus Y_0}\, . $$ 
Moreover, using $Sol(p\circ g')_+= R(p\circ g')_! Sol[n-1]$ from section 14 of \cite{borel} , we have
$$Sol(H_A(\beta))= \leftidx{^p}R^0 R(p\circ g')_! Rj'_* \mathbb{C}_{V\times \mathbb{T}\setminus Y_0}[n-1]=\leftidx{^p}R^0 Rp_! \circ Rg'_!\circ Rj'_* \mathbb{C}_{V\times \mathbb{T}\setminus Y_0}[n-1] \, .$$
However, $g'$ is proper, therefore $Rg'_!=Rg'_*$ and we have 
$$Rg'_*\circ Rj'_*= R(g'\circ j')_*= Rj_*  \, .$$
Moreover, we have $$Sol(H_A(\beta))=\leftidx{^p}R^0 Rp_!\circ Rj_* \mathbb{C}_{V\times \mathbb{T}\setminus Y_0}[n-1] \, .$$
Thus, the sheaf of classical solution is 
$$H^0(Sol(H_A(\beta)))= R^{n-1}p_! \circ Rj_*  \mathbb{C}_{V\times \mathbb{T}\setminus Y_0}\, .$$
Note that $p_!$ commutes with taking fiber, by the compact support base change theorem. We want to find the restriction of $Rj_*\mathbb{C}_{V\times \mathbb{T}\setminus Y_0}$ to the fiber $U_v$ of $p$ over a point $v\in V$.  Denote the inclusion of $V\times\mathbb{T}$ into $V \times \mathbb{P}_{\Sigma}$ by $\bar{j}$ and denote the map from $U$ to $V\times \mathbb{P}_{\Sigma}$ by $i$. We have a diagram of varieties.
\begin{equation*}
    \raisebox{-0.5\height}{\includegraphics{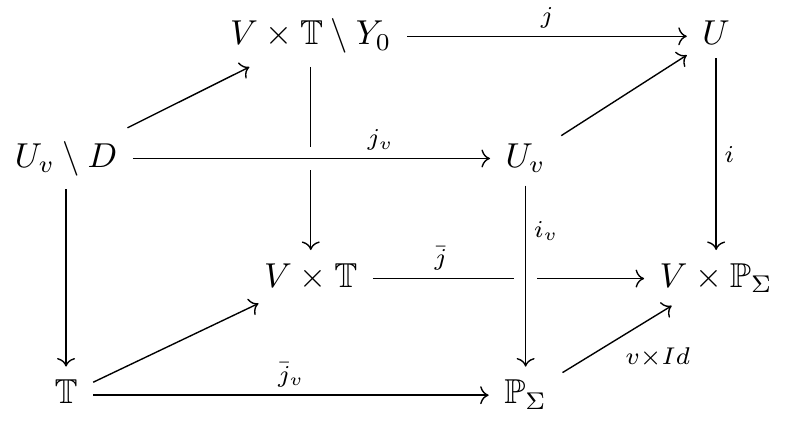}}
\end{equation*}
Since $Rj_*$ is local on the target, we have $Rj_* \mathbb{C}_{V\times\mathbb{T}\setminus Y_0} = i^*R\bar{j}_* \mathbb{C}_{V\times\mathbb{T}}$. Therefore 
$$Rj_* \mathbb{C}_{V\times\mathbb{T}\setminus Y_0}|_{U_v} = i^*R\bar{j}_* \mathbb{C}_{V\times\mathbb{T}}|_{U_v}= i_v^* (v\times Id)^* R\bar{j}_* \mathbb{C}_{V\times\mathbb{T}}\ .$$
Since $\bar{j}$ does not depend on the $V$, we have $(v\times Id)^* R\bar{j}_* \mathbb{C}_{V\times \mathbb{T}} = R\bar{j}_{v*} \mathbb{C}_\mathbb{T}$. From the square in the front, we see that $i^*_v R\bar{j}_{v*} \mathbb{C}_\mathbb{T}= Rj_{v*} \mathbb{C}_{U_v\setminus D}$. Thus, $j_*$ commutes with restriction to a fiber.
$$Rj_* \mathbb{C}_{V\times\mathbb{T}\setminus Y_0}|_{U_v} = Rj_{v*} \mathbb{C}_{U_v\setminus D}$$
By sheaf theoretic definition of relative homology we deduce
$$Hom_{D_V}(H_A(\beta), \mathcal{O}^{an}_{V,v})=  (R^{n-1}p_! \circ Rj_*  \mathbb{C}_{V\times \mathbb{T}\setminus Y_0})_v= H_{n-1}(U_v,U_v\cap D)\, . $$
\end{proof}

We showed that, for semi-nonresonant integer $\beta$, the sheaf of classical solutions to $H_A(\beta)$ is isomorphic to a relative homology. To find the isomorphism one needs to follow the proof of the Riemann-Hilbert correspondence which can be found in \cite{zhu2}. Assume that $v\in V$ is point and consider a relative chain $\delta_v \in H_{n-1}(U_v,U_v\cap D)$. We can extend this cycle to an analytic neighborhood of $v$ in $V$. Let $\phi(v)$ be the function defined in the neighborhood of $v$ by 
$$\phi(v)= \int_{\delta_v} \frac{f^{\beta_1}}{t^{\tilde{\beta}}}\frac{dt_2}{t_2}\dots\frac{dt_n}{t_n}\, . $$
Some computations similar to what we had in Proposition \ref{GKZ3} show that this function satisfies 
$H_A(\beta)$. In fact the isomorphism in the Riemann-Hilbert correspondence comes from this morphism. 
We do not use this fact in the general form, since in our case the cycle $\delta$ is always the positive real numbers. For more precise formulation see \cite{zhu2}. \\

\section{\textbf{Regularization}}\label{integralsection}
To have a better notation, we change the dimension from $n$ to $n+1$. Assume $A\subset \mathbb{Z}^{n+1}$ such that all points have first coordinate equal to $1$. We denote the semigroup generated by $A$ by $\Sigma$ and we assume $\Sigma-\Sigma = \mathbb{Z}^{n+1}$. As before, let $P_A$ be the convex hull of points in $A$ and let $C_A$ be the cone consisting of rays originating from zero and passing through $P_A$. For $(\beta_0,\dots,\beta_n)=\beta\in \mathbb{C}^{n+1}$, we can consider the corresponding differential equations and integral forms of the solutions.\\

Assume all $P_a$'s are positive i.e. $v$ is a positive real point of $V$. We claim that the closure of $\mathbb{R}_+^n$ is a relative chain for the pair $(U_v, U_v \cap D)$. To check this, it is enough to show that $f$ does not vanish on the closure of $\mathbb{R}_+^n$. One can check that closure of $\mathbb{R}^n_+$ is homeomorphic to the polytope $P_A$ by moment map. The restriction of $f$ to each torus orbit corresponding to a face defined by $w$ is 
$$\sum_{\langle w,a \rangle= 0} P_a t^{\tilde{a}}\, . $$
Since $P_a$ are positive and $t^{\tilde{a}}$ are positive, we see that $f$ does not vanish. Assuming $\beta$ is semi non-resonant and integer, it follows that $\phi$ is well defined and the integral
$$\int_{\mathbb{R}_+^n} \frac{t^{\tilde{\alpha}}}{f^{\alpha_0}}\frac{dt_1}{t_1}\dots\frac{dt_n}{t_n}$$
is convergent, where $\alpha = - \beta$ and $\alpha=(\alpha_0,\tilde{\alpha})$. 
We prove this fact for more general values of $\beta$.

\begin{lem}
	Assume the $P_a$'s are positive real numbers indexed by $a\in A$. The integral 
\begin{equation}
  \mathcal{I}:=\int_{\mathbb{R}_{+}^{n}}\frac{t^{\tilde{\alpha}}}{\left( \sum_{a\in A}P_{a}t^{\tilde{a}} \right)^{\alpha_{0}}}\frac{dt_1}{t_1}\dots\frac{dt_n}{t_n}
  \label{convergence}
\end{equation}
converges for  $\alpha=(\alpha_0,\tilde{\alpha})$ iff $\Re(\alpha)$ is in the interior of $C_A$.
\end{lem}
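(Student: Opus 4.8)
The plan is to pass to logarithmic coordinates and reduce the question to a purely convex-geometric statement about the support function of $P_A$. First, since $P_a>0$ for all $a$, the polynomial $f=\sum_{a\in A}P_at^{\tilde a}$ is strictly positive on $\mathbb{R}_+^n$, so the modulus of the integrand depends only on $\Re(\alpha)$; hence we may assume $\alpha=(\alpha_0,\tilde\alpha)\in\mathbb{R}^{n+1}$. Substituting $t_i=e^{x_i}$ turns $\mathcal{I}$ into $\int_{\mathbb{R}^n}e^{\langle\tilde\alpha,x\rangle}\big(\sum_{a\in A}P_ae^{\langle\tilde a,x\rangle}\big)^{-\alpha_0}\,dx$, and the elementary two-sided estimate
$$
(\min_aP_a)\,e^{h(x)}\ \le\ \sum_{a\in A}P_ae^{\langle\tilde a,x\rangle}\ \le\ |A|\,(\max_aP_a)\,e^{h(x)},\qquad h(x):=\max_{a\in A}\langle\tilde a,x\rangle ,
$$
shows that $\mathcal{I}$ converges if and only if $\int_{\mathbb{R}^n}e^{g(x)}\,dx<\infty$, where $g(x):=\langle\tilde\alpha,x\rangle-\alpha_0h(x)$ is continuous and positively homogeneous of degree one; here $h$ is the support function of the $n$-dimensional lattice polytope $\tilde P_A=\operatorname{conv}\{\tilde a:a\in A\}\subset\mathbb{R}^n$ obtained by projecting $P_A$ off its (constant) first coordinate, and $C_A$ is the cone over $\{1\}\times\tilde P_A$.

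Next I would prove that $\int_{\mathbb{R}^n}e^{g}\,dx<\infty$ if and only if $g(x)<0$ for every $x\neq0$. The ``if'' direction is immediate: $g$ is continuous and $1$-homogeneous, so $g<0$ on the unit sphere forces $g(x)\le-\delta|x|$ for some $\delta>0$, whence $\int e^{g}\le\int e^{-\delta|x|}\,dx<\infty$. For the ``only if'' direction I argue by contraposition: suppose $g(x_0)\ge0$ for some $x_0\neq0$. Since $h$ is piecewise linear with linearity domains the full-dimensional normal cones of the vertices of $\tilde P_A$, the same holds for $g$; on the maximal cone $N_v$ attached to a vertex $v$ one has $g(x)=\langle\tilde\alpha-\alpha_0\tilde v,x\rangle$. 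For $R$ large and $|y|\le1$ with $y\perp x_0$, the point $Rx_0+y$ lies in some maximal cone $N_v$ with $x_0\in\overline{N_v}$, and there $g(Rx_0+y)=R\,g(x_0)+\langle\tilde\alpha-\alpha_0\tilde v,y\rangle\ge -C$, with $C:=\max_{v}|\tilde\alpha-\alpha_0\tilde v|$. Hence $e^{g}\ge e^{-C}$ on a half-infinite cylinder, and $\int e^{g}=\infty$. I expect this lower bound — in particular making precise, uniformly in the bounded transverse variable $y$, that a point near the ray $\mathbb{R}_{>0}x_0$ eventually lies in a maximal cone adjacent to that ray — to be the only genuinely fiddly step; everything else is routine.

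It then remains to identify $\{g<0\text{ on }\mathbb{R}^n\setminus0\}$ with $\{\Re(\alpha)\in\operatorname{int}(C_A)\}$, which is pure convex geometry. Evaluating at $x$ and $-x$ and adding gives $g(x)+g(-x)=-\alpha_0\big(h(x)+h(-x)\big)$, and $h(x)+h(-x)$ is the width of $\tilde P_A$ in direction $x$, which is strictly positive for all $x\neq0$ because $\tilde P_A$ is $n$-dimensional (this uses $\Sigma-\Sigma=\mathbb{Z}^{n+1}$, so the differences of elements of $A$ span $\mathbb{Z}^n$). Thus $g<0$ everywhere forces $\alpha_0>0$. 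Assuming $\alpha_0>0$, $g<0$ for all $x\neq0$ is equivalent to $\langle\tilde\alpha/\alpha_0,x\rangle<h(x)$ for all $x\neq0$, i.e. to $\tilde\alpha/\alpha_0\in\operatorname{int}(\tilde P_A)$ by the standard characterization of the interior of a full-dimensional polytope via its support function. Finally $C_A=\operatorname{cone}(\{1\}\times\tilde P_A)$ is a pointed $(n+1)$-dimensional cone with $\operatorname{int}(C_A)\subseteq\{\alpha_0>0\}$ and $\operatorname{int}(C_A)\cap\{\alpha_0=1\}=\{1\}\times\operatorname{int}(\tilde P_A)$, so by homogeneity $\operatorname{int}(C_A)=\{\alpha:\alpha_0>0,\ \tilde\alpha/\alpha_0\in\operatorname{int}(\tilde P_A)\}$. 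Chaining the three equivalences — $\mathcal{I}$ converges $\iff\int e^{g}<\infty\iff g<0$ off the origin $\iff\Re(\alpha)\in\operatorname{int}(C_A)$ — proves the lemma.
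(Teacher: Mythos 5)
Your argument is correct, and it is worth comparing with the paper's. For the convergence half you do essentially what the paper does: pass to exponential coordinates $t_i=e^{x_i}$, compare the integrand with $e^{g(x)}$ where $g(x)=\langle\Re\tilde\alpha,x\rangle-\Re(\alpha_0)\,h(x)$ is piecewise linear and $1$-homogeneous (the paper writes $g$ as $\min_a\langle\Re(\tilde\alpha-\alpha_0\tilde a),x\rangle$, which is the same function once $\Re\alpha_0\ge0$), show $g<0$ off the origin by a support-function/separation argument, and conclude by compactness of the sphere and exponential decay. Where you genuinely go beyond the paper is the divergence half: the paper states the lemma as an ``iff'' but its proof only establishes sufficiency of $\Re(\alpha)\in\operatorname{int}(C_A)$, whereas you supply the necessity, first forcing $\alpha_0>0$ via the width identity $g(x)+g(-x)=-\alpha_0\bigl(h(x)+h(-x)\bigr)$ (using full-dimensionality of $\tilde P_A$, which indeed follows from $\Sigma-\Sigma=\mathbb{Z}^{n+1}$), and then getting divergence from a lower bound $e^{g}\ge e^{-C}$ on a half-infinite cylinder around a ray where $g\ge0$. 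The one step you flag as fiddly is fine and can be disposed of cleanly: the union of the closed vertex normal cones not containing $x_0$ is a closed cone avoiding $x_0$, so it stays at distance $\delta_0>0$ from $x_0$; since $Rx_0+y\in\overline{N_v}$ iff $x_0+y/R\in\overline{N_v}$, every point $Rx_0+y$ with $|y|\le1$ and $R>1/\delta_0$ lies in the closure of a maximal cone containing $x_0$, on which $g$ is the single linear form $\langle\tilde\alpha-\alpha_0\tilde v,\cdot\rangle$, giving the uniform bound you want. Also note your two-sided comparison with $e^{g}$ is valid for either sign of $\alpha_0$ (the constants swap), so the reduction ``$\mathcal I$ converges iff $\int e^{g}<\infty$'' needs no sign assumption, which is exactly what makes your necessity argument go through; the paper's one-term lower bound for the denominator implicitly assumes $\Re\alpha_0\ge0$, harmless for its direction but insufficient for the converse. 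In short: your proof is correct, matches the paper on sufficiency, and completes the statement by proving the divergence direction the paper leaves implicit.
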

\begin{proof}
	To show this, we reparametrize $\mathbb{R}_{+}^{n}$ by $\mathbb{R}^n$, with the map $t_i=\exp\left({x_{i}}\right)$, which gives 
\begin{equation}
  \int_{\mathbb{R}^{n}}\frac{\exp\langle \tilde{\alpha},x\rangle}{\left( \sum_{a\in A} P_a \exp\langle \tilde{a},x\rangle\right)^{\alpha_0}}dx_1 \dots dx_n \, .
  \label{hello}
\end{equation}
For any $a\in A$, we have
\begin{equation*}
|  \frac{\exp\langle \tilde{\alpha},x\rangle}{\left( \sum_{a\in A} P_a \exp\langle \tilde{a},x\rangle\right)^{\alpha_0}}| \leq \frac{\exp\langle \Re(\tilde{\alpha}),x\rangle}{\left(  P_a \exp\langle \tilde{a},x\rangle\right)^{\Re(\alpha_0)}}\leq C \exp \langle \Re(\tilde{\alpha}-\alpha_0 \tilde{a}),x\rangle ,
\end{equation*}
where $C$ is a constant that only depends on the $P_a$'s. 
Taking the minimum on $A$ we can rewrite the inequality as 
$$
\int_{\mathbb{R}^{n}}\frac{\exp\langle \tilde{\alpha},x\rangle}{\left( \sum_{a\in A} P_a \exp\langle \tilde{a},x\rangle\right)^{\alpha_0}}dx_1 \dots dx_n \leq C' \int_{\mathbb{R}^n} \exp\left( \min_a \langle \Re(\tilde{\alpha}-\alpha_0\tilde{a}),x\rangle \right) dx_1\dots dx_n \, .
$$
Note that $\Re(\alpha)$ is in the interior of the cone iff $\Re(\tilde{\alpha})$ is in the interior of the convex hull of $\{\Re(\alpha_0)\tilde{a}\}_{a\in A}$. For a fixed $x$, there exists at least one $a$ such that 
$$\langle \Re(\tilde{\alpha}-\alpha_0\tilde{a}),x \rangle =\langle \Re(\tilde{\alpha}) ,x\rangle- \langle \Re(\alpha_0) \tilde{a},x\rangle< 0, $$
otherwise all point of $\Re(\alpha_0) \tilde{a}$ would be in the half-space $\langle\  \cdot \ , x \rangle \geq \langle \tilde{\alpha},x\rangle$ and $\Re(\tilde{\alpha})$ would be on the boundary, which would contradict the fact that $\tilde{\alpha}$ is in the interior of the convex hull. To show that the integral converges, we use radial coordinates and we write 
\begin{equation*}
  \mathcal{I}\leq C' \int_{S^{n-1}\times \mathbb{R}_+} \exp \left(r\  \min_a \langle \Re(\tilde{\alpha}-\alpha_0\tilde{a}),\frac{x}{|x|}\rangle  \right) r^{n-1} dr\  d\Omega \, .
\end{equation*}
We showed that, for any $x$, $\min_a\langle \Re(\tilde{\alpha}-\alpha_0\tilde{a}),x\rangle$ is negative. Let $-\varepsilon$ be the supremum of this function on the sphere of radius one, which is negative by compactness of the sphere. Substituting, we get 
$$
\mathcal{I} \leq C' \int_{S^{n-1}\times \mathbb{R}_+}  r^{n-1}e^{-\varepsilon r}dr\ d\Omega \leq C' \int_{S^{n-1}} \frac{\Gamma(n) }{\varepsilon^{n}} d\Omega < \infty \, .
$$
\end{proof}
We want to find relations among integrals with different $\alpha$. Let $K(\alpha,P_a)$ be the integral
\begin{equation}\label{I_P}
K(\alpha, P_a)= \int_{\mathbb{R}_{+}^{n}}\frac{t^{\tilde{\alpha}}}{\left( \sum_{a\in A}P_{a}t^{\tilde{a}} \right)^{\alpha_{0}}}\frac{dt_1}{t_1}\dots\frac{dt_n}{t_n} \, .
\end{equation}
\begin{lem}\label{relations}
	Assume $\Re(\alpha)$ is in the interior of the cone. Let $w\in \mathbb{Z}^{n+1}$. We have the 
	equality 
$$
\langle w, \alpha \rangle K(\alpha,P_a)= \alpha_0 \sum_{a\in A}\langle w, a \rangle P_a K(\alpha+a,P_a) \, .
$$
\end{lem}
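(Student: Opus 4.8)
The plan is to deduce the relation from the vanishing of the integral of a total derivative, after passing to the logarithmic coordinates $t_i = e^{x_i}$ already used in the proof of the convergence lemma. Write $w=(w_0,\tilde w)$, $\alpha=(\alpha_0,\tilde\alpha)$, and recall that every $a\in A$ has the form $a=(1,\tilde a)$; in these coordinates
$$K(\alpha,P_a)=\int_{\mathbb{R}^n} h(x)\,dx_1\cdots dx_n,\qquad h(x)=\frac{e^{\langle\tilde\alpha,x\rangle}}{f(x)^{\alpha_0}},\quad f(x)=\sum_{a\in A}P_a\,e^{\langle\tilde a,x\rangle}.$$
First I would record the purely algebraic identity $K(\alpha,P_a)=\sum_{a\in A}P_a\,K(\alpha+a,P_a)$, which is immediate from $\sum_{a\in A}P_a e^{\langle\tilde a,x\rangle}=f$; this will account for the $w_0$-part of the claim. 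Note also that each $K(\alpha+a,P_a)$ converges, since $\Re(\alpha+a)=\Re(\alpha)+a$ is the sum of an interior point of $C_A$ and a point of $C_A$, hence lies in the interior of $C_A$.

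Next, for each $i=1,\dots,n$ I would compute
$$\partial_{x_i}h = \tilde\alpha_i\,h \;-\; \alpha_0\sum_{a\in A}P_a\tilde a_i\,\frac{e^{\langle\tilde\alpha+\tilde a,x\rangle}}{f^{\alpha_0+1}},$$
so that $\partial_{x_i}h$ is a finite combination of integrands of convergent $K$-integrals and therefore lies in $L^1(\mathbb{R}^n)$, with $\int_{\mathbb{R}^n}\partial_{x_i}h\,dx = \tilde\alpha_i K(\alpha,P_a) - \alpha_0\sum_{a}P_a\tilde a_i K(\alpha+a,P_a)$. The one genuinely analytic point is to show this integral vanishes. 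By Fubini it equals $\int_{\mathbb{R}^{n-1}}\big(\int_{\mathbb{R}}\partial_{x_i}h\,dx_i\big)$, and for almost every value of the remaining variables the inner integral is $\lim_{x_i\to+\infty}h-\lim_{x_i\to-\infty}h$. Both limits are $0$: from the estimate $|h(x)|\le C\exp\!\big(\min_{a\in A}\langle\Re(\tilde\alpha-\alpha_0\tilde a),x\rangle\big)$ of the convergence lemma, together with the fact (established there) that $\Re(\tilde\alpha)$ lies in the interior of $\mathrm{conv}\{\Re(\alpha_0)\tilde a\}_{a\in A}$, it follows that in the $i$-th coordinate direction some $a\in A$ has $\Re(\alpha_0)\tilde a_i>\Re(\tilde\alpha_i)$ and some other has $\Re(\alpha_0)\tilde a_i<\Re(\tilde\alpha_i)$, so the exponent tends to $-\infty$ as $x_i\to\pm\infty$ with the other coordinates fixed. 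Hence $\tilde\alpha_i K(\alpha,P_a)=\alpha_0\sum_{a\in A}P_a\tilde a_i K(\alpha+a,P_a)$ for every $i=1,\dots,n$.

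Finally I would take the linear combination
$$w_0\Big(K(\alpha,P_a)=\sum_{a}P_aK(\alpha+a,P_a)\Big)\cdot\alpha_0 \;+\; \sum_{i=1}^n \tilde w_i\Big(\tilde\alpha_i K(\alpha,P_a)=\alpha_0\sum_{a}P_a\tilde a_i K(\alpha+a,P_a)\Big),$$
and use $\langle w,\alpha\rangle=w_0\alpha_0+\langle\tilde w,\tilde\alpha\rangle$ together with $\langle w,a\rangle=w_0+\langle\tilde w,\tilde a\rangle$ (because $a=(1,\tilde a)$) to obtain $\langle w,\alpha\rangle K(\alpha,P_a)=\alpha_0\sum_{a\in A}\langle w,a\rangle P_a K(\alpha+a,P_a)$, as claimed. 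The main obstacle is precisely the justification that the boundary contributions at infinity vanish; once the decay estimate from the preceding lemma is invoked, everything else is a one-line computation.
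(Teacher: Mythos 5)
Your proof is correct and follows essentially the same route as the paper: reduce by linearity in $w$ to basis vectors, treat $w=e_0$ via the algebraic identity $\sum_{a\in A}P_a t^{\tilde a}=f$, and for $w=e_i$ integrate the total derivative of the integrand in the exponential coordinates $t_i=e^{x_i}$, killing the boundary contribution with the decay estimate from the convergence lemma. The only cosmetic difference is that the paper eliminates the boundary term by Stokes' theorem on balls of radius $r$ (exponential decay of the $(n-1)$-form against polynomial growth of the sphere), whereas you use Fubini and vanishing limits along coordinate lines; both are valid.
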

\begin{proof}
Both sides are linear in $w$, hence it is enough to check this for $w=e_i$. Note that, for $w=e_0$, this equality is trivial. For $i\neq 0$ and $w=e_i$, we use exponential change of variable as in equation (\ref{hello}). Consider the differential form on $\mathbb{R}^n$ given by 
$$\theta = \frac{\exp\langle \tilde{\alpha},x\rangle}{\left( \sum_{a\in A} P_a \exp\langle \tilde{a},x\rangle\right)^{\alpha_0}}dx_1 \dots \hat{dx_i}\dots dx_n \, .$$
A basic computation shows that we have
\begin{align*}
d\theta &= \left(\alpha_i \frac{\exp\langle \tilde{\alpha},x\rangle  }{\left( \sum_{a\in A} P_a \exp\langle \tilde{a},x\rangle\right)^{\alpha_0}} - \alpha_0\frac{\sum_{a\in A} P_a  a_i \exp \langle \tilde{\alpha}+a , x \rangle}{\left( \sum_{a\in A} P_a \exp\langle \tilde{a},x\rangle \right)^{\alpha_0+1}}\right) dx_1 \dots dx_n  \, .
\end{align*}
Assuming that $K(\alpha, P_a)$ converges implies that $\Re(\alpha)$ is the interior of the cone generated by $A$. Since $\Re(\alpha)$ is in the interior of the cone, $\Re(\alpha)+a$ is also in the interior of the cone and all terms in $d\theta$ have absolutely convergent integrals. Therefore, we can integrate $d\theta$ in the interior of ball of radius $r$ and take the limit as $r$ goes to infinity. By Stokes' theorem the integral of $d\theta$ over a ball of radius $r$ is equal to integral of $\theta$ on a sphere of radius $r$. By the same computations as in the previous lemma, we can see that $\theta$ has exponential decay, while the volume of the sphere grows polynomially. This implies that the limit is zero. Note that the integral of $d\theta$ is equal to 
$$\langle e_i, \alpha \rangle K(\alpha,P_a)- \alpha_0 \sum_{a\in A}\langle e_i, a \rangle P_a K(\alpha+a,
P_a),$$
which implies the statement.

\end{proof}
\begin{theorem}\label{poles}
	Let $K(\alpha,P_a)$ be as in (\ref{I_P}), defined for $\Re(\alpha)$ in the interior of the cone generated by $A$. Then $K(\alpha,P_a)$ has meromorphic continuation to $\mathbb{C}^{n+1}$, with poles along the $-\Sigma$ translates of the hyperplanes defining the facets of the convex hull of $A$, or equivalently 
the semi-resonant $-\alpha$'s. Furthermore, we have the identity in Lemma \ref{relations}.
\end{theorem}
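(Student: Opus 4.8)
The plan is to obtain the meromorphic continuation of $K(\alpha,P_a)$ by using the relation in Lemma \ref{relations} as a functional equation that pushes the argument $\alpha$ in the $\Sigma$-direction, trading convergence for poles. First I would fix a facet of the convex hull of $A$, defined by a primitive $w\in\mathbb{Z}^{n+1}$ with $\langle w,a\rangle\ge 0$ for all $a\in A$ and $\langle w,a\rangle=0$ exactly on that facet; recall from the convergence lemma that $\Re(\alpha)$ lies in the interior of $C_A$ iff $\langle w,\Re(\alpha)\rangle>0$ for every such facet normal $w$ (together with the condition coming from $e_0$, which is automatic on the slices we care about since $\langle e_0,a\rangle=1$). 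The relation of Lemma \ref{relations} reads
$$
\langle w,\alpha\rangle\, K(\alpha,P_a)=\alpha_0\sum_{a\in A,\ \langle w,a\rangle>0}\langle w,a\rangle\, P_a\, K(\alpha+a,P_a),
$$
and the key point is that every $\alpha+a$ appearing on the right has $\langle w,\alpha+a\rangle=\langle w,\alpha\rangle+\langle w,a\rangle$ strictly larger, while the other facet pairings only increase or stay the same (since $\langle w',a\rangle\ge 0$ for all facet normals $w'$). So iterating this identity a bounded number of times expresses $K(\alpha,P_a)$, up to the explicit nonzero scalar factors $\prod\langle w,\alpha+(\text{shift})\rangle^{-1}$ and $\alpha_0$'s, as a finite $\mathbb{Z}[P_a]$-linear combination of integrals $K(\alpha',P_a)$ whose arguments $\alpha'$ are pushed arbitrarily far into the cone.

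Second, I would make this into an induction that produces continuation to all of $\mathbb{C}^{n+1}$. Given a target point $\alpha_\ast$, choose any facet normal $w$ with $\langle w,\Re(\alpha_\ast)\rangle\le 0$; applying the functional equation $N$ times (for $N$ large enough that $\langle w,\Re(\alpha_\ast)\rangle+N\min_{\langle w,a\rangle>0}\langle w,a\rangle>0$, while monitoring that the other facet inequalities are not violated — one must cycle through the facets, always pushing the most ``negative'' one, and use that $\langle w',a\rangle\ge 0$ so no facet pairing ever decreases) moves $\Re(\alpha)$ into the interior of $C_A$ where the integral converges. This realizes $K(\alpha,P_a)$ near $\alpha_\ast$ as a ratio: a holomorphic function (the convergent integrals, which depend holomorphically on $\alpha$ in the open cone, standard by dominated convergence using the exponential bounds from the convergence lemma) divided by a product of affine-linear forms $\langle w,\alpha\rangle-\langle w,\gamma\rangle$ with $\gamma\in\Sigma$. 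Hence $K$ is meromorphic there, and by the identity theorem the continuations from different chains of reductions agree on overlaps, giving a single meromorphic function on $\mathbb{C}^{n+1}$.

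Third, I would read off the location of the poles. The only denominators introduced are factors $\langle w,\alpha+\gamma\rangle$ for facet normals $w$ and $\gamma\in\Sigma$ (plus possible $\alpha_0$ factors, but $\alpha_0=\langle e_0,\alpha\rangle$ and $e_0$ is itself a facet normal of $\bar P_A$, or one absorbs it into the $w$-bookkeeping), so every pole lies on a hyperplane $\langle w,\alpha\rangle=-\langle w,\gamma\rangle$, i.e.\ on a $-\Sigma$-translate of a facet hyperplane of the convex hull of $A$. By Definition \ref{seminon} this is exactly the statement that $-\alpha$ is semi-resonant: some $\langle w_i,-(-\alpha)+\Sigma\rangle$, equivalently $\langle w_i,\alpha+\Sigma\rangle$, contains $0$. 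Finally, since both sides of the relation in Lemma \ref{relations} are now meromorphic functions on $\mathbb{C}^{n+1}$ agreeing on the open set where $\Re(\alpha)$ is in the interior of $C_A$, the identity persists everywhere by analytic continuation, giving the last sentence of the theorem.

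The main obstacle I anticipate is the bookkeeping in the second step: showing that one can always choose, at each stage, a facet to push so that \emph{all} facet inequalities are eventually satisfied simultaneously (rather than improving one while degrading another) and that this terminates. The clean way is to note $\langle w',a\rangle\ge 0$ for every facet normal $w'$ and every $a\in A$, so applying the $w$-equation never decreases $\langle w',\Re(\alpha)\rangle$ for any $w'$; thus one can treat the facets one at a time, and a single large $N$ per facet suffices, with the total number of reduction steps bounded in terms of how far $\Re(\alpha_\ast)$ is from the cone. One should also check the harmless degenerate case where some intermediate $\langle w,\alpha+\gamma\rangle$ vanishes identically in the relevant variable — but this cannot happen since $w\ne 0$ and the $\langle w,a\rangle$ with $\langle w,a\rangle>0$ are genuine positive integers, so the forms are nonconstant and the construction is valid on a dense open set, hence everywhere by continuation.
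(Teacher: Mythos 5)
Your proposal is correct and follows essentially the same route as the paper: both use the identity of Lemma \ref{relations} as a functional equation, observe that the terms with $\langle w,a\rangle=0$ drop out so the surviving shifts strictly increase the pairing with $w$ while never decreasing the pairings with the other facet normals, and then induct facet by facet to express $K(\alpha,P_a)$ near any point as a combination of convergent integrals divided by affine forms $\langle w,\alpha+\gamma\rangle$, $\gamma\in\Sigma$, with the identity itself persisting by analytic continuation. The paper organizes the iteration as a downward induction on regions $\{\langle w_i,\Re(\alpha)\rangle>m_i\}$, but this is only a notational difference from your bookkeeping.
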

\begin{proof}
	First note that the integral is absolutely convergent and that the integrand is analytic in $\alpha$, which implies that the integral is holomorphic for $\Re(\alpha)$ in the interior of the cone, i.e. for $\langle w_i , \Re(\alpha) \rangle> 0 $. We use induction to show the statement for the sets $\langle w_i , \Re(\alpha) \rangle> m_i$, which cover $\mathbb{C}^{n+1}$ for $m_i \in \mathbb{Z}$. The statement is true for $\vec{m}=(0,\dots,0)$. Assume it is true for $\vec{m}=(m_1,\dots,m_{n+1})$. By Lemma \ref{relations} we have 
	$$\langle w_i , \alpha \rangle K(\alpha,P_a) = \alpha_0 \sum_{a\in A} \langle w_i ,a\rangle P_a K(\alpha +a , P_a)\, .$$
	Note that $\langle w, a\rangle = 0$ for $a$ in the facet defined by $w_i$. Thus, $\langle w_i , \alpha \rangle K(\alpha, P_a)$ can be expressed as a linear combination of $K(b,P_a)$, where, for all $b$, we have $\langle w_i, \Re(b) \rangle \geq m_i +1$ and $\langle w_j , \Re(b) \rangle \geq m_j$ for $j\neq i$. We can define the integral for $\langle w_i , \Re(\alpha) \rangle > m_i- 1$ and $\langle w_j , \Re(\alpha) \rangle > m_j$ by dividing both sides by $\langle w_i , \alpha \rangle$. This means that there is pole when $\langle w_i, \alpha\rangle$ is zero. By repeating this operation, we get poles on the $-\Sigma$ translates of $\langle w_i , \alpha \rangle = 0$. Note that the equation above is analytic in $\alpha$ and is valid for $\Re(\alpha)$ in the interior of the cone, hence it is valid everywhere. 
\end{proof}
\begin{rem}\label{zero}
  Note that in the identity of Lemma \ref{relations}, the sum on the right is multiplied by $\alpha_0$. If we start from an $\alpha$ with $\alpha_0 \in \mathbb{Z}_{\leq 0}$, we eventually multiply be zero, since all points in the interior of the cone have $\alpha_0>0$. Thus, $K(\alpha,P_a)$ has degree one zero along hyperplanes $\alpha_0 \in \mathbb{Z}_{\leq 0}$.
\end{rem}

We know that, for positive real $P_a$'s,  if $K(\alpha,P_a)$ converges, then it is a solution to $H_A(\beta)$. We claim that this is true for the analytic continuation of $K(\alpha,P_a)$ as well. From the identity of Lemma \ref{relations}, we know that for $\alpha$ with real part in the interior of the positive cone, the left hand side is a solution to $H_A(-\alpha)$. Note that both differential equations and relations are analytic in $\alpha$ and $P_a$, therefore they are valid for the analytic continuation of $K(\alpha, P_a)$. As a result we have that $K(\alpha,P_a)$ is a solution to $H_A(-\alpha)$ when $-\alpha$ is non semi-resonant. For a resonant $-\alpha$ we can find a vector $\vec{u}\in \mathbb{C}^{n+1}$ such that $-(\alpha+\epsilon \vec{u})$ is semi non-resonant for small enough $\epsilon$. We can take the Laurent expansion in the $\vec{u}$ direction and
we obtain 
\begin{equation}\label{regular}
	K(\alpha+\epsilon\vec{u}, P_a)=\sum_{i=-k}^\infty \epsilon^i K_i^{\vec{u}}(\alpha, P_a) \, . 
\end{equation}
As in the previous section, we denote the set of integer relations among points of $A$ by $R$.
For $r \in R$, we have a corresponding box differential operator $\Box_r$ as in (\ref{box}) and 
\begin{equation}\label{Zw}
Z_w = \sum_{a\in A} \langle w, a \rangle P_a\frac{\partial}{\partial P_a}\, .
\end{equation}
Note that $Z_i$ in (\ref{zi}) is $Z_{e_i}$ in \eqref{Zw}. In this notation, a solution $\phi$ to $H_A(\beta)$ is equivalent to a function satisfying 
$$\Box_r \phi=0 \quad \quad Z_w \phi = \langle w, \beta \rangle \phi \, .$$
By the same computation of Theorem \ref{main}, we arrive to the following proposition.
\begin{prop}
For $w\in \mathbb{C}^{n+1}$, we have
$$\Box_r K_i^{\vec{u}}(\alpha,P_a)= 0 \quad \quad Z_w K_i^{\vec{u}}(\alpha,P_a) = \langle w, -\alpha \rangle K_i^{\vec{u}}(\alpha,P_a)+ \langle w, -\vec{u} \rangle K_{i-1}^{\vec{u}}(\alpha,P_a) \, .$$
\end{prop}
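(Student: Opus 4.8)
The plan is to mimic the proof of Theorem \ref{main} almost verbatim, the only change being that the base object is now $K(\alpha,P_a)$ rather than $I(P_a,w_0+\epsilon w)$. The starting point is the observation recorded just before the proposition: whenever $\Re(\alpha)$ lies in the interior of $C_A$, the function $K(\alpha,P_a)$ is a classical solution of $H_A(-\alpha)$, i.e. it satisfies $\Box_r K(\alpha,P_a)=0$ for every relation $r\in R$ and $Z_w K(\alpha,P_a)=\langle w,-\alpha\rangle K(\alpha,P_a)$ for every $w$ (the box equations come from the usual commutation argument as in Proposition \ref{GKZ1}, and the $Z_w$ equations from Lemma \ref{relations}). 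The first step is to upgrade these identities from the convergence region to all of $\mathbb{C}^{n+1}$: by Theorem \ref{poles} both sides are meromorphic in $\alpha$, and the operators $\Box_r$, $Z_w$ only differentiate in the $P_a$ variables with coefficients independent of $\alpha$, so the identities persist by analytic continuation.

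Next I would substitute $\alpha\mapsto\alpha+\epsilon\vec{u}$, with $\vec{u}$ chosen as in \eqref{regular} so that $-(\alpha+\epsilon\vec{u})$ is semi non-resonant for all small nonzero $\epsilon$. The box equations become $\Box_r K(\alpha+\epsilon\vec{u},P_a)=0$, with no $\epsilon$-dependence in the operator, while the Euler-type equations become
$$Z_w K(\alpha+\epsilon\vec{u},P_a)=\langle w,-\alpha\rangle K(\alpha+\epsilon\vec{u},P_a)+\epsilon\,\langle w,-\vec{u}\rangle K(\alpha+\epsilon\vec{u},P_a),$$
using $\langle w,-(\alpha+\epsilon\vec{u})\rangle=\langle w,-\alpha\rangle+\epsilon\langle w,-\vec{u}\rangle$. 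Finally I would insert the Laurent expansion $K(\alpha+\epsilon\vec{u},P_a)=\sum_{i\geq-k}\epsilon^i K_i^{\vec{u}}(\alpha,P_a)$ into both sides and compare coefficients of $\epsilon^i$, which produces exactly the two asserted identities $\Box_r K_i^{\vec{u}}(\alpha,P_a)=0$ and $Z_w K_i^{\vec{u}}(\alpha,P_a)=\langle w,-\alpha\rangle K_i^{\vec{u}}(\alpha,P_a)+\langle w,-\vec{u}\rangle K_{i-1}^{\vec{u}}(\alpha,P_a)$.

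The one genuine point to justify — and the step I expect to be the main obstacle — is that $\Box_r$ and $Z_w$ may be applied term by term to the Laurent series in $\epsilon$, i.e. that $\partial/\partial P_a$ commutes with extracting the $\epsilon^i$-coefficient. This requires the meromorphic continuation of $K$ to behave well jointly in $(\epsilon,P_a)$: away from the semi-resonant locus, $K(\alpha+\epsilon\vec{u},P_a)$ is holomorphic in $\epsilon$ on a punctured disc and holomorphic in the positive $P_a$, so the $K_i^{\vec{u}}$ are obtained by contour integration in $\epsilon$ and differentiation under the integral sign in $P_a$ interchanges the two operations. Granting this, the comparison of $\epsilon$-powers is purely formal, and the proposition follows; for the bookkeeping one can simply refer back to the corresponding argument in the proof of Theorem \ref{main}.
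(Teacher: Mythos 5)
Your proposal is correct and follows essentially the same route as the paper: the paper's proof is simply to note (via Lemma \ref{relations} and analytic continuation) that $K(\alpha+\epsilon\vec{u},P_a)$ satisfies $H_A(-\alpha-\epsilon\vec{u})$, expand both sides of (\ref{regular}) in powers of $\epsilon$, and compare coefficients as in Theorem \ref{main}. Your additional remark justifying the term-by-term application of $\Box_r$ and $Z_w$ to the Laurent series is a harmless refinement of a step the paper treats as routine.
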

\begin{proof}
This follows from expanding both sides of (\ref{regular}).
\end{proof}
In particular we see that the lowest coefficient gives us a solution to $H_A(-\alpha)$.
\begin{rem}
	All the calculations we have done here can be done for any chain $\delta$ replacing the positive real points. In fact, Lemma \ref{relations} is valid for the integral over any chain, and the rest of the calculation is exactly the same. In this way we can construct a set of solutions for resonant $\beta$. 
\end{rem}

\section{\textbf{Amplitudes and Regularization}}\label{feynGKZ}

As we showed in Section \ref{sec1}, amplitudes satisfy certain differential equations. From a Feynman diagram, we constructed a subset of $\mathbb{Z}^{n+1}$ in the following way. For each monomial $t^S$ in the first Symanzik polynomial $\Psi_{\Gamma}$ of the graph, we consider the point $(1,\vec{S}) \in \mathbb{Z}^{n+1}$ and, for each monomial $t^T$ in $Q_{\Gamma}$, we consider the lattice point $(0,\vec{T})$. We denote this set of lattice points by $A$. In general $A$ does not generate the lattice $\mathbb{Z}^{n+1}$ but it generates a sublattice of dimension $n$. To see this, note that $(1,0,0,...0,-1,0,...,0)$, where $-1$ is in the $i$-th place, is in the sublattice generated by $A$. This term is $(1,\vec{S})-(0,\vec{T})$, where the monomial $T$ is the product of $m_i^2  t_i$ and $t^S$. On the other hand, all points of $A$ lie on the hyperplane $\sum_{i=0}^n a_i = \ell+1$, where $\ell$ is the number of loops in $\Gamma$. Thus, the sublattice generated by $A$ in $\mathbb{Z}^{n+1}$ is the set of lattice points $x$, such that $\ell+1 |\sum x_i$. Denote this sublattice by $L$ and let $r:L\rightarrow \mathbb{Z}^{n+1}$ be the function defined by $r_0(x)=(x_0 +x_1+...+x_n)/(\ell+1)$ and $r_i(x)=x_i$, for $1\leq i \leq n$. It is easy to see that $r$ is invertible and that the determinant is $1/(\ell+1)$, hence the image of the standard cube has volume $1$ and $r(A)$ generates the lattice. We replace $A$ by $r(A)$. Note that all points of $r(A)$ have first coordinate equal to $1$. By Proposition \ref{GKZ2} we know that $I(c_1, c_2 , P_a , \vec{v})$ satisfies $H_A(\beta)$, where 
$$\beta = ( (c_1-c_2)(\ell+1)+ \sum_i v_i +n , -1-v_1,\dots,-1-v_n)\, . $$
Thus, $I$ satisfies $H_{r(A)}(r(\beta))$ and we have
$$r(\beta) = (c_1-c_2, -1-v_1, \dots ,-1-v_2) \, .$$
For the rest of this section we replace $A$ by $r(A)$ and $\beta$ by $r(\beta)$. Note that, for the original amplitude $\mathcal{A}(\Gamma, P_a)$, we have $c_1 =0$, $c_2 = D/2$ and $v= \vec{0}$, hence the corresponding vector $\beta$ is
$$\beta=(-D/2,-1,\dots,-1) \, . $$
Assuming the normality condition and the semi non-resonant condition in Theorem \ref{motive}, 
all solutions of $H_A(\beta)$ come from integrals. Thus, $\mathcal{A}(\Gamma,P_a)$ is equal to 
$$ \int_{\delta} \frac{t_1\dots t_n}{(\sum_{a\in A}P_a t^{\tilde{a}})^{D/2}}\frac{dt_1}{t_1}\dots 
\frac{dt_n}{t_n},$$
for a relative chain $\delta$.
In fact, using the projective version of the integral, we can show that it is equal to $K(-\beta, P_a)$, up to  multiplication by a rational number.

\begin{lem} \label{feyn}
Feynman's parametric integral formula gives
	$$\frac{1}{A^{a}B^{b}} = \frac{\Gamma( a+b)}{\Gamma(a)\Gamma(b)}\int_{0}^{\infty} \frac{\lambda^{a-1}d\lambda}{\left[\lambda A + B\right]^{a+b}} \, .$$
\end{lem}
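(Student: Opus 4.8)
The identity is the classical Feynman--Schwinger combination formula, and I would prove it by reducing the right-hand side to an Euler Beta integral. Throughout one may assume $A,B>0$ and $\Re(a),\Re(b)>0$; the general case follows by analytic continuation in $(a,b)$ and in $(A,B)$, fixing the principal branch of the powers so that both sides are holomorphic where defined and agree on the positive reals, hence everywhere by the identity theorem.

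First I would record the change of variables $\lambda = (B/A)\,u$ in the integral on the right, under which $d\lambda = (B/A)\,du$ and $\lambda A + B = B(1+u)$. This turns
$$\int_{0}^{\infty}\frac{\lambda^{a-1}\,d\lambda}{(\lambda A+B)^{a+b}}$$
into
$$\frac{1}{A^{a}B^{b}}\int_{0}^{\infty}\frac{u^{a-1}\,du}{(1+u)^{a+b}},$$
so that the statement is reduced to the identity
$$\int_{0}^{\infty}\frac{u^{a-1}\,du}{(1+u)^{a+b}} = \frac{\Gamma(a)\Gamma(b)}{\Gamma(a+b)}.$$

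Second, I would identify the left-hand side of the last display with the Euler Beta function $B(a,b)$: the substitution $x = u/(1+u)\in(0,1)$ gives $du = dx/(1-x)^{2}$ and $(1+u)^{-a-b} = (1-x)^{a+b}$, so the integral becomes $\int_{0}^{1} x^{a-1}(1-x)^{b-1}\,dx = B(a,b)$. Then I would invoke, or include a one-line proof of, the classical relation $B(a,b) = \Gamma(a)\Gamma(b)/\Gamma(a+b)$, e.g. by writing $\Gamma(a)\Gamma(b) = \int_{0}^{\infty}\!\int_{0}^{\infty} s^{a-1}t^{b-1}e^{-s-t}\,ds\,dt$ and passing to the coordinates $(\sigma,\rho)=(s+t,\ s/(s+t))$, whose Jacobian is $\sigma$, so the $\sigma$-integral yields $\Gamma(a+b)$ and the $\rho$-integral yields $B(a,b)$.

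The computation is entirely elementary; the only point requiring a word of care is convergence of the $u$-integral, which holds exactly when $\Re(a)>0$ (behavior at $u=0$) and $\Re(b)>0$ (behavior at $u=\infty$). I do not anticipate any substantive obstacle. An equally short alternative is the Schwinger route: write $A^{-a}=\Gamma(a)^{-1}\int_{0}^{\infty} s^{a-1}e^{-sA}\,ds$ and $B^{-b}=\Gamma(b)^{-1}\int_{0}^{\infty} t^{b-1}e^{-tB}\,dt$, multiply, substitute $s=\sigma\rho$, $t=\sigma(1-\rho)$ with $\sigma\in(0,\infty)$, $\rho\in(0,1)$, carry out the $\sigma$-integral, and finish with $\lambda=\rho/(1-\rho)$.
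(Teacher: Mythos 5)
Your proof is correct. Note, however, that the paper does not actually prove this lemma at all: its ``proof'' is a one-line citation to Chapter 8 of the textbook referenced as \cite{critical}, so there is no argument in the paper to compare against step by step. Your reduction is the standard and complete one: the substitution $\lambda=(B/A)u$ pulls out the factor $A^{-a}B^{-b}$ and leaves $\int_0^\infty u^{a-1}(1+u)^{-a-b}\,du$, which the change of variable $x=u/(1+u)$ identifies with the Euler Beta integral $\int_0^1 x^{a-1}(1-x)^{b-1}\,dx=\Gamma(a)\Gamma(b)/\Gamma(a+b)$; your convergence conditions $\Re(a)>0$, $\Re(b)>0$ (behavior at $u=0$ and $u=\infty$ respectively) and the extension by analytic continuation are exactly right. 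The alternative Schwinger route you sketch is equally valid and is in fact closer in spirit to how the paper uses Schwinger parameters elsewhere (e.g.\ in Proposition \ref{AmplProp}), so either version would serve as a self-contained replacement for the citation. What your approach buys is precisely that self-containment: the identity is elementary and the paper's reliance on an external reference is unnecessary.
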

\begin{proof}
	See Chapter 8 of \cite{critical}.
\end{proof}

\begin{cor}
  Assume $K(\alpha, P_a)$ is convergent, i.e. $\alpha$ is in the interior of the cone generated by $A$. Then we have the equality 
  $$K(\alpha, P_a) =\frac{ \Gamma(-|\tilde{\alpha}|+\alpha_0(\ell+1))}{\Gamma(\alpha_0)}I(0,\alpha_0,P_a,\tilde{\alpha}-(1,\dots,1)) \, . $$
\end{cor}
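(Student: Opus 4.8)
The plan is to reduce both sides to one and the same integral over the standard simplex $\Delta_{n-1}$, using the radial substitution $\mathbb{R}_+^n\cong\Delta_{n-1}\times\mathbb{R}_+$ that already underlies the proof of Lemma~\ref{Radial}, together with Feynman's parametric identity (Lemma~\ref{feyn}).

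First I would make the polynomial in the denominator of $K$ explicit. For the set $A=r(A)$ attached to $\Gamma$, the points coming from spanning trees have last-$n$ coordinates summing to $\ell$ and the points coming from monomials of $Q_{\Gamma}$ have last-$n$ coordinates summing to $\ell+1$; hence $\sum_{a\in A}P_{a}t^{\tilde a}=\Psi_{\Gamma}(t;P_S)+Q_{\Gamma}(t;Q_T)$, with $\Psi_{\Gamma}$ homogeneous of degree $\ell$ and $Q_{\Gamma}$ homogeneous of degree $\ell+1$ in $t$. Writing $t=sx$ with $x\in\Delta_{n-1}$ and $s\in\mathbb{R}_+$ one has $\Psi_{\Gamma}(sx)+Q_{\Gamma}(sx)=s^{\ell}\bigl(\Psi_{\Gamma}(x)+sQ_{\Gamma}(x)\bigr)$, $t^{\tilde\alpha}=s^{|\tilde\alpha|}x^{\tilde\alpha}$, and $\prod_i \tfrac{dt_i}{t_i}=\pm\tfrac{ds}{s}\wedge\tfrac{\Omega}{x_1\cdots x_n}$ (the computation of $\phi^{*}$ in the proof of Lemma~\ref{Radial}). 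Carrying out the $s$-integral by Feynman's identity (Lemma~\ref{feyn}) with $A=Q_{\Gamma}(x)$, $B=\Psi_{\Gamma}(x)$, exponent $\alpha_0$, and parameters $a=|\tilde\alpha|-\ell\alpha_0$, $b=(\ell+1)\alpha_0-|\tilde\alpha|$ (so $a+b=\alpha_0$) gives
\[
K(\alpha,P_a)=\frac{\Gamma(|\tilde\alpha|-\ell\alpha_0)\,\Gamma\bigl(-|\tilde\alpha|+\alpha_0(\ell+1)\bigr)}{\Gamma(\alpha_0)}\int_{\Delta_{n-1}}\frac{\Psi_{\Gamma}(x)^{|\tilde\alpha|-\alpha_0(\ell+1)}}{Q_{\Gamma}(x)^{|\tilde\alpha|-\ell\alpha_0}}\,x^{\tilde\alpha-(1,\dots,1)}\,\Omega .
\]
The hypothesis that $K$ converges means, by the convergence criterion for integrals of the form \eqref{convergence}, that $\Re(\alpha)$ lies in the interior of $C_A$; writing $\Re(\alpha)$ as a positive combination of the points of $A$ and using that $|\tilde a|-\ell\in\{0,1\}$ with both values attained (on a spanning-tree point, resp.\ a $Q_{\Gamma}$-monomial point) shows $\Re(a)>0$ and $\Re(b)>0$, which is exactly the range in which Lemma~\ref{feyn} applies and in which the remaining simplex integral converges.

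Finally I would recognize the simplex integral on the right as $J\bigl(\ell\alpha_0-|\tilde\alpha|,\,(\ell+1)\alpha_0-|\tilde\alpha|,\,P_a,\,\tilde\alpha-(1,\dots,1)\bigr)$ in the notation of Definition~\ref{defGAII}, and compare with the $I$--$J$ relation \eqref{ijrelation} (equivalently Lemma~\ref{Radial}): taking there $c_1=0$, $c_2=\alpha_0$, $\vec v=\tilde\alpha-(1,\dots,1)$ yields $I(0,\alpha_0,P_a,\tilde\alpha-(1,\dots,1))=\Gamma(|\tilde\alpha|-\ell\alpha_0)\,J\bigl(\ell\alpha_0-|\tilde\alpha|,(\ell+1)\alpha_0-|\tilde\alpha|,P_a,\tilde\alpha-(1,\dots,1)\bigr)$, the same $J$. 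Dividing the two expressions by $\Gamma(|\tilde\alpha|-\ell\alpha_0)$ times that common $J$ gives the stated identity. The only point needing care is the use of Fubini for the radial splitting and the appeal to Lemma~\ref{Radial}: both are legitimate because, under the hypothesis, $K$ converges absolutely by the convergence criterion, and the same estimate shows that the common $J$-integral is finite and that $I(0,\alpha_0,P_a,\tilde\alpha-(1,\dots,1))$ converges. I expect the main obstacle to be purely organizational — keeping the two homogeneity degrees $\ell$ and $\ell+1$ straight through the rescaling and matching the $\Gamma$-factors — rather than conceptual, since the content of the identity is that the Schwinger/Feynman manipulations which produced the parametric form of the amplitude in Section~\ref{feynmann} are reversible.
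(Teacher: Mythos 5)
Your proof is correct and follows essentially the same route as the paper: radial decomposition $\mathbb{R}_+^n\cong\Delta_{n-1}\times\mathbb{R}_+$, Feynman's identity (Lemma \ref{feyn}) for the radial integral, identification of the resulting simplex integral with a $J$, and comparison via the $I$--$J$ relation \eqref{ijrelation}. The only cosmetic difference is that you carry the torus-invariant measure $\prod dt_i/t_i$ through the computation directly, whereas the paper computes with the standard measure and monomial $t^{\tilde\alpha}$ and then shifts $\tilde\alpha\mapsto\tilde\alpha-(1,\dots,1)$ at the end; your added remarks on why $\Re(|\tilde\alpha|-\ell\alpha_0)>0$ and $\Re(\alpha_0(\ell+1)-|\tilde\alpha|)>0$ under the interior-of-cone hypothesis are a harmless (and slightly more careful) supplement to the paper's argument.
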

\begin{proof}
By Lemma \ref{feyn} we have 
\begin{align*}
	\int_{\mathbb{R}_+^n} \frac{t^{\tilde{\alpha}}}{\left( Q_{\Gamma}+\Psi_{\Gamma} \right)^{\alpha_0}}=&\int_{\mathbb{R}_+ \times \Delta_{n-1}} \frac{r^{|\tilde{\alpha}|+n-1}t^{\tilde{\alpha}}}{\left( r^{\ell+1}Q_{\Gamma}+r^{\ell}\Psi_{\Gamma} \right)^{\alpha_0}}dr \Omega\\
	=& \int_{\mathbb{R}_+ \times \Delta_{n-1}} \frac{r^{|\tilde{\alpha}|+n-1-\alpha_0 \ell}t^{\tilde{\alpha}}}{\left( rQ_{\Gamma}+\Psi_{\Gamma} \right)^{\alpha_0}}dr \Omega\\
	=&\frac{ \Gamma(-n-|\tilde{\alpha}|+\alpha_0(\ell+1))\Gamma(|\tilde{\alpha}|+n-\alpha_0\ell)}{\Gamma(\alpha_0)}\int_{\Delta_{n-1}} \frac{Q_{\Gamma}^{-n-|\tilde{\alpha}|+\alpha_0\ell}}{\Psi_{\Gamma}^{-n-|\tilde{\alpha}|+\alpha_0 (\ell+1)}} t^{\tilde{\alpha}}\Omega\\
	=&\frac{ \Gamma(-n-|\tilde{\alpha}|+\alpha_0(\ell+1))\Gamma(|\tilde{\alpha}|+n-\alpha_0\ell)}{\Gamma(\alpha_0)}J(-n-|\tilde{\alpha}|+\alpha_0\ell,-n-|\tilde{\alpha}|+\alpha_0(\ell+1),P_a,\tilde{\alpha}) \, .
\end{align*}
By the definition of $K(\alpha,P_a)$ and the equality (\ref{ijrelation}), we have 
\begin{align*}
	K(\alpha,P_a)&=\frac{ \Gamma(-|\tilde{\alpha}|+\alpha_0(\ell+1))\Gamma(|\tilde{\alpha}|-\alpha_0\ell)}{\Gamma(\alpha_0)}J(-|\tilde{\alpha}|+\alpha_0\ell,-|\tilde{\alpha}|+\alpha_0(\ell+1),P_a,\tilde{\alpha}-(1,\dots,1))\\
	&=\frac{ \Gamma(-|\tilde{\alpha}|+\alpha_0(\ell+1))}{\Gamma(\alpha_0)}I(0,\alpha_0,P_a,\tilde{\alpha}-(1,\dots,1)) \, . 
\end{align*}
\end{proof}

By corollary above and the fact that $\mathcal{A}(\Gamma,P_a)=\pi^{D\ell/2}I(0,D/2, P_a,\vec{0})$, we can compute the amplitude from $K$ as
\begin{equation}\label{ka}
\mathcal{A}(\Gamma,P_a)= \pi^{D\ell/2}\frac{\Gamma(D/2) }{\Gamma(-n+D/2(\ell+1))}K((D/2,1,\dots,1),P_a)\, . 
\end{equation}
Note that the Gamma function never vanishes. As a result, when $K(\alpha,P_a)$ converges, we can define the amplitude. In fact, we can define the amplitude by this equation. For the rest of this section we study $K(\alpha,P_a)$, and the structure of its poles. To do that, we first find the defining inequalities for $P_A$. 
We need a few definitions and notation.

\begin{deff}
  For a graph $\Gamma$ with $n$ edges, we denote by $SP_{\Gamma}\subset \mathbb{R}^n$ the polytope constructed by the incidence vectors of the complements of the spanning trees of $\Gamma$, i.e. it is the convex hull of $\{\vec{S}\}$, where $S$ corresponds to a monomial $t^S$ in the first Symanzik polynomial of $\Gamma$. We denote by $P_{\Gamma}$ the polytope constructed from the terms in the first and second Symanzik polynomials (including mass terms). Note that we have $P_{\Gamma} = SP_{\Gamma}+ E_n$, where $E_n$ is the convex hull of $\{0,e_1,\dots,e_n\}$ and plus is the Minkowski sum.
For a subgraph $\gamma \subset \Gamma$, we have natural inclusions $P_{\gamma}\subset P_{\Gamma}$ and $SP_{\gamma}\subset SP_{\gamma}$.
\end{deff}

\begin{deff}
	
	For a subgraph $\gamma \subset \Gamma$, let $1_{\gamma}$ be the incidence vector of $\gamma$ in $\mathbb{Z}^{n}$, i.e. the vector where the coefficient corresponding to an edge $e\in \Gamma$ is $1$ if $e\in \gamma$ and is zero otherwise. We denote by $\ell_{\gamma}$ the number of loops in $\gamma$, i.e.~the dimension of the first homology of $\gamma$.
\end{deff}

\begin{lem}\label{mayer}
	Given two subgraph $\gamma_1$ and $\gamma_2$, we have
	$$\ell_{\gamma_1} + \ell_{\gamma_2} \leq \ell_{\gamma_1\cap \gamma_2} + \ell_{\gamma_1 \cup \gamma_2} \, . $$
\end{lem}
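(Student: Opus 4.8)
The plan is to derive this submodularity-type inequality from the Mayer--Vietoris long exact sequence in simplicial homology, which accounts for the name of the lemma. Throughout I regard each $\gamma_i$ as a subcomplex of $\Gamma$, namely its set of edges together with all endpoints of those edges, so that $\gamma_1\cup\gamma_2$ and $\gamma_1\cap\gamma_2$ are again subcomplexes of $\Gamma$. Since $H_1$ is insensitive to isolated vertices, $\ell_\gamma$ in this convention agrees with the purely edge-theoretic quantity used elsewhere in the paper, so nothing is lost.

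First I would write down the Mayer--Vietoris sequence for $\gamma_1\cup\gamma_2$ with intersection $\gamma_1\cap\gamma_2$. A graph is one-dimensional, hence $H_2(\gamma_1\cup\gamma_2)=0$, and every group occurring is finitely generated and free, so the sequence reads
\[
0 \to H_1(\gamma_1\cap\gamma_2) \to H_1(\gamma_1)\oplus H_1(\gamma_2) \to H_1(\gamma_1\cup\gamma_2) \to H_0(\gamma_1\cap\gamma_2) \to H_0(\gamma_1)\oplus H_0(\gamma_2) \to H_0(\gamma_1\cup\gamma_2) \to 0 .
\]
(If $\gamma_1\cap\gamma_2=\emptyset$ the first two terms vanish and the lemma becomes an equality; the argument below still applies verbatim.) The alternating sum of the ranks in an exact sequence is zero, so after rearranging,
\[
\ell_{\gamma_1}+\ell_{\gamma_2}-\ell_{\gamma_1\cap\gamma_2}-\ell_{\gamma_1\cup\gamma_2}
= b_0(\gamma_1)+b_0(\gamma_2)-b_0(\gamma_1\cap\gamma_2)-b_0(\gamma_1\cup\gamma_2),
\]
where $b_0$ denotes the number of connected components.

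It remains to show the right-hand side is $\le 0$, that is, that $b_0$ is supermodular. I would obtain this from the tail of the same exact sequence: the map $H_0(\gamma_1)\oplus H_0(\gamma_2)\to H_0(\gamma_1\cup\gamma_2)$ is onto, and its kernel equals the image of $H_0(\gamma_1\cap\gamma_2)$, so has rank at most $b_0(\gamma_1\cap\gamma_2)$; hence $b_0(\gamma_1)+b_0(\gamma_2)\le b_0(\gamma_1\cup\gamma_2)+b_0(\gamma_1\cap\gamma_2)$. Substituting this into the displayed identity gives the claim.

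There is no genuinely difficult step here. The only mild subtlety is keeping track of vertices lying in both $\gamma_1$ and $\gamma_2$ but incident to no common edge, which must be handled consistently when forming the topological intersection but are invisible to $\ell$; a fully combinatorial alternative avoids even this. Namely, $\ell_\gamma = |E(\gamma)| - \rho(E(\gamma))$ is the nullity of the graphic matroid of $\Gamma$, and nullity is supermodular because the matroid rank function $\rho$ is submodular while $|E(\gamma_1)| + |E(\gamma_2)| = |E(\gamma_1\cup\gamma_2)| + |E(\gamma_1\cap\gamma_2)|$; one could present this as the proof instead.
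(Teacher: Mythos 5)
Your proof is correct and follows essentially the same route as the paper, which also invokes Mayer--Vietoris for the pair $(\gamma_1,\gamma_2)$ and counts dimensions (the paper merely truncates the sequence at $\ker\bigl(H_1(\gamma_1\cup\gamma_2)\to H_0(\gamma_1\cap\gamma_2)\bigr)$ instead of running it to the end and bounding the $H_0$ contribution separately, as you do). Your closing remark about the matroid-nullity/submodularity argument is a valid alternative, but as presented it is an aside rather than the main proof.
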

\begin{proof}
	By Mayer-Vietoris for the pair $\gamma_1$ and $\gamma_2$, we have the exact sequence 
	$$0=H_2(\gamma_1 \cup \gamma_2) \rightarrow H_1(\gamma_1\cap \gamma_2) \rightarrow H_1(\gamma_1)\oplus H_1(\gamma_2) \rightarrow ker \left(H_1(\gamma_1\cup \gamma_2) \rightarrow H_0(\gamma_1\cap \gamma_2)\right)\rightarrow 0\, . $$
	Counting dimensions implies the inequality.
\end{proof}
We define another polytope using inequalities and we will show it is the same polytope we considered before. Spanning tree polytopes ($SP_{\Gamma}$) and in general matroid polytopes have been studied in combinatorial optimization theory, see \cite{polytope1}, \cite{polytope2} and \cite{polytope3}. The polytope $P_{\Gamma}$ is similar to these polytopes and we can translate some of the results in combinatorial optimization to our setting. 
\begin{deff}
  For a graph $\Gamma$, let $P'_{\Gamma}$ be the subset of $\mathbb{R}^n$ defined by the inequalities
	$$\langle 1_{\gamma}, \tilde{x}\rangle \geq \ell_{\gamma} \quad \quad
	\langle 1_{\Gamma}, \tilde{x}\rangle \leq \ell_{\Gamma} + 1, $$
where the first inequality is valid for all subgraphs. 
\end{deff}
Assume that the graph $\gamma$ in the first inequality above is a single edge. In this case, the inequality implies that all coefficients are positive. The second inequality implies the sum of the coefficients is bounded. Thus, these equations define a bounded set and $P'_{\Gamma}$ is a polytope. Any face of the polytope $P= P'_{\Gamma}$ is defined by setting some of the inequalities to equalities. Assume we have a set of equalities $\langle 1_{\gamma}, \tilde{x}\rangle = \ell_{\gamma}$ for $\gamma\in \mathcal{F}$ where $\mathcal{F}$ is a family of subgraphs. 
\begin{lem}\label{family}
Given a point $\tilde{x}$ in $P$, let $\mathcal{F}$ be the set 
$$\mathcal{F} = \{\gamma\subset\Gamma: \langle 1_{\gamma}, \tilde{x}\rangle = \ell_{\gamma}\} \, .$$
Then $\mathcal{F}$ is closed under intersection and union of its elements. 
\end{lem}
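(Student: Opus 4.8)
The plan is to exploit the submodularity of the loop-number function $\ell_{\bullet}$ established in Lemma \ref{mayer}, together with the elementary fact that the incidence-vector assignment $\gamma\mapsto 1_{\gamma}$ is \emph{modular}, namely
\[
1_{\gamma_1}+1_{\gamma_2}=1_{\gamma_1\cap\gamma_2}+1_{\gamma_1\cup\gamma_2}
\]
as vectors in $\mathbb{Z}^n$, where $\gamma_1\cap\gamma_2$ and $\gamma_1\cup\gamma_2$ denote the subgraphs on the edge sets $E(\gamma_1)\cap E(\gamma_2)$ and $E(\gamma_1)\cup E(\gamma_2)$. This identity is checked edge by edge: an edge lying in both $\gamma_i$ contributes $2$ to each side, an edge in exactly one contributes $1$ to each side, and an edge in neither contributes $0$. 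This is the only graph-combinatorial input beyond Lemma \ref{mayer}, and it is routine.

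First I would fix $\gamma_1,\gamma_2\in\mathcal{F}$ and abbreviate $a=\langle 1_{\gamma_1\cap\gamma_2},\tilde{x}\rangle$, $b=\langle 1_{\gamma_1\cup\gamma_2},\tilde{x}\rangle$, $c=\ell_{\gamma_1\cap\gamma_2}$, $d=\ell_{\gamma_1\cup\gamma_2}$. Pairing the modularity identity with $\tilde{x}$ and using $\langle 1_{\gamma_i},\tilde{x}\rangle=\ell_{\gamma_i}$ gives
\[
a+b=\ell_{\gamma_1}+\ell_{\gamma_2},
\]
while Lemma \ref{mayer} gives $\ell_{\gamma_1}+\ell_{\gamma_2}\le c+d$, hence $a+b\le c+d$. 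On the other hand, since $\tilde{x}\in P=P'_{\Gamma}$ satisfies $\langle 1_{\gamma},\tilde{x}\rangle\ge\ell_{\gamma}$ for \emph{every} subgraph $\gamma$ (in particular for $\gamma_1\cap\gamma_2$ and $\gamma_1\cup\gamma_2$), we have $a\ge c$ and $b\ge d$. These three inequalities force $a=c$ and $b=d$, which is exactly the assertion that $\gamma_1\cap\gamma_2\in\mathcal{F}$ and $\gamma_1\cup\gamma_2\in\mathcal{F}$.

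There is essentially no serious obstacle here: once the modularity identity for incidence vectors and the submodularity of Lemma \ref{mayer} are in hand, the argument is the standard ``the tight constraints of a supermodular system are closed under meet and join'' lemma. The only points needing a line of care are (i) making precise that $\gamma_1\cap\gamma_2$ and $\gamma_1\cup\gamma_2$ are taken with the edge-set operations so that the modularity identity holds exactly (the vertex sets are irrelevant, since $\ell$ depends only on the edges and their induced components), and (ii) the degenerate cases: if $E(\gamma_1)\cap E(\gamma_2)=\emptyset$ then $1_{\gamma_1\cap\gamma_2}=0$ and $\ell_{\gamma_1\cap\gamma_2}=0$, so $\gamma_1\cap\gamma_2$ lies in $\mathcal{F}$ trivially, and if $\gamma_1\cup\gamma_2=\Gamma$ the needed inequality $\langle 1_{\Gamma},\tilde{x}\rangle\ge\ell_{\Gamma}$ is still among the defining inequalities of $P'_{\Gamma}$.
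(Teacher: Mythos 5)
Your proof is correct and follows essentially the same route as the paper: pair the modularity identity $1_{\gamma_1}+1_{\gamma_2}=1_{\gamma_1\cap\gamma_2}+1_{\gamma_1\cup\gamma_2}$ with $\tilde{x}$, invoke the submodularity inequality of Lemma \ref{mayer}, and use the defining inequalities of $P'_{\Gamma}$ for the intersection and union to force both to be tight. Your extra remarks on edge-set conventions and degenerate cases are fine but not needed beyond what the paper already assumes.
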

\begin{proof}
We have 
$$\langle 1_{\gamma_1\cap \gamma_2} ,\tilde{x}\rangle \geq \ell_{\gamma_1\cap \gamma_2}$$
and 
$$\langle 1_{\gamma_1\cup \gamma_2} ,\tilde{x}\rangle \geq \ell_{\gamma_1\cup \gamma_2}, $$
and we have $1_{\gamma_1}+1_{\gamma_2}= 1_{\gamma_1\cap \gamma_2}+ 1_{\gamma_1 \cup \gamma_2}$. Combining this with the inequality of Lemma \ref{mayer}, we have
$$  \ell_{\gamma_1}+ \ell_{\gamma_2} =\langle 1_{\gamma_1},\tilde{x}\rangle+\langle 1_{\gamma_1},\tilde{x}\rangle \geq \ell_{\gamma_1\cap \gamma_2}+\ell_{\gamma_1\cup \gamma_2}\geq \ell_{\gamma_1}+ \ell_{\gamma_2}, $$
hence both inequalities are equalities. This shows that $\mathcal{F}$ is closed under 
intersection and union of its elements. 
\end{proof}
By the previous lemma, the defining equations of faces (coming from subgraphs) can be chosen to be closed under intersection and union. By a chain of subgraphs we mean a family $\mathcal{C}$ of subgraphs such that, for $\gamma_1, \gamma_2 \in \mathcal{C}$, we have either $\gamma_1 \subset \gamma_2$ or $\gamma_2 \subset \gamma_1$. 

\begin{lem}\label{chain}
	Let $\mathcal{F}$ be the set of equalities corresponding to subgraphs and let $P_{\mathcal{F}}$ be the corresponding face. Let $\mathcal{C}\subset \mathcal{F}$ be a maximal chain in $\mathcal{F}$. The family of linear equations $\{\langle 1_{\gamma}, \tilde{x}\rangle = \ell_{\gamma}: \gamma \in \mathcal{F}\}$ is equivalent to $\{\langle 1_{\gamma}, \tilde{x}\rangle = \ell_{\gamma}: \gamma \in \mathcal{C}\}$, i.e. we have $P_{\mathcal{F}}=P_{\mathcal{C}}$.
\end{lem}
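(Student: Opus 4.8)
The plan is to rephrase each subgraph constraint as the affine–linear functional $L_\gamma(\tilde x):=\langle 1_\gamma,\tilde x\rangle-\ell_\gamma$ on $\mathbb{R}^n$ and to prove that every $L_\gamma$ with $\gamma\in\mathcal{F}$ lies in the linear span of $\{L_\gamma:\gamma\in\mathcal{C}\}$; the reverse containment is immediate since $\mathcal{C}\subseteq\mathcal{F}$. Granting this, the affine subspaces cut out by $\{L_\gamma=0:\gamma\in\mathcal{F}\}$ and by $\{L_\gamma=0:\gamma\in\mathcal{C}\}$ coincide, and intersecting with $P$ yields $P_{\mathcal{F}}=P_{\mathcal{C}}$. (If $\mathcal{F}=\emptyset$ the statement is trivial, so assume not; by Lemma \ref{family} we may take $\mathcal{F}$ closed under intersection and union, so it has a least element $\hat 0=\bigcap_{\gamma\in\mathcal F}\gamma$ and a greatest element $\hat 1=\bigcup_{\gamma\in\mathcal F}\gamma$, and a maximal chain $\mathcal{C}$ necessarily runs from $\hat 0$ to $\hat 1$, say $\hat 0=c_0\subset c_1\subset\cdots\subset c_m=\hat 1$ with no element of $\mathcal F$ strictly between consecutive terms.)

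The first step is to record that $\gamma\mapsto L_\gamma$ is \emph{modular} on $\mathcal{F}$: for $\gamma_1,\gamma_2\in\mathcal{F}$ one has $L_{\gamma_1}+L_{\gamma_2}=L_{\gamma_1\cap\gamma_2}+L_{\gamma_1\cup\gamma_2}$ as functions on $\mathbb{R}^n$. The vector part, $1_{\gamma_1}+1_{\gamma_2}=1_{\gamma_1\cap\gamma_2}+1_{\gamma_1\cup\gamma_2}$, holds coordinatewise, since each edge contributes the same amount ($0$, $1$, or $2$) to both sides according to how many of $\gamma_1,\gamma_2$ contain it. For the constant part, $\ell_{\gamma_1}+\ell_{\gamma_2}=\ell_{\gamma_1\cap\gamma_2}+\ell_{\gamma_1\cup\gamma_2}$, I would evaluate at any point $\tilde x\in P_{\mathcal F}$: all four subgraphs belong to $\mathcal F$ (closure under $\cap,\cup$ from Lemma \ref{family}, which itself rests on Lemma \ref{mayer}), hence are tight at $\tilde x$, so both sides equal $\langle 1_{\gamma_1}+1_{\gamma_2},\tilde x\rangle$ by the vector identity just noted.

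The heart of the argument is the following purely order-theoretic claim, which I would then apply with $v=L_\bullet$: if $v$ is a modular function on a finite lattice $\mathcal F$ (closed under $\cap,\cup$, with extrema $\hat 0,\hat 1$) and $\mathcal C=\{\hat 0=c_0\subset\cdots\subset c_m=\hat 1\}$ is a maximal chain, then $v(\gamma)\in\mathrm{span}\{v(c_0),\dots,v(c_m)\}$ for all $\gamma\in\mathcal F$. I would prove this by induction on $k(\gamma)-j(\gamma)$, where $j(\gamma)=\max\{i:c_i\subseteq\gamma\}$ and $k(\gamma)=\min\{i:\gamma\subseteq c_i\}$ (both defined since $c_0\subseteq\gamma\subseteq c_m$, and $j(\gamma)\le k(\gamma)$). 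If $j(\gamma)=k(\gamma)$ then $\gamma=c_{j(\gamma)}\in\mathcal C$ and there is nothing to prove. Otherwise $j:=j(\gamma)<k(\gamma)=:k$; since $\mathcal C$ is maximal, no element of $\mathcal F$ lies strictly between $c_j$ and $c_{j+1}$, while $c_j\subseteq\gamma\cap c_{j+1}\subseteq c_{j+1}$ and $\gamma\cap c_{j+1}\in\mathcal F$, forcing $\gamma\cap c_{j+1}\in\{c_j,c_{j+1}\}$; it cannot equal $c_{j+1}$ (that would give $c_{j+1}\subseteq\gamma$, contradicting the maximality of $j$), so $\gamma\cap c_{j+1}=c_j$. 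Put $\gamma':=\gamma\cup c_{j+1}\in\mathcal F$. Modularity gives $v(\gamma)=v(c_j)+v(\gamma')-v(c_{j+1})$, and since $c_{j+1}\subseteq\gamma'$ and $\gamma,c_{j+1}\subseteq c_k$ we have $j(\gamma')\geq j+1$ and $k(\gamma')\leq k$, so $k(\gamma')-j(\gamma')<k-j$; the induction hypothesis applied to $\gamma'$ then puts $v(\gamma')$, hence $v(\gamma)$, in $\mathrm{span}\{v(c_0),\dots,v(c_m)\}$.

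I do not expect a serious obstacle, but the points that demand care are: (i) the modularity of $\ell|_{\mathcal F}$, whose only subtlety is that \emph{all four} of $\gamma_1,\gamma_2,\gamma_1\cap\gamma_2,\gamma_1\cup\gamma_2$ must be tight at the chosen point of the face --- precisely the role played by the closure of $\mathcal F$ under $\cap$ and $\cup$ in Lemma \ref{family}; and (ii) the observation $\gamma\cap c_{j+1}\in\{c_j,c_{j+1}\}$, which is exactly where the maximality of the chain $\mathcal C$ enters.
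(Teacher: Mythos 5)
Your proof is correct and takes essentially the same route as the paper: both rest on the closure of $\mathcal{F}$ under $\cap$ and $\cup$ from Lemma \ref{family}, the modular relation $L_{\gamma_1}+L_{\gamma_2}=L_{\gamma_1\cap\gamma_2}+L_{\gamma_1\cup\gamma_2}$ on the tight set, and the maximality of $\mathcal{C}$ to uncross a non-chain constraint against a chain element. The paper packages this as a minimal-counterexample argument on the number of ``chain violations'' rather than your induction on $k(\gamma)-j(\gamma)$, but the mechanism is identical; your explicit identification $\gamma\cap c_{j+1}=c_j$ and the linear-span formulation just make the bookkeeping cleaner.
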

\begin{proof}
Given a subgraph $\gamma$, by a chain violation we mean a subgraph $c\in \mathcal{C}$ such that neither $c\subset\gamma$ nor $\gamma\subset c$. Since $\mathcal{C}$ is maximal, for any subgraph $\gamma$ in $\mathcal{F}\setminus \mathcal{C}$ there exists $c\in \mathcal{C}$ such that it is a chain violation for $\gamma$  and $P_{\mathcal{C}\cup \{c\}}\neq P_{\mathcal{C}}$, otherwise the statement follows. Among all of the subgraphs choose the one with minimal number of chain violations. By Lemma \ref{family}, we have $c\cap \gamma \in \mathcal{F}$ and $c\cup \gamma \in \mathcal{F}$. For all $\tilde{x} \in P_{\mathcal{F}}$ we have 
$$\langle 1_{\gamma} , \tilde{x} \rangle = \ell_{\gamma} \quad \quad \langle 1_c , \tilde{x}\rangle = \ell_c\quad \quad \langle 1_{\gamma\cap c},\tilde{x}\rangle = \ell_{\gamma\cap c} \quad \quad \langle 1_{\gamma\cup c},\tilde{x}\rangle = \ell_{\gamma\cup c}\, . $$
These four linear equations are dependent, i.e. the sum of the first two is equal to the sum of the last two. The first equation is not satisfied by all points of $P_{\mathcal{C}}$, so we either have $P_{\mathcal{C}\cup\{\gamma \cap c\}} \neq P_{\mathcal{C}}$ or $P_{\mathcal{C}\cup\{\gamma \cup c\}} \neq P_{\mathcal{C}}$. Replacing $c$ by  $\gamma\cap c$ or $\gamma \cup c$ decreases the number of chain violations, which  contradicts the maximality condition. 
\end{proof}
\begin{prop}\label{polytopeineq}
  Two polytope are equal, i.e. $P_{\Gamma}= P'_{\Gamma}$. Furthermore let $C_{\Gamma}$ be the cone over $P_{\Gamma}$ in $\mathbb{R}^{n+1}$ and let   
$$\vec{\gamma} = (-\ell_{\gamma}, 1_{\gamma})$$
then $C_{\Gamma}$ is given by the equalities
$$\langle \vec{\gamma} , x \rangle \geq 0, $$
for all $\gamma \subset \Gamma$, and 
$$\langle \vec{\Gamma}- e_0 , x \rangle \leq 0 \, .$$
\end{prop}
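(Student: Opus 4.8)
\medskip

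\noindent\textbf{Proof proposal.} The plan is to prove the two inclusions $P_{\Gamma}\subseteq P'_{\Gamma}$ and $P'_{\Gamma}\subseteq P_{\Gamma}$ separately, and then obtain the cone description by homogenization; throughout I take $\Gamma$ connected (otherwise one argues componentwise), so a spanning tree has $|V|-1$ edges and $\ell_{\gamma}=|E(\gamma)|-|V(\gamma)|+b_0(\gamma)$ for every subgraph $\gamma$. For $P_{\Gamma}\subseteq P'_{\Gamma}$: since $P_{\Gamma}=SP_{\Gamma}+E_n$ is the convex hull of the points $\vec{S}+v$ with $S$ the complement of a spanning tree and $v\in\{0,e_1,\dots,e_n\}$, it suffices to test these points. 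The number $\langle 1_{\gamma},\vec{S}\rangle$ counts the edges of $\gamma$ not lying in $S$; as $S\cap E(\gamma)$ is a forest in $\gamma$ it has at most $|V(\gamma)|-b_0(\gamma)$ edges, so $\langle 1_{\gamma},\vec{S}\rangle\ge \ell_{\gamma}$, and $\langle 1_{\gamma},v\rangle\ge 0$ only helps; for the upper bound $\langle 1_{\Gamma},\vec{S}+v\rangle=\ell_{\Gamma}+\langle 1_{\Gamma},v\rangle\le \ell_{\Gamma}+1$ because $v$ has at most one nonzero entry, equal to $1$.

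For $P'_{\Gamma}\subseteq P_{\Gamma}$ it is enough to show every vertex of $P=P'_{\Gamma}$ lies in $P_{\Gamma}$. By Lemma \ref{family} the family of subgraph inequalities tight at a given vertex is closed under union and intersection, and by Lemma \ref{chain} it may be replaced by a maximal chain $\gamma_1\subsetneq\cdots\subsetneq\gamma_k$; hence the vertex is the unique solution of $\langle 1_{\gamma_j},\tilde x\rangle=\ell_{\gamma_j}$ for $1\le j\le k$, together with possibly $\langle 1_{\Gamma},\tilde x\rangle=\ell_{\Gamma}+1$. The vectors $1_{\gamma_j}$ are linearly independent (their supports strictly increase), so a rank count forces either $k=n$ with $\gamma_n=\Gamma$, or else the top equation is present, all $\gamma_j\subsetneq\Gamma$, $k=n-1$, and one edge $e_n$ is left out; in both cases every step of the chain adds exactly one edge $e_j$. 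Telescoping gives $x_{e_j}=\ell_{\gamma_j}-\ell_{\gamma_{j-1}}\in\{0,1\}$ along the chain and $x_{e_n}=(\ell_{\Gamma}+1)-\ell_{\gamma_{n-1}}\in\{1,2\}$ for the leftover edge, so $\tilde x$ is integral.

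Processing the chain edges in order, an edge receives coordinate $0$ exactly when it does not close a circuit, so the zero set $T$ of $\tilde x$ is built precisely as the greedy algorithm builds a maximal forest. In the first case $T$ is a spanning tree and $\tilde x=\vec{S}$ with $S=T$. In the second case: if $x_{e_n}=2$ then $\gamma_{n-1}$ has full rank, $T$ is already a spanning tree $S$ with $e_n\notin S$, and $\tilde x=\vec{S}+e_n$; if $x_{e_n}=1$ then $\gamma_{n-1}$ does not span $e_n$, so $T\cup\{e_n\}$ is a spanning tree $S$ and again $\tilde x=\vec{S}+e_n$. In every case $\tilde x\in SP_{\Gamma}+E_n=P_{\Gamma}$, which completes this inclusion.

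Finally, $C_{\Gamma}$ is the cone over $P_{\Gamma}\subset\{x_0=1\}\subset\mathbb{R}^{n+1}$, so a point $x=(x_0,\tilde x)$ with $x_0>0$ lies in $C_{\Gamma}$ iff $\tilde x/x_0\in P_{\Gamma}=P'_{\Gamma}$, i.e. iff $\langle 1_{\gamma},\tilde x\rangle\ge \ell_{\gamma}x_0$ for all $\gamma$ and $\langle 1_{\Gamma},\tilde x\rangle\le(\ell_{\Gamma}+1)x_0$; rewritten in terms of $\vec{\gamma}=(-\ell_{\gamma},1_{\gamma})$ these read $\langle\vec{\gamma},x\rangle\ge 0$ and $\langle\vec{\Gamma}-e_0,x\rangle\le 0$. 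The homogeneous system cut out by these inequalities meets $\{x_0=0\}$ only at the origin (the one-edge inequalities give $\tilde x\ge 0$, while $\langle 1_{\Gamma},\tilde x\rangle\le 0$), hence it agrees with the pointed cone $C_{\Gamma}$. The main obstacle is the reverse inclusion $P'_{\Gamma}\subseteq P_{\Gamma}$: concretely the greedy/matroid identification of each vertex of $P'_{\Gamma}$ as some $\vec{S}$ or $\vec{S}+e_i$, together with the case analysis (leftover coordinate $1$ versus $2$, and $\gamma_{n-1}$ of full rank versus not) needed when $\langle 1_{\Gamma},\tilde x\rangle\le \ell_{\Gamma}+1$ is among the tight constraints; the rank and dimension bookkeeping, and the handling of isolated vertices in the definition of $\ell_{\gamma}$, are the remaining technical points.
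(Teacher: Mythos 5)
Your proof is correct and follows the same overall strategy as the paper's: the easy inclusion $P_{\Gamma}\subseteq P'_{\Gamma}$ by testing the generating points, the hard inclusion by analyzing a vertex of $P'_{\Gamma}$ through its tight constraints, uncrossing them to a maximal chain via Lemmas \ref{family} and \ref{chain}, telescoping along the chain, and deducing the cone statement by homogenization exactly as the paper does. The one place you genuinely diverge is the endgame of the hard inclusion: the paper only extracts \emph{integrality} of the vertex from the telescoping ($x_{e_j}\in\{0,1\}$, with one leftover coordinate forced to be an integer) and then relies on a separate classification of all integer points of $P'_{\Gamma}$ (the loop argument: every circuit forces a nonzero coordinate, so the support contains the complement of a spanning tree), whereas you identify the vertex directly as $\vec{S}$ or $\vec{S}+e_i$ by observing that the zero set built along the chain is exactly a greedily grown spanning tree; this bypasses the paper's integer-point step at the cost of the small case analysis for the leftover coordinate being $1$ or $2$, and both routes are sound. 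Two remarks, neither a real gap: you overload the letter $S$, sometimes for a spanning tree and sometimes for its complement (e.g.\ ``$S\cap E(\gamma)$ is a forest'' only makes sense if $S$ denotes the tree, while $\vec{S}$ must denote the complement), which should be cleaned up; and your step ``hence the vertex is the unique solution of the chain equalities,'' which converts Lemma \ref{chain} (a statement about faces, $P_{\mathcal{F}}=P_{\mathcal{C}}$) into the rank count $k=n$ or $k+1=n$, is exactly the same gloss the paper's own proof makes, so you inherit rather than introduce it.
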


\begin{proof}
  Note that the inequalities define a cone since $0$ satisfies all the equations. The intersection of this cone with the hyperplane $\langle x,e_0\rangle=1$ is $P'_{\Gamma}$, which is bounded. Since the intersection is bounded, the cone defined by the inequalities is the cone over the intersection. As a result it is enough to show that the intersection is equal to $P_{\Gamma}$. The first inequality for a subgraph is satisfied by all points of $A$, since an intersection of a spanning tree with a subgraph is a spanning forest of the subgraph, hence number of edges in its complement is greater than the number of loops. The second inequality is trivial. $P'_{\Gamma}$ contains all the extreme points of $P_{\Gamma}$, hence it contains $P_{\Gamma}$. First we find the integer points of $P'_{\Gamma}$. Any edge $e\in E$ is a subgraph and the corresponding first inequality implies $x_i\geq 0$. Thus, integer points of $P'_{\Gamma}$ have the form 
$$(a_1,\dots,a_n),$$ 
where $a_i \in \mathbb{Z}_{\geq 0}$. For any loop $\gamma \in \Gamma$ we have
$$\sum_{e\in \gamma} a_e \geq 1,$$ 
hence, for at least one $e\in \gamma$, we have $a_e \neq 0$. If we remove this edge from the graph, the remaining graph has $\ell_{\Gamma}-1$ loops and we can find another loop that does not have $e$ in it. 
This in turn implies that another $a_{e'}$ is non-zero. Iterating this procedure, we get at least $\ell_{\Gamma}$ many nonzero $a_i$'s, which are chosen from different loops. The complement of the edges corresponding to these $a_i$'s is a spanning tree. The second inequality implies $\sum_i a_i \leq \ell_{\Gamma}+1$, hence $(a_1,\dots,a_n)$ corresponds to the incidence vector of a complement of a spanning tree, or $e_i$ plus the incidence vector of a complement of a spanning tree. The integer points of $P'_{\Gamma}$ are exactly the integer points of $P_{\Gamma}$. To finish the proof we need to show that the extreme points of $P'_{\Gamma}$ 
are integers. \\

Let $\tilde{x}$ be an extreme point of $P'_{\Gamma}$. Then $\tilde{x}$ is the unique solution to a set of linear equations corresponding to defining equations of $P'_{\Gamma}$. There are two cases. The first case occurs  when the second equation is not used and $\tilde{x}$ is defined by $\langle 1_{\gamma} ,\tilde{x} \rangle = \ell_{\gamma} \text{ for } \gamma \in \mathcal{F}$. By Lemma \ref{chain}, we can replace $\mathcal{F}$ by a chain of subgraphs $\mathcal{C}$. Since the solution is unique, we need $n$ many equalities and we have  
$$\gamma_1 \subset \gamma_2 \subset \dots \subset \gamma_n = \Gamma \quad \quad \mathcal{C}= \{\gamma_1,\dots,\gamma_n\}. $$
Thus, $\gamma_i \setminus \gamma_{i-1}$ has only one edge. We denote this edge by $e_i$. 
By the equalities we see that 
$$x_i = \ell_{\gamma_i} - \ell_{\gamma_{i-1}} \in \{0,1\},$$ which implies we have an integer point. The set of extreme points we obtain in this way corresponds to a spanning trees. 
The second case is when we have the equality $\langle 1_{\Gamma}, \tilde{x}\rangle= \ell_{\Gamma}+1$. By the same argument, we see that $x_i \in \{0,1\}$, for all $i$ except one of them. Since these add up to an integer, the other one has to be an integer too. The set of extreme points we obtain in this way corresponds to the set of monomials in the second Symanzik polynomial.
\end{proof}
A subgraph is called 2-connected, if it is connected and remains connected after removing any vertex. Note that single edges are 2-connected. We have found a set of inequalities that define $P_{\Gamma}$, but this set is not minimal. We just need the equations defining the facets of $P_{\Gamma}$. Any facet is defined by a single equation, so we need to find subgraphs for which the equality defines a facet.
\begin{lem}\label{codim}
For a 2-connected graph $\gamma$ with no self-loops, $SP_{\gamma}$ is $|E(\gamma)|-1$ dimensional. 
\end{lem}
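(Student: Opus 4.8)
The plan is to show that $SP_{\gamma}$ is contained in the affine hyperplane $\{\langle 1_{\gamma},\tilde{x}\rangle=\ell_{\gamma}\}$ and that it affinely spans this hyperplane. First I would note that the complement of any spanning tree of $\gamma$ has exactly $\ell_{\gamma}$ edges, since a spanning tree has $|E(\gamma)|-\ell_{\gamma}$ edges; hence every vertex $\vec{S}$ of $SP_{\gamma}$ satisfies $\langle 1_{\gamma},\vec{S}\rangle=\ell_{\gamma}$, so $\dim SP_{\gamma}\le |E(\gamma)|-1$. It then remains to prove the reverse inequality, and for this it suffices to show that any affine-linear form vanishing on $SP_{\gamma}$ is a scalar multiple of $\tilde{x}\mapsto\langle 1_{\gamma},\tilde{x}\rangle-\ell_{\gamma}$.

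Equivalently, I would rephrase the task as follows: if $c\in\mathbb{R}^{E(\gamma)}$ is such that $\langle c,\vec{S}\rangle$ takes the same value for every spanning tree $S$, then all coordinates of $c$ are equal. Since $\langle c,\vec{S}\rangle=\sum_{e\notin S}c_e=\sum_{e}c_e-\sum_{e\in S}c_e$, this condition says precisely that the tree weight $\sum_{e\in S}c_e$ is independent of the spanning tree $S$.

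The geometric input I would use is the standard fact that in a $2$-connected loopless graph any two distinct edges lie on a common cycle (when $\gamma$ has a single edge the lemma is immediate, since then $SP_{\gamma}=\{0\}$ and $|E(\gamma)|-1=0$). Given two distinct edges $e,f$, choose such a cycle $C$. Then $C\setminus\{e\}$ is a path, hence a subforest of $\gamma$, which I extend to a spanning tree $S$. Automatically $f\in S$, while $e\notin S$ because adjoining $e$ to $C\setminus\{e\}\subseteq S$ would recreate the cycle $C$; moreover the fundamental cycle of $e$ with respect to $S$ is exactly $C$. Since $f$ lies on this fundamental cycle, $S':=(S\setminus\{f\})\cup\{e\}$ is again a spanning tree. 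Comparing tree weights, $\sum_{g\in S}c_g=\sum_{g\in S'}c_g=\sum_{g\in S}c_g-c_f+c_e$, so $c_e=c_f$. As $e$ and $f$ were arbitrary, $c$ is constant, and therefore $\dim SP_{\gamma}=|E(\gamma)|-1$.

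The main obstacle is just invoking the ``two edges on a common cycle'' lemma cleanly; if one prefers not to quote it, one can instead show that the graph on $E(\gamma)$ whose adjacencies are the admissible single exchanges is connected, which follows from an open ear decomposition of $\gamma$, but the common-cycle route is the shortest. It is worth noting where the hypotheses enter: a self-loop is absent from every spanning tree, which would force the corresponding coordinate of $SP_{\gamma}$ to equal $1$ identically, and (for $\gamma$ with at least two edges) $2$-connectedness forbids bridges, which would force a coordinate to vanish identically; either degeneracy lowers the dimension, and $2$-connectedness together with looplessness is exactly what guarantees that every pair of edges lies on a common cycle.
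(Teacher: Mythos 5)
Your proof is correct and follows essentially the same route as the paper: the upper bound comes from the hyperplane $\langle 1_{\gamma},\tilde{x}\rangle=\ell_{\gamma}$, and the lower bound rests on the same spanning-tree exchange along a cycle through the relevant edges (extend the path $C\setminus\{e\}$ to a spanning tree and swap $e$ with another edge of $C$), which the paper phrases primally as realizing $1_e-1_{e'}$ as a difference of vertices of $SP_{\gamma}$ and you phrase dually as showing any affine form constant on the vertices is a multiple of $\langle 1_{\gamma},\tilde{x}\rangle-\ell_{\gamma}$. The only cosmetic difference is that you invoke the common-cycle fact for a pair of edges explicitly, where the paper starts from one edge and leaves the extension to arbitrary pairs implicit.
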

\begin{proof}
	Note that all integer points have the form $1_{S}$, where $S$ is complement of a spanning tree. Thus, all points of $SP_{\gamma}$ lie on the hyperplane where the sum of coefficients is $\ell_{\gamma}$. This implies that $SP_{\gamma}$ is at most $|E(\gamma)|-1$ dimensional. We show that $e_i-e_j$ can be constructed from differences of points of $SP_{\gamma}$.\\
	
	Let $e$ be an edge of $\gamma$ and let $C$ be a loop in $\gamma$ which contain $e$. This exists, since $\gamma$ is 2-connected. Let $T$ be a spanning tree that contains all edges of $C$ except $e$. Note that such a spanning tree exists, since any tree can be extended to a spanning tree. Let $e'$ be another edge of $C$. We claim that $T\cup e \setminus e'$ is a spanning tree. The reason is that $T\cup e$ has only one loop $C$, and removing any edge from it makes it a tree. Let $S=T^c$ and $S'= (T\cup e \setminus e')^c$. Then $1_{S}- 1_{S'}$ is $1_e- 1_e'$, hence, for any two edges $e,\ e'$ in the same loop, $1_e - 1_{e'}$ can be computed as a difference of points in $SP_{\gamma}$. Since $\gamma$ is 2-connected and does not have self loops, we can compute $1_e-1_{e'}$ for any pair of edges, and we find that $SP_{\gamma}$ is of  codimension one.
\end{proof}
For a self loop $e$ in $\Gamma$, $SP_e$ is just a point, which indeed is $|E(e)|-1=0$ dimensional. 
Note that $P_{\Gamma}$ is always $n$ dimensional, since it is equal to 
$SP_{\Gamma}+ E_n$ where $E_n$ is $n$ dimensional. 
\begin{lem}\label{dim}
	For a subgraph $\gamma \subset\Gamma$, $P_{\Gamma}\cap \{\tilde{x}: \langle 1_{\gamma}, \tilde{x}\rangle = \ell_{\gamma}\}$ is $SP_{\gamma}\times P_{\Gamma // \gamma}$, where $\Gamma // \gamma$ is constructed from contracting connected components of $\gamma$ to points. 
\end{lem}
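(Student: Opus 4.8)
The plan is to recognize the set $P_\Gamma\cap\{\tilde x:\langle 1_\gamma,\tilde x\rangle=\ell_\gamma\}$ as the face of $P_\Gamma$ in the direction of the linear functional $1_\gamma$ (the value $\ell_\gamma$ being the minimum of $\langle 1_\gamma,\cdot\rangle$ on $P_\Gamma$ by Proposition \ref{polytopeineq}), and then to exploit the elementary fact that a face of a Minkowski sum is the Minkowski sum of the corresponding faces. Since $P_\Gamma=SP_\Gamma+E_n$, this gives $\mathrm{face}_{1_\gamma}(P_\Gamma)=\mathrm{face}_{1_\gamma}(SP_\Gamma)+\mathrm{face}_{1_\gamma}(E_n)$, so it suffices to compute these two faces separately and then reassemble. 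Throughout I would use the coordinate splitting $\mathbb{R}^n=\mathbb{R}^{E(\gamma)}\oplus\mathbb{R}^{E(\Gamma)\setminus E(\gamma)}$, observing that the edge set of $\Gamma//\gamma$ is exactly $E(\Gamma)\setminus E(\gamma)$.

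For $E_n=\mathrm{conv}\{0,e_1,\dots,e_n\}$ one has $\langle 1_\gamma,0\rangle=0$ and $\langle 1_\gamma,e_i\rangle$ equal to $1$ if $e_i\in\gamma$ and $0$ otherwise, so the minimum $0$ is attained precisely at $0$ and at those $e_i$ with $e_i\notin\gamma$; hence $\mathrm{face}_{1_\gamma}(E_n)=\{0\}\times E_{n-|E(\gamma)|}$ under the splitting, where $E_{n-|E(\gamma)|}$ is exactly the polytope entering the definition of $P_{\Gamma//\gamma}$. For $SP_\Gamma$, whose vertices are the incidence vectors $\vec S$ of complements of spanning trees $T=S^c$, we have $\langle 1_\gamma,\vec S\rangle=|E(\gamma)|-|T\cap E(\gamma)|$, and since $T\cap E(\gamma)$ is a forest in $\gamma$ it has at most $|E(\gamma)|-\ell_\gamma$ edges, with equality iff it is a spanning forest of $\gamma$. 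Thus the vertices of $\mathrm{face}_{1_\gamma}(SP_\Gamma)$ are precisely the $\vec S$ for which $T\cap E(\gamma)$ is a spanning forest of $\gamma$.

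The key step is then to identify this vertex set with $V(SP_\gamma)\times V(SP_{\Gamma//\gamma})$. This is the restriction--contraction decomposition of the graphic matroid: a subset $B\subseteq E(\Gamma)$ is a spanning tree of $\Gamma$ meeting $E(\gamma)$ in a spanning forest of $\gamma$ if and only if $B\cap E(\gamma)$ is a spanning forest of $\gamma$ and the image of $B\setminus E(\gamma)$ in $\Gamma//\gamma$ is a spanning tree of $\Gamma//\gamma$ — equivalently, bases of $M(\Gamma)$ meeting $E(\gamma)$ in a full-rank set correspond bijectively to pairs consisting of a basis of $M(\Gamma)|_{E(\gamma)}=M(\gamma)$ and a basis of $M(\Gamma)/E(\gamma)=M(\Gamma//\gamma)$. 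Taking complements coordinatewise, this bijection respects the splitting $\mathbb{R}^n=\mathbb{R}^{E(\gamma)}\oplus\mathbb{R}^{E(\Gamma)\setminus E(\gamma)}$ and carries $\vec S$ to (a vertex of $SP_\gamma$, a vertex of $SP_{\Gamma//\gamma}$). Since the convex hull of a product of finite sets is the product of their convex hulls, we conclude $\mathrm{face}_{1_\gamma}(SP_\Gamma)=SP_\gamma\times SP_{\Gamma//\gamma}$.

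Combining the pieces, $\mathrm{face}_{1_\gamma}(P_\Gamma)=(SP_\gamma\times SP_{\Gamma//\gamma})+(\{0\}\times E_{n-|E(\gamma)|})=SP_\gamma\times(SP_{\Gamma//\gamma}+E_{n-|E(\gamma)|})=SP_\gamma\times P_{\Gamma//\gamma}$, using $P_{\Gamma//\gamma}=SP_{\Gamma//\gamma}+E_{n-|E(\gamma)|}$ from the definition. The main obstacle is the bijection in the key step: one must check carefully that the image of $B\setminus E(\gamma)$ in $\Gamma//\gamma$ is connected and acyclic after the components of $\gamma$ are contracted (in particular that self-loops created by the contraction, which never lie in a spanning tree, cause no trouble), and conversely that lifting a spanning tree of $\Gamma//\gamma$ and adjoining a spanning forest of $\gamma$ yields a spanning tree of $\Gamma$; this is precisely the matroid identity $M(\Gamma)/E(\gamma)=M(\Gamma//\gamma)$, which can also be established directly by a connectivity and edge-counting argument in the style of Lemma \ref{codim}.
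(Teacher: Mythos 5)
Your argument is correct, and it is organized differently from the paper's. The paper works directly with the extreme points of the face of $P_\Gamma$: it observes that on the hyperplane $\langle 1_\gamma,\tilde x\rangle=\ell_\gamma$ every vertex (i.e.\ every Symanzik monomial, first or second) restricts on $E(\gamma)$ to the incidence vector of a spanning-forest complement of $\gamma$, and then factors each monomial as $t^T=t^S t^{T-S}$ with $t^{T-S}$ a Symanzik monomial of $\Gamma//\gamma$, and conversely. You instead exploit the Minkowski decomposition $P_\Gamma=SP_\Gamma+E_n$ together with the standard fact $\mathrm{face}_w(P+Q)=\mathrm{face}_w(P)+\mathrm{face}_w(Q)$, reducing everything to the face of the spanning-tree polytope, which you identify via the restriction--contraction identity $M(\Gamma)|_{E(\gamma)}\times M(\Gamma)/E(\gamma)$ for bases. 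The combinatorial heart is the same bijection (spanning trees of $\Gamma$ meeting $\gamma$ in a maximal forest $\leftrightarrow$ pairs of a maximal forest of $\gamma$ and a spanning tree of $\Gamma//\gamma$), but your packaging buys something concrete: the second-Symanzik vertices, which may have an entry equal to $2$ (mass monomials $t_e^2 t^{S\setminus e}$) and whose restriction to $\gamma$ the paper asserts rather tersely to be a $0/1$ vector, are handled automatically by the trivial computation of $\mathrm{face}_{1_\gamma}(E_n)$, so no separate analysis of cut and mass monomials on the face is needed; the paper's route is shorter but leaves those verifications implicit. Two small points to tighten: the fact that the minimum of $\langle 1_\gamma,\cdot\rangle$ on $P_\Gamma$ is actually attained at $\ell_\gamma$ (so that the hyperplane section is the face and not empty) does not follow from Proposition \ref{polytopeineq} alone, which only gives the inequality; it follows from the existence of a spanning tree of $\Gamma$ extending a maximal forest of $\gamma$, i.e.\ from the surjectivity direction of your key bijection, so you should cite that rather than the proposition. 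And the matroid identity $M(\Gamma)/E(\gamma)=M(\Gamma//\gamma)$, including the observation that edges becoming self-loops in $\Gamma//\gamma$ never occur in the trees in question (adding such an edge to a maximal forest of $\gamma$ inside a spanning tree would create a cycle), should be spelled out as you indicate; once that is done the proof is complete and matches the statement, including disconnected $\gamma$, provided ``spanning tree of $\gamma$'' is read as maximal spanning forest, which is also the convention the paper's proof needs.
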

\begin{proof}
	It is enough to find extreme points of $P_{\Gamma}\cap \{\tilde{x}: \langle 1_{\gamma}, \tilde{x}\rangle = \ell_{\gamma}\}$, since the equation defines a face of $P_{\Gamma}$. The equation $\langle \tilde{x} , 1_{\gamma}\rangle = \ell_{\gamma}$ for an extreme point $(a_e: e\in \Gamma)$, implies the vector $(a_e: e \in \gamma)$ is equal to $1_{S}$, where $S$ is the complement of a spanning tree in $\gamma$. A monomial $t^T$ can be written as $t^S t^{T-S}$, where $S$ has coefficients in $\gamma$ and $T-S$ corresponds to edges in $\Gamma- \gamma$ that are the edges in the contracted graph. Note that $t^{T-S}$ always corresponds to a monomial for the first or second Symanzik polynomial of $\Gamma//\gamma$. On the other hand $S+S'$, where $S$ comes from a spanning tree in $\gamma$ and $S'$ comes from a monomial in $\Gamma // \gamma$, is a monomial in the first or second Symanzik polynomial of $\Gamma$. Thus, we can identify the corresponding face $P_{\Gamma}$ with $SP_{\gamma}\times P_{\Gamma//\gamma}$
\end{proof}
\begin{theorem}\label{facet}
	For a graph $\Gamma$, the polytope $P_{\Gamma}$ is given by the inequality
	$$ \langle 1_{\Gamma}, \tilde{x}\rangle\leq \ell_{\Gamma}+1$$
	and inequalities
	$$\langle \gamma, \tilde{x}\rangle \geq\ell_{\gamma}$$
	 indexed by 2-connected subgraphs without self-loops $\gamma \subset \Gamma$, as well as
	inequalities 
	$$\langle 1_e,\tilde{x}\rangle= x_e \geq 1$$
	indexed by self loops $e\in \Gamma$.
	Replacing any inequality corresponding to a subgraph or self-loop $\gamma$ with equality, defines a facet of $P_{\Gamma}$ that is equal to $SP_{\gamma}\times P_{\Gamma // \gamma}$.
\end{theorem}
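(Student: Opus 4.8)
The plan is to derive the statement from Proposition \ref{polytopeineq}, Lemma \ref{dim} and Lemma \ref{codim}, everything coming down to a dimension count for the faces $F_\gamma := P_\Gamma\cap\{\langle 1_\gamma,\tilde x\rangle=\ell_\gamma\}$.

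First I would record that, by Proposition \ref{polytopeineq}, $P_\Gamma$ is cut out by the inequalities $\langle 1_\gamma,\tilde x\rangle\ge\ell_\gamma$ over all subgraphs $\gamma\subset\Gamma$ together with $\langle 1_\Gamma,\tilde x\rangle\le\ell_\Gamma+1$; so every facet is defined by one of these, and it suffices to show which subgraph inequalities are redundant and to identify the remaining faces. The key structural fact is the block decomposition of $\gamma$: every edge of $\gamma$ lies in exactly one block $B$ (a bridge, a self-loop, or a maximal $2$-connected subgraph with at least two edges), the edge supports of the blocks partition $E(\gamma)$, and $\ell_\gamma=\sum_B\ell_B$. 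Consequently $\langle 1_\gamma,\tilde x\rangle-\ell_\gamma=\sum_B(\langle 1_B,\tilde x\rangle-\ell_B)$ is a nonnegative combination of inequalities already present in the list: a bridge-block or a single non-loop edge contributes $x_e\ge 0$, the inequality of a single edge (which is $2$-connected by convention); a self-loop block contributes $x_e\ge 1$, the self-loop inequality; and the remaining blocks are $2$-connected without self-loops. Hence any $\gamma$ with at least two blocks is redundant, and the defining inequalities reduce to exactly the three families in the statement.

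Next I would compute $\dim F_\gamma$. By Lemma \ref{dim}, $F_\gamma=SP_\gamma\times P_{\Gamma//\gamma}$; since $P_{\Gamma//\gamma}=SP_{\Gamma//\gamma}+E_{n-|E(\gamma)|}$ is full-dimensional, $\dim F_\gamma=\dim SP_\gamma+(n-|E(\gamma)|)$, so $F_\gamma$ is a facet iff $\dim SP_\gamma=|E(\gamma)|-1$. For the upper bound: a complement of a spanning tree of $\gamma$ contains every self-loop of $\gamma$, contains no bridge of $\gamma$, and restricts to a complement of a spanning tree of each block $B$, so $SP_\gamma$ lies in the $k$ hyperplanes $\{\sum_{e\in E(B)}x_e=\ell_B\}$, one per block; these are linearly independent (disjoint supports), whence $\dim SP_\gamma\le|E(\gamma)|-k$. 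Thus only single-block $\gamma$ can give facets. Conversely, if $\gamma$ is a single edge then $SP_\gamma$ is a point and $|E(\gamma)|-1=0$, while if $\gamma$ is a $2$-connected subgraph without self-loops with at least two edges then Lemma \ref{codim} gives $\dim SP_\gamma=|E(\gamma)|-1$; together with the single-self-loop case, each inequality in the two subgraph families defines a facet, equal by Lemma \ref{dim} to $SP_\gamma\times P_{\Gamma//\gamma}$. Finally $\langle 1_\Gamma,\tilde x\rangle\le\ell_\Gamma+1$ is the only inequality bounding $\sum_e x_e$ from above, hence irredundant, with $(n-1)$-dimensional face $SP_\Gamma+\Delta_{n-1}$; and since $1_\gamma$ determines $\gamma$, distinct members of the families give distinct facets, so these are exactly the facets of $P_\Gamma$.

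The main obstacle is the bookkeeping behind $\dim SP_\gamma$: one must marry the single-block statement of Lemma \ref{codim} with how spanning trees interact with the block decomposition (a spanning tree of a connected graph restricts to a spanning tree of each of its blocks, always omits self-loops and always contains bridges), and then check that the resulting block-wise degree equations are independent. A minor point to verify is that $\Gamma//\gamma$ — obtained from $\Gamma$ by deleting the edges of $\gamma$ and identifying vertices — has exactly $n-|E(\gamma)|$ edges, so that $\dim P_{\Gamma//\gamma}=n-|E(\gamma)|$ as used above.
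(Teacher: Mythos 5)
Your proof is correct and follows essentially the same route as the paper: reduce via Proposition \ref{polytopeineq} and Lemma \ref{dim} to the codimension of $SP_\gamma$, use Lemma \ref{codim} (together with the single-edge and single-self-loop cases) for the facet direction, and rule out subgraphs with more than one block by the independent block-degree equalities forced on spanning-tree complements. Your only addition is the explicit redundancy step writing $\langle 1_\gamma,\tilde x\rangle-\ell_\gamma$ as a sum of block inequalities, which the paper leaves implicit by only checking which inequalities are facet-defining.
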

\begin{proof}
	By Proposition \ref{polytopeineq}, we know that these equation define the polytope. We just need to find the ones that define facets, i.e. are codimension one. By Lemma \ref{dim} the codimension is equal to the  codimension of $SP_{\gamma}$. The latter is equal to one for self loops and for 2-connected subgraphs without self-loops, by Lemma \ref{codim}. On the other hand, if we have a subgraph which is not 2-connected, then the complement of a spanning tree has to have $\ell_{\gamma_i}$ many edges in $\gamma_i$, where $\gamma_i$ are 2-connected components of $\gamma$. Thus, $SP_{\gamma}$ is at least of codimension 2. For a subgraph that is not a self loop but that contains a self loop $e$, we have two linear equalities for points of $SP_{\gamma}$, i.e. $\langle 1_{\gamma}, \tilde{x}\rangle = \ell_{\gamma}$ and $x_e=1$. This makes $SP_{\gamma}$ at least of codimension 2. 
\end{proof}

For a Feynman diagram $\Gamma$, the vector $\alpha$ is equal to 
$$(D/2,1,\dots,1) \, . $$
To find the inequalities defining the interior of the cone $C_A$, we need to replace inequalities with strict inequalities. Applying these to a vector $\alpha$, we find the necessary and sufficient condition for convergence of $K(\alpha,P_a)$, namely 
$$D/2(\ell+1) > |E(\Gamma)|\, . $$
If we have self-loops, then the inequality 
$$D/2=D\ell/2 < |E|=1$$
is not satisfied and the integral diverges. For all 2-connected subgraphs 
without self-loops $\gamma$ we have 
$$ D\ell/2 < |E(\gamma)| \, . $$
Note that for single edges this equality is satisfied, hence it is enough to check this for 2-connected subgraphs that are not single edges, i.e. for the so called $1PI$ subgraphs. 
\\
\begin{lem}
	Let $\Sigma$ be the semigroup generated by $A$ in $\mathbb{Z}^{n+1}$. Then $\Sigma$ is saturated. 
\end{lem}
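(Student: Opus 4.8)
The plan is to reduce the statement to the assertion that the $n$-dimensional lattice polytope $P_\Gamma$ of Theorem~\ref{facet} is integrally closed — i.e.\ $kP_\Gamma\cap\mathbb Z^n=\{p_1+\cdots+p_k:p_i\in P_\Gamma\cap\mathbb Z^n\}$ for all $k\ge1$ — and then to identify $P_\Gamma$, up to a unimodular affine isomorphism, with the base polytope of an integer polymatroid, for which integral closedness is classical. For the reduction: given $x=(x_0,\tilde x)\in C_\Gamma\cap\mathbb Z^{n+1}$, the inequalities $\langle\vec\Gamma,x\rangle\ge0$ and $\langle\vec\Gamma-e_0,x\rangle\le0$ of Proposition~\ref{polytopeineq} give $\ell_\Gamma x_0\le\langle1_\Gamma,\tilde x\rangle\le(\ell_\Gamma+1)x_0$, so $x_0\ge0$, and $x_0=0$ forces $\tilde x=0$; since every element of $A$ has first coordinate $1$, a point with $x_0=k$ that lies in $\mathbb N A$ must be a sum of exactly $k$ elements of $A$, while the slice $\{x_0=k\}\cap C_\Gamma$ equals $\{k\}\times kP_\Gamma$. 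As $P_\Gamma\cap\mathbb Z^n=A$ (Proposition~\ref{polytopeineq} together with the description of $A$ in this section) and $A$ generates $\mathbb Z^{n+1}$ as a group, integral closedness of $P_\Gamma$ is exactly the statement $\Sigma=\mathbb N A=C_\Gamma\cap\mathbb Z^{n+1}$, the saturation.

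For the second step I would embed $\mathbb R^n=\mathbb R^{E}$ into $\mathbb R^{E\cup\{\ast\}}$ by $\tilde x\mapsto(\tilde x,\ \ell_\Gamma+1-\langle1_\Gamma,\tilde x\rangle)$, an affine isomorphism onto the hyperplane $\{\sum_i y_i=\ell_\Gamma+1\}$ that identifies $\mathbb Z^E$ with the lattice points of that hyperplane, hence preserves integral closedness. Let $M^\ast(\Gamma)$ be the cographic matroid of $\Gamma$ (bases $=$ complements of spanning trees) and put $\rho(T)=\mathrm{rk}_{M^\ast(\Gamma)}(T\cap E)+1$ for $\emptyset\ne T\subseteq E\cup\{\ast\}$, $\rho(\emptyset)=0$; this $\rho$ is integer-valued, monotone and submodular (it is the sum of the rank function of $M^\ast(\Gamma)$ with $\ast$ adjoined as a loop and the rank function of the rank-one uniform matroid on $E\cup\{\ast\}$), so it is a polymatroid rank function. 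Using the matroid-duality relation $\mathrm{rk}_{M^\ast(\Gamma)}(T)=\ell_\Gamma-\ell_{E\setminus T}$ — which is the supermodularity content of Lemma~\ref{mayer} — one checks that, under the embedding, the inequalities $y\ge0$, $y(T)\le\rho(T)$, $y(E\cup\{\ast\})=\ell_\Gamma+1$ defining the base polytope $B(\rho)$ become precisely the inequalities $\tilde x\ge0$, $\langle1_\gamma,\tilde x\rangle\ge\ell_\gamma$ $(\gamma\subset\Gamma)$, $\langle1_\Gamma,\tilde x\rangle\le\ell_\Gamma+1$ cutting out $P_\Gamma$ in Proposition~\ref{polytopeineq} (equivalently, the greedy algorithm for $\rho$ produces exactly the vertices $1_{\bar S}$ and $1_{\bar S}+e_f$ of $P_\Gamma$). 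So the embedded $P_\Gamma$ is $B(\rho)$; this is really the observation that the Minkowski sum $P_\Gamma=SP_\Gamma+E_n$ of a matroid base polytope and a simplex can be reassembled as a single polymatroid base polytope using the auxiliary coordinate $\ast$.

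Finally I would invoke the classical fact that the base polytope of an integer polymatroid has the integer decomposition property — via Helgason's theorem that integer polymatroids are projections of matroids this reduces to the same property for matroid base polytopes — together with the identity $kB(\rho)=B(k\rho)$; combined with the identification of the previous paragraph this makes $P_\Gamma$ integrally closed and completes the proof. The bookkeeping with the embedding and the dilation identity is routine; the main obstacle is establishing that $B(\rho)$ really coincides with the embedded $P_\Gamma$, i.e.\ verifying the duality identity $\mathrm{rk}_{M^\ast(\Gamma)}(T)=\ell_\Gamma-\ell_{E\setminus T}$ for all edge subsets $T$ and the submodularity and monotonicity of $\rho$, since the whole argument hinges on matching the two inequality descriptions exactly.
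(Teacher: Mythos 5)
Your proposal is correct, but it reaches saturation by a genuinely different route than the paper. The paper works directly with lattice points of the cone: using the inequalities of Proposition \ref{polytopeineq} and the submodularity input of Lemma \ref{mayer}, it shows that from any integer point of $C_A$ with $\langle\vec\Gamma,a\rangle>0$ one can subtract some $e_i$ and stay in the cone, so every integer point splits as $b+c$ with $b$ on the facet $\langle\vec\Gamma,x\rangle=0$, $c$ a nonnegative integer combination of the $e_i$'s and $b_0\geq\sum_i c_i$; it then invokes \cite{white} for the saturation of the semigroup generated by the spanning-tree points on that facet, and finally redistributes the subtracted $e_i$'s among the summands to land in $A$. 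You instead reduce saturation to the integer decomposition property of $P_\Gamma=SP_\Gamma+E_n$, realize this Minkowski sum (after adding a slack coordinate) as the base polytope of the integer polymatroid $\rho=\mathrm{rk}_{M^\ast(\Gamma)}(\,\cdot\cap E)+\mathrm{rk}_{U_1}$, and quote normality of integer polymatroid base polytopes. The external inputs are the same fact in different clothing: White's normality of the matroid basis ring is the special case to which your polymatroid statement restricts, and the polymatroid decomposition replaces the paper's hands-on ``peel off $e_i$'s and redistribute'' step. Your route buys brevity and a conceptual explanation (the Minkowski-sum structure does all the work), at the price of heavier machinery; note also that your verification that $B(\rho)$ coincides with the embedded $P_\Gamma$ is essentially a repackaging of Proposition \ref{polytopeineq}, and that the parenthetical reduction ``via Helgason'' needs a small extra lifting argument (integer points of $B(k\rho)$ must be lifted to bases of an expanded matroid), so it is cleaner to cite directly the normality of discrete polymatroids (Herzog--Hibi) or the integrality in Edmonds' polymatroid sum theorem. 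With that reference made precise, your proof is complete.
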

\begin{proof}
	Assume $a=(a_0, \tilde{a})\in \mathbb{Z}^{n+1}$ is an integer point in $C_{A}$. We want to find $(\alpha_1, \alpha_2,\dots, \alpha_k)\in A^k$ such that $a=\sum_i \alpha_i$. Assume 	$$\langle \vec{\Gamma} , a\rangle > 0 \, . $$
	We claim that there exists $i\in\{1,\dots,n\}$, such that $a-e_i$ is in $C_{A}$. Assume this is not the case. Then, for each $i$, one of the inequalities is not satisfied by $a-e_i$. Note that we have 
	$$ \langle \vec{\Gamma}-e_0,a-e_i \rangle = \langle \vec{\Gamma} - e_0 , a\rangle -1 \leq0 \,. $$
Thus, for each $i$, there exists a subgraph $\gamma_i$ such that
$$\langle \vec{\gamma}_i , a-e_i\rangle < 0\, .$$
Since $a$ has integer coefficients and $a$ is in $C_{A}$, we must have 
$$\langle \vec{\gamma}_i, a\rangle= 0,$$
which is equivalent to 
$$\langle 1_{\gamma_i}, \tilde{a}/ a_0\rangle = \ell_{\gamma_i} \, . $$
By Lemma \ref{mayer} we have the same equality for a union of $\gamma_i$'s, i.e.
$$\langle 1_{\Gamma}, \tilde{a}/a_0\rangle = \ell_{\Gamma}\,. $$
This contradicts our assumption. As a result, we can write $a$ as $b+c$, where $b$ is an integer point in $C_{A}$ that satisfies $\langle\vec{\Gamma} , b \rangle= 0 $ and $c$ has positive integer coordinates with $c_0 =0$. Note that 
$$0 \geq \langle \vec{\Gamma} -e_0, b+c\rangle = -b_0 +\sum_i c_i, $$ 
which implies  $b_0 \geq \sum_i c_i$. Assume we can write $b= \sum_i \alpha'_i$ with $\alpha'_i$ in $\Sigma$ satisfying $\langle \vec{\Gamma}, \alpha'_i\rangle=0$. Since $b_0\geq \sum_i c_i$, we can distribute $\sum_i c_i$ many $e_i$'s among the $\alpha'_i$ and define $\alpha_i= \alpha'_i + e_{j_i}$.
To finish the proof, we need to show that the semigroup generated by the points of $A$ on the facet $\langle \vec{\Gamma} , x\rangle = 0$ is saturated. This has been shown in \cite{white}. 
\end{proof}
\begin{cor}
  $\mathbb{P}_{\Sigma}$ is projectively normal, i.e. $k[\Sigma]$ is integrally closed. By \cite{Hoch} the toric ideal has the Cohen-Macaulay property. 
\end{cor}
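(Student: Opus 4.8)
The plan is to read off the corollary from the preceding lemma, which shows that the semigroup $\Sigma$ generated by $A$ is saturated, combined with Hochster's results on normal semigroup rings \cite{Hoch}. First I would recall the setup: by construction every point of $A$ lies on the hyperplane $\phi_1=1$ and $A$ generates $\mathbb{Z}^{n+1}$ as a group, so $S_\Sigma=k[\Sigma]$ is a $\mathbb{Z}_{\geq 0}$-graded $k$-algebra generated in degree one, and $\mathbb{P}_\Sigma=\mathbf{Proj}(S_\Sigma,\phi_1)$ has $S_\Sigma$ as (a candidate for) its homogeneous coordinate ring with respect to $\mathcal{O}(1)$.

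Next I would invoke the fact, recorded in the remark after the definition of $X_\Sigma$ and due to \cite{Hoch}, that an affine semigroup ring is integrally closed precisely when the semigroup is saturated. Since the previous lemma proves that every lattice point of the cone $C_A$ already lies in $\Sigma$, we conclude that $k[\Sigma]$ is integrally closed, i.e. normal. Because $S_\Sigma$ is normal and generated in degree one, the degree-$d$ graded piece of $S_\Sigma$ coincides with $H^0(\mathbb{P}_\Sigma,\mathcal{O}(d))$ for all $d\geq 0$, so $S_\Sigma$ is the honest homogeneous coordinate ring of $\mathbb{P}_\Sigma$ in its $\mathcal{O}(1)$-embedding and it is integrally closed; this is exactly the statement that $\mathbb{P}_\Sigma$ is projectively normal. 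The last clause then follows from Hochster's theorem that a normal affine semigroup ring is Cohen--Macaulay: writing $S_\Sigma=k[p_a:a\in A]/I_A$ with $I_A$ the toric ideal of $A$, normality of $S_\Sigma$ gives that $S_\Sigma$ is Cohen--Macaulay, i.e. the toric ideal $I_A$ has the Cohen--Macaulay property. In particular part $(5)$ of the GKZ theorem quoted above now applies and yields $\mathrm{rank}(H_A(\beta))=\mathrm{vol}(\mathrm{conv}(A))$ for every $\beta$.

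I do not expect a real obstacle, since all the content is already contained in the saturation lemma proved just above and in the cited theorems of Hochster. The only point that warrants a sentence of care is the passage from ``$k[\Sigma]$ is a normal affine semigroup ring'' to ``$\mathbb{P}_\Sigma$ is projectively normal'': this uses a second time that $A$ lies in a single hyperplane and generates the lattice, which is what guarantees that the $d$-th graded piece of $k[\Sigma]$ equals $H^0(\mathbb{P}_\Sigma,\mathcal{O}(d))$ and hence that the graded ring controlling projective normality is exactly $k[\Sigma]$.
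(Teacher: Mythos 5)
Your proposal is correct and follows essentially the same route the paper intends: the preceding lemma gives saturation of $\Sigma$, Hochster's criterion (already recorded in the remark after the definition of $X_\Sigma$) converts saturation into integral closedness of $k[\Sigma]$, and Hochster's theorem on normal affine semigroup rings gives the Cohen--Macaulay property of the toric ideal. Your extra sentence justifying the passage from normality of the semigroup ring to projective normality of $\mathbb{P}_\Sigma$ is a reasonable elaboration of what the paper simply asserts via its ``i.e.''
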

\begin{theorem}
	For a graph $\Gamma$, with the properties that 
	$$D/2(\ell_{\Gamma}+1)>|E(\Gamma)| \quad \text{ and } \quad D\ell_{\gamma} /2 < |E(\gamma)|,$$
	and for all $1PI$ subgraphs $\gamma$, the amplitude  is equal to 
	$$K((D/2,1,\dots,1), P_a), $$
	up to multiplication by rational numbers and powers of $\pi$.
This is a period of the motive $H^n(U_v, U_v\cap D)$. Furthermore, for any graph $\Gamma$, the lowest coefficient of the $\epsilon$-expansion of the amplitude agrees with the lowest coefficient of the $\epsilon$-expansion of $K((D/2+\epsilon,1,\dots,1))$, up to multiplication by rational numbers and powers of $\pi$. This  can be computed as a linear combination
$$\sum p_i(P_a) K(\alpha_i, P_a), $$
where the $p_i$'s are polynomials in the $P_a$'s with rational coefficients and all the
$K(\alpha_i,P_a)$'s are convergent and are periods of the motive $H^n(U_v, U_v \cap D)$.
\label{mmm}
\end{theorem}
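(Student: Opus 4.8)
The plan is to reduce the entire statement to the identity \eqref{ka}, the corollary of Lemma \ref{feyn} relating $K$ to $I$, and the facts already established about $K(\alpha,P_a)$: the convergence criterion of Section \ref{integralsection}, the relations of Lemma \ref{relations} together with their meromorphic continuation (Theorem \ref{poles}), the normality of $\mathbb{P}_\Sigma$ (the saturation lemma of this section), and Theorem \ref{motive}. Throughout I write $\alpha:=(D/2,1,\dots,1)$.

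For the convergent case, the first step is to observe that the two displayed hypotheses, together with the automatic inequalities $D\ell_e/2=0<1=|E(e)|$ for single edges $e$ (and, as is necessary for convergence, the absence of self-loops, whose facet $x_e\ge 1$ in Theorem \ref{facet} would otherwise be violated by $\alpha$), are exactly the statement, via Proposition \ref{polytopeineq} and Theorem \ref{facet}, that $\alpha$ lies in the interior of $C_\Gamma=C_A$. Then $K(\alpha,P_a)$ converges, and $-\alpha$ is semi non-resonant: for every facet normal $w$ of $P_\Gamma$, oriented so that $\langle w,\cdot\rangle\ge 0$ on $\Sigma$, interiority gives $\langle w,\alpha\rangle>0$ while $\langle w,\sigma\rangle\ge 0$ for $\sigma\in\Sigma$, so $\langle w,\alpha+\sigma\rangle>0$ for all $\sigma\in\Sigma$. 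The saturation lemma makes $\Sigma$ saturated, hence $\mathbb{P}_\Sigma$ normal, so Theorem \ref{motive} applies with $\beta=-\alpha=(-D/2,-1,\dots,-1)$ (integral when $D$ is even; for odd $D$ the resulting half-integer twist is absorbed into the ``powers of $\pi$'' in the conclusion). It identifies the classical solutions of $H_A(\beta)$ with $H_n(U_v,U_v\cap D)$ through the cycle-to-period map; since the closure of $\mathbb{R}_+^n$ is a relative cycle for $(U_v,U_v\cap D)$, and by Lemma \ref{derahm} the twisted form $\tfrac{f^{\beta_0}}{t^{\tilde\beta}}\tfrac{dt_1}{t_1}\cdots\tfrac{dt_n}{t_n}$ represents a genuine class of $H^n_{dR}(U_v,U_v\cap D)$, the number $K(\alpha,P_a)$ is a period of $H^n(U_v,U_v\cap D)$. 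Finally $\Gamma(D/2)$ and $\Gamma(-n+D(\ell+1)/2)$ are, for integral $D$, rationals times powers of $\pi^{1/2}$, and $\pi^{D\ell/2}$ is a power of $\pi$, so \eqref{ka} presents $\mathcal{A}(\Gamma,P_a)$ as a rational multiple of a power of $\pi$ times $K(\alpha,P_a)$.

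For an arbitrary graph, combining the $D/2\mapsto D/2+\epsilon$ form of \eqref{ka} with the corollary of Lemma \ref{feyn} gives
$$c_0\,\mathcal{A}(\Gamma,P_a)(D/2+\epsilon)=g(\epsilon)\,K\big((D/2+\epsilon,1,\dots,1),P_a\big),\qquad g(\epsilon)=\frac{\Gamma(D/2+\epsilon)}{\Gamma\big(-n+(D/2+\epsilon)(\ell+1)\big)}.$$
Since $1/\Gamma$ is entire with only simple zeros, $g$ is holomorphic at $\epsilon=0$ with a zero of order $s\in\{0,1\}$ and leading coefficient $g_s=\Gamma(D/2)\cdot(\text{explicit rational})$, hence a rational times a power of $\pi$. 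Multiplying by the Laurent series \eqref{regular}, $K((D/2+\epsilon,1,\dots,1),P_a)=\sum_{i\ge -k}\epsilon^i K_i^{e_0}(\alpha,P_a)$, one reads off that the lowest coefficient of the $\epsilon$-expansion of $\mathcal{A}(\Gamma,P_a)$ equals $c_0^{-1}g_s\,K_{-k}^{e_0}(\alpha,P_a)$; so the lowest coefficients of the $\epsilon$-expansions of $\mathcal{A}$ and of $K((D/2+\epsilon,1,\dots,1))$ agree up to a rational times a power of $\pi$.

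It then remains to express $K_{-k}^{e_0}(\alpha,P_a)$ as $\sum_i p_i(P_a)K(\alpha_i,P_a)$ with $p_i\in\mathbb{Q}[P_a]$ and the $K(\alpha_i,P_a)$ convergent. The idea is to run the relation of Lemma \ref{relations} (valid for the meromorphic continuation by Theorem \ref{poles}) repeatedly: with $w$ ranging over the facet normals $w_1,\dots,w_m$ of $P_\Gamma$, the identity $\langle w_i,\alpha\rangle K(\alpha,P_a)=\alpha_0\sum_{a\in A}\langle w_i,a\rangle P_a\,K(\alpha+a,P_a)$ involves only the $a$ off the $i$-th facet, so each step strictly raises $\langle w_i,\cdot\rangle$ and, by Lemma \ref{mayer}, lowers no $\langle w_j,\cdot\rangle$. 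Exactly as in the proof of Theorem \ref{poles}, finitely many steps give
$$\Big(\prod\nolimits_{i}\prod\nolimits_{t=0}^{N_i-1}\big(\langle w_i,\alpha\rangle+n_{i,t}\big)\Big)K(\alpha,P_a)=\sum_j q_j(P_a)\,K(\alpha+b_j,P_a),$$
with $q_j\in\mathbb{Q}[\alpha_0][P_a]$, $b_j\in\Sigma$, and the $N_i$ so large that every $\alpha+b_j$ is interior to $C_\Gamma$; hence each $K(\alpha+b_j,P_a)$ converges and, by the convergent case, is a period of $H^n(U_v,U_v\cap D)$. Specializing $\alpha\mapsto(D/2+\epsilon,1,\dots,1)$ turns the left product into a product of affine functions of $\epsilon$, the right side is holomorphic at $\epsilon=0$, and the pole order $k$ is governed by the number of vanishing affine factors; dividing through and extracting the $\epsilon^{-k}$-coefficient writes $K_{-k}^{e_0}(\alpha,P_a)$ as a $\mathbb{Q}[P_a]$-combination of the convergent periods $K(\alpha+b_j,P_a)$, which is the assertion. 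The main obstacle is precisely here: one must organize the relation-chasing so that the surviving numerator $\sum_j q_j\,K(\alpha(\epsilon)+b_j,P_a)$ does not itself vanish at $\epsilon=0$ — otherwise the leading coefficient would involve $\epsilon$-derivatives of the $K(\alpha+b_j,P_a)$, which are periods of a Gauss--Manin extension rather than of $H^n(U_v,U_v\cap D)$ itself. I expect this can be forced by first clearing the degeneracies with the $\Box_r$-relations, or by choosing the $b_j$ so that the numbers $\langle w_i,\alpha+b_j\rangle$ avoid the remaining resonance values; everything else is bookkeeping on Gamma functions and on the cone inequalities of Theorem \ref{facet}.
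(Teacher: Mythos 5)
Your proposal follows essentially the same route as the paper: the convergent case is exactly the combination of relation (\ref{ka}) with the cone-interior convergence criterion, the saturation lemma and Theorem \ref{motive}, and the divergent case is the same iteration of Lemma \ref{relations} (as in the proof of Theorem \ref{poles}) trading a divergent $K$ for convergent ones with $\mathbb{Q}[P_a]$-coefficients, the poles coming only from the factors $\langle w_i,\alpha\rangle$ and the Gamma factor in (\ref{ka}). The ``main obstacle'' you flag at the end is not actually needed for the statement as the paper intends and proves it: ``lowest coefficient'' means the coefficient at the a priori lowest order $\epsilon^{-k}$ (compare $A_{-n}$ in Theorem \ref{mm}, which is allowed to vanish), and that coefficient is precisely the combination $\sum_j q_j(P_a,0)K(\alpha+b_j,P_a)$ your relation-chasing already produces, so no non-vanishing of the numerator has to be arranged.
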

\begin{proof}
	Note that the value of the gamma function at positive integers is an integer. The first part then follows from the Relation (\ref{ka}) between $K(\alpha,P_a)$ and $\mathcal{A}(\Gamma,P_a)$. For the second part, note that, using the relations in Lemma \ref{relations}, we can replace a divergent $K$ with a linear combination of convergent integrals, with polynomial coefficients in $P_a$. The poles arise by diving by terms $\langle w , \alpha\rangle$, which have a rational residue in the variable $\epsilon$. Moreover, we have division by the gamma function, which has rational residue at negative integers. Note that all the resulting convergent integrals are periods of $H^n(U_v,U_v\cap D)$ by Theorem \ref{motive}.
\end{proof}
\begin{proof}[Proof of Theorem \ref{mm}]
We have
$$c_0\mathcal{A}(D/2):= I(0,D/2,P_a,\vec{0})=\frac{\Gamma(\alpha_0)}{\Gamma(-\langle \vec{\Gamma}-e_0, \alpha\rangle)}K(\alpha,P_a),
$$
where $\alpha=(D/2,1,\cdots,1)$. By Remark \ref{zero}, zeros of $K(\alpha,P_a)$ cancel poles of the gamma function in the numerator. By Theorem \ref{poles}, poles of $K(\alpha,P_a)$ appear on semi non-resonant $-\alpha$'s. To find the semi non-resonant locus, we need to find the $-\Sigma$ translates of the facets. By description of the facets in Theorem \ref{facet}, $-\alpha$ is semi non-resonant iff
$$\langle \vec{\Gamma}-e_0, \alpha\rangle \in \mathbb{Z}_{\geq 0}$$ 
or 
$$\langle \vec{\gamma} , \alpha \rangle \in \mathbb{Z}_{\leq 0}$$
for a 2-connected subgraph $\gamma$. However, the poles of the gamma function at negative integers cancel the poles coming from the first equation. The second equation for $\alpha= (D/2,1,\cdots,1)$ is equivalent to

$$-D/2 \ell_{\gamma} + |E(\gamma)| \in \mathbb{Z}_{\leq 0}.$$
Thus, the first part of the theorem follows. The second part is a special case of Theorem \ref{mmm}. 
\end{proof}

\begin{rem}
	To find the $\epsilon$ expansion of the integral, using the relations in Lemma \ref{relations}, we can replace the differential form
$$\frac{t^{\tilde{\alpha}}}{(\sum_a P_a t^a)^{\alpha_0+\epsilon}} \frac{dt_1}{t_1}\dots\frac{dt_n}{t_n}$$
with differential forms that have logarithmic poles along the boundary, since division by zero appears when we want to replace $\alpha$ on a facet with a linear combination of points that are not on the facet. The boundary components correspond to the products of subgraphs and quotient graphs. Based on this observation, we can relate the Connes-Kreimer renormalization to the study of limiting mixed Hodge structures of \cite{bloch}. We will address renormalization in another forthcoming paper. 
\end{rem}
\newpage

\bibliographystyle{alpha.bst}
\bibliography{bibtex.bib}
\end{document}